\pgfplotsset{compat=1.14}
\tikzset{->-/.style={decoration={
			markings,
			mark=at position #1 with {\arrow{>}}},postaction={decorate}}}	
\tikzset{-<-/.style={decoration={
			markings,
			mark=at position #1 with {\arrow{<}}},postaction={decorate}}}
\newcommand{\E}{\mathrm{e}}
\newcommand{\I}{\mathrm{i}}
\renewcommand{\Re}{{\mathrm{Re}}}
\newcommand{\C}{{\mathbb C}}
\newcommand{\N}{{\mathbb N}}
\newcommand{\Q}{{\mathbb Q}}
\newcommand{\R}{{\mathbb R}}
\newcommand{\Z}{{\mathbb Z}}
\newcommand{\W}{W}
\newcommand{\cA}{{\mathcal A}}
\newcommand{\cF}{{\mathcal F}}
\newcommand{\cG}{{\mathcal G}}
\newcommand{\cI}{{\mathcal I}}
\newcommand{\cJ}{{\mathcal J}}
\newcommand{\cK}{{\mathcal K}}
\newcommand{\cL}{{\mathcal L}}
\newcommand{\cH}{{\mathcal H}}
\newcommand{\cM}{{\mathcal M}}
\newcommand{\cN}{{\mathcal N}}
\newcommand{\cO}{{\mathcal O}}
\newcommand{\cP}{{\mathcal P}}
\newcommand{\cT}{{\mathcal T}}
\newcommand{\cD}{{\mathcal D}}
\newcommand{\cC}{{\mathcal C}}
\newcommand{\UU}{\mathrm{U}}
\newcommand{\SU}{\mathrm{SU}}
\newcommand{\SL}{\mathrm{SL}}
\newcommand{\SO}{\mathrm{SO}}
\newcommand{\su}{{\mathfrak{su}}}
\newcommand{\so}{{\mathfrak{so}}}
\newcommand{\g}{{\mathfrak{g}}}
\newcommand{\la}{\langle}
\newcommand{\ra}{\rangle}
\newcommand{\dd}{{\mathrm{d}}}
\newcommand{\up}{{\,\uparrow\,}}
\newcommand{\down}{{\,\downarrow\,}}
\newcommand{\tr}{{\mathrm{Tr}}}
\newcommand{\f}{\frac}
\newcommand{\tl}{\widetilde}
\def\nn{\nonumber}
\def\pp{\partial}
\newcommand{\id}{\mathbb{I}}
\def\act{\triangleright}
\newcommand{\ch}{\text{ch}}
\newcommand{\sh}{\text{sh}}
\def\eps{\epsilon}
\def\vJ{\vec{J}}
\def\vcC{\vec{\cC}}
\def\vu{\vec{u}}
\def\vsigma{\vec{\sigma}}
\newcommand{\mat} [2] {\left ( \begin{array}{#1}#2\end{array} \right ) }
\newcommand{\GG}{\mathbf{G}}
\newcommand{\DD}{\mathbf{D}}
\newcommand{\CC}{\mathbf{C}}
\newcommand{\bfa}{\mathbf{a}}
\newtheorem{thmcnter}{thmcnter}[chapter]
\newtheorem*{Thrm*}{Theorem}
\newtheorem*{Prop*}{Proposition}
\newtheorem{prop}[thmcnter]{Proposition}
\newtheorem{lemma}[thmcnter]{Lemma}
\theoremstyle{definition}
\newtheorem*{Def*}{Definition}
\newtheorem{Definition}[thmcnter]{Definition}
\renewcommand\part{%
	\if@openright
	\cleardoublepage
	\else
	\clearpage
	\fi
	\thispagestyle{empty}%
	\if@twocolumn
	\onecolumn
	\@tempswatrue
	\else
	\@tempswafalse
	\fi
	\null\vfil
	\secdef\@part\@spart}
\newcommand{\secmark}{}
\newcommand{\marktotoc}[1]{\renewcommand{\secmark}{#1}}
\titleformat{\section}{\normalfont\Large\bfseries}
{\makebox[1.5em][l]{\llap{\secmark}\thesection}}{0.4em}{}
\titleformat{\subsection}{\normalfont\large\bfseries}
{\makebox[2.3em][l]{\llap{\secmark}\thesubsection}}{0.4em}{}
\small\contentslabel[\llap{\secmark}\thecontentslabel]{3.1em}}
\begin{document}
\includepdf[pages=-,offset=0cm -1.5cm]{./file/FrontPage_ENS_Goeller} 

\frontmatter

\thispagestyle{empty}

\newpage
~
\thispagestyle{empty}


\vspace*{4cm}

\emph{Dirth'ena enasalin}
\cleardoublepage

\section*{Acknowledgement}
\addcontentsline{toc}{section}{Acknowledgements}

First and foremost, I would like to thank both my supervisors. In France, Etera was the first one to welcome a few years ago and he has been helping me navigate in the universe of theoretical physics and quantum gravity ever since. It has been a real pleasure working with him, even though I won't deny that it has also been challenging, trying to keep up with his very quick mind. Not a single day went, and still goes, by without him pushing me forward, and I will always be grateful to him for that. Overseas, at the Perimeter Institute, where I spent most of my time as a PhD student, Bianca took time to make sure I was adjusting to the life at PI and helped me find some organisation into the twists and turns of my thought. And due to the sheer amount of seminars and colloquiums organized at PI, it was not an easy task. Their complementary approaches to handling a PhD student was perfect for me.

I am also grateful to all the members of my committee. To Glenn Barnich and Simone Speziale, who took the time to be my examiners and read my manuscript. To John Barrett and Laurent Freidel for being part of my jury. I must also thank Laurent for the time he took at PI to talk about physics and gravity in all its form, as much as philosophically that mathematically. My time at PI would not have been the same without him.

I spent a lot of time during my PhD working in close collaboration with Aldo Riello. While Bianca and Etera answered lots of questions, Aldo was the one taking charge of a ridiculous amount of questions, from now basics quantum gravity knowledge to purely speculative questions. I cannot thank him enough for the time he took patiently with me. In the same vein, I must thank Cl\'ement Delcamp, who, as a fellow PhD student of Bianca, was well-suited to help me. I cannot forget Sylvain Carrozza and Marc Geiller, who, in their own way, keep reminding me to be more confident and trust myself. I must emphasize that Marc also had to deal with me during the last months of my PhD here in France, where he took a lot of time to answer some "last-minute" questions.

My time at PI would not have been the same without my office mates, Adrian and Nitica, who both kept me company in the office pretty late at night when needed, either to work or to play, as well as my time playing D\&D and magic, that helped me go through the hard time of the thesis.

Finally, I must of course thank my friends and family. Even while I was in Canada, they were always there to support me when needed. As a final note, I have to thank one last person. She came to Canada for me, and supported me, encouraged me, from the beginning to the end of my thesis. To you, thank you for everything you did, and you still do.

\cleardoublepage
\vspace*{-2cm}
\section*{Résumé de la Thèse (English version below)}

Cette thèse est consacrée à l’étude du rôle des frontières en gravitation quantique pour une région compacte de l’espace-temps et explore en détail le cas en trois dimensions d'espace-temps. Cette étude s'inscrit dans le contexte du principe holographique qui conjecture que la géométrie d'une région de l’espace et sa dynamique peuvent être entièrement décrites par une théorie vivant sur la frontière de cette région. La réalisation la plus étudiée de ce principe est la correspondance AdS/CFT, où les fluctuations quantiques d'une géométrie asymptotiquement AdS sont décrites par une théorie conforme sur la frontière à l'infini, invariante sous le groupe de Virasoro. La philosophie appliquée ici diffère d’AdS/CFT. Je me suis intéressé à une holographie quasi-locale, c’est-à-dire pour une région bornée de l’espace avec une frontière à distance finie. L'objectif est de clarifier la relation bulk-boundary dans le cadre du modèle de Ponzano-Regge, qui définit la gravitation quantique euclidienne en 3D par une intégrale de chemin discrète. 

Je présente les premiers calculs approximatifs et exacts des amplitudes de Ponzano-Regge avec un état quantique de frontière 2D. Après présentation générale du calcul de l'amplitude 3D en fonction de l'état quantique de bord 2D, on se concentre sur le cas d'un tore 2D, qui trouve application dans l'étude de la thermodynamique des trous noirs BTZ. Dans un premier temps, la frontière 2D est décrite par des états de spin networks semi-classiques. L'approximation par phase stationnaire permet de retrouver, dans la limite asymptotique, la formule de l'amplitude de la gravité quantique 3D en tant que caractère du groupe BMS des symétries d'un espace-temps asymptotiquement plat. Puis dans un second temps, on introduit de nouveaux états quantiques cohérents, qui permettent une évaluation analytique exacte des amplitudes de gravité quantique 3D à distance finie sous la forme d'une régularisation complexe du caractère BMS. La possibilité de ce calcul exact suggère l’existence de structures (quasi-) intégrables liées aux symétries de la gravité quantique 3D en présence de frontières 2D bornées.

\newpage

\section*{Summary of the Thesis}
\addcontentsline{toc}{section}{Summary of the Thesis}

This thesis is dedicated to the study of quasi-local boundary in quantum gravity. In particular, we focus on the three-dimensional space-time case. This research takes root in the holographic principle, which conjectures that the geometry and the dynamic of a space-time region can be entirely described by a theory living on the boundary of this given region. The most studied case of this principle is the AdS/CFT correspondence, where the quantum fluctuations of an asymptotically AdS space are described by a conformal field theory living at spatial infinity, invariant under the Virasoro group. The philosophy applied in this thesis however differs from the AdS/CFT case. I chose to focus on the case of quasi-local holography, i.e. for a bounded region of space-time with a boundary at a finite distance. The objective is to clarify the bulk-boundary relation in quantum gravity described by the Ponzano-Regge model, which defined a model for 3D gravity via a discrete path integral.

I present the first perturbative and exact computations of the Ponzano-Regge amplitude on a torus with a 2D boundary state. After the presentation of the general framework for the 3D amplitude in terms of the 2D boundary state, we focus on the case of the 2D torus, that found an application in the study of the thermodynamics of the BTZ black hole. First, the 2D boundary is described by a coherent spin network state in the semi-classical regime. The stationary phase approximation allows to recover in the asymptotic limit the usual amplitude for 3D quantum gravity as the character of the symmetry of asymptotically flat gravity, the BMS group. Then we introduce a new type of coherent boundary state, which allows an exact evaluation of the amplitude for 3D quantum gravity. We obtain a complex regularization of the BMS character. The possibility of this exact computation suggests the existence of a (quasi)-integrable structure, linked to the symmetries of 3D quantum gravity with 2D finite boundary.



\pagestyle{Contents}
\renewcommand\cftsubsectionfont{\small}
\newpage

\tableofcontents*

\newpage
~
\thispagestyle{empty}

\mainmatter
\pagestyle{Introduction}
\chapter*{Introduction}
\addcontentsline{toc}{part}{Introduction}

General Relativity is one of the cornerstones of today's theoretical physics, and it has been the case since its discovery by Einstein more than a century ago. From Newtonian gravity to Special Relativity and finally to General Relativity, Einstein's key insight at the time was what we now know as the Equivalence Principle. This principle tells us that, locally, gravitation and inertia are equivalent. The physics in a uniformly accelerating frame of reference cannot be distinguished from the physics in a constant gravitational field. From this principle, Einstein deduces that gravitation is better understood not as a force but rather through the geometry of space and time. Since then, a tremendous amount of work has been done on this subject. Through the years, General Relativity was experimentally confirmed, for example with the prediction of the gravitational lensing, i.e. the impact of a massive celestial object on light. A few years ago, on September 14, 2015, a direct observation of General Relativity was finally uncovered with the detection of gravitational waves \cite{Abbott:2016blz}. In a sense, this was the culmination of the theory of General Relativity. Beyond the research of an extension of General Relativity at a classical level and the knowledge of quasi-local symmetry group, what we are still missing today is a coherent description of what we call quantum gravity, i.e. a consistent description of the law of gravity at the quantum level (if it exists).
The thesis finds its home in this context. In particular, we have chosen to focus on a toy model describing flat quantum gravity in a three-dimensional Euclidean space-time manifold called the Ponzano-Regge model \cite{PR1968}. This model is intrinsically discrete and coincides with the spin foam formulation of quantum gravity in three dimensions without cosmological constant.
More precisely, we will look at the construction and the exact computation of the quasi-local partition function of flat three-dimensional gravity. By quasi-local, we mean that we focus on a finite region of space-time bounded by a finite boundary. This work is a starting point to a more in-depth study of quasi-local quantities in non-trivial topology for three-dimensional flat gravity.

\section*{From General Relativity to Quantum Gravity}

General Relativity is a very complex, and at the same time, a very simple theory. The reason behind its simplicity is that it is the {\it only} possible theory of gravitation. This property is encoded into the so-called Lovelock's theorem \cite{Lovelock:1971yv}, which tells us that the only possible three or four-dimensional local theory depending on a metric up to second order derivatives is described by General Relativity. The number of admissible theories becomes enormous if we relax at least one constraint, see  \cite{Crisostomi:2017ugk,Langlois:2018dxi,Heisenberg:2018vsk} for some reviews on the subject. In particular, in more than four dimensions, General Relativity provides only a part of the full admissible action for gravity. Such a generalization of General Relativity is not the subject of this thesis, and we restrict ourselves to the usual Einstein formulation of gravity. The starting point is then the Einstein-Hilbert action and Einstein equations. 

In this work we are interested in studying Euclidean quantum gravity in three dimensions. By Euclidean, we mean that all three dimensions are of the same nature, i.e. they have the same sign (positive, for example) in the metric signature. At first sight, it seems to be an unexpected choice since it is clear that such a theory cannot really describe any real physical situation. This would have been different considering three-dimensional gravity in a Lorentzian space-time. In that case, we are merely suppressing one dimension of space while keeping a time coordinate and two spaces coordinates. This is something commonly done in condensed matter physics for example, where one (or more) dimension of space does not matter due to its size with respect to some length parameters. This question is nevertheless still open in the context of gravity and the three-dimensional Euclidean approach is therefore just a toy model. However, it is a very useful one. To understand the interest, and in some sense, the necessity of studying this case, let us focus on some of the problems arising for four-dimensional gravity.

In the last fifty years or so, one of the most common points among the works conducted on theoretical and fundamental physics is to be found in the use of quantum field theory \cite{Weinberg:1995mt}. Quantum field theory has first proven itself by unifying electromagnetism and the weak interaction. Also the tremendous phenomenological success of the standard model is for all to see. At first glance, it is therefore appealing to try and apply the same logic to General Relativity and gravity. Unfortunately, this is not as easy as one may think. The first, and maybe the biggest, obstacle is that General Relativity in four dimensions is {\it non-renormalizable} in the sense of quantum field theory. This means that the perturbative quantum theory associated to General Relativity involves an infinite number of undetermined coupling constants. Therefore, it does not seem possible to use quantum field theory to make any suitable and interesting prediction in the context of quantum gravity. The second obstacle comes from the very nature of what gravity is. With Einstein's theory, gravity is not described simply as a force, but rather by a theory on the geometry of space and time itself. That is, quantizing General Relativity means quantizing space-time itself, whatever that means. And we do not really know what it means. This comes with technical and philosophical difficulties. The first is about the evolution of a gravitational system. Quantum field theory tells us that the evolution is given by the Hamiltonian of the system. It is well-known that the natural Hamiltonian for a quantum theory of gravity is identically zero on-shell. But it is also clear that gravitational systems evolve... Secondly, by its very definition, quantum field theory is a local theory. Now, also by its very definition, the observables of General Relativity are necessarily diffeomorphism invariant, implying that they{\it are} non-local and hence not well described by quantum field theory. Thirdly, the notion of causality, which is cherished by a number of physicists, is also problematic. From the quantum field theory viewpoint, causality is an axiom. A quantum theory of gravity must however necessarily quantize both space and time. This implies that causality is subjected to quantum fluctuations that might{\it locally} break it. Last but not least, let us mention the problem of scattering amplitudes. In quantum field theory, scattering amplitudes are computed using the postulate that a region where the interactions become negligible does exist and thus the theory can safely be described by free fields. For gravity, this is already false at the classical level, since the gravitational coupling interaction never vanishes. This is a consequence of the fact that gravity, at least to the best of our knowledge, is not screened. Consequently, it is highly improbable that such a region exists at the quantum level. Naturally this is a non-exhaustive list of reasons why constructing a consistent theory of quantum gravity is a much harder problem than we would have thought at the beginning.

Despite these difficulties, lots of approaches to quantum gravity have been studied in the past decades. We do not claim to make a review of these approaches here, nor to talk about all of them since this is well beyond the scope of this work. See \cite{Carlip:2001wq,Esposito:2011rx} for more on the different approaches to quantum gravity. In this short paragraph, we only list a few of the most common approaches. The first one is, of course, string theory \cite{Polchinski:1998rq,Polchinski:1998rr,Tong:2009np}. It is important to note that string theory is not only a theory of gravity but follows the now centuries-old idea of unification in physics. It aims to find a global framework to describe nature. The basic idea of string theory is to describe fundamental particles not as point-like objects, but rather as one-dimensional strings which then interact with space-time. String theory provides a formalism to compute scattering amplitude in a perturbative approach, as in the usual perturbative quantum field theory. In turn, gravity can be included in this framework. 

The two other approaches we will mention are often associated to one another and are related to the work presented in this thesis. They are the loop quantum gravity approach and the spin foam approach. Loop Quantum Gravity \cite{Reisenberger:1996pu,Baez:1997zt,Ashtekar:2004eh,Dona:2010hm,Rovelli:2011eq} is a canonical approach to quantum gravity which aims at quantizing the theory directly from the Hamiltonian perspective. The notion of background is completely abandoned in favour of a non-perturbative approach. The main success of this construction is the rigorous mathematical definition of the Hilbert space of quantum geometry, representing the physical states of the theory, and the definition of geometrical observables, such as the volume and the area operators. This approach predicts that, at the Planck scale, the geometry is intrinsically discrete. This prediction is one of the reasons behind the emergence of the spin foam formulation \cite{Freidel:1998pt,Oriti:2003wf,Perez:2003vx,Livine:2010zx,Perez:2012wv}. In spin foam, we define the quantum theory via a path integral. To go beyond the formal definition of the path integral, we consider a representation of the space-time manifold. That is, we consider a cellular-decomposition to discretize the manifold and write the path integral as a state sum-model over the building blocks of the cellular decomposition. These building blocks are fundamentally interpreted as the building blocks of space-time arising from the intrinsic discrete nature at the Planck scale. In a sense, spin foams are a covariant version of Loop Quantum Gravity. Note that both loop quantum gravity and the spin foam approaches are also defined in dimensions other than four. In particular, we are interested in the spin foam formulation in three dimensions, provided by the Ponzano-Regge model.

\section*{Quantum Gravity in Three Dimensions}

As we already said, three-dimensional quantum gravity is a simple toy model to study gravity. While it is far simpler than the four-dimensional one, it still carries many of the same conceptual problems, namely, the construction of states and observables, the role of topology, the relation between the different quantization schemes, the coupling to particles and matter fields, the effect of the cosmological constant, the emergence or not of black holes, the holographic principle... The biggest downside on the one hand of three-dimensional gravity is the dynamics, which is significantly different from that of the four-dimensional case due to the absence of local degree of freedom and thus the absence of gravitons. Another problem lies in the fact that three-dimensional gravity does not have a good Newtonian limit in the sense that the force between point mass objects vanishes. Its biggest upside on the other hand is that it allows us to work on a mathematically rigorous non-perturbative approach to quantum gravity while staying exact at the same time.

One of the earlier works on three-dimensional gravity comes from Staruszkiewicz in 1963 \cite{Staruszkiewicz:1963zza}. In this paper, he described the behaviour of static solutions with point sources. The most seminal paper on three-dimensional gravity is credited however to Deser, Jackiw and 't Hooft in the 80s \cite{Deser:1983nh,Deser:1983tn} where they studied in detail the aspect of the static $\cN$ body solutions. Following this paper, the interest in the field started to grow, and still lasts until today and this thesis is one of the many works on it. A few years later, Achucarro and Townsend \cite{Achucarro:1987vz} showed that three-dimensional gravity can be represented as a Chern-Simons theory. Then this idea was  further developed by Witten \cite{Witten:1988hc,Witten:1989sx,Atiyah:1989vu}. He reformulated first-order gravity in terms of the Chern-Simon topological quantum field theory and showed how to extract expectation values of Wilson loop observables from a formal path integral formulation. In \cite{Reshetikhin:1991tc}, Reshetikhin and Turaev provided a mathematically rigorous formulation of Witten's Chern-Simons topological theory. At the same time, Turaev and Viro introduced a new way of constructing quantum invariants of three manifolds \cite{Turaev:1992hq}. Known as the Turaev-Viro model, this invariant is formulated in terms of a state-sum model. In a similar spirit to Witten, Horowitz \cite{Horowitz:1989ng} discussed the quantization of another class of topological quantum field theory, ever since known as $BF$ theories \cite{BLAU1991130}, to which three-dimensional first-order gravity belongs. The link between topological field theory, invariant of manifold and quantum gravity was further studied by Barrett and Crane \cite{Crane:1993vs,Barrett:1995mg,Barrett:1997gw}. 

In fact, the first model for three-dimensional Euclidean quantum gravity appeared earlier than all these works. It was formulated as far back as 1968 by Ponzano and Regge \cite{PR1968} and it coincides with a spin foam formulation of three-dimensional $BF$ quantum gravity.

\section*{Quantum Gravity and Holography}

One of the main points of interest in recent years on quantum gravity is the holographic principle. Roughly, this principle states that the gravitational phenomena occurring inside a region of space-time can be described by a "dual" theory living on the boundary of this space-time region. The apparition of the holographic principle is first due to 't Hooft in 1993 \cite{tHooft:1993dmi} then to Susskind in 1995 \cite{Susskind:1994vu}. One of the key strengths of holography is that it is compatible with the Bekenstein-Hawking formula for a black hole \cite{Hawking:1974rv,Hawking:1974sw}, stating that the entropy of a black hole is proportional to its area, instead of its volume. It is of primordial importance to note that this principle is not stated only for {\it asymptotic} region. In the following, we will differentiate two types of holography. The first one is the asymptotic holography, describing an infinite region and an asymptotic boundary. The second one is the quasi-local holography, describing holography for a {\it finite, bounded} region. While it is true that most of the recent works focus on asymptotic holography, it is necessary to study the quasi-local one for a theory such as gravity. We will come back to this point in the following. 

In practice, the first occurrence, and maybe the most famous one, of the (asymptotic) holographic principle is the well-known Anti De Sitter / Conformal Field Theory (AdS/CFT) duality proposed by Maldacena \cite{Maldacena:1997re}. It states that quantum gravity on a $D$-dimensional asymptotically AdS space is dual to a conformal field theory living on the boundary of the AdS space at spatial infinity. While there is still no proof of the full equivalence between these two theories, a good amount of works shows that gravity in a AdS asymptotic space with suitable fall-off conditions can indeed be described by a CFT on its boundary \cite{Witten:1998qj}. In the three-dimensional case, the emerging dual field theory is a specific conformal field theory, known as the Liouville theory \cite{Coussaert:1995zp,Carlip:2005zn}.

With this notion of holographic duality comes the question of symmetries. One of the first ingredients for the holographic correspondence is the matching of the symmetries between the theory in the D-dimensional manifold and its dual theory on the (D-1)-dimensional boundary. At the asymptotic level, there are two seminal works. The first one by Bondi, van der Burg, Metzner and Sachs \cite{Bondi:1962px,Sachs:1962wk} in 1962. They looked at the asymptotic symmetry group of asymptotically flat four-dimensional gravity. They expected to recover the isometries of flat space-time itself, the Poincaré group. Instead, they found something much more complex in what we call today the BMS group. While this group does contain the Poincaré group as a finite dimensional subgroup, it has an infinite number of generators called the "supertranslations", making the symmetry group infinite! The first implication of this discovery is that General Relativity does not reduce to special relativity for weak fields and long distance as one might have expected. In fact, it was shown later on that the BMS group can again be extended. A more in-depth study looking at a large radius expansion of the metric shows that another family of generators, known as the "superrotations", can be included. It then results in the so-called extended BMS group \cite{Barnich:2009se,Oblak:2015sea}. It extends both the translation group and the Lorentz group to infinite-dimensional counterparts. In recent years, lots of works have been done in the context of the BMS$_3$ group, the asymptotic symmetry group for three-dimensional gravity \cite{Barnich:2014kra,Barnich:2015uva,Oblak:2016eij}. In particular, it has been explicitly related to the partition function of asymptotically flat three-dimensional gravity \cite{Barnich:2015mui}.

The second famous work was done by Brown and Henneaux \cite{Brown:1986nw} in 1986 and is an harbinger of the AdS/CFT duality. They focused on the three-dimensional case with negative cosmological constant and looked at the asymptotic symmetry group for what is now known as the Brown-Henneaux fall-off conditions for the field. Similarly to the earlier work of Bondi, van der Burg and Sachs, they found that the asymptotic symmetry was not encoded into the usual AdS$_3$ isometry algebra $\so(2,2)$. Instead, it also gets enhanced in an infinite-dimensional algebra, whose associated group is the two-dimensional conformal group, i.e. two commuting copies of the Virasoro group. Similarly to the flat case,  it was explicitly shown \cite{Giombi:2008vd} that the partition function on such a space and with Brown-Henneaux fall-off conditions is related to a character of the Virasoro algebra.

Behind these computations, we emphasize that one of the key points is the choice of fall-off conditions for the field. In a sense, these fall-off conditions are natural because of the asymptotic boundaries: they impose that we recover either the usual flat or AdS metric at spatial infinity respectively.  However, a few conditions, such as imposing the recovery of the isometry group of globally flat or AdS space, make the choice of fall-off conditions not unique \cite{Compere:2013bya,Compere:2018aar}. From a more general perspective, all these asymptotics aspect are related to the so-called soft theorem and memory effect \cite{Strominger:2017zoo}. This is a large domain of research nowadays taking root in the fact that for a gauge theory on a manifold with boundary, the gauge invariance might be broken at the boundary. Degrees of freedom that were pure gauge in the bulk become relevant, physical degrees of freedom at the boundary. To recover the gauge invariance, it is necessary to add new fields at the boundary, representing the soft modes. See \cite{Strominger:2017zoo} and references therein for some recent development on this subject.

Up to now, we have only discussed asymptotic results. This work however is interested in the quasi-local aspects, i.e. in a local gravitational system within a finite region of space-time and with a finite boundary. Similarly to the asymptotic case, the notion of symmetry group is of primordial importance. In this setting, the most well-known result is found with the Chern-Simon theory, which is dual to a Wess-Zumino-Novikov-Witten conformal field theory. Its symmetry group is related to the conformal edge currents \cite{Balachandran:1991dw} and it leads to recovering the Kac-Moody algebra as the boundary symmetry group for Chern-Simon theory \cite{Banados:1994tn,Geiller:2017xad}. For finite boundary, the choice of boundary data is larger than in the asymptotic case since no natural choice arises. See \cite{Husain:1997fm} for an early study of this problem in gravity and $BF$ theory. Similar phenomena to the soft modes appear for finite boundary called the edge modes. See \cite{Freidel:2015gpa,Donnelly:2016auv,Freidel:2016bxd,Speranza:2017gxd,Geiller:2017xad,Freidel:2019ees} for recent discussions on this subject. The core of this work was to compute the amplitude for quasi-local region of space-time, which in turn must correspond to the character of the symmetry group for the quasi-local region. We will see that such a duality between partition function and symmetry group is already well-known in the continuum and for asymptotic boundaries.

\section*{Discrete Quantum Gravity and The Ponzano-Regge \\ Model}

First defined in 1968 by Ponzano and Regge, the Ponzano-Regge model \cite{PR1968} is the first model of Euclidean quantum gravity ever proposed. The model was first defined on a triangulated three-dimensional manifold as a state-sum using the Lie group $\SU(2)$. To each edge of the triangulation we associate a spin, that is an irreducible representation of $\SU(2)$. The model then assigns a quantum amplitude to each possible configuration of spins, and the amplitudes are finally summed over all possible admissible spins for all the edges. Hence the name of state-sum model. The Ponzano-Regge proposal as a model for 3D quantum gravity was somewhat fortuitous: while studying the asymptotic of the Wigner $3nj$-symbols from the recoupling theory of $\SU(2)$ representations, Ponzano and Regge realized that in the limit of large spins, the $6j$ symbol reproduced the complex exponential of a discrete version of the (boundary) action of General Relativity for a (flat) tetrahedron with edge lengths quantized in Planck units. This action had been proposed only a few years before by Regge himself \cite{Regge:1961px}, in the context of a discrete formulation of General Relativity, now commonly known as Regge calculus \cite{Hartle:1981}. Chapter \ref{chap3} is dedicated to the review of the model.

Another possible view on the model, more modern and recent, comes from its interpretation as a discretization of a $BF$ topological quantum field theory \cite{Rovelli:1993kc}. Its relation to Hamiltonian loop quantum gravity was rigorously shown in \cite{Noui:2004iy}. Moreover, its relation to Chern-Simons theory was shown to hold both at the covariant level \cite{Freidel:2004nb} and at the canonical one \cite{Freidel:2004vi,Noui:2004iz,Noui:2006ku,Meusburger:2008bs}. This formulation of the Ponzano-Regge model allows for a more general consideration on the discretization of the manifold, and opens up a different way of evaluating the amplitude of the model using group integrations instead of infinite sums. The $BF$ theory is a natural starting point from a non-perturbative approach, and thus is of importance for the spin foam formulation of quantum gravity. As a local state sum models, spin foams are suitable to consider finite boundaries. A first study of the one-loop partition function for three-dimensional gravity with finite boundaries, using perturbative Regge calculus \cite{Bonzom:2015ans}, revealed that a holographic dual can actually be defined even before taking an asymptotic limit. The possibility to make use of holographic dualities already for finite boundaries offers exciting perspectives. In particular, it allows us to get information on much more local properties of quantum gravity compared to those that can be encoded on an asymptotic boundary. Of course, this is also possible in other contexts, in the Chern-Simons formulation of three-dimensional AdS gravity \cite{Witten:1988hc} for example. In that case, the role of asymptotic condition is to automatically select a stricter set of boundary conditions, hence leading to a further specification of the boundary theory, from a Wess-Zumino-Novikov-Witten sigma model to scalar Liouville theory \cite{Coussaert:1995zp,Carlip:2005zn}

Being formulated in terms of a local state-sum, the Ponzano-Regge model allows to compute the amplitude of quantum gravitational processes within {\it finite}, quasi-local, regions. As a consequence of the topological nature of three-dimensional quantum gravity, the bulk variables can be exactly integrated out, modulo some global constraints coming from the topology of the chosen manifold. In the case of the sphere, which is topologically trivial, nothing remains. However, for non-trivial topology, for example for the torus, some information coming from the presence of a non-contractible cycle remains. Therefore, we are left with an amplitude defined purely on a two-dimensional object, the boundary, still carrying information about the bulk geometry. This is a perfect starting point for a study of the dual boundary theory. Obviously, the choice of boundary condition matters. Remember that due to the fact that we are working on a finite region, there are no preferential boundary conditions as in the asymptotic case. Each family of boundary conditions will define a class of possible dual boundary theory. This must be emphasized since it is the contrary that happens in the AdS/CFT duality. In this case, different boundary theories defined at spatial infinity are interpreted as given by duality {\it different} bulk theories. In our case, the bulk quantum gravity theory is instead considered to be known, and the dual theory, which is not necessarily a conformal field theory, emerges at a finite boundary as a reflection of a given boundary condition. This is completely analogous to the way Wess-Zumino-Novikov-Witten and Liouville theories emerge on the boundary of a Chern-Simons theory.

\section*{Plan of the Thesis}

This thesis is organized in six chapters divided in two parts. 

The first part of the thesis focuses on the review of gravity and quantum gravity in three dimensions. In chapter \ref{chap1} we quickly review the framework of General Relativity, and explain in more details why the three-dimensional case is peculiar. In particular, we focus on the first order formulation of General Relativity via the $BF$ action. This chapter ends with a short presentation of the AdS and flat asymptotic case. Chapter \ref{chap2} is a brief review of the computation of the partition function of three-dimensional Euclidean gravity done in the continuum context both in the AdS case and in the flat case. Finally, this part ends with chapter \ref{chap3} and the presentation of the Ponzano-Regge model both from a mathematical and historical point of view. Application of the Ponzano-Regge model to the trivial topology will also be briefly presented. 

The second part of the thesis is dedicated to the partition function of quantum gravity in the quasi-local setup. First, in chapter \ref{chap4}, based on \cite{Dittrich:2017hnl}, we introduce all the necessary ingredients for the computation of the Ponzano-Regge amplitude. Namely, we introduce the discretization of the torus we consider and the structure of the boundary Hilbert space. The chapter starts with a brief prelude focused on the quantum Regge calculus approach, where the partition function for quantum gravity in three dimensions was computed in a quasi-local region for the first time. And it ends with a first computation with the Ponzano-Regge model leading to a statistical model duality. Finally, the chapter \ref{chap5} (based on \cite{Dittrich:2017rvb}) and the chapter \ref{chap6} (based on \cite{Goeller:2019zpz}) are dedicated to the computation of the amplitude of quantum gravity in three dimensions considering two different classes of boundary states. Chapter \ref{chap5} focuses on a one-loop computation in the sense of the WKB approximation for the path integral whereas in chapter \ref{chap6}, we perform, for the first time, an {\it exact} computation of the amplitude of three-dimensional Euclidean quantum gravity on a non-trivial topology.

Note that the chapters \ref{chap1} and \ref{chap2} are mainly for the completeness of this thesis. We advise the reader to look at the references therein for the details.

\counterwithin{equation}{section}		\setcounter{equation}{0}
\counterwithin{thmcnter}{chapter}		\setcounter{thmcnter}{0}

\pagestyle{Regular}

\renewcommand{\afterpartskip}{}
\part*{Part I\\[.3cm]
	Quantum Gravity in Three Dimensions} 
\addcontentsline{toc}{part}{I\ Quantum Gravity in Three Dimensions} \label{part:partI}
\newpage
~
\thispagestyle{empty}

	\chapter{General Relativity in Three Dimensions and BF Theory}
\label{chap1}

Three-dimensional General Relativity is a topological field theory \cite{Witten:1988hc}. By topological, we mean that there are no local degrees of freedom. In particular, pure three-dimensional gravity is a theory of constant curvature directly proportional to the cosmological constant. In presence of matter fields, curvature is then concentrated at and only at the location of matter. This topological nature of three-dimensional gravity implies that physically distinct solutions must be parametrized by a finite set of {\it global} parameters. The number of parameters depends on the topology of the space-time manifold $\cM$ considered, and more particularly on the dimensions of the fundamental group of the space-time. More precisely, three-dimensional gravity is well-described by a very well-known class of theories, called $BF$ theories. They were first introduced by Horowitz in 1989 \cite{Horowitz:1989ng} and named a few years later in the seminal paper of Blau and Thompson \cite{BLAU1991130} where the study of non-abelian $BF$ theory started. Coming back to the case of gravity, it happens that only flat three-dimensional gravity, i.e. with vanishing cosmological constant is exactly described by a $BF$ action. The $BF$ action can be further modified to take into account the presence of a cosmological constant.

The first section of this chapter quickly introduces General Relativity in the Einstein-Hilbert formalism. The motivation behind this section is not to give a comprehensive presentation of General Relativity, but rather to show how the three-dimensional case is peculiar, and hence allows some interesting simplifications. We emphasize that this section is by no means a review of General Relativity. Providing such a review would be beyond the scope of this chapter and thesis. We refer the interested reader to the already existing excellent review on General Relativity in the literature. See for example the full course by Blau \cite{Blau2015} or the lecture by Carroll \cite{Carroll:1997ar}. Following this brief introduction of General Relativity, we will then focus on the general formulation of $BF$ theory and quickly point out some interesting properties and on the expression of gravity in this formalism. The last section of this chapter focuses on the geometry of asymptotic flat and AdS gravity: the goal being to point out the physics behind the torus topology. At the same time, we will briefly review the asymptotic symmetry group of asymptotically AdS and flat gravity, leading to the Virasoro and BMS group.
\section{The Einstein-Hilbert formulation of General Relativity}

The first formulation of General Relativity we learn is the one based on the Einstein-Hilbert action. Consider a $D$-dimensional manifold $\cM$ acting as the space-time, and a (Euclidean) metric $g$ on it. The Einstein-Hilbert action then reads, in absence of boundary $\pp \cM$
\begin{equation}
	S = \f{1}{16 \pi G} \int_{\cM} \text{d}^D x \sqrt{g} (R-2\Lambda) \; ,
\end{equation}
where $G$ is the Newton constant, $R$ the Ricci scalar and $\Lambda$ the cosmological constant. In case of a manifold with boundary, it is well-known that some terms must be added to the action for the action principle to still be well-defined. In the metric formalism, the natural quantity to be kept constant at the boundary is the induced metric $h$ on $\pp \cM$, that is, the pull-back of the metric $g$ to the boundary $\pp \cM$. The corresponding boundary term is the well-known Gibbons-Hawking-York (GHY) term \cite{Gibbons:1976ue,York:1986lje}
\begin{equation}
	S_{GHY} = \f{1}{8 \pi G} \int_{\pp \cM} \dd^{D-1} x \sqrt{h} K \; ,
	\label{chap1:eq:EH_action}
\end{equation} 
where $K$ is the extrinsic curvature and $h$ the induced boundary metric. While considering asymptotic boundary condition, it is necessary to consider a choice of fall-off conditions for the field. In that case, to ensure convergence at infinity, it might be necessary to add other terms to the boundary action. This is the case for example, considering an asymptotically AdS space with the so-called Brown-Henneaux boundary condition. To ensure the action principle, one must add a term depending on the induced metric $h$ only \cite{Carlip:2005tz}. Denoting $\Lambda = -1/l^2$ with $l \in \R$ the AdS radius, the final action is
\begin{equation}
	S = \f{1}{16 \pi G} \left[ \int \sqrt{g} (R-2\Lambda) + 2 \oint \sqrt{h} \left(K - \frac{1}{l}\right) \right].
	\label{eq:chap1:fullAdS_action}
\end{equation}
\medskip

In this section, we are not interested in such a modification of the boundary term. In fact, we are not really interested in any boundary term for now but rather in the equations of motion for General Relativity and their peculiarity in three dimensions. The equations of motions are easily found by applying the variational principle on the action \eqref{chap1:eq:EH_action} with respect to the metric. This returns the standard Einstein equations
\begin{equation}
	R_{\mu \nu} - \f{1}{2} g_{\mu \nu} R + \Lambda g_{\mu \nu} = 8 \pi G T_{\mu \nu} \; ,
	\label{chap1:eq:vaccum_einstein}
\end{equation}
with $R_{\mu \nu}$ being the Ricci tensor and $T_{\mu\nu}$ the energy-momentum tensor. Considering the case of pure gravity, i.e. without any matter fields, the tensor energy-momentum $T_{\mu \nu}$ vanishes everywhere. The right-hand side of the Einstein equations is then identically zero. Therefore, from the Einstein equations, it is straightforward to see that the Ricci tensor and scalar are deeply linked to the presence of matter. Taking the trace of \eqref{chap1:eq:vaccum_einstein}, we have that the Ricci scalar is proportional to the cosmological constant on-shell
\begin{equation*}
	R = \f{2 D \Lambda}{D-2}
\end{equation*}
while the Ricci tensor is basically the metric tensor up to some coefficients given by the cosmological constant and the space-time dimension
\begin{equation*}
	R_{\mu \nu} = \f{2}{D-2} \Lambda g_{\mu \nu} \;.
\end{equation*}

Consider now the flat case where the cosmological constant vanishes. Then, both the Ricci tensor and the Ricci scalar become identically zero on-shell. In some sense, they do not carry any information about gravity. It is only with matter that they carry non-trivial information.

Recall now that by its very definition, General Relativity is a theory about the geometry of space and time. What encodes intrinsically this geometry is the Riemann tensor $R_{\mu \nu \alpha \beta}$. It is well-known that this tensor can be expressed in terms of the Ricci scalar, the Ricci tensor and of what is known as the Weyl tensor $W_{\mu \nu \alpha \beta}$. The decomposition then reads
\begin{align*}
R_{\mu \nu \alpha \beta} = W_{\mu \nu \alpha \beta} + &\f{1}{D-2} \left( g_{\mu \alpha} R_{\nu \beta} + g_{\nu \beta} R_{\mu \alpha} - g_{\nu \alpha} R_{\mu \beta} - g_{\mu \beta} R_{\nu \alpha} \right) \\
& - \f{1}{(D-1)(D-2)} \left( g_{\mu \beta} g_{\nu \alpha} - g_{\mu \alpha}g_{\nu \beta}  \right)R \; .
\end{align*}
The Riemann tensor is a rank four tensor. Hence, it has by definition at most $D^4$ independent components. Actually, it only has $\f{D^2(D^2-1)}{12}$ independent ones because of its symmetry. Indeed, recall that the Riemann tensor is antisymmetric in $(\mu, \nu)$ since the metric is covariantly constant and in $(\alpha, \beta)$ by construction. It also follows the first Bianchi identity $R_{\mu \nu \alpha \beta} + R_{\mu \alpha \beta \nu} + R_{\mu \beta \nu \alpha} = 0$ and is symmetric under the exchange of the pair $(\mu\nu)$ and $(\alpha\beta)$. Applying the previous result, in the flat case and on-shell, only the Weyl tensor contributes to the Riemann tensor without matter. As such, it is the one carrying the information about pure gravity.

We can go even further by considering the three-dimensional case. Indeed, in three dimensions, the Ricci tensor and the Riemann tensor happen to have the {\it exact} same number of independent variables. The Riemann tensor has $\f{D^2(D^2-1)}{12}|_{D=3}= 6 $ independent components. The Ricci tensor, by virtue of being a symmetric tensor of rank 2, has $\f{D(D+1)}{2} |_{D=3}=6$ independent components. There is no room for the Weyl tensor in three dimensions, and so it must vanish by construction. The Riemann tensor is thus just a function of the Ricci tensor and scalar
\begin{equation}
R_{\mu \nu \alpha \beta} = g_{\mu \alpha} R_{\nu \beta} + g_{\nu \beta} R_{\mu \alpha} - g_{\mu \beta}R_{\nu \alpha} - g_{\nu \alpha}R_{\mu \beta} + \f{1}{2} \left( g_{\mu \beta} g_{\nu \alpha} - g_{\mu \alpha}g_{\nu \beta} \right)R \; ,
\end{equation}
and in absence of matter, the Riemann tensor identically vanishes. 

That is, we see that pure gravity in three dimensions is a really simple theory. The geometry of the theory is one of constant curvature given by the cosmological constant. And due to the fact that the Weyl tensor identically vanishes, there are no gravitational degrees of freedom. The only possible degrees of freedom are therefore topological ones.

It is, in fact, really easy to recover this result by a simple counting of the numbers of variables and constraints for three-dimensional gravity. In $D$ dimensions, the phase space of General Relativity is characterized by a spatial metric on a constant time hyper-surface, which has $\f{D(D-1)}{2}$ independent components since it is a symmetric tensor of rank 2. Adding the conjugate momenta to the counting, we arrive at $(D)(D-1)$ degrees of freedom for the full phase space. Now, to recover the actual degrees of freedom of the theory, we must subtract to this result the number of constraints. If $D \geq 3$, we have $D$ constraints coming from the choice of coordinates and $D$ coming from the Einstein equations which impose a constraint on the initial condition rather than on the dynamics of the system. Putting everything together, we are left with $D(D-1)-2D = D(D-3)$ degrees of freedom. If $D=3$, this clearly vanishes and we recover the previous claim of no gravitational degrees of freedom for pure gravity in three dimensions.

This is not the only peculiarity of the three-dimensional case. On top of the vanishing of the Weyl tensor, the Newtonian potential also vanishes. That is, the Newtonian limit of three-dimensional General Relativity is very peculiar; there is no force between masses. For more detail in general on three-dimensional gravity, see also \cite{CarlipBook:1998}.
\\

While the Einstein-Hilbert action is generally suitable to study the classical feature of the theory, it is quite difficult to quantize. In three dimensions, the two favourite formulations to quantize gravity are the Chern-Simons formulation for AdS and the $BF$ formulation for flat gravity. For a negative cosmological constant $\Lambda < 0 $, i.e. for an AdS space, it can be shown that the Einstein-Hilbert action is equivalent to the difference between two Chern-Simon actions for non-abelian gauge fields \cite{Achucarro:1987vz,Witten:1988hc}. The quantization procedure of a Chern-Simons theory is well understood, at least for compact groups and in the case of Euclidean gravity, the gauge group is $\SU(2)$, which is compact. The second formulation is the most interesting for our purpose and is based on a first order formulation of General Relativity. The next section is dedicated to a brief review of the $BF$ formalism and gravity as a $BF$ theory.
\section{General Relativity as a BF theory}

$BF$ theory is the name given by Blau and Thompson \cite{BLAU1991130} to the very general class of topological field theory first introduced\footnote{The Abelian case was first considered by Schwarz a few years earlier \cite{Schwarz:1978cn}.} by Horowitz \cite{Horowitz:1989ng}. It is a fairly simple class of topological field theories and can in a sense be considered as the simplest possible gauge theory. A $BF$ theory can be defined in any dimension and is naturally independent of any choice of background geometry. Indeed, these theories are defined without any mention of an existing metric or any other geometrical structure of the space-time manifold. It is therefore natural to consider this class of models for the study of a background independent theory. The first mention of this type of theory in General Relativity dates back to the work of Plebanski in 1977 and its first order formulation of gravity. In particular, in three dimensions, General Relativity is a special case of $BF$ theory, while in four dimensions, one must add extra constraints to the initial $BF$ action to recover General Relativity \cite{Baez:1995ph}. This was to be expected since gravity is not a topological field theory in four dimensions. See \cite{Celada:2016jdt} for a review of $BF$ gravity theories.

\subsection{Definition}

Consider a $D$-dimensional manifold $\cM$, acting as the space time, and a trivial $\cG$ bundle on top of it associated to the Lie group $G$. We denote its Lie algebra by $\g$. Consider now a connection $\omega$ on the trivial $\cG$ bundle. Locally, it can be seen as a $\g$-valued one-form. We also need a $(D-2)$ form $B$ valued in the adjoint bundle of $\cG$, i.e. the vector bundle associated to $\cG$ via the adjoint action of $G$ on its algebra. Locally, $B$ is a $\g$-valued $(D-2)$ form. The curvature of $\omega$ is defined as the $\g$-valued two-form $F[\omega] = \dd \omega + \omega \wedge \omega$. The $BF$ action is then defined by the integral
\begin{equation}
	S_{BF} = \int_{\cM} \tr(B \wedge F[\omega]) \; .
\end{equation}
The name\footnote{Note that it is rather more intuitive to rename the $B$ field $E$. At the phase space level, $B$ is understood as the conjugate variable to the connection $\omega$. When considering the group $U(1)$, this corresponds to the electric field of electromagnetism. } $BF$ is self-explained by the structure of the action. The notation trace refers to the inner product of the Lie algebra $\g$. 

Trivially, as an integral of a differential form, the $BF$ action is invariant under the diffeomorphisms of $\cM$. Forgetting for a moment about the diffeomorphisms, the $BF$ action exhibits two gauge symmetries. The first one is the usual gauge transformation defined by the gauge group $G$. It is parametrized by a $\g$-valued zero-form $\alpha$ and its infinitesimal action on the field is
\begin{equation}
	\delta_{\alpha} B = [B,\alpha] \; , \quad \delta_{\alpha} \omega = d_{\omega} \alpha \; .
\end{equation}

The second invariance is called the translational, or shift symmetry, due to its action on the field $B$. It is parametrized by a $\g$-valued $(D-3)$ form $\eta$. The reason for this symmetry is to be found in the Bianchi identity for the curvature $F$ stating that $\dd_{\omega} F[\omega] = 0$. Its infinitesimal action on the field is
\begin{equation}
	\delta_{\eta} B = d_{\omega} \eta \; , \quad \delta_{\eta} \omega = 0 \; . 
\end{equation}

The equations of motion of the theory are easily derived from the action by taking its variation with respect to the fields $B$ and $\omega$. We find
\begin{equation}
	F[\omega] = 0 \; , \qquad \dd_{\omega} B = 0 \; ,
\end{equation}
where $\dd_{\omega}$ stands for the exterior covariant derivative with respect to the connection $\omega$. These equations are really simple. The first one is telling us that the connection $\omega$ is flat, i.e. that basically, the $(D-2)$ form $B$ acts as a Lagrange multiplier enforcing the flatness of the connection everywhere. On the one hand, due to the flatness, the solutions for $\omega$ are locally all the same up to gauge transformations. On the other hand, the shift symmetry allows to always transform $B$ into its locally trivial solution $ B = 0$. Indeed, the second equation of motion tells us that $B$ is closed. Now, since the connection is also flat, we know that locally, all closed forms are also exact. That is, it exists $\eta$ a $(D-3)$-form such that $B = \dd_{\omega} \eta$. It is clear now that the shift symmetry always allows us to shift $B$ to zero.

Note that these equations admit a useful interpretation in the context of co-homology. Indeed, the square of the exterior covariant derivative can be expressed in terms of the curvature through the formula $d^{2}_{\omega} f = [F[\omega],f]$. Hence, when the connection is flat, we have $d^{2}_{\omega} f = 0 $. Looking at the equation of the $B$ field now, we see that the space of solution for $B$, taking into account the gauge symmetry, is the second DeRham co-homology group. This interpretation will be useful in the following when looking at the path integral of the $BF$ action, since it will allow us to relate it to the Ray-Singer torsion, a topological invariant \cite{BLAU1991130}. In the BRST formalism, the reducibility of the shift symmetry plays an important role and involves the introduction of the "ghost for ghost". 

To conclude this section, we will point out that the diffeomorphism invariance is encoded into the two gauge symmetries defined above \cite{Freidel:2002dw,Buffenoir:2004vx}. Indeed, consider a vector field $\xi$ on $\cM$. From this vector field, we can define a zero-form $\phi_{\xi}$ valued in the adjoint bundle of $G$ via the interior product of form $i_{\xi}$ by $\phi_{\xi} = i_{\xi} \omega$. Similarly, we define the $(D-3)$-form $\eta_{\xi}$ by $i_{\xi} B$. The action of the diffeomorphism $\xi$ on the connection and the field $B$ is then
\begin{equation}
	\delta_{\xi} \omega = d_{\omega} \phi_{\xi} + i_{\xi} F[\omega] \; , \qquad \delta_{\xi} B = \dd_{\omega} \eta_{\xi} + [\phi_{\xi},B] + i_{\xi} \dd_{\omega} B \;.
\end{equation}
On-shell, this is just a combination of the gauge and shift symmetry.

\subsection{Three-dimensional BF theory and gravity}

Obviously, the point of interest here is the application of the $BF$ formalism to the case of gravity in three dimensions. In that case, the gauge group is $\SU(2)$ and the $BF$ action corresponds to a first order formulation of gravity. The field $B$ is then called the tetrad field, and usually denoted by $e$. It is a $\su(2)$-valued one-form. The action of General Relativity reads in this formalism
\begin{equation}
	S_{BF} = \f{1}{16 \pi G} \int_{\cM} \dd^{3}x \; \left(  \tr(e \wedge F[\omega]) + \f{\Lambda}{6} \tr(e \wedge e \wedge e) \right)\; .
\end{equation}
Explicitly $e^a = e^a_\mu\dd x^\mu$,  $\omega^a = \tfrac12 \epsilon^a{}_{bc}\omega^{bc}_\mu \dd x^\mu$ and $F^a[\omega] = \dd\omega^a + \tfrac12 \epsilon^{a}_{bc}\omega^b\wedge \omega^c$ are the tetrad one-form, the spin-connection one-form, and the curvature two-form, respectively. The index $a$ is a tangent space index to the space-time manifold, in agreement with the standard relation between the tetrad and the metric, i.e. $g_{\mu\nu}=\eta_{ab}e^a_\mu e^b_\nu$, where $\eta_{ab}$ is the flat metric of signature equal to that of $g_{\mu\nu}$. For the sake of completeness, we consider the case with a possible non-vanishing cosmological constant. We have added to the $BF$ action a volume term coinciding with the addition of the cosmological constant.

Note that in order to recover the usual formulation of General Relativity, the tetrad field must be non-degenerated and associated to a positive volume form. That is, it requires the constraint $\det(e) > 0$ everywhere. In that case, the metric $g_{\mu \nu}$ can be defined as above, and we can recover the usual Einstein-Hilbert action from the first order formalism. Now, in $BF$ theory, no constraints are imposed on the tetrad field. Hence, the above action defines a more general theory compared to General Relativity. When some space configurations where $\det(e) = 0$ exist, the metric field is degenerated and no longer invertible anymore. These configurations are expected to be relevant to the discussion of topological change, see for example \cite{Hawking:1979zw,Horowitz:1990qb,Kaul:2016lhx}. Taking $e$ to its opposite, it is immediate to see that to each tetrad field with $\det(e)>0$ is also associated a tetrad field with negative volume form \cite{Freidel:1998ua}, which is related to the orientation of the space-time manifold. At the classical level, these two other field configurations might not matter, since we can always restrict the study to the space of non-degenerate metric field. However, at the quantum level, it is not possible to do so. Indeed, the path integral associated to the $BF$ action will naturally be over all tetrad fields. However, only those with $\det(e) > 0$ are directly related to General Relativity. Hence, we naturally obtain a bigger theory than General Relativity. See \cite{Freidel:1998ua} for a discussion about this point.

In the previous section, we did not discuss the possibility of having a manifold with boundaries. In that case, the variation of the action is not necessarily zero on-shell. The explicit computation of the variation of the $BF$ action with a volume term returns
\begin{align*}
	16 \pi G \delta S_{BF} 
	&=
	\int_{\cM} \dd^3 x \; \tr(\delta e \wedge (F[\omega]+ \f{\Lambda}{3} e\wedge e) + e \wedge \delta F[\omega])
	\\
	&=
	\int_{\cM} \dd^3 x \; \tr(\delta e \wedge (F[\omega]+ \f{\Lambda}{3} e\wedge e) + e \wedge \dd_{\omega} \delta \omega)
	\\
	&=
	\int_{\cM} \dd^3 x \; \tr(\delta e \wedge (F[\omega]+ \f{\Lambda}{3} e\wedge e)  + \dd_{\omega} e \wedge \delta \omega) + \int_{\pp\cM} \dd^2 x \; \tr(e \wedge \delta \omega) \; ,	
\end{align*}
where we first applied the identity $\delta F[\omega] = \dd_{\omega} \delta \omega$ and then integrated by part. On-shell of the equations of motion the variation of the action is
\begin{equation}
	\delta S_{BF} = \f{1}{16 \pi G} \int_{\pp\cM} \tr(e \wedge \delta \omega).
\end{equation}

This expression of the boundary term shows us that the natural quantity to be kept fixed at the boundary is the pull-back of the connection $\omega$, such that no boundary term is needed in the action. If it is the pull-back of the local tetrad field to be kept fixed at the boundary, a boundary term is then needed, and we get
\begin{equation}
S_{BF} = \f{1}{16 \pi G} \int_{\cM} \dd^{3}x \; \tr(e \wedge F[\omega]) - \int_{\pp\cM} \dd^{2}x \; \tr(e\wedge\omega) \; .
\end{equation}
This term is similar to the Gibbons-Hawking-York term needed in the metric formulation in presence of a boundary. Note that the presence of boundary makes the analysis from the gauge theory point of view more complicated. Indeed, it is well-known that gauge invariance might be broken at the boundary. In order to restore the gauge invariance at the boundary, one must consider new compensating fields living at the boundary. These are related to the soft modes for asymptotic boundary and edge modes for boundary at a finite distance \cite{Strominger:2017zoo}.
\medskip

The starting point for General Relativity in this thesis is the action
\begin{equation}
	S_{BF} = \int_{\cM} \dd^3 x \;  \tr(e \wedge F[\omega]) \;.
	\label{chap1:eq:BF_action_starting}
\end{equation}
We choose the unit such that $16\pi G = 1$ and we keep fixed the connection at the boundary such that no boundary term is needed in the action. 
\section{Asymptotically AdS and flat space: metric and geometry}

We finish this chapter with a brief presentation of the AdS and flat space geometry. The main point of this section is two-fold. First, we want to give another explanation for the use of the torus topology in our computation. This geometry naturally arises when looking at the asymptotic geometry of the AdS space, the BTZ black hole and flat space. Secondly, this thesis is centred on the study of quasi-local regions. It is however appealing to compare our results with the one done in the continuum. As such, we will briefly review these two cases. We begin by the presentation of the metric of the AdS case, then we look at the asymptotic symmetry group and charge algebras. We end the section with the same analysis for the flat case as a limit of the AdS space. Since most of the computations are involved, we mainly focus on the core ideas and results, and refer to the literature for more details, see \cite{Brown:1986nw,Oblak:2015sea,Strominger:2017zoo,Compere:2018aar}.

As we said in the introduction, the notion of asymptotic symmetry is one of the key elements of the study of dual theories. A field theory with boundary, as a Hamiltonian theory, is defined by its field content, Poisson structure and boundary conditions. For three-dimensional gravity, the last point is crucial. Indeed, as we have emphasized since the beginning, three-dimensional pure gravity is trivial in the sense that it has no local degrees of freedom. In presence of a boundary however, global degrees of freedom arise. Therefore, the choice of boundary conditions completely determines the behaviour of the theory. In the context of asymptotic holography, boundaries are at infinity, and fall-off conditions on the fields act as boundary conditions. Recall that the choice of fall-off conditions might affect the well-definiteness of action. The asymptotic symmetry group of such a theory is defined as the quotient between allowed and trivial gauge transformations. Allowed gauge transformations are the ones that preserve the fall-off conditions while trivial gauge transformations are the allowed transformations associated with a zero conserved charge. 
In short, the researches of asymptotic symmetry boils down to the definition of fall-off conditions and the study of gauge transformations with respect to these conditions.

\subsection{AdS space and Virasoro algebra}

We consider in this section a non-vanishing length scale $l \in \R^{+}$, called the $AdS$ radius, such that the cosmological constant is defined by $\Lambda = -\f{1}{l^2}$. In three dimensions, the unique maximally symmetric space-time with negative cosmological constant is the AdS space, and is described by the metric
\begin{equation}
	\dd s^2 = - \left( 1 + \f{r^2}{l^2} \right) \dd t^2 + \left( 1 + \f{r^2}{l^2} \right)^{-1} \dd r^2 + r^2 \dd \varphi^2 \; .
	\label{eq:chap1:AdS_metric}
\end{equation}
In these notations, $t$ is a time-like coordinate without any periodicity condition, whose range, as usual, is infinite to avoid closed time-like curves. The other coordinates are $r \in \R^{+}$, the luminosity distance, and $\varphi \in [0,2\pi]$, the angular coordinate, which are both space-like. The manifold spanned by such a metric is diffeomorphic to $S^1 \times \R^2$. At short distances, i.e for $r \ll l $, the metric is just the Minkowski metric, whereas at long distances, $r \gg l$, it is 
\begin{equation}
	\dd s^2 \sim \f{r^2}{l^2} \dd r^2 + \f{r^2}{l^2} \left(-\dd t^2 + l^2 \dd \varphi^2 \right) \; .
	\label{eq:chap1:AdS_metric_spatialinfinity}
\end{equation} 
From the form of the metric at long distances, we conclude that the boundary geometry is that of a cylinder.

It is interesting to rewrite AdS in a new set of coordinates $(\tau,\rho,\varphi)$ where $\tau = \f{t}{l}$ (the rescaling of the time coordinate is mainly done for convenience) and where $\rho$ is a dimensionless coordinate defined by
\begin{equation*}
	r = l \sinh(\rho) \; .
\end{equation*}
With this new set of coordinates, the AdS metric becomes
\begin{equation}
	\dd s^2 = l^2 \left( \dd \rho^2 - \cosh^2(\rho) \dd \tau^2 + \sinh^2(\rho) \dd \varphi \right) \; .
	\label{eq:chap1:AdS_metric_staticglobal}
\end{equation}

The easiest way to study the symmetries of AdS is to immerse the hyper-surface spanned by \eqref{eq:chap1:AdS_metric_staticglobal} into the four dimensional space $\R^{2,2}$ with the metric $\eta = (-1,-1,1,1)$. AdS can then be seen as the hyper-surface in $\R^{2,2}$ satisfying the constraint
\begin{equation}
	-u^2-v^2+x^2+y^2 = l^2 \;. 
	\label{eq:chap1:R22_AdS_Constraint}
\end{equation}

A natural parametrization of AdS in $\R^{2,2}$ is found by taking\footnote{This parametrization naturally leads to interpreting $\tau$ as an angle, thus closed time-like curves exist. Going back to \eqref{eq:chap1:AdS_metric} involves taking the universal cover after inverting all the relations, such that $t = l \tau$ has an infinite range}.
\begin{align*}
	u &= l \cosh(\rho) \cos(\tau) \\
	v &= l \cosh(\rho) \sin(\tau) \\
	x &= l \sinh(\rho) \cos(\varphi) \\
	y &= l \sinh(\rho) \sin(\varphi) \;. 
\end{align*}
Evaluating the metric of $\R^{2,2}$ on the hyper-surface parametrized by the previous set of coordinates yields back \eqref{eq:chap1:AdS_metric_staticglobal}.

The biggest advantage of such formulation is that it allows to easily study the isometries of an AdS space. By its very definition, $\R^{2,2}$ has $10$ isometries (as a space in 4 dimensions). There is one translation per direction, i.e. 4 translations total and 6 matrices transformations. These matrices transformations emerge from the invariance of the metric under the action of a $\SO(2,2)$ matrix by its very definition. The translation symmetries are not conserved on AdS. Indeed, \eqref{eq:chap1:R22_AdS_Constraint} is clearly not conserved under translations. The action of these symmetries is to map a given AdS space of radius $l$ to another AdS space of radius $l'$, with $l'$ a function of $l$ and the translation parameters. However, \eqref{eq:chap1:R22_AdS_Constraint} is clearly conserved by the action of an $\SO(2,2)$ matrix. Hence AdS admits $\SO(2,2)$ as symmetry group.
\footnote{Another really elegant proof is to parametrize the AdS space embedded in $\R^{2,2}$ by an $\SL(2,\R)$ matrix $g$
\begin{equation*}
	g = \f{1}{l^2}
	\begin{pmatrix}
	x-u & v+y \\
	v-y & x+u
	\end{pmatrix}
	\end{equation*}
	with the condition on $g$
	\begin{equation*}
	\det(g) = 1 \;
	\end{equation*}
	imposing \eqref{eq:chap1:R22_AdS_Constraint}. The AdS metric is then recovered using the definition of the Killing-Cartan metric on the group manifold $\SL(2,R)$
	\begin{equation*}
	\dd s^2 = \f{1}{l} \tr\left( (g^{-1} \dd g)(g^{-1} \dd g) \right) \;.
	\end{equation*}
	This definition is invariant by the action of two independent $\SL(2,R)$, acting on the right or on the left on $g$, modulo $\Z^2$. This gives back $\SO(2,2) \approx \f{\SL(2,R)_{L} \times \SL(2,R)_{R}}{\Z^2}$ as symmetry group.
} 

The generators of $\SO(2,2)$ are 6 independent Killing vectors, whose general form is (consider $x_\mu = (u,v,x,y)$)
\begin{align}
J_{\mu \nu} = x_\nu \pp_\mu - x_\mu \pp_\nu \; .
\end{align}
Note that the generator of time translations is $\pp_{\tau} = \f{1}{l}(u \pp_{v} - v \pp_{u})$ and the generator of rotations is $\pp_{\varphi} = x \pp_{y} - y\pp_x $.

The space described above is called globally AdS, in the sense that the metric is always the AdS metric. Such a space is really constrained as it is unique. In the following, we are interested in spaces that are only asymptotically AdS. That is, the metric of the considered space must take the form \eqref{eq:chap1:AdS_metric_spatialinfinity} at spatial infinity.

\subsubsection{Euclidean and thermal AdS}

In this thesis, we are mainly interested in the Euclidean space. It is straightforward to go from the Lorentzian to the Euclidean metric from \eqref{eq:chap1:AdS_metric}. This is done via Wick rotation. Consider $t_{E} = i t$. The AdS metric \eqref{eq:chap1:AdS_metric} then becomes
\begin{equation}
\dd s^2 = \left( 1 + \f{r^2}{l^2} \right) \dd t^2 + \left( 1 + \f{r^2}{l^2} \right)^{-1} \dd r^2 + r^2 \dd \varphi^2 \; .
\label{eq:chap1:AdS_metric_euclidean}
\end{equation}
The geometrical picture at "spatial" infinity, that is for $r \gg l$ is still that of a cylinder. It can still be viewed as a circle with a "time" evolution giving the cylinder. The top and the bottom of the cylinder are then seen as initial and final states for the evolution. The main difference between the Lorentzian case is that the symmetry group of globally Euclidean AdS space is $\SL(2,\C)$ instead of $\SO(2,2)$. There is nothing more to say at the classical level.

However, this is not the case from the quantum mechanical point of view. Indeed, in the Euclidean formalism obtained via Wick rotation, the partition function of the model is usually defined via the associated statistical partition function. Hence, the partition function takes a natural form of the trace over the energy level of the theory \cite{Zee:2003mt}. Geometrically, it means taking the "trace" over the cylinder. That corresponds to identifying the top and the bottom of the cylinder, hence spanning a torus. The identification map of the cylinder to a torus is endowed with a natural parameter, called the Dehn twist. Geometrically, we have the freedom of rotating, for example, the top of the cylinder by an angle $\gamma$ before doing the identification, see figure \ref{chap1:fig:dehn_twist_cylinder}. This freedom accounts for the large diffeomorphism on the torus, hence is of primordial importance in the context of gravity. After doing this identification, the space we obtain is called thermal AdS.
\begin{figure}[!htb]
	\centering
	\begin{tikzpicture}[scale=1.9]
		\draw (-3,1.8) node{$\beta$};

		\draw[thick] (-2,-1.4) arc (-70:275: 3cm and 1.5cm);
		\draw[thick] (-1,0.2) to [bend left=10] (-2.4,-0.35); 
		\draw (-2,-1.4) to[in =0, out = 45] (-2.4,-0.35);
		\draw (-2,-1.4) to[out=180,in=225] (-2.4,-0.35);
		\draw[->] (-2,-1.1) to[out=60, in=-10]node[pos=0.5,left]{$\gamma$} (-2.3,-0.55);
		
		\draw (-2.8,-1.49) to[in =0, out = 45] (-3.1,-0.44)  ;
		\draw[dashed] (-2.8,-1.49) to[out=180,in=225] (-3.1,-0.44)  ;
		
		\draw[thick] (-3.1,-0.44) to [bend left=10] (-5,0.2);
		\draw[thick] (-1.5,-0.06) to [bend right] (-4.5,-0.06);


	\end{tikzpicture}
	\caption{Geometrical picture of the Dehn twist associated to the cylinder before gluing it to obtain a torus. One side is rotated by an angle $\gamma$ before being identified to the other side. The length of the would-be non contractible cycle of the torus is $\beta$.}
	\label{chap1:fig:dehn_twist_cylinder}
\end{figure}
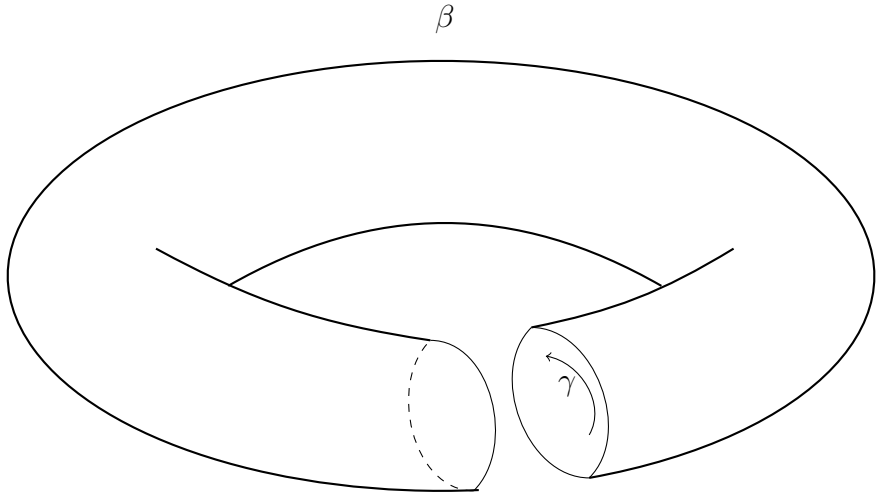

Explicitly, thermal AdS is represented by the metric \eqref{eq:chap1:AdS_metric_euclidean} with the following identification for the coordinates
\begin{equation}
	(t,r,\varphi) \sim (t+\f{\beta}{l},r,\varphi+\gamma)\;.
	\label{chap1:eq:periodic_condition_thermal_AdS}
\end{equation}
We recall that $\varphi$ is also defined in $[0,2\pi]$. The parameter $\beta$ represents the inverse temperature from the statistical system point of view. It is basically the height of the cylinder in the time direction, see figure \ref{chap1:fig:dehn_twist_cylinder}. This is the representation of a torus with modular parameter\footnote{This is obviously not the same $\tau$ as the rescaled time parameter. This notation is unfortunate, but both $\tau$ are standard notations. In the rest of this thesis, we will not use the rescaled time parameter anymore, and in the current section, the context is enough to differentiate between both $\tau$.}
\begin{equation}
	\tau_{\text{TAdS}} = \f{1}{2\pi}\left( \gamma + i \f{\beta}{l} \right)
\end{equation}
Another possible way to find this metric is to work again with an immersion into a bigger space. Here, instead of $\R^{2,2}$, we consider $\R^{1,3}$ as a straightforward generalization to the Euclidean case. Euclidean AdS is then described by the constraint
\begin{equation}
	-u^2 + v^2 +x^2 + y^2 = l^2 \;,
	\label{eq:chap1:euclidean_AdS_constraint}
\end{equation}
and the space can be parametrized in almost the same way as in the Lorentzian case, taking into account the Wick rotation. Note that thermal AdS is still a space that is globally AdS, and not only asymptotically AdS. Generally, any globally AdS space can be obtained as the quotient of the general AdS space introduced previously and some finite groups. For example, thermal AdS is the quotient between global AdS and $\Z$.

\subsubsection{BTZ black hole}

Before moving to asymptotic symmetry, let's briefly discuss the case of black holes in Euclidean gravity. Black hole solutions are a primordial feature of General Relativity. In fact, one of the reasons that made people believe that three-dimensional gravity was of no interest whatsoever was the fact that no black hole solutions were discovered. It was only in 1992 that Bañados, Teitelboim and Zanelli \cite{Banados:1992wn,Banados:1992gq} found black holes solution in three dimensions. The interesting fact about these black holes is that they are not the usual black holes as in four dimensions: they do not possess any curvature singularity. In fact, these solutions are still locally AdS everywhere. They are black holes because they possess an event horizon. Still being locally AdS, it is possible to relate the usual BTZ black hole solution to the thermal AdS space by considering a modular transformation \cite{Carlip:1994gc,Carlip:2005zn}. More precisely, the geometry of the BTZ black hole is again a torus, related to the thermal AdS torus by the modular transformation
\begin{equation}
	\tau_\text{TAdS} = - \frac{1}{\tau_\text{BTZ}}.
	\label{chap1:eq:modular_relation_BTZ_TAdS}
\end{equation}

The effect of this transformation is to interchange the temporal direction $t$ and the angular direction $\varphi$. Hence, while it is the temporal direction that is non contractible for thermal AdS, it is the angular direction for the Euclidean black hole, see figure \ref{chap1:fig:TAdS_BTZ}
\begin{figure}[!htb]
	\centering
	\begin{tikzpicture}[scale=1]
	\draw[thick] (0,0) arc (0:360: 3cm and 1.5cm);
	\draw[thick] (-1,0.2) to [bend left] (-5,0.2); 
	\draw[thick] (-1.5,-0.06) to [bend right] (-4.5,-0.06);
	
	\draw[red] (-0.5,0) arc (0:360: 2.5cm and 1cm);
	\draw[red] (-0.5,0) node[left]{$t$};
	\draw[blue] (-4,-0.25) to [bend right] (-4.1,-1.4);
	\draw[blue,dotted] (-4,-0.25) to [bend left] (-4.1,-1.4);
	\draw[blue] (-4.2,-1) node[left]{$\varphi$};
	
	\draw (-3,-2) node{TAdS};

	
	\draw[thick] (8,0) arc (0:360: 3cm and 1.5cm);
	\draw[thick] (7,0.2) to [bend left] (3,0.2); 
	\draw[thick] (6.5,-0.06) to [bend right] (3.5,-0.06);
	
	\draw[blue] (7.5,0) arc (0:360: 2.5cm and 1cm);
	\draw[blue] (7.5,0) node[left]{$\varphi$};
	\draw[red] (4,-0.25) to [bend right] (3.9,-1.4);
	\draw[red,dotted] (4,-0.25) to [bend left] (3.9,-1.4);
	\draw[red] (3.8,-1) node[left]{$t$};
	
	\draw (5,-2) node{BTZ};
	
	\end{tikzpicture}
	\caption{Torus associated to the thermal AdS space on the left and BTZ on the right. They are related to the modular transformation \eqref{chap1:eq:modular_relation_BTZ_TAdS} switching the non-contractible cycle. It is in the time direction (red) for thermal AdS and in the angular direction (blue) for the BTZ black hole.}
	\label{chap1:fig:TAdS_BTZ}
\end{figure}
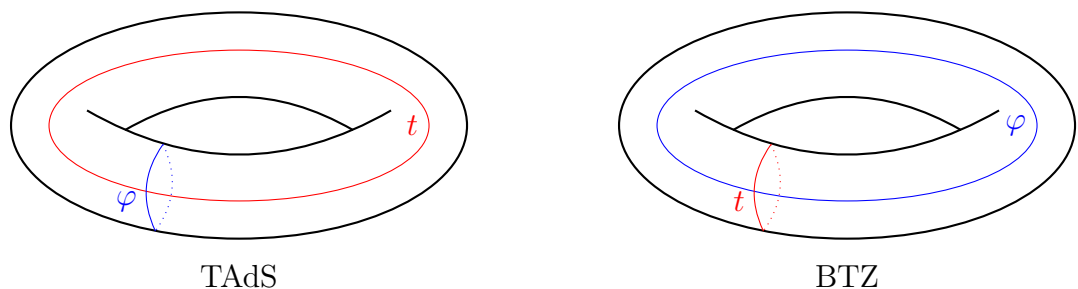

\subsubsection{Evaluation of the Einstein-Hilbert action on thermal AdS}

Finally, let us explicitly compute the on-shell Einstein-Hilbert action for thermal AdS with boundary at spatial infinity. The expression will come in handy in the following chapter when looking at the partition function of the model. The computation is not as easy as one might think, since, at first glance, the on-shell action diverges due to the infinity of the radial extension in space. To do the computation, we hence introduce a regularization by considering the boundary to be located at a distance $r=a$. The evaluation of the Einstein-Hilbert action on-shell is then proportional to the volume of the AdS space with the cut-off $a$
\begin{equation}
	\frac{1}{16\pi G}\int \sqrt{g}\; \left(R + \frac{2}{l^2} \right) 
	= - \frac{ \text{Vol}(a)}{8 \pi G l^2} 
	= - \frac{\beta}{8 G}\Big(\cosh(2a) - 1\Big).
\end{equation}
where $\text{Vol}(a)$ is the volume associated to the AdS space.

For the GHY boundary term, it is most easily calculated by introducing the normal at the boundary
\begin{equation}
	\frac{1}{8 \pi G} \oint \sqrt{h} \; K = \frac{1}{8 \pi G} \pp_n \oint \sqrt{h},
\end{equation}
where $\pp_n$ stands for the normal derivative. With the cut-off $a$, the area of the boundary is
\begin{equation}
	\text{Area}(a) = \pi \beta l \sinh(2a),
\end{equation}
and thus
\begin{equation}
	\frac{1}{8 \pi G} \oint \sqrt{h} \; K =  \frac{1}{8 \pi G l} \frac{\pp}{\pp r}_{|r=a} \text{Area}(r) = \frac{2 \beta}{8 G}\cosh(2a).
\end{equation}

We can now combine these results and take the limit $a\to\infty$. Recall however that the full AdS action is given by \eqref{eq:chap1:fullAdS_action} . Finally, the on-shell AdS action reads
\begin{equation}
	S_\text{TAdS} = -\frac{\beta}{8 G}\Big( 1 - \cosh(2a) + 2 \cosh(2a) - \sinh(2a)\Big) \xrightarrow{a\to\infty} -\frac{\beta}{8 G} \; .
	\label{eq:chap1:TAdS_onshell_action}
\end{equation}

\subsubsection{Asymptotic symmetry and charge algebra}

It is now time to focus on the asymptotic symmetry of an {\it asymptotically} AdS space. The asymptotic symmetry group and associated asymptotic conserved charge algebra were first derived in the seminal work of Brown and Henneaux \cite{Brown:1986nw}. A more recent derivation can also be found in the thesis of Oblak \cite{Oblak:2015sea} where the flat case is also explained. In both works, the fall-off conditions are first imposed on the metric before imposing any gauge fixing. It is possible to first get rid of the gauge redundancies to make the computation simpler, see \cite{Compere:2018aar}. In this section, we just state the results and refer the reader to the literature for the computational details.
\medskip

The asymptotic symmetry group of an asymptotically AdS space is the direct product of two commuting {\it Witt} algebras. Hence, it is spanned by two infinite families of generators. We call the Fourier modes of these two families
\begin{equation*}
	\xi_{n} \qquad \text{and} \qquad {\tl \xi}_{n} 
\end{equation*}
with $n \in \Z$. The Lie brackets of these generators are
\begin{align*}
	[\xi_{n},{\tl \xi}_{n}] &= 0 \\
	i[\xi_{n},\xi_{m}] &= (m-n)\xi_{n+m} \\
	i[{\tl \xi}_{n},{\tl \xi}_{m}] &= (m-n){\tl \xi}_{n+m} \;.
\end{align*}
The Witt algebra is the centreless algebra of circle diffeomorphism. Starting from the previous Lie brackets, it is immediate to see that the initial $\SO(2,2)$ symmetry group of a globally AdS space is a sub-set of the full symmetry group of asymptotically AdS space. Indeed, consider the generators for $n = -1,0,1$. They clearly form a closed subalgebra for both $\xi_n$ and $\bar{\xi}_n$ which is isomorphic to an $sl(2,\R)$ algebra and $\SO(2,2)$ is recovered as
\begin{equation*}
	\SO(2,2) \approx \f{\SL(2,R)_{L} \times \SL(2,R)_{R}}{\Z^2}	\;.
\end{equation*}

The final step is to look at the algebra of conserved charges arising from the previous symmetry group. These conserved charges span two independents Virasoro algebras. Calling $\cL_{n}$ and $\bar{\cL}_n$ the generators of the two independent families of charges (each coming from its own Witt algebra), the associated Poisson bracket reads
\begin{equation}
	i\left\lbrace\cL_m,\cL_n\right\rbrace = (m-n)\cL_{n+m} + \f{c}{12} m(m^2-1)\delta_{m+n,0}
\end{equation}
where $c$ is the Brown-Henneaux central charge
\begin{equation}
	c = \f{3l}{2G} \; .
\end{equation}

\subsection{Flat gravity and the BMS group}

To end this chapter, we focus on the more interesting case in the context of this thesis, the flat case. It is appealing to look at the flat case as the limit $l \to \infty$ of the AdS case. And indeed, most of the results derived for the flat case were obtained using this limit. Geometrically, the limit corresponds to pushing the boundary of the AdS cylinder to infinity. And as the length scale goes to infinity, the curvature vanishes, giving a completely flat theory. The general metrics that are asymptotically flat are usually parametrized in terms of the (retarded) Bondi coordinate $(u,r,\varphi)$. In these notations, $u$ is the retarded time, $r$ is still the luminosity distance, and $\varphi$ is the angle associated to the cirlce at infinity, acting as the basis for the infinite cylinder. In these coordinates, the general solution for the metric takes the form
\begin{equation}
\dd s^2 = \Theta(\varphi) \dd u^2 -2 \dd u \dd r + 2\left( \Xi(\varphi) + \f{u}{2}\pp_{\varphi}\Theta(\varphi) \right) \dd u \dd \varphi + r^2 \dd \varphi^2 \; .
\end{equation}
As in the AdS case, the phase space is characterized by two arbitrary functions on the boundary circle at infinity. Again, going to the Euclidean time, one should identify the top and the bottom of the cylinder to compute the associated statistical partition function.

Some work is necessary to derive the asymptotic symmetry and charges from the AdS case. Indeed, when $l$ goes to infinity, the Brown-Henneaux central charge diverges, and the previous brackets are hence ill-defined.

Consider the Fourier modes $P_m$ and $J_m$, $m \in \Z$ such that
\begin{equation}
\xi_{m} = \f{1}{2}\left(l P_m +  J_m \right) \quad \text{and} \quad {\tl \xi}_{m} = \f{1}{2}\left(l P_{-m} + J_{-m} \right) \;.
\end{equation}

The Lie brackets between the $P$'s and the $J's$ are
\begin{equation}
i[P_m,P_n] = \f{1}{l^2} (m-2) J_{m+n} \; , \; \; i[J_m,J_n] = (m-n)J_{m+n} \; , \; \; i[J_m,P_n] = (m-n) P_{m+n}
\end{equation}
such that in the limit where $l$ goes to infinity the $P_m$ commute, the $J_m$ form a Witt algebra and the $J_m$ act non-trivially on the $P_m$. We get the so-called $bms_3$ algebra
\begin{equation}
i[P_m,P_n] = 0 \; , \; \; i[J_m,J_n] = (m-n)J_{m+n} \; , \; \; i[J_m,P_n] = (m-n) P_{m+n} \; .
\end{equation}
In the language of $bms$, the $J_m$ span the infinite dimensional diffeomorphism algebra of the infinite circle, and are called the superrotations. Whereas the $P_m$ are called the supertranslations and correspond to translations of asymptotically null infinity in the time direction.

Similarly, we can obtain the charge algebra by taking the flat case limit of the AdS case. We define the supermomenta $\cP_m$ and the superrotation charges $\cJ_m$ by
\begin{equation}
\cL_{m} = \f{1}{2}\left(l \cP_m +  \cJ_m \right) \quad \text{and} \quad {\tl \cL}_{m} = \f{1}{2}\left(l  \cP_{-m} + \cJ_{-m} \right) \;.
\end{equation}
and the computation of the Poisson brackets gives
\begin{align}
i[\cP_m,\cP_n] &= \f{1}{l^2} (m-2) \cJ_{m+n} \; , \; \; i[\cJ_m,\cJ_n] = (m-n)\cJ_{m+n} \; , \nonumber \\ i[\cJ_m,\cP_n] &= (m-n) \cP_{m+n} + \f{1}{4G}m(m^2-1) \delta_{m+n,0} \;.
\end{align}

The central charge does not depend on $l$ anymore, and the flat limit can be safely taken to obtain
\begin{align}
i[\cP_m,\cP_n] &= 0 \; , \; \; i[\cJ_m,\cJ_n] = (m-n)\cJ_{m+n} \; , \nonumber \\ i[\cJ_m,\cP_n] &= (m-n) \cP_{m+n} + \f{1}{4G}m(m^2-1) \delta_{m+n,0} \;.
\end{align}
\medskip

Fundamentally, these two results tell us a lot about the theory. In particular, they entirely define the partition function, which should be a character of some representation of the asymptotic charge algebra. In practice, these results were confirmed by the computation of the partition function in the AdS case \cite{Giombi:2008vd} and in the flat case \cite{Barnich:2015mui} which we will quickly present in the next chapter.

In this chapter, we did not however address the question of finite boundary. The lookout of the symmetry group and algebra of conserved charges for finite boundary is still ongoing, and even though a lot of progress has been made in the last few years, see \cite{Freidel:2015gpa,Freidel:2016bxd,Geiller:2017xad,Speranza:2017gxd,Freidel:2019ees}, we are still far from having a clear understanding of the symmetry for finite boundary. In a sense, one of the aims of my work is to study this problem by directly computing the partition function for a quasi-local region, and see what can be learned from the structure of the results.

	\newpage
	~
	\thispagestyle{empty}
	
	\chapter{Quantum Gravity on the Torus}
\label{chap2}

While the previous chapter was dedicated to a short review of classical gravity and the asymptotic symmetry groups, this chapter focuses on the quantum counterpart. More particularly, we briefly introduced the work done in \cite{Giombi:2008vd} and \cite{Barnich:2015mui} on the computation of the partition function for three-dimensional quantum gravity in the AdS and flat case respectively. These computations and the methods are far from being related to the work of this thesis, and hence will not be presented in details. The results, however, are deeply connected. We are indeed interested in the more general case of finite boundary, and our results can easily be pushed at asymptotic infinity and compared with the results presented in this chapter. This chapter serves as a bridge between the asymptotic computation in the continuum and the quasi-local computation in the discrete.

On top of this utility purpose, it is a fact that the status of dual boundary theories for three-dimensional gravity is most thoroughly developed in the case of negative cosmological constant and for asymptotic boundaries. In the AdS case, this duality can obviously be understood as a particular representation of the AdS/CFT duality \cite{Witten:1988hc}. In this case, the emerging dual field theory is well-understood and corresponds to a really specific conformal field theory, known as Liouville theory \cite{Carlip:2005zn,Carlip:2005tz}. This Liouville theory can also be found starting from the Chern-Simon formulation of gravity. It is well-known that Chern-Simons theory is dual to a Wess-Zumino-Novikov-Witten model \cite{Wess:1971yu,Witten:1983tw,Knizhnik:1984nr,Gawedzki:1999bq}. Consequently, by imposing the right constraints on the fields of the Wess-Zumino-Novikov-Witten model to encode the Brown-Henneaux fall-off conditions, the model becomes a Liouville theory. Note that the associated Liouville potential might differ \cite{Carlip:2005zn} due to the choice of fall-off conditions. Once again, this emphasizes the importance of fall-off conditions and boundary conditions in general. For the flat case, recent progress was made \cite{Barnich:2012rz,Barnich:2013yka,Carlip:2016lnw}, see the main text for more details.

In the previous paragraph we discussed results at the classical level. At the quantum level, the first object of interest for holographic dualities is the partition function and understanding how the dual field theory is related to the understanding of the structure of the partition function. In this chapter, we first focus on the AdS computation presented in \cite{Giombi:2008vd} then on the analogous flat case \cite{Barnich:2015mui}. In both cases, the main point of the computation is to be found in the use of the heat kernel method\footnote{An introduction to heat kernel methods and one loop computation can be found in \cite{Vassilevich:2003xt}.}, involving lots of computations. We will not detail any of them here but rather focus on the results and their interpretations.

\section{Partition function of twisted thermal $\text{AdS}_3$}

In \cite{Giombi:2008vd}, the one-loop partition function of an asymptotically AdS space was computed starting from the path integral formulation of three-dimensional gravity with the Einstein-Hilbert action. To do so, the authors performed the Wick rotation of the path integral into the Euclidean space. As a consequence, they considered the thermal AdS space described in the previous chapter: they computed the partition function for gravity with boundary for a space-time described by a solid torus of modular parameter
\begin{equation}
	\tau = \frac{1}{2\pi}\left(\gamma +\I \frac{\beta}{l} \right) \; .
\end{equation}
We recall that $\beta$ is the temporal extension of the torus, i.e. its height in the non-contractible direction, $\gamma$ the Dehn twist and $l$ the AdS radius.

One of the key points of the computation is that it is claimed to be perturbatively exact \cite{Maloney:2007ud}. It means that all other loops corrections identically vanish. Of course, recall that three-dimensional General Relativity is perturbatively non-renormalizable, so the fact that the partition function is one-loop exact must be carefully understood. Recall that three-dimensional gravity is a theory without any local excitations, i.e. any local degrees of freedom. This suggests that, in some sense, it is an integrable system \cite{Witten:1988hc} and quantities in such a system are often one-loop exact. We will see in the following another viewpoint on this one-loop exact property of the computation. Note that this does not address the question of having non-perturbative corrections to the partition function. As mentionned in \cite{Maloney:2007ud}, the very definition of such non-perturbative corrections is hard to separate from the sum over all admissible manifolds allowed by the boundary conditions. One of the key results of this thesis is to show that non-perturbative corrections do arise in the quasi-local regime and might indeed be related to the admissible class of manifolds given the boundary data.

The actual computation involves a classical contribution: the on-shell action, together with a combination of functional determinants, among which there is a Faddeev-Popov determinant for the scalar and vector gauge modes. The ensuing result, after some non-trivial algebraic simplifications between scalar, vector and tensor (graviton) mode contributions, is found to be 
\begin{equation}
	Z_\text{TAdS}(\tau,\bar\tau) = \E^{ -S_\text{TAdS}} Z^\text{1-loop}_\text{TAdS}(\tau,\bar\tau)
	\qquad\text{with}\qquad
	Z^\text{1-loop}_\text{TAdS}(\tau,\bar\tau) = \prod_{k=2}^\infty \f1{\left|1 - \E^{2\pi\I \tau \cdot k} \right|^{2}}.
	\label{chap2:eq:AdS_one_loop_partition_function}
\end{equation}
We recall that $S_\text{TAdS}$ is the on-shell evaluation of the Einstein-Hilbert action on thermal $\text{AdS}_3$, given by \eqref{eq:chap1:TAdS_onshell_action}
\begin{equation}
	S_\text{TAdS}(\tau,\bar\tau) = -\f{\beta}{8 G} \; .
\end{equation}

One crucial and non-trivial fact about the resulting partition function is that the product over $k$ starts at $k=2$. Let us recall the interpretation of $k\in \mathbb Z$ in the calculation of \cite{Giombi:2008vd}. Note that the very same parameter will acquire a completely different interpretation in the various method of computations of the partition function of three-dimensional gravity. In the present case, recall that thermal $\text{AdS}_3$ space-time is obtained as a hyper-surface of the four-dimensional hyperbolic space. An equivalent construction is to consider the three-dimensional hyperbolic and its quotient by \eqref{chap1:eq:periodic_condition_thermal_AdS}. Under this construction, $k$ labels the copies of this space-time in the hyperbolic space. This identification naturally comes into the computation from the use of the method of images in the heat kernel. 

From the metric perspective, the reason why the product starts at $k=2$ is quite mysterious, and so is the fact that $Z(\tau,\bar \tau)$ factorizes in a holomorphic and an antiholomorphic contribution. However, both facts become transparent once the result is understood from the point of view of the boundary. From the boundary perspective, one has two uncoupled conformal field theories, each one coming with a Virasoro algebra, with Hamiltonians $\cL_0$ and $\tl \cL_0$, and Brown-Henneaux central charges $c = \tl c = \f{3 l }{2 G}$, as introduced in the previous chapter.

By definition, the total Hamiltonian and momentum operators for the full conformal field theory are
\begin{equation}
H =\frac{1}{l} ( \cL_0 + \tl \cL_0 )
\qquad\text{and}\qquad
P =\frac{1}{l} ( \cL_0 - \tl \cL_0 ) \; ,
\end{equation}
and the partition function can be expressed in the dual theory as a trace over them
\begin{equation}
	Z(\tau,\bar\tau) = \tr\left( \E^{ -\I \gamma P} \E^{ -\beta H } \right) = \tr_L\left( \E^{2\pi \I \tau  \cL_0}\right) \tr_R\left(   \E^{-2\pi\I \bar \tau \tl \cL_0}\right),
\end{equation}
where in the last equality we have highlighted the factorization of the two uncoupled theories. This encodes a sum over all the states which have been first evolved for an Euclidean time $\beta$, then translated by an amount $\gamma$ to the left, and finally re-identified with themselves. This corresponds to the view of having an initial and final state at the bottom and top of the cylinder respectively before identification.

Following Maloney and Witten \cite{Maloney:2007ud}, one can argue that the trace can be calculated as the sum of the contribution of the fundamental state $|\Omega\rangle$ of the conformal field theory and of its descendants. The descendants are obtained via the action of the two uncoupled Virasoro modes $\cL_{-k}$ and $\tl \cL_{-k}$ on $|\Omega\rangle$. Now, the fundamental state $|\Omega\rangle$ has vanishing momentum and energy $E_\Omega = -\f{c}{12 l} = -\f{1}{8G}$, which gives precisely the classical contribution to $Z$:
\begin{equation}
\langle \Omega| \E^{ -\I \gamma P} \E^{ -\beta H } |\Omega \rangle  = \E^{\frac{\beta}{8 G}} \equiv \E^{-S_\text{TAdS}}.
\end{equation}
The remaining contributions start at modes $k=2$ given that $\cL_{-1}$ and $\tl \cL_{-1}$ annihilate $|\Omega\rangle$ since $c = \tl c$. Therefore, from the dual conformal field theory viewpoint, the index $k$ in equation \eqref{chap2:eq:AdS_one_loop_partition_function} labels the contributions from each vacuum descendant. From this result, one can see the quality of being one-loop exact under a new light. Recall that there are no local degrees of freedom. Therefore, we do not expect other contributions except for the ones coming from the vacuum, which are already all taken into account.

An interesting generalization of this result consists in considering characters of the operators $\E^{2 \pi i \tau \cL_0}$ and $\E^{-2\pi i \bar \tau \tl \cL_0}$ in representations with highest weights $h$ and $\tl h$ respectively, different from the vacuum one. This would lead to
\begin{equation}
	Z_{h,\tl h} (\tau, \bar \tau) = \frac{\E^{2 \pi\I \tau (h - \tfrac{c}{24})} \E^{- 2\pi \I \bar \tau (\tl h - \tfrac{\tl c}{24})} }{\prod_{k=1}^\infty \left|1 - \E^{2\pi \I\tau \cdot p}\right|^2}.
\end{equation} 
Notice that in this case the product starts at $k=1$. This is, up to a phase, the inverse of the Dedekind $\eta$-function, which is a typical example of a modular form.  Modular forms are holomorphic functions defined on the upper-half part of the complex plane, which have extremely simple transformation properties under modular transformations.
\medskip

The computation done in \cite{Giombi:2008vd} therefore gives an explicit proof of having two copies of the Virasoro algebra as asymptotic algebra for the symmetry charges. Due to the computation being one-loop exact, it takes into account all but non-perturbative corrections to the partition function.

\section{Flat space and the limit of vanishing cosmological constant}

From these results on thermal AdS, one can hope to extract meaningful predictions for flat Euclidean three-dimensional space.

Let us start from the on-shell action. Its value depends only on the quantity $\beta$, the length of the cylinder before identification, which is left untouched by the limit we are considering, and on the Newton constant. For this reason, one can expect the value of the on-shell action to be preserved by the limiting procedure $l \to\infty$. Therefore, we defined $S_{\text{flat}} = S_{\text{TAdS}}= \f{-\beta}{8G}$. Subtleties arise however for the limit of the one-loop contribution. In the flat limit, the torus modular parameter $\tau$ becomes effectively real, and the convergence properties of the partition function \eqref{chap2:eq:AdS_one_loop_partition_function} get spoiled. For this reason, it is convenient to keep track of a positive infinitesimal regulator $\epsilon^+$, as proposed in \cite{Barnich:2014kra,Barnich:2015uva,Oblak:2015sea}. Hence, we define
\begin{equation}
	\lim_{l \to \infty} \tau = \frac{1}{2\pi}( \gamma + \I \epsilon^+).
	\label{chap2:eq:tau_reg}
\end{equation}
Notice how this regularization keeps $\tau$ slightly within the upper-half part of the complex plane, where modular forms are defined.

In \cite{Barnich:2015mui}, it is shown, using techniques analogous to those used for the AdS case of the previous section, that the thermal partition function of three dimensional flat gravity naturally matches the limit $l \to \infty$ of the partition function defined in \eqref{chap2:eq:AdS_one_loop_partition_function} discussed above and we get
\begin{equation}
	Z_\text{flat}(\tau,\bar\tau) = \E^{ -S_\text{flat}} Z^\text{1-loop}_\text{flat}(\tau,\bar\tau)
	\qquad\text{with}\qquad
	Z^\text{1-loop}_\text{rlat}(\tau,\bar\tau) = \prod_{k=2}^\infty \f1{\left|1 - \E^{\I \gamma k + i \epsilon k} \right|^{2}}.
	\label{chap2:eq:flat_one_loop_partition_function}
\end{equation}

In \cite{Oblak:2015sea}, specific induced representations of the (centrally extended) BMS$_3$ group in three space-time dimensions, are studied and their characters are computed. The $\text{BMS}_3$ group \cite{Ashtekar:1996cd} is an infinite dimensional group, with the following semidirect product structure
\begin{equation}
	\text{BMS}_3 = \mathrm{Diff}^+(S_1) \ltimes_\mathrm{Ad} \mathrm{Vect}(S_1),
\end{equation}
where $ \mathrm{Diff}^+(S_1)$ denotes the group of orientation-preserving diffeomorphisms of the circle, $\mathrm{Vect}(S_1)\cong \mathrm{Lie}(\mathrm{Diff}^+(S_1))$ the Abelian additive group of vector fields on the circle, and $ \ltimes_\mathrm{Ad}$ the semidirect product of these two groups, with the first acting on elements of the second via the adjoint action. 

From a generic element $(f,\alpha)\in\text{BMS}_3$, one can extract a rigid super-translational part $\beta$ and a rotation $\gamma$. The BMS$_3$ characters of interest for the three-dimensional partition function turn out to depend only on these two properties of $(f,\alpha)\in\mathrm{BMS}_3$. They are related to induced representations of the {\it centrally extended} BMS$_3$ group \cite{Oblak:2015sea}. These representations are labelled by two real parameters, $(m,j)$. 
The parameter $m\geq0$ represents the space-time mass and characterizes the orbit of the representation while $j$ is related to the stabilizer \cite{Barnich:2014kra,Barnich:2015uva}.

The character for a non-vanishing value of the mass, regularized as in equation \eqref{chap2:eq:tau_reg} via $\tau = \frac{1}{2\pi}(\gamma + \I \epsilon^+)$, is then
\begin{equation}
	\chi^{m,j}( (f,\alpha) ) = \frac{ \E^{\I j \gamma} \E^{\I \beta (m- \f{1}{8 G})} }{\prod_{k=1}^\infty \left| 1 - \E^{2\pi \I \tau \cdot k}\right|^2} \qquad \text{if} \qquad m\neq0\,.
\end{equation}

If, however, the mass vanishes, we are then left with only
\begin{equation}
\chi^\text{vac}( (f,\alpha) ) = \frac{ \E^{- \I \frac{\beta}{8 G}}}{\prod_{k=2}^\infty \left| 1 - \E^{2\pi \I \tau \cdot k}\right|^2} \qquad \text{if} \qquad m=0,
\end{equation}
with the product starting at $k=2$ for reasons analogous to the Virasoro case. Following \cite{Oblak:2015sea} and \cite{Barnich:2015mui}, the interpretation attached to the $k$ label is that of higher Fourier modes in the super-momenta (Bondi mass aspect) associated to the various elements in the (coadjoint) orbit of the constant super-momentum. These are in turn closely related to the Fourier modes of the diffeomorphisms $f$.

Note that the partition function computed for Euclidean gravity with a Wick rotation corresponds to the character of the symmetry groups for a Lorentzian signature. It has been shown in \cite{Barnich:2012aw,Barnich:2012xq} that the result for BMS can be extended to the Euclidean case without loss. Of course, in that case, the symmetry group is not related to the isometries of the Euclidean case, but corresponds to the symmetry of the twisted torus in the Euclidean space.
\medskip

Recently, progresses in the identification of the field theory dual to three-dimensional gravity without cosmological constant have been made, see \cite{Barnich:2012rz,Barnich:2013yka,Carlip:2016lnw}. Interesting hints also emerged from the semi-classical discrete approach of \cite{Bonzom:2015ans}, see chapter \ref{chap4}.  We conclude this chapter noticing that, beyond convergence issues due to the infinite product, the partition function seen as a function of $\tau$ formally has poles at all real rational twist angles. Hence the necessity of keeping the regulator. This pole structure is deeply connected to the theory of modular forms, which are in turn a crucial ingredient of the AdS/CFT approach to quantum gravity. They are fundamental building blocks of two-dimensional conformal field theories \cite{Maloney:2007ud,Witten:2007kt}. Also, notice that, as in the AdS case, the computation is one-loop exact. In the last two chapters of this thesis, the partition function will be computed using the Ponzano-Regge model and linked to the results of this chapter.

	\newpage
	~
	\thispagestyle{empty}
	
	\chapter{The Ponzano-Regge Model}
\label{chap3}

First defined in 1968, the Ponzano-Regge model \cite{PR1968} is better seen today as an instantiation of $BF$ topological quantum field theory. It has been rigorously related to other approaches of three-dimensional quantum gravity, notably to the combinatorial quantization of Chern-Simons theory and to Loop Quantum Gravity.

Being formulated in terms of a local state-sum, the Ponzano-Regge model allows to compute the amplitude of quantum gravitational processes within {\it finite}, i.e. quasi-local, regions. This is to be contrasted with the AdS/CFT framework, which intrinsically refers to the asymptotic boundary of AdS.  It also implies, again differently from the AdS/CFT philosophy, that each amplitude is associated to one given spacetime topology, just as in the Chern-Simons theory. 

The proposal of Ponzano and Regge comes from their observations that the asymptotic, i.e. large spins limits of the $\{6j\}$ symbols reproduces the Regge action for General Relativity \cite{Regge:1961px,Hartle:1981}. This observation was proven explicitly by Schulten and Gordon in 1975 \cite{Schulten:1975yu}. See also \cite{Roberts:1998zka} for a proof via geometric quantization by Roberts or \cite{Freidel:2002mj} for a proof with group integral by Freidel and Louapre.

Soon after its discovery, Mizoguchi and Tada \cite{Mizoguchi:1991hk} suggested that, in an analogous asymptotic limit, the Turaev-Viro model was related to a version of the Regge action involving a cosmological term proportional to the tetrahedron's volume \cite{Bahr:2009qd}. The Turaev-Viro model is indeed the analogue of the Ponzano-Regge model based on the quantum group $\SU(2)_q$ instead of the group $\SU(2)$. This relation between the Turaev-Viro model and the Regge action was rigorously proven by Taylor and Woodward who showed that the asymptotic of the Turaev-Viro model involves homogeneously curved tetrahedra \cite{Taylor:2003}. In presence of a cosmological constant, the status of the network of correspondences with gravity is still a work in progress \cite{Freidel:1998ua,Noui:2011im,Pranzetti:2014xva,Dupuis:2013lka,Bonzom:2014wva,Bonzom:2014bua,Livine:2016vhl,Dittrich:2016typ}. That is, there is still no direct proof between the Turaev-Viro model and a discretized version of gravity, compared to the Ponzano-Regge model, which is, as we will see, the discrete version of $BF$ gravity in three dimensions.

This chapter is dedicated to a review of the Ponzano-Regge model. As a prelude, we start with a short presentation of the (quantum) Regge calculus, which is a first step between General Relativity and the Ponzano-Regge model. In Regge calculus, lengths are not discretized, but are still continuous parameters. We then proceed to the mathematical definition of the Ponzano-Regge model and the recovery of the historical formulation. This chapter ends with a brief presentation of earlier work on the Ponzano-Regge model.

\section{A prelude to the Ponzano-Regge model: the (quantum) Regge Calculus}

Regge calculus \cite{Regge:1961px} is a discrete approach to General Relativity based on a piecewise flat simplicial decomposition of the space-time manifold $\cM$. Such a decomposition for a $D$-dimensional manifold is built as the collection of $D$-dimensional flat simplexes glued together along their $(D-1)$-dimensional faces. The gluing must be done in such a way that the topological features of $\cM$ are conserved. That is, a piecewise flat simplicial decomposition is a topological representation of a manifold. Recall that a simplex is the higher-dimensional generalization of a triangle in two dimensions. For example, in three dimensions, the simplex is the tetrahedron. Mathematically, a $D$-dimensional simplex can be seen as the convex hull of $D+1$ affinely independent points, which means that no more than $m+1$ points are in the same $m$-dimensional plane. Denoting these points by $v_k$ for $k \in[1,D]$, the $D$-dimensional simplex $\sigma_D$ is defined by\footnote{The constraint that the sums over the $a_k$ is one comes from the affine part whereas the condition $a_k \geq 0$ comes from the convex part.}
\begin{equation}
\sigma_{D} = \left\lbrace \sum_{k=1}^{D} a_k v_k | \sum_{k=1}^d a_k = 1 \; \text{and} \; a_k \geq 0 \right\rbrace \; .
\end{equation}
The reason to consider simplicial decompositions is to construct an approximation of General Relativity independent of the coordinate system. Also, since a piecewise flat simplex is rigid, the knowledge of the length is enough to describe it entirely. For a curved $D$-dimensional manifold, the curvature is concentrated on simplexes of dimensions $(D-2)$, historically called hinges.

Consider now the three-dimensional case. A simplicial decomposition is then usually called a triangulation, and curvature is located at the edges of the triangulation. The idea behind Regge calculus is to fix the metric by fixing the length associated to the edges of the triangulation since it is enough to determine it entirely. On the triangulated space-time manifold, the Einstein-Hilbert action with the GHY boundary term is represented by the so-called Regge action\footnote{Generalization to arbitrary dimensions is straightforward: the $e$ {\it on the rhs} of this formula should be understood as a codimension 2 simplex, $l_e$ its volume, and $\theta_e^\sigma$ (see below) the internal hyper-dihedral angle at $e$. All these quantities must be understood as functions of the simplex {\it edge} lengths, in any dimension.}
\begin{equation}
S_\text{R}[l_e] = -\frac{1}{8 \pi G}\left[ \sum_{e \in \mathrm{int}(\cM)} l_e \epsilon_e(l_e) + \sum_{e \in \partial \cM} l_e \psi_e(l_e) \right],
\label{chap3:eq:Regge_action}
\end{equation}
where $l_e$ is the length of the edge $e$ of the triangulation, $\epsilon_e$ is the deficit dihedral angle at the edge $e$ measuring the curvature around it, while $\psi_e$ is the angle between the normal to the two boundary tetrahedra\footnote{i.e. tetrahedra having at least one face being part of the boundary triangulation.} hinging (possibly among other tetrahedra) around the boundary edge $e$. Both angles must be understood as functions of the triangulation's edge lengths and are related to the curvature of the manifold, intrinsic and extrinsic respectively.

In formulas, by introducing the internal dihedral angle at the edge $e$ within the tetrahedron $\sigma$, $\theta_e^\sigma$
\begin{subequations}
	\begin{align}
	\epsilon_e(l_{e'}) & = 2\pi - \sum_{\sigma \supset e} \theta_e^\sigma(l_{e'}) 
	\qquad\text{for}\qquad e\in\mathrm{int}(\cM),\\
	\label{chap3:eq:dihedral_angle_int}
	\psi_e(l_{e'}) &= \pi - \sum_{\sigma \supset e} \theta_e^\sigma(l_{e'})
	\;\;\qquad\text{for}\qquad e\in\partial \cM.
	\end{align}
\end{subequations}
The deficit angle $\epsilon_{e}$ is represented figure \ref{chap3:fig:dihedral_angle_regge} in the two-dimensional case, which is easier to draw. On the other hand, the deficit angle $\psi_{e}$ is the angle between the normal of two consecutive edges.
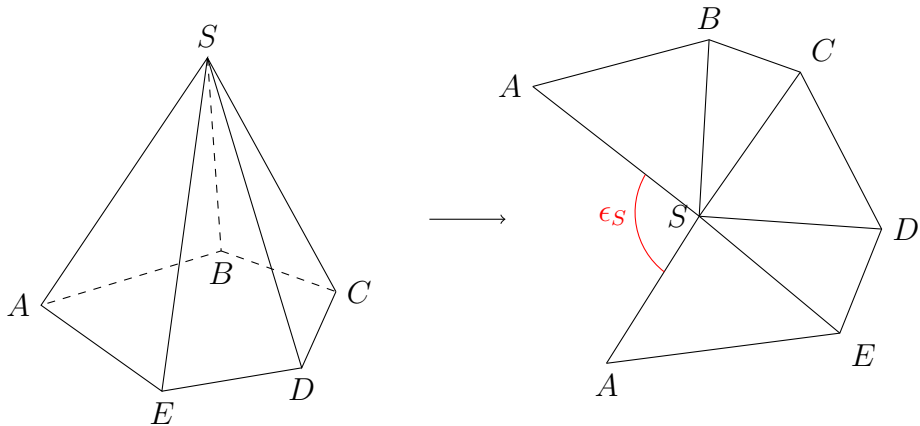
\begin{figure}[!htb]
	\centering
	\begin{tikzpicture}[scale=1]
	\coordinate (S) at (0.08,2.64);
	\coordinate (A1) at (-2.11,-0.64);
	\coordinate (A2) at (0.26,0.08);
	\coordinate (A3) at (1.77,-0.46);
	\coordinate (A4) at (1.32,-1.47);
	\coordinate (A5) at (-0.52,-1.78);
	
	\draw (S) node[above]{$S$};
	\draw (A1) node[left]{$A$}; \draw (A2) node[below]{$B$}; \draw (A3) node[right]{$C$} ; \draw (A4) node[below]{$D$}; \draw (A5) node[below]{$E$};
	
	\draw (S)--(A1)--(A5)--cycle;
	\draw[dashed] (S)--(A2)--(A1);
	\draw[dashed]  (A2)--(A3); \draw (S)--(A3);
	S		\draw (S)--(A4)--(A3);
	\draw (A4)--(A5);
	\coordinate (Arr) at (3,0.5);
	\draw[->] (Arr) --+(1,0);
	
	\coordinate (S) at (6.55,0.54);
	\coordinate (A1) at (4.36,2.26);
	\coordinate (A2) at (6.68,2.88);
	\coordinate (A3) at (7.88,2.45);
	\coordinate (A4) at (8.95,0.37);
	\coordinate (A5) at (8.4,-1.01);
	\coordinate (A1b) at (5.33,-1.41);
	
	\draw (S)--(A1); \draw(S)--(A2); \draw (S)--(A3); \draw (S)--(A4); \draw (S)--(A5); \draw (S)--(A1b);
	\draw (A1)--(A2)--(A3)--(A4)--(A5)--(A1b);
	
	\draw (S) node[left]{$S$};
	\draw (A1) node[left]{$A$}; \draw (A2) node[above]{$B$}; \draw (A3) node[above right]{$C$} ; \draw (A4) node[right]{$D$}; \draw (A5) node[below right]{$E$}; \draw (A1b) node[below]{$A$};
	\coordinate (Sangle) at (5.84,1.09);
	\draw[red] (Sangle) arc(150:232:1); \draw[red] (S) node[left=0.8]{$\epsilon_S$};
	\end{tikzpicture}
	\caption{Two-dimensional representation of the dihedral angle. In the two-dimensional case, curvature is located at the vertex. Considering the vertex $S$, it is shared by $5$ triangles. The easiest representation of the deficit angle is to project the two-dimensional structure into the flat plane, right figure. The dihedral angle $\epsilon_{S}$, in red, is then equal to $2\pi$ minus the sum of the angles of the triangle at the point $S$, as provided by the definition \eqref{chap3:eq:dihedral_angle_int}.}
	\label{chap3:fig:dihedral_angle_regge}
\end{figure}

The first term of the Regge action is the discretized Einstein-Hilbert action, whereas the second term is the Hartle-Sorkin \cite{Hartle:1981} action. It is the proper discretization of the Gibbons-Hawking-York boundary term. This action has the correct composition properties under the gluing of manifolds, and most importantly it implements boundary conditions where the induced metric, i.e. the boundary edge-lengths are kept fixed.
This can be easily checked. The Schläfli identity\footnote{The Schläfi identity is a differential relation connecting the volume of a polyhedron in  a $D$-dimensional space of constant curvature with the $(D-2)$ volumes and dihedral angles of its $(D-2)$ dimensional faces.} gives the relation
\begin{equation}
\sum_{e \in \sigma} l_e \delta \theta_{e^{^{\sigma}}} = 0 \;.
\end{equation}
We can then vary the Regge action with respect to the length $l_e$, and we find the expected result for three-dimensional gravity that the curvature defects vanish around all the edges of the triangulations
\begin{align}
0 = \delta_{l_e} S_{R} = - \f{1}{8 \pi G} \epsilon_e \; .
\end{align}
These equations are the discrete analogue of saying that three-dimensional General Relativity without cosmological constant is flat everywhere. One must be careful that the length of the bulk edges are not uniquely fixed by this equation. This is to be related to the diffeomorphism symmetry of the bulk \cite{Rocek:1982fr,Rocek:1982tj,Dittrich:2008pw,Bahr:2009ku,Bahr:2009mc}.

As a side note, let us say that Regge calculus admits a compelling generalization to the cosmological case, where $\Lambda\neq0$. In its most elegant version, one not only adds an obvious cosmological term $\Lambda\sum_{\sigma} V_\sigma$ to the action, but also makes use of homogeneously curved simplexes of constant curvature rather than flat ones \cite{Taylor:2003,Bahr:2009qd}. The main motivation for this modification is to obtain homogeneously curved solutions through a vanishing deficit-angle condition. However, the main reason why this works surprisingly well is the fundamental interplay between the Regge equations of motion and the generalization of the Schlaefli identities to curved simplices, see e.g. \cite{Bahr:2009qd, Haggard:2014gca} for classical applications of this idea, and \cite{Haggard:2015kew} for a quantum geometrical one involving Chern-Simons theory.

Quantum Regge calculus on a manifold $\cM$ can then be defined via the following finite-dimensional path integral \cite{Hamber:1985qj,Regge:2000wu}
\begin{equation}
Z_\text{R} = \int \mathcal D \mu (l) \;\E^{- S_{\text{R}}(l)} \; ,
\end{equation}
where $\mu(l)$ is the path integral measure associated to the length parameters. In three dimensions, the problem of fixing the quantum measure can be elegantly solved by requiring that the invariance of the theory under changes of the bulk triangulation holds at least at the linearised level around some background solution  $\{l_e^0\}_e$ \cite{Dittrich:2011vz}. Note that the resulting measure coincides with the measure one would deduce from the asymptotic limit of the Ponzano-Regge model that we will describe later on in this chapter \cite{Barrett:2008wh,Barrett:1993db,Roberts:1998zka}. Interestingly enough, it is not possible to find a measure with this property in four dimensions \cite{Dittrich:2014rha} and the definition of quantum Regge calculus is therefore more complicated.

Using this measure, a perturbative theory of three-dimensional quantum gravity can be defined. This theory is, by construction, diffeomorphism invariant, i.e. invariant under displacements of the triangulation's bulk vertices at one-loop \cite{Rocek:1982fr,Dittrich:2008pw, Dittrich:2012qb}. It is formally defined via the path integral 
\begin{equation}
	Z_\text{R}^{\text{1-loop}} = \E^{-S_\text{R}(l_e^0)}\int \mathcal D\mu_{l_e^0}(\lambda)\; \E^{- \frac{1}{16 \pi G} \sum_{\sigma,e,e'} H^{\sigma}_{e e'} \lambda_e \lambda_{e'}},
	\label{chap3:eq:on_loop_regge}
\end{equation}
where the one-loop Hessian is
\begin{equation}
	H^\sigma_{e e'} = \left.\frac{\partial \theta_e^\sigma}{\partial l_{e'}}\right|_{l_e = l^0_e}.
\end{equation}
and $\lambda_e\ll l^0_e$ are the small edge-length perturbations, $l_e = l_e^0 + \lambda_e$.

We say formally defined because the above formula hides two difficulties.
The first one is related to diffeomorphism invariance, which, if not gauge-fixed, induces a divergence. The gauge-fixing procedure amounts to the fixing of the position of the internal vertices of the triangulation. As we said before, diffeomorphisms are related to invariance under change of triangulation, and hence Pachner move. Note that only the  $(4-1)$ Pachner move induces a divergence. The $(3-2)$ move is exact. The same happens in the full Ponzano-Regge model as we will see in the following \cite{Barrett:1996gd}. The second problem is related to an unbounded-from-below mode which is analogous to the conformal mode of continuum gravity and can be dealt with by analytic continuation. For more details, see  \cite{Dittrich:2011vz,Bonzom:2015ans} and references therein. Once those issues have been dealt with and triangulation invariance has been established, the partition function $Z_\text{R}^{\text{1-loop}}$ can be calculated by choosing the most convenient bulk triangulation. More importantly, the result will still depend on the {\it boundary} triangulation, acting as boundary state, whose edge lengths are kept fixed in the process consistently with the chosen action principle. This means that, although this discrete theory captures all the symmetries of the continuum regime for what concerns the bulk of the spacetime, its boundary is {\it discrete and finite}.
The physical and conceptual role of dealing with finite boundaries can be physically justified in terms of the {\it general boundary} framework \cite{Oeckl:2003ws,Oeckl:2016tlj}, which focuses on the realistic operational structure of any intrinsically localized measurement. In the case of a single boundary, one can consider the resulting partition function as a generalized version of Hartle-Hawking state \cite{Hartle:1983ai}.

The discreteness of the boundary is also less severe than it looks at first sight. At the classical level, boundary discreteness can in fact be understood as the imposition of peculiar, i.e. piecewise linear, boundary conditions within the {\it continuum} theory. At the quantum level, this is reflected by the fact that spin network states can be embedded into a continuum Hilbert space. This is indeed a key achievement of loop quantum gravity \cite{Ashtekar:2004eh,Thiemann:2007zz}. There is a caveat to this statement though: a priori there are different possible embeddings leading to inequivalent Hilbert spaces. Accordingly, the quantum geometries encoded in the spin network states are completed to continuum quantum geometries in very different manners. 
Indeed, a choice of embedding into a continuum Hilbert space assigns to all degrees of freedom finer than the spin network scale a natural geometric vacuum state. In the case of three-dimensional gravity this is the $BF$ vacuum state. See \cite{Dittrich:2014wpa,Bahr:2015bra} for detailed discussions of these subtle points. Note that due to the presence of local degrees of freedom, the identification of a suitable vacuum state for four-dimensional gravity is a key open issue. See \cite{Dittrich:2012jq,Dittrich:2013xwa,Dittrich:2014ala} for a framework to address this problem.
\medskip

\section{The Ponzano-Regge model as a discretized BF-theory}

In this section, we will focus on the derivation of the Ponzano-Regge model. We will first give its mathematical definition, before looking at its link with General Relativity and BF theory. Recall that this section is not a comprehensive review of the Ponzano-Regge model. While we will try to mention most of the relevant details of the model, we refer the reader to the excellent reviews of the Ponzano-Regge model already existing in the literature, for example \cite{Freidel:2004vi,Freidel:2004nb,Freidel:2005bb,Barrett:2008wh}.

\subsection{Mathematical definition of model}

The Ponzano-Regge model is defined as a state sum model on a three-dimensional discretized manifold $\cM$. We emphasize the fact that there is no need to restrict the discretization to be a simplicial decomposition. Any cellular decomposition will do. Compared to the simplicial decomposition described above for the Regge calculus, the building blocks of a cellular decomposition are not restricted to be simplexes and can be any polytope. In the Regge calculus, the use of simplexes was necessary since the fundamental parameters of the theory were the length of the edges. Recall that only simplexes are entirely defined by the data of their edge lengths. We will see at the end of this chapter that restricting ourselves to a simplicial decomposition allows to recover the historical model as defined by Ponzano and Regge in 1968 \cite{PR1968}.Recall that this section is not a comprehensive review of the PR model. 

Before going deeper into the definition of the model, let us shortly introduce the notion of cellular decomposition in bit more details and the associated notation we will use throughout this thesis. For more details about cellular decomposition and CW-complex in general, which is a generalization of cellular decomposition, see \cite{BookMassey1991,BookHatcher:478079}. We denote a cellular decomposition of a three-dimensional manifold $\cM$ by $\cK$. Similarly to the simplicial decomposition described for the Regge calculus, a cellular decomposition is built with discrete fundamental elements called cells. A $n$-cell is a discrete structure homeomorphic to an open ball of dimension $n$. For example, a $0$-cell is a just a point while a $1$-cell is basically a segment. Cells of higher dimensions are constructed by "gluing" cells of one dimension lower together. A $D$-dimensional cellular decomposition of a manifold is built from $D+1$ type of cells, such that the gluing of a finite number of $(i-1)$-cells give a $i$-cells. The $D+1$ types of cells are of course $0$ to $D$-dimensional cells. Note that, since cells are closed, the gluing must follows some constraints in order to truly recover an $i$-cell from $(i-1)$-cells. 

In order to better understand this construction, we look at the simple two-dimensional case. In two dimensions, there are $2+1=3$ types of cells. The $0$-cells, the points, are denoted $v$ (in red figure \ref{chap4:fig:cell_plane_decomposition}), and are called the vertices of the discretization. The $1$-cells (in black figure \ref{chap4:fig:cell_plane_decomposition}) are then constructed by making a connection between two $0$-cells. They are thus edges of the discretization, denoted $e$. Finally, $2$-cells are constructed as a closed loop of edges. That is, they are polygons (in grey figure \ref{chap4:fig:cell_plane_decomposition}), denoted\footnote{The notation $t$ stands for triangle. If the cellular decomposition is simplicial, then all the $2$-cells of the decomposition are triangles.} $t$. In figure \ref{chap4:fig:cell_plane_decomposition}, we have represented a cellular decomposition of the two-dimensional flat plane. The decomposition is not simplicial since the polygons $t$ are not all triangles. It is immediate to see that this cellular decomposition can be made simplicial by adding edges such that every $2$-cells is a triangle. This property is easily generalizable to any dimensions.
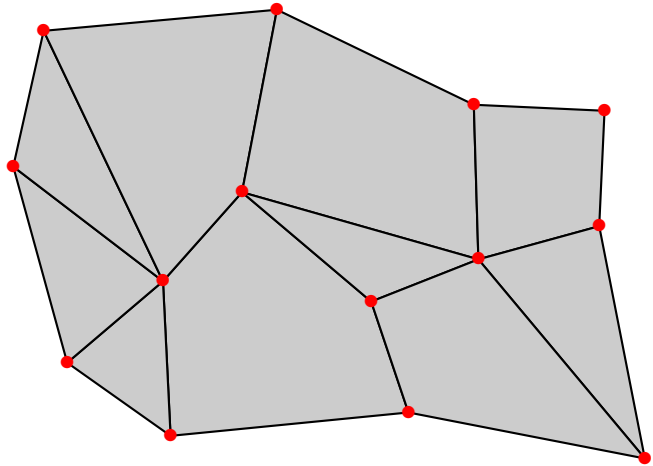
\begin{figure}[!htb]
	\begin{center}
	\begin{tikzpicture}[scale=1]
		\coordinate (A1) at (0,0); \coordinate (A1b) at (1.41,0.56);
		\coordinate (A2) at (-1.7,1.45); \coordinate (A2b) at (-1.24,3.87);
		\coordinate (A3) at (-2.74,0.27); \coordinate (A3b) at (-4,-0.81); \coordinate (A3bc) at (-4.71,1.79) ; \coordinate (A3c) at (-4.31,3.59);
		\coordinate (A4) at (-2.64,-1.78);
		\coordinate (A5) at (0.49,-1.47);
		\coordinate (B1) at (1.35,2.61);
		\coordinate (B2) at (3.07,2.53);
		\coordinate (B3) at (3,1); \coordinate (B3b) at (3.6,-2.08);

		\draw[fill=black!20,thick] (A1)--(A2)--(A3)--(A4)--(A5)--cycle;
		\draw[fill=black!20,thick] (A3)--(A4)--(A3b)--cycle;
		\draw[fill=black!20,thick] (A3)--(A3b)--(A3bc)--cycle;
		\draw[fill=black!20,thick] (A3)--(A3bc)--(A3c)--cycle;
		\draw[fill=black!20,thick] (A2)--(A3)--(A3c)--(A2b)--cycle;
		\draw[fill=black!20,thick] (A2)--(A2b)--(B1)--(A1b)--cycle;
		\draw[fill=black!20,thick] (B1)--(B2)--(B3)--(A1b)--cycle;
		\draw[fill=black!20,thick] (A1)--(A2)--(A1b)--cycle;
		\draw[fill=black!20,thick] (A1b)--(B3b)--(B3)--cycle;
		\draw[fill=black!20,thick] (B3b)--(A5)--(A1)--(A1b)--cycle;

		\draw[red] (A1) node{$\bullet$}; \draw[red] (A2) node{$\bullet$}; \draw[red] (A3) node{$\bullet$}; \draw[red] (A4) node{$\bullet$}; \draw[red] (A5) node{$\bullet$}; \draw[red] (B1) node{$\bullet$}; \draw[red] (B2) node{$\bullet$}; \draw[red] (B3) node{$\bullet$}; \draw[red] (A1b) node{$\bullet$}; \draw[red] (A2b) node{$\bullet$}; \draw[red] (A3b) node{$\bullet$}; \draw[red] (A3bc) node{$\bullet$}; \draw[red] (A3c) node{$\bullet$}; \draw[red] (B3b) node{$\bullet$}; 
	\end{tikzpicture}
	\end{center}
	\caption{Cellular decomposition of the two-dimensional plane with 14 vertices (in red), 23 edges (in black) and 10 polygons (in grey). The decomposition is not simplicial since the polygons are not all triangles.}
	\label{chap4:fig:cell_plane_decomposition}
\end{figure}

The three-dimensional case is not more complicated but it is harder to draw it clearly. In three dimensions, we have one more type of cells. It is intuitive to see that $3$-cells, denoted $\sigma$, are polyhedra. Indeed, they are constructed by the gluing of $2$-cells along their edges and, adding the constraint that the resulting object must be closed, this clearly defines a polyhedron Again, in the case of simplicial decomposition, all polyhedra must by simplexes, i.e. tetrahedra.

In the case where $\cM$ has a boundary, the cellular decomposition $\cK$ of $\cM$ naturally induces a cellular decomposition $\pp \cK$ of the boundary. Any $i$-cells of $\cK$ touching the boundary results in a $i-1$-cells for the boundary cellular decomposition $\pp \cK$. This can be easily understood looking at figure \ref{chap4:fig:cell_plane_decomposition} for the two-dimensional case. The polygons touching the boundary give rise to edges whereas the edges touching the boundary give rise to vertices.
\\

Finally, the last piece of information we need to introduce concerns the dual cellular decomposition. To any cellular decomposition $\cK$ we associate a dual cellular decomposition $\cK^{*}$. The cellular decomposition $\cK^{*}$ is such that a $i$-cell of $\cK$ is replaced by a dual $(D-i)$-cell in $\cK^{*}$. In order to differentiate direct objects from the dual ones, we introduce new notations for the dual cells. Namely, in three dimensions we call bubble $b$, face $f$, link $l$ and node $n$ the corresponding $3$, $2$, $1$ and $0$-cells of $\cK^{*}$. They are duals to vertexes, edges, polygons and polyhedra respectively. The notation are given in the following table for keepsake.
\begin{center}
	\begin{tabular}{|c|c|c|c|}
		\hline
		dimension in $\cK$ & dimension in $\cK^{*}$  & ~~~~~$\cK$~~~~~ & ~~~~$\cK^{*}$~~~~ \\ \hline
		0& 3  & $v$ & $b$ \\ \hline
		1& 2  & $e$ & $f$ \\ \hline
		2& 1  & $t$ & $l$ \\ \hline
		3& 0  & $\sigma$ & n \\
		\hline
	\end{tabular}
\end{center}

To keep the notation as simple as possible, we do not introduce special notations for boundary cells. That is, we are still calling the elements of $\pp \cK$ vertices and edges whereas the elements of $\pp \cK^{*}$ are still called nodes and links. To avoid any confusion, we might specify that an object belongs to the boundary with the use of the subscript $\pp$. Usually, if we are saying that an object belongs to $\cK$, we imply that it is a bulk object. If we want to consider both the bulk and boundary objects, we will explicitly say that it belongs to $\cK \cup \pp\cK $. The same holds for dual objects.
\bigskip

In order to define the Ponzano-Regge model, we need to dress the cellular decomposition with some more ingredients. First, we assign an orientation to the links and faces of $\cK^{*} \cup \pp \cK^{*}$. Note that, by duality, this corresponds to a choice of an orientation for $\cK \cup \pp \cK$. To do so, we assign to each link an arrow and to each face a clockwise or anticlockwise orientation. Note that a link usually belongs to two faces. However, it is assigned one single orientation. To keep track of the orientation of the link with respect to the face it belongs to, we define the relative orientation between a face $f$ and a link $l$ belonging to $f$ by
\begin{align*}
\eps_{l,f} &= +1 \quad \text{if $l$ and $f$ have the same orientation}  \\
\eps_{l,f} &= -1 \quad \text{if $l$ and $f$ have opposite orientation} \; .
\end{align*}
This definition can be extended to every face and link of the dual cellular decomposition saying that $\eps_{l,f} = 0$ if $l$ does not belong to the face $f$. Providing the orientation of the links, we denote the source node of the link $l$ by $s_l$ and its target node by $t_l$.

The remaining element necessary for the definition of the Ponzano-Regge model is the definition of a discrete connection to the manifold $\cM$. To every link of $\cK^{*} \cup \pp \cK^{*}$ we hence assign a $\SU(2)$ element $g_l$. This element represents the holonomy, i.e. the parallel transport between the source node of $l$ and its target node. The curvature of the connection is then naturally associated to the faces of the dual cellular decomposition. To properly define it, we need to make a choice of starting node for every face. We denote the starting node of the face $f$ by $st(f)$. The discrete curvature associated to the face $f$ is then given by the $\SU(2)$ element $G_f$ defined by
\begin{equation*}
G_f = \prod_{l \in f} g_l^{\eps(l,f)} \;.
\end{equation*}
The product is to be understood starting with the element $g_{l_0}$ where the source node of $l_0$ is $st(f)$. Actually, the choice of starting node does not matter for the Ponzano-Regge model. Indeed, we will see in the following that the model is, by definition, restricting the connections to the subspace of flat connection, that is when $G_f$ is fixed to be the identity by constraint.

Finally, as an $\SU(2)$ discrete connection, it is naturally associated to a gauge symmetry acting at every node of the cellular decomposition. Explicitly, we have 
\begin{equation*}
	g_{l} \rightarrow k_{t_l} g_l k_{s_l}^{-1} \; .
\end{equation*}
That is, the full gauge group is one copy of $\SU(2)$ per node of the decomposition. 
\bigskip

We now have all the necessary ingredients to define the Ponzano-Regge model on a manifold with boundary

\begin{Definition}
	Consider a three-dimensional manifold $\cM$ with boundary $\pp\cM$ and an orientated cellular decomposition $\cK$ and its dual $\cK^{*}$ of the $\cM$. This induces a boundary cellular decomposition $\pp \cK$ and a boundary dual cellular decomposition $\pp \cK^*$ on the boundary manifold $\pp \cM$. Let us assign to the links of $\cK^{*}$ and $\pp \cK^{*}$ a $SU(2)$ element $g_l$ and choose an arbitrary starting point $st(f)$ for all the faces.
	
	The Ponzano-Regge amplitude is formally defined by
	\begin{equation*}
	Z_{PR}\left(\left\lbrace g_{l \in \pp \cK^{*}} \right\rbrace\right) = \int_{SU(2)} \prod_{l \in \cK^{*}} \dd g_{l} \prod_{f \in \cK^{*}} \delta\left( G_{f} \right)
	\end{equation*}
	where $\dd g$ is the Haar measure on $\SU(2)$ and
	\begin{equation*}
	G_f = \prod_{l \in f} g_l^{\eps(l,f)} \; ,
	\end{equation*}
	where the product starts with $g_{l_0}$ where the source node of $l_0$ is $st(f)$.
	The Ponzano-Regge model is computing the volume associated to the moduli space of flat connection for the discretized manifold $\cM$.
	\label{chap3:def:PR_model_without_gauge_fixing}
\end{Definition}

In the definition of the Ponzano-Regge model, we use the $\SU(2)$ delta function. That is, the distribution given by
\begin{equation}
	\int_{\SU(2)}  \delta(g) f(g) \; \dd g = f(\id)
\end{equation}
for any function $f$ of $\SU(2)$. A useful formula for the $\SU(2)$ delta function is its Plancherel, i.e. spectral decomposition in terms of irreducible representation of $\SU(2)$. Explicitly we have
\begin{equation}
	\delta(g) = \sum_{j \in \f{\N}{2}} d_{j} \chi^{j} (g) \;.
	\label{chap3:eq:spectral_decompo}
\end{equation}
In this formula, $d_j = 2j+1$ is the dimension of the spin $j$ representation $V_j$, and $\chi^j(g) = \tr \left( D^{j}(g) \right)$ with $D^j$ the Wigner matrix, its character. This formula will come in handy when deriving the historical Ponzano-Regge model from the previous formula.

This amplitude is only formally defined since it might diverge. Note that the divergences do not come from the gauge symmetry on the connection since $\SU(2)$ is compact. The divergences are due to possible redundancies in the delta-functions distributions appearing in the definition of the amplitude. More precisely, for handlebodies, these divergences arise in the presence of bubbles in the bulk cellular decomposition \cite{Perez:2000fs}. Recall that bubbles are polyhedra in $\cK^{*}$. These polyhedra are naturally bounded by a set of faces $f \in b$. And to all of these faces the model assigns a delta function to ensure that the connection is flat everywhere. It happens that exactly one delta function is redundant per bubble, in the sense that its information is already encoded in the other delta functions and the fact that bubbles are cells. This can easily be seen by considering the simple case of a tetrahedron. The contribution of the tetrahedron to the Ponzano-Regge amplitude reads (see figure \ref{chap4:fig:tetra_not_divergence} for the notation and orientation)
\begin{equation}
	Z_{PR}^{tet} = \int_{\SU(2)} \left(\prod_{k=1}^{6} \dd g_k \right) \;  \delta(g_1 g_2 g_3) \delta(g_3^{-1} g_4 g_5) \delta(g_5 g_1 g_6^{-1}) \delta(g_2 g_4 g_6) \;.
\end{equation}
Since a tetrahedron has four faces, there are four delta functions. The integration over the $g$'s can be done successively. Starting with $g_1$, then $g_2$ ... we get
\begin{align*}
	Z_{PR}^{tet} 
	&= \int_{\SU(2)} \left(\prod_{k=2}^{6} \dd g_k \right) \;   \delta(g_3^{-1} g_4 g_5) \delta(g_5 g_3^{-1} g_2^{-1} g_6^{-1}) \delta(g_2 g_4 g_6) 
	\\
	&=
	\int_{\SU(2)} \left(\prod_{k=3}^{6} \dd g_k \right) \;   \delta(g_3^{-1} g_4 g_5) \delta(g_5 g_3^{-1} g_4 g_6 g_6^{-1})
	\\
	&=
	\int_{\SU(2)} \left(\prod_{k=4}^{6} \dd g_k \right) \; \delta(g_5 g_5^{-1} g_4^{-1} g_4 g_6 g_6^{-1}) \\
	&=
	\delta(\id) \; ,
\end{align*}
thus giving a divergence. Clearly, removing by hand one delta function is enough to obtain the convergence without any loss of information regarding the flatness of the connection. We will see in the following that this idea can be generalized and allows to obtain a regularization of the amplitude. Note, however, that this is not true for every manifold \cite{Barrett:2008wh}. Deep below, the convergence of the model is related to the Reidemeister torsion on the space of flat connection \cite{Barrett:2008wh,Dubois:torsion}.

\begin{figure}[htb!]
	\begin{center}
		\begin{tikzpicture}[scale = 1]
			\coordinate (A) at (-1.85,-1.31);
			\coordinate (B) at (3.75,-0.07);
			\coordinate (C) at (2.27,-1.89);
			\coordinate (D) at (0.31,2.15);
			
			\draw[->-=0.5,dashed] (B) --node[below]{$g_6$} (A) ;
			\draw[->-=0.5] (A) --node[above]{$g_2$} (C) ;
			\draw[->-=0.5] (C) --node[above]{$g_4$} (B) ;
			\draw[->-=0.5] (D) --node[above left]{$g_1$} (A);
			\draw[->-=0.5] (C) --node[above right]{$g_3$} (D);
			\draw[->-=0.5] (B) --node[above]{$g_5$} (D);
		\end{tikzpicture}
	\end{center}
	\caption{Tetrahedron with an $\SU(2)$ element associated to each of its links. The links are orientated as described in the picture, and each face is considered clockwise orientated.}
	\label{chap4:fig:tetra_not_divergence}
\end{figure}
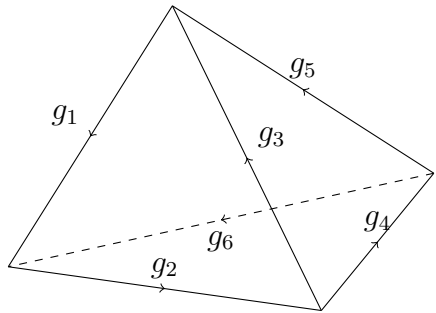

One way to quantify the divergences is to look at the collapsing move and the {\it tardis}\footnote{Doctor Who's time travelling machine!} of the cellular decomposition. The tardis is the set of edges of $\cK$ causing the divergences of the partition. Recall that the faces of the dual decomposition are dual to edges. Hence, we can see the delta functions as living on the edge of $\cK$. Since the only possible divergences come from the delta function, it is natural to define the set of edges whose associated delta functions are problematic. Another way to understand the divergences is to consider the spectral decomposition of the delta function as an infinite sum over the irreducible representation of $\SU(2)$ given previously. The tardis is then the set of edges where the previous sum is not constrained to be finite by the boundary data and spin relations. In the following, we shortly explain a way to obtain the tardis of a cellular decomposition. This gives us an interesting insight about the gauge fixing procedure of the next section. Finding the tardis of the cellular decomposition is done by looking at the collapsing move on $\cK^*$ \cite{Barrett:2008wh}. Consider a particular $3$-cell $b$ and $2$-cell $f$ such that $f$ appears only as the boundary of $b$. The collapsing operation removes both $b$ and $f$ from the dual cellular decomposition $\cK^*$. In the case where $\cM$ is connected with boundary, this operation can be repeated until there are no 3-cells left. That is, there are no more bubbles $b$, coinciding with the previous statement of divergences arising from bubbles. Basically, these collapsing operations happen either on a face $f$ already on the initial boundary $\pp \cM$ or on a face coming from a previous collapsing operation, which are dual to edges. The tardis is then the set of edges of $\cK$ dual to the face $f$ of the collapsing move. In the following, we will perform an operation where we remove a delta function for some edges of $\cK$, th edges of the tardis.

\subsubsection{Gauge-fixed Ponzano-Regge amplitude}

In order to have a properly defined model, it is therefore necessary to find a way to fix the divergences. The example with the tetrahedron hints that removing delta functions associated to faces of the dual cellular decomposition will do, whereas the tardis tells us that the problem arises from the edges of the cellular decomposition, which are indeed dual to the faces. In this section, we provide a way to remove the redundant delta functions from the definition of the model. At the same time, we will take care of the gauge symmetry of the $\SU(2)$ connection by providing a gauge fixing. We mainly follow the logic of \cite{Freidel:2004nb}.
\\

Before focusing on the deltas, let us consider the gauge fixing of the connection.
The gauge group $\SU(2)$ acts on the connection following
\begin{equation*}
	g_{l} \rightarrow k_{t_l} g_l k_{s_l}^{-1} \; ,
\end{equation*}
such that each node of the cellular decomposition contributes with a $\SU(2)$ group to the full gauge symmetry group. Denoting by $\#_{n}$ the number of nodes of $\cK^{*} \cup \pp \cK^{*}$, the full gauge group is therefore $\#_{n}$ copies of $\SU(2)$. To gauge fix this symmetry, we use a standard technique of lattice gauge theory. We consider a maximal tree $T^*$ of $\cK^{*} \cup \pp \cK^{*}$. A tree of a graph is a sub-graph which contains no loop. A tree is called maximal if it goes through every node of the graph. Providing such a tree, we can perform a partial gauge fixing of the $\SU(2)^{\#_n}$ gauge symmetry. This will allow us to fix $\SU(2)$ elements of the links belonging to the tree to whatever value we want. For simplicity, we will gauge-fixed them at the identity. As an example, consider again the two-dimensional plane and the discretization represented figure \ref{chap4:fig:tree_gauge_fixing}. Note that this does not change the logic of the action of the tree for the gauge fixing compared to the three-dimensional case. We represent the tree $T^{*}$ in blue figure \ref{chap4:fig:tree_gauge_fixing}. It is indeed maximal since it goes through every node. For simplicity, all the links belonging to the tree are considered with the same orientation, defining in turn the orientation of the whole tree.
\begin{figure}[!htb]
	\begin{center}
		\begin{tikzpicture}[scale=1]
		\coordinate (A1) at (0,0); \coordinate (A1b) at (1.41,0.56);
		\coordinate (A2) at (-1.7,1.45); \coordinate (A2b) at (-1.24,3.87);
		\coordinate (A3) at (-2.74,0.27); \coordinate (A3b) at (-4,-0.81); \coordinate (A3bc) at (-4.71,1.79) ; \coordinate (A3c) at (-4.31,3.59);
		\coordinate (A4) at (-2.64,-1.78);
		\coordinate (A5) at (0.49,-1.47);
		\coordinate (B1) at (1.35,2.61);
		\coordinate (B2) at (3.07,2.53);
		\coordinate (B3) at (3,1); \coordinate (B3b) at (3.6,-2.08);

		\draw (A1)--(A2)--(A3)--(A4)--(A5)--cycle;
		\draw (A3)--(A4)--(A3b)--cycle;
		\draw (A3)--(A3b)--(A3bc)--cycle;
		\draw (A3)--(A3bc)--(A3c)--cycle;
		\draw (A2)--(A3)--(A3c)--(A2b)--cycle;
		\draw (A2)--(A2b)--(B1)--(A1b)--cycle;
		\draw (B1)--(B2)--(B3)--(A1b)--cycle;
		\draw (A1)--(A2)--(A1b)--cycle;
		\draw (A1b)--(B3b)--(B3)--cycle;
		\draw (B3b)--(A5)--(A1)--(A1b)--cycle;
		
		\draw[thick,blue,->-=0.5] (A1) --node[pos=0.5,below]{$g_0$} (A2) ;
		\draw[thick,blue,->-=0.5] (A2) --node[pos=0.5,below]{$g_1$} (A3) ;
		\draw[thick,blue,->-=0.5] (A3) -- (A4) ;
		\draw[thick,blue,->-=0.5] (A4) -- (A3b) ;
		\draw[thick,blue,->-=0.5] (A3b) -- (A3bc) ;
		\draw[thick,blue,->-=0.5] (A3bc) -- (A3c) ;
		\draw[thick,blue,->-=0.5] (A3c) -- (A2b) ;
		\draw[thick,blue,->-=0.5] (A2b) -- (B1) ;
		\draw[thick,blue,->-=0.5] (B1) -- (B2) ;
		\draw[thick,blue,->-=0.5] (B2) -- (B3) ;
		\draw[thick,blue,->-=0.5] (B3) -- (A1b) ;
		\draw[thick,blue,->-=0.5] (A1b) -- (B3b) ;
		\draw[thick,blue,->-=0.5] (B3b) -- (A5) ;
		
		\draw[red] (A1) node{$\bullet$}; \draw[red] (A2) node{$\bullet$}; \draw[red] (A3) node{$\bullet$}; \draw[red] (A4) node{$\bullet$}; \draw[red] (A5) node{$\bullet$}; \draw[red] (B1) node{$\bullet$}; \draw[red] (B2) node{$\bullet$}; \draw[red] (B3) node{$\bullet$}; \draw[red] (A1b) node{$\bullet$}; \draw[red] (A2b) node{$\bullet$}; \draw[red] (A3b) node{$\bullet$}; \draw[red] (A3bc) node{$\bullet$}; \draw[red] (A3c) node{$\bullet$}; \draw[red] (B3b) node{$\bullet$}; 
		
		\draw[red] (A1) node[above]{$n_0$}; \draw[red] (A2) node[above left]{$n_1$}; \draw[red] (A3) node[left]{$n_3$};
		\end{tikzpicture}
	\end{center}
	\caption{Drawing of the tree $T^{*}$ for a two-dimensional planar graph. The tree is initialized at the node $n_0$ and links of the tree are represented in blue. For simplicity, we consider that the links belonging to $T^{*}$ all have the same orientation. }
	\label{chap4:fig:tree_gauge_fixing}
\end{figure}
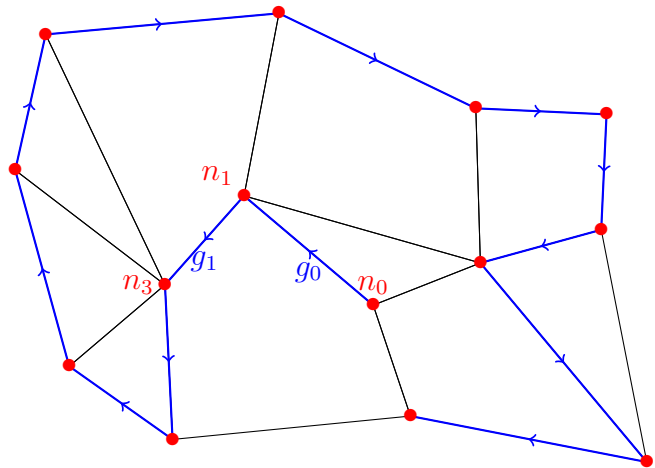

The tree is considered to be initialized at the node $n_0$. Note that it could have been the other extremity of the tree. Following the tree, we call the next node $n_1$ and so on. To perform the gauge fixing, we follow once again the links belonging to the tree. Arriving at the node $n_1$, we use the local $\SU(2)$ invariance to fix $g_0$ to be the identity. That is, we act on $n_1$, the target of the link $0$ with the element $g_0^{-1}$. By doing so, we are also modifying $g_1$. It becomes $g_{1} g_{0}$. Note that {\it all} the links touching $n_1$ are affected by this gauge transformation. Then, we proceed to the node $n_2$ and repeat the same kind of operation. At the end of the day, it is clear that all the group elements of the tree are fixed to the identity whereas we are left with only one node where the gauge invariance was not applied, $n_0$. That is, the gauge group goes from $\SU(2)^{\#_{n}}$ to only one remaining copy of $\SU(2)$. Hence the only partial gauge fixing. Note once again that we could have fixed the group elements belonging to the tree to whatever valued we wanted. It is not even necessary for the elements of the tree to be fixed at the same group element. In the spirit of the gauge fixing procedure, it does however make more sense to gauge-fix them at the identity.

In summary, the consequence of the gauge fixing is to fix all the group elements of the links belonging to the maximal tree $T^{*}$ of $\cK^* \cup \pp \cK^{*}$ to the identity
\begin{equation}
	g_{l \in T^{*}} = \id \; .
\end{equation}
Again, we emphasize that since $\SU(2)$ is compact, this gauge symmetry was not the cause of the divergences.
\bigskip

It is now time to focus on the delta functions causing the ill-definiteness of the Ponzano-Regge amplitude. Recall that there is one delta function per face of $\cK^{*}$ in the initial model and that faces are dual to edges of the direct decomposition. The example with the tetrahedron tells us that removing one delta is enough for convergence, at least in that case. This can be generalized to any $3$-cells using the property that any $D$-cells can be subdivided into a certain number of $D$-simplexes. For $D=3$, simplexes are tetrahedra. Hence, the idea is to consider all the bubbles of $\cK^{*}$, and remove one delta function associated to one of its faces. The easiest way to do this in a consistent way is to go back to $\cK$, where the bubbles are the vertices. Now, pick an internal maximal tree $T$ touching the boundary only one time. That is, $T$ goes through all the vertices of $\cK$ by only one vertex of $\pp \cK$. This coincides with the logic behind the collapsing move saying that we consider a $3$-cell and a $2$-cell such that the $2$-cell only belongs to the $3$-cell and no other. Basically, this is a boundary $2$-cell.

The final operation is to remove the delta function associated to all the dual faces of the edges belonging to $T$. The reason that $T$ must be maximal is quite obvious considering the example of the tetrahedron. The real question is: do we have a sufficient number of delta functions remaining to impose the flatness of the connection everywhere? This is the key point. Indeed, remember that the Ponzano-Regge model is basically evaluating the volume of flat connection. It happens to be enough thanks to the following lemma
\begin{lemma}
	\begin{equation*}
	G_f = \id \quad \forall f \; \text{dual to} \; e \in \cK \setminus T \quad \implies G_f = \id \; \forall f \in \cK^{*} \;.
	\end{equation*}
	\label{chap4:lemma:treeT_sufficiant_condition}
\end{lemma}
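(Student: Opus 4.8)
The plan is to prove the implication by an induction running along the tree $T$, whose only non-elementary ingredient is a local ``flatness propagation'' fact for bubbles. First recall that the bubbles $b$ of $\cK^{*}$ are dual to the internal vertices $v$ of $\cK$, and that the faces $f$ bounding a bubble $b_v$ are precisely the duals of the edges $e\in\cK$ incident to $v$; since $\cK$ cellulates a $3$-manifold, $\partial b_v$ is a $2$-sphere cellulated by these faces. The local fact I would use is: if $G_f=\id$ for every face of $b_v$ except possibly one, say $f_0$, then also $G_{f_0}=\id$. Indeed, deleting the open $2$-cell $f_0$ from $\partial b_v$ leaves a closed $2$-disk, hence a simply connected complex; since all its $2$-cells (the remaining faces) carry trivial holonomy, the graph connection $\{g_l\}$ defines a flat connection on this disk, so it has trivial holonomy around every loop, in particular around $\partial f_0$, whose holonomy is $G_{f_0}$ up to conjugation. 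Thus, within any one bubble, the flatness constraint attached to any single face is a consequence of those attached to the remaining faces of that bubble.

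Next I would fix the combinatorics of $T$. Root $T$ at the unique boundary vertex it meets; to each internal vertex $v$ associate the first edge $e_v$ on the $T$-path from $v$ to the root, and let $f_v$ be the face dual to $e_v$. Then $v\mapsto e_v$ is a bijection from the internal vertices onto the edges of $T$ (there are $N$ of each, $T$ being a tree on $N+1$ nodes, and the map is injective because $e_v$ has $v$ as its endpoint farthest from the root), and $f_v$ is one of the faces of $b_v$. Choose an ordering $v_1,\dots,v_N$ of the internal vertices which is a ``pruning from the leaves'' order, i.e. the reverse of a linear order in which every vertex precedes its $T$-children; then the edges of $T$ incident to $v_i$ are exactly $e_{v_i}$ together with the $e_{v_j}$ with $j<i$ and $v_j$ a $T$-child of $v_i$.

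The induction then reads: $G_{f_{v_i}}=\id$ for all $i$. Look at the bubble $b_{v_i}$. Each of its faces is dual to an edge of $\cK$ meeting $v_i$, and such an edge is of exactly one of three kinds: it lies in $\cK\setminus T$, in which case its dual face has trivial holonomy by hypothesis; or it equals $e_{v_j}$ for some $j<i$, in which case $G_{f_{v_j}}=\id$ by the induction hypothesis; or it is $e_{v_i}$ itself (this exhausts the $T$-edges at $v_i$ by the previous paragraph). Hence $f_{v_i}$ is the only face of $b_{v_i}$ not yet known to have trivial holonomy, so the local fact above gives $G_{f_{v_i}}=\id$, closing the induction. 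Finally, any face of $\cK^{*}$ is dual either to an edge in $T$ (treated by the induction) or to an edge in $\cK\setminus T$ (treated by the hypothesis), so $G_f=\id$ for all $f\in\cK^{*}$.

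The step I expect to need the most care is the first paragraph: one has to check carefully that $\partial b_v\cong S^2$ for every internal vertex, identify $\partial f_0$ with a genuine loop in the $1$-skeleton of the resulting disk, and track the relative-orientation signs $\eps(l,f)$ and basepoint conjugations so that ``holonomy of $\partial f_0$ equals $G_{f_0}$ up to conjugation'' holds on the nose. The tree combinatorics --- bijectivity of $v\mapsto e_v$ and existence of the leaf ordering --- are routine once one uses that $T$ is a spanning tree of the internal vertices of $\cK$ meeting $\partial\cM$ in a single vertex.
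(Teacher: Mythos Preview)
Your proposal is correct and follows precisely the approach the paper sketches (the paper does not give a full proof, deferring instead to \cite{Freidel:2004nb,Barrett:2008wh}, but summarises the idea as ``the choice of tree $T$ basically removes one face per polyhedron and the flatness of the other faces implies the flatness on the face where the delta function was removed''). Your argument is exactly a careful fleshing-out of that sentence: the local flatness-propagation fact on each bubble $b_v\cong S^2$, combined with a leaf-to-root induction along $T$ so that at each internal vertex only the single face dual to $e_v$ remains to be fixed. The caveats you flag (that $\partial b_v$ is a $2$-sphere, and tracking basepoints/orientations so that the boundary holonomy of the disk really equals $G_{f_0}$ up to conjugation) are indeed the only points requiring care.
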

This lemma tells us that imposing the flatness of all the faces of $\cK^{*}$ dual to edges of $\cK$ {\it not} belonging to the tree $T$ is enough to impose the flatness of all the faces of $\cK^{*}$. See \cite{Freidel:2004nb,Barrett:2008wh} for the proof. The basic idea was already explained above however. The choice of tree $T$ basically removes one face per polyhedra and the flatness of the other faces imply the flatness on the face where the delta function was removed. Hence, we are not losing any information about having a flat theory.
\bigskip

With the addition of the trees $T$ and $T^*$, we can define a finite gauge-fixed version of the Ponzano-Regge model
\begin{Definition}
	Consider a three-dimensional manifold $\cM$ with boundary $\pp\cM$. Consider a cellular decomposition $\cK$ and its dual $\cK^{*}$ of the manifold $\cM$. This induces a boundary cellular decomposition $\pp \cK$ and boundary dual cellular decomposition $\pp \cK$. Assign to the edges of $\cK^{*}$ and $\pp \cK^{*}$ an $SU(2)$ element $g_l$. Pick an internal maximal tree $T$ of edges in $\cK$ touching the boundary only once and a maximal tree $T^{*}$ of edges in $\cK^{*} \cup \pp \cK^{*}$. 
	The gauge-fixed Ponzano-Regge amplitude is defined by
	\begin{equation}
	Z_{PR}\left(\left\lbrace g_{l \in \pp \cK^{*}} \right\rbrace\right) = \int_{SU(2)} \prod_{l \in \cK^{*} } \dd g_{l} \prod_{l \in T^{*}}\left( \delta(g_l) \right) \prod_{f \in \cK^{*} \setminus T_f} \delta\left( G_{f} \right)
	\label{chap3:eq:PR_model}
	\end{equation}
	where $\dd g$ is the Haar measure on $\SU(2)$ and
	\begin{equation}
	G_f = \prod_{l \in f} g_l^{\eps(l,f)} \; .
	\end{equation}
	There is still a global $\SU(2)$ invariance
	\begin{equation}
	g_l \rightarrow G g_l G^{-1} \quad \forall G \; \in \SU(2) \; ,
	\end{equation}
	due to the existence of the remaining nodes where the gauge symmetry was not applied.
	\label{chap3:def:PR_amplitude}
\end{Definition}

There are two points that need to be made explicit. First, the gauge-fixing procedure also affects the boundary. On the right hand side of the Ponzano-Regge amplitude, the delta functions over the links of $T^{*}$ involve links of the boundary. Hence, this defines an amplitude which is a functional of the gauge-fixed boundary data. Secondly, the way we handle the divergence of the Ponzano-Regge model is not complete, and some divergences might appear even without the presence of bubbles \cite{Barrett:2008wh,Bonzom:2010ar,Bonzom:2012mb}. We will discuss a bit more this point later while dealing with topological invariant and torsions.

Note that it is interesting that the divergences coming from the bubbles are handled in a gauge-fixing like way for a lattice theory, that is with the help of a tree. This hints that behind these delta functions and the divergences might exist a gauge invariance too. In the next section, we will present another viewpoint on the Ponzano-Regge model, obtained from the discretization of a $\SU(2)$ $BF$ theory. In that context, the cause of the divergences will be clearer, and related to another gauge symmetry of the $BF$ action. 

\subsubsection{Completely local definition of the Ponzano-Regge amplitude}

Before focusing on the topological invariance of the Ponzano-Regge model, we come back to the initial definition \ref{chap3:def:PR_model_without_gauge_fixing}. Intrinsically, the Ponzano-Regge model is a completely local model. The model computes the volume of flat connection given the discretization of a manifold. It is true that the flatness is defined while looking at closed loop via the curvature, and hence implies that it is rather natural to express the model on object living on the face. However, while doing so, we seem to lose a part of the locality behind the model. As it turns out, this can easily be recovered by expressing the delta functions in terms of local variables instead of variables on a face. We provide in this sub-section a new way to look at the Ponzano-Regge model, making its locality explicit. 

To do so, focus first on a given face $f$, with $\#$ nodes and links. Recall that this face is dual to an edge of $\cK$. In turn, this edge naturally belongs to $\#$ 3-cells, duals to the nodes. Now recall that $g_l$ corresponds to the parallel transports between the sources and target nodes, that is, between two polyhedra. The idea is to associate to each polyhedra a $\SU(2)$ group elements representing its state. Now, the key points are that nodes belong to more than one face. In order to be as local as possible, it seems natural to introduce one state element for the node $n$ per face it belongs to. Explicitly, we associate a set of $\SU(2)$ elements $\{h_{n,f} \}_{f \ni n}$ to the node $n$. See figure \ref{chap3:fig:PR_ultra_local}.

Since $g_l$ is just the parallel transport from $h_{s(l),f}$ to $h_{t(l),f}$ in $f$, it is natural to consider the contribution
\begin{equation}
	\delta\left(h_{t(l),f}^{-1} g_l h_{s(l),f}\right)
\end{equation}
per link $l$ and per face $f$. With these news variables, the Ponzano-Regge amplitude reads
\begin{equation}
	Z_{PR}\left(\left\lbrace g_{l \in \pp \cK^{*}} \right\rbrace\right) = \int_{SU(2)} \prod_{l \in \cK^{*}} \dd g_{l} \left( \prod_{f \in \cK^{*}} \prod_{n \in f}  \dd h_{n,f} \right)\; \delta(h_{t_l,f}^{-1} g_l h_{s_l,f}) \;.
\end{equation}
That is, we explode the face contribution $G_{f}$ into its local counterpart per link.
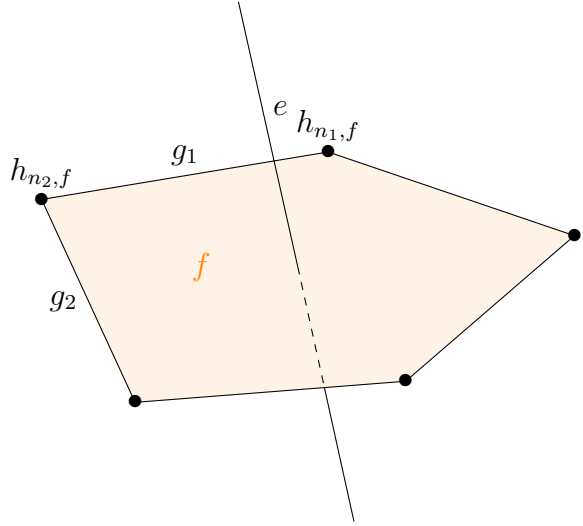
\begin{figure}[!htb]
	\begin{center}
		\begin{tikzpicture}[scale=1]
		\coordinate (A) at (0.76,1.11);
		\coordinate (B) at (-3.01,0.48);
		\coordinate (C) at (-1.78,-2.21);
		\coordinate (D) at (1.78,-1.93);
		\coordinate (E) at (4,0);
		\coordinate (U1) at (-0.42,3.1); \coordinate (U2) at (0.36,-0.41); \coordinate (U3) at (0.71,-2.01); \coordinate (U4) at (1.1,-3.79);
		
		\draw[fill=orange!10] (A)--node[above]{$g_1$}(B)--node[left]{$g_2$}(C)--(D)--(E)--cycle;
		\draw (U1)--node[right,pos=0.4]{$e$}(U2); \draw[dashed] (U2)--(U3); \draw (U3)--(U4);
		
		\draw (A) node{$\bullet$}; \draw (B) node{$\bullet$}; \draw (C) node{$\bullet$}; \draw (D) node{$\bullet$}; \draw (E) node{$\bullet$};
		\draw (A) node[above]{$h_{n_1,f}$}; \draw (B) node[above]{$h_{n_2,f}$};
		
		\draw[orange] (U2) node[left=1cm]{$f$};
		\end{tikzpicture}
	\end{center}
	\caption{Face $f$ of $\cK^{*}$, in orange, and its associated dual edge $e$. The face $f$ has five nodes and links. To each of the nodes, we associate a $\SU(2)$ element $h_{n,f}$ representing the state of the 3-cell dual to the node $n$, and to each link we associate the discrete connection $g_l$ representing the parallel transport from $h_{s_l,f}$ to $h_{t_l,f}$. }
	\label{chap3:fig:PR_ultra_local}
\end{figure}

Note the key difference between $g_l$ and $h_{n,f}$: while there is only one group element per link of $\cK^{*}$, there are as many group elements associated to the nodes that the number of faces it belongs to. In a sense, one can view the group elements associated to the nodes to be truly a representation of the state associated to the edge $e$. And the edge $e$ carries as many representations as the number of 3-cells sharing it. In the next section, while looking at the relation between the Ponzano-Regge model and the $BF$ action, we will see that $h_{n,f}$ is related to the discretization of the $B$ field, whose natural discretization is located at the edge $e$ of the cellular decomposition. 

Finally, we point out that this formulation of the Ponzano-Regge model is suitable to see its link with the $GFT$ formulation of the spin-foam model \cite{Boulatov:1992vp,Ooguri:1992eb,Freidel:2005qe}. Indeed, integrating out the $g_l$ instead of the $h_{n,f}$ allows to recover the $GFT$ formulation of the Ponzano-Regge model. With the above formulation, the delta function represents the propagator associated to the $GFT$.

\subsubsection{Topological invariance of the Ponzano-Regge amplitude}

In this section, we consider a simplicial decomposition and show that the definition of the Ponzano-Regge model provided by \ref{chap3:def:PR_amplitude} does not depend on the choice of decomposition. We will only consider the case of simplicial decomposition for simplicity, and refer the reader to \cite{Girelli:2001wr} for the general case. Since the computations to address these questions are a bit cumbersome, we will not develop them here, and just give a heuristic approach. See \cite{Girelli:2001wr,Freidel:2004nb,Barrett:2008wh} for more details. Note that the questions about the invariance under the choice of trees, orientation, starting position for the faces ... also need to be addressed. We will not develop these specific points here, and refer the reader to \cite{Freidel:2004nb}. We point out however that, for the choice of tree $T$, the lemma \ref{chap4:lemma:treeT_sufficiant_condition} is (almost) enough to prove the independence on the choice of $T$. Indeed, the lemma tells us that, once the tree $T$ is chosen, we are always projected back to the subspace of flat connection. Changing the tree does not change the result of the lemma, which is the important part. For the tree $T^{*}$ it is immediate to see that changing the tree changes which elements are fixed to identity by gauge-fixing, but not the number of them. Intuitively, we do not change the dimension of the space of freedom for the connection elements.
\medskip

Consider now a simplicial decomposition of the manifold and its dual decomposition. The main reason behind the choice of a simplicial decomposition rather than staying more general is to be found in the fact that we know how two different simplicial decompositions are related. It always exists a finite number of moves, called Pachner move, to go from one simplicial decomposition of a manifold to another simplicial decomposition of the same manifold. The fact that the amplitude defined in \ref{chap3:def:PR_amplitude} is independent from the choice of decomposition can therefore be done by proving its independence under Pachner move. In three dimensions, there are four moves, namely the $(1-4)$ move, the $(2-3)$ move and their inverses. The number in the name of the move refers to the number of tetrahedra. That is, the $(1-4)$ move is creating four tetrahedra from only one whereas the $(2-3)$ move is transforming two tetrahedra into three. A $(1-4)$ move is represented figure \ref{chap3:fig:1_4_move}.
\begin{figure}[!htb]
	\begin{center}
	\begin{tikzpicture}[scale=2]
	\coordinate (A) at (-1.36,-0.76);
	\coordinate (B) at (0.71,-0.69);
	\coordinate (C) at (1.03,0.3);
	\coordinate (D) at (-0.51,1.04);
	
	\draw (A)--(B) ; \draw (B)--(C) ; \draw[dashed] (A)--(C); \draw (A)--(D) ; \draw (D)--(B) ; \draw (D)--(C);
	
	\coordinate (S) at (1.5,0.2) ; 
	\draw[->-=1,thick] (S) --+(1,0);
	
	\coordinate (A1) at (3,-0.76);
	\coordinate (B1) at (5.07,-0.69);
	\coordinate (C1) at (5.39,0.3);
	\coordinate (D1) at (3.85,1.04);
	\coordinate (N) at (4.22,0.12);
	
	\draw[red] (N) node{$\bullet$} ;
	
	\draw (A1)--(B1); \draw (B1)--(C1); \draw[dashed] (A1)--(C1); \draw (A1)--(D1); \draw (D1)--(B1); \draw (D1)--(C1);
	\draw[red,dotted] (N)--(A1); \draw[red,dotted] (N)--(B1);  \draw[red,dotted] (N)--(C1);  \draw[red,dotted] (N)--(D1); 
	
	\end{tikzpicture}
	\end{center}
	\caption{$(1-4)$ Pachner move. On the left a single tetrahedron. On the right, we add a single node (in red) inside the tetrahedron, hence producing four tetrahedra from the single one. The reverse move corresponding to removing the single node.}
	\label{chap3:fig:1_4_move}
\end{figure}
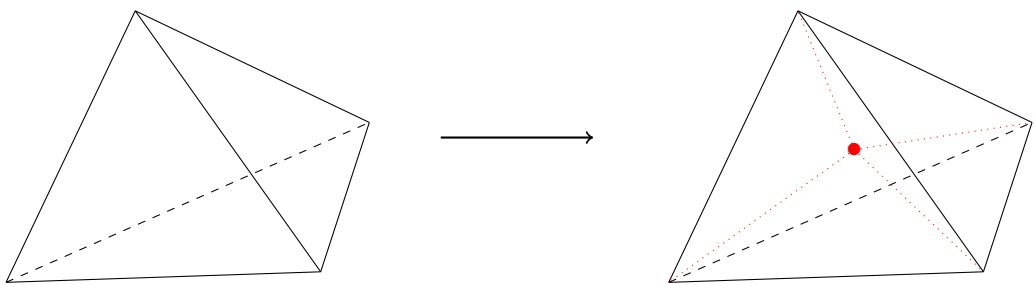

It is clear from the figure that such an operation adds edges, faces and tetrahedra to the decomposition. For the amplitude to still be well-defined according to the definition \ref{chap3:def:PR_amplitude}, we need to modify the tree $T$ and $T^{*}$ accordingly. By construction, the $(1-4)$ move adds one vertex, therefore one bubble. However, we saw previously that the bubbles were taken care of with $T$. Basically, the choice of $T$ amounts to removing all the bubbles from the decomposition. This can easily be understood considering the gauge-fixing as removing the edges of the tardis. We will see later, while working on recovering the historical Ponzano-Regge model, an explicit proof of this fact.

On the other hand, the $(2-3)$ move splits two tetrahedra into three. This can be done by connecting with an edge the opposite vertices of the two tetrahedra with respect to the shared basis, see figure \ref{chap3:fig:2_3_move}
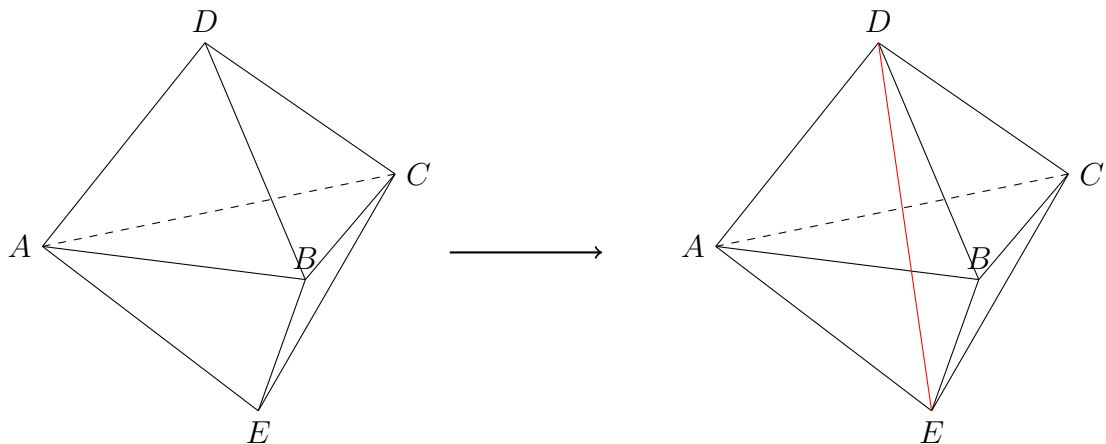
\begin{figure}[!htb]
	\begin{center}
	\begin{tikzpicture}[scale=2]
		\coordinate (A) at (-1.43,0.24);
		\coordinate (B) at (0.3,0.02);
		\coordinate (C) at (0.89,0.72);
		\coordinate (D) at (-0.36,1.59);
		\coordinate (E) at (-0.01,-0.85);
		
		\draw (A) node[left]{$A$}; \draw (B) node[above]{$B$}; \draw (C) node[right]{$C$}; \draw (D) node[above]{$D$}; \draw (E) node[below]{$E$};
		
		\draw (A)--(B) ; \draw (B)--(C) ; \draw[dashed] (A)--(C); \draw (A)--(D) ; \draw (D)--(B) ; \draw (D)--(C);
		\draw (A)--(E) ; \draw (B)--(E) ; \draw (C)--(E);
		
		\coordinate (S) at (1.25,0.2);
		\draw[->-=1,thick] (S) --+ (1,0);
		
		\coordinate (A1) at (3,0.24);
		\coordinate (B1) at (4.73,0.02);
		\coordinate (C1) at (5.32,0.72);
		\coordinate (D1) at (4.07,1.59);
		\coordinate (E1) at (4.42,-0.85);
		
		\draw (A1)--(B1) ; \draw (B1)--(C1) ; \draw[dashed] (A1)--(C1); \draw (A1)--(D1) ; \draw (D1)--(B1) ; \draw (D1)--(C1);
		\draw (A1)--(E1) ; \draw (B1)--(E1) ; \draw (C1)--(E1);
		\draw[red] (E1)--(D1);
		
		\draw (A1) node[left]{$A$}; \draw (B1) node[above]{$B$}; \draw (C1) node[right]{$C$}; \draw (D1) node[above]{$D$}; \draw (E1) node[below]{$E$};
				
	\end{tikzpicture}
	\end{center}
	\caption{$(2-3)$ Pachner move. On the left two tetrahedra. On the right three new tetrahedra constructed by adding a link between the node $D$ and $E$, in red in the picture. The new tetrahedra are $(ABDE)$, $(BCDE)$ and $(CADE)$.}
	\label{chap3:fig:2_3_move}
\end{figure}
While this move does add a bubble to the simplicial decomposition, we will see later that invariance under the $(2-3)$ Pachner move does not (really) depend on the tree $T$. Indeed, in the case of a simplicial decomposition, invariance under the $(2-3)$ Pachner move is always exact thanks to the Biedenhart-Elliot identity.

The explicit proof of the invariance under Pachner move can either be done via direct computation \cite{Freidel:2004nb}  or using a more elegant formulation of diagrammatic identities \cite{Girelli:2001wr}. In \cite{Girelli:2001wr}, the generalization of the invariance for any cellular decomposition is also provided.

In the next section, we will finally focus on the link between the Ponzano-Regge model and quantum gravity in three dimensions by showing that the discrete $BF$ formulation of gravity is related to our model.

\subsection{BF theory, $SO(3)$ and $SU(2)$ Ponzano-Regge model}

Now that we have a correct interpretation and definition of the Ponzano-Regge model from a mathematical point of view, it is time to show how it is related to quantum gravity in three dimensions \cite{Rovelli:1993kc,Perez:2003vx}.

As we already said, the starting point for this relation is the $BF$ formulation\footnote{Note that we call $e$ both the tetrad field and the edges of the cellular decomposition. It does not, however, cause any misunderstanding since it is always clear via the context to which one we are referring to.}  of General Relativity\eqref{chap1:eq:BF_action_starting}:
\begin{equation*}
S_{BF} = \int_{\cM} \dd^3x \; \tr\left( e \wedge F[\omega] \right) \; .
\end{equation*}
The quantization of the theory is straightforwardly done via the path integral formulation. We introduce the partition function of the $BF$ action
\begin{equation}
Z_{BF} = \int \cD e \cD \omega e^{-i S_{BF}(e,\omega)} \; . 
\end{equation}
Introducing a boundary for the manifold $\cM$ does not change the action and thus the partition function if we consider the connection to be kept fixed at the boundary as seen previously. The partition function becomes in that case a functional of the boundary data
\begin{equation}
Z_{BF}(\omega|_{\pp\cM}) = \int_{\pp \omega = \omega_{\pp \cM}} \cD e \cD \omega e^{-i S_{BF}(e,\omega)} \; 
\end{equation}
where the path integral is now understood with fixed boundary connection. As usual, this path integral is only formally defined since we did not take care of the divergences arising from the gauge symmetry of the $BF$ action. 

We emphasized however that this is a {\it true quantum amplitude}, and not a statistical amplitude: we do not consider any Wick rotation to go from the Lorentizan to the Euclidean signature. As such, we are studying a true quantum theory of gravity in Euclidean signature. This is a well-defined and well-posed problem. Indeed, Euclidean gravity has a well-defined quantization scheme, as shown by the Chern-Simons formulation of the theory. This is one of the most important differences with the works we explained previously in chapter \ref{chap2}. Recall that in this chapter, the partition function for quantum gravity was computed via the Wick rotation trick. Thus, it was truly a quantum {\it statistical} amplitude and not a true quantum amplitude for a quantum theory of gravity.

Still working at the formal level, we can make use of the linearity of the $BF$ action in the tetrad field $e$ to formally integrate over it. This corresponds to interpreting the tetrad field as a Lagrange multiplier enforcing the flatness of the connection everywhere. This directly comes from the equations of motion of the $BF$ theory. In formula, this integration formally leads to 
\begin{equation}
	Z_{BF}=\int \mathcal D\omega\; \delta(F[\omega]) \; ,
	\label{chap3:eq:formal_flat_connection}
\end{equation}
where $\delta$ is the Dirac delta function over the group $\SU(2)$, by the very definition of $\omega$ being $\SU(2)$ valued. Therefore, we see that the partition function associated to the $BF$ action computes the volume of the moduli space of flat spin connection on the manifold $\cM$. It is immediate to see the link with the Ponzano-Regge model from this relation. It computes the volume of the moduli space of flat connection for the discretized manifold $\cM$.
\\

Now, we still need to make sense of the path integral. Using the property that $BF$ is a topological field theory, we can make use of lots of methods developed in this context to evaluate the partition function, such as the heat kernel method. However, here, we want to make the link with the Ponzano-Regge model explicit. Therefore, we choose to introduce a cellular decomposition of the $\cM$. We will use the same notation as developed in the previous section. 

Introducing the orientated cellular decomposition $\cK$ and its dual $\cK^{*}$, it is then necessary to find a discrete analogue for the connection $\omega$ and the tetrad field $e$. For the connection, we consider, as previously, a discrete connection on the cellular decomposition. That is, we associate to each link of $\cK^{*}$ a $SU(2)$ element $g_l$. In terms of the connection one-form $\omega$, $g_l$ is just its holonomy along the link $l$
\begin{equation}
	g_l = \cP e^{\int \omega} \;.
\end{equation}
The curvature, as a $2$-form, is then located at the faces of $\cK^{*}$. We define at each face
\begin{equation}
	G_{f} = \prod_{l \in f} g_l^{\eps_{l,f}}
\end{equation}
where we recall that $\eps(l,f)$ is the relative orientation of the link $l$ with respect to the orientation of the face $f$. At this point, it is clear why the Ponzano-Regge model and the $BF$ theory are related. However, the $BF$ action also depends on the tetrad field. From the viewpoint of $BF$ theory, the tetrad field and the connection are dual variables. Hence, keeping this duality in mind it is natural to discretize the tetrad field on object of the direct cellular decomposition since the connection is discretized on the dual decomposition. Since the $B$ field is also a one-form, it is then natural to discretize it along the edge of the cellular decomposition. In the same way the holonomies are defined, we discretize the tetrad field $B$ by integrating it over the edges of $\cK \cup \pp \cK$. This defines a $\su(2)$-valued element $X_f$, where $f$ is the dual cell in $\cK^{*} \cup \pp \cK^{*}$ of the edge $e$.

In terms of the discrete variables, the continuous symmetries of the $BF$-action are broken, but there are residual discrete symmetries. The local gauge transformation becomes parametrized by $\SU(2)$ elements $k_l$ living on the links whose action is
\begin{align}
	g_l &\rightarrow k_{t_l} g_l k_{s_l}^{-1} \\
	X_f &\rightarrow k_{st(f)} X_f k_{st(f)}^{-1} \; .
\end{align}
We recall that $st(f)$ holds for the starting node of the face $f$.

The translational symmetries on another hand becomes \cite{Freidel:2004vi} parametrized by $\su(2)$ elements $\Omega_e^n$ living at the node $n$ 
\begin{equation}
	g_l \rightarrow g_l \qquad
	X_f \rightarrow X_f + U_e^{t_e} \phi_{t_e} - [\Omega_e^{t_e},\phi_{t_e}] - U_e^{s_e} \phi_{s_e} + [\Omega_e^{s_e},\phi_{s_e}]
	\label{chap3:eq:translational_sym_discrete}
\end{equation}
where $U_e^n$ are some given scalar. See \cite{Freidel:2004vi} for more detail. In these notations, $e$ is the dual edge of the face $f$ and $s_e$, $t_e$ are still the source and target vertex of the edge $e$.
\bigskip

Fundamentally, we are left with two choices. First, we can directly start with equation \eqref{chap3:eq:formal_flat_connection} after having formally integrated out the tetrad field. The discretization of the $BF$ partition function is then straightforward given the previous discrete analogue of the connection. We obtain
\begin{equation}
	Z_{BF}^{\cK} = \int_{SU(2)} \prod_{l \in \cK^{*}} \dd g_{l} \prod_{f \in \cK^{*}} \delta\left( G_{f} \right) \; .
\end{equation}
Since we already integrated out the tetrad field, it does not play a role anymore. We may notice that the discretized expression of the $BF$ partition function exactly matches the definition of the Ponzano-Regge model before taking care of the divergences \ref{chap3:def:PR_model_without_gauge_fixing}. This shows two things. First that the Ponzano-Regge model is the discretized version of the $BF$ action, and hence a discretized realisation of three-dimensional gravity without cosmological constant in a first order formalism. Secondly, and more importantly for our understanding, the formal integration over the tetrad field is the one producing divergences! Indeed, we explained in the last section why the model as defined in \ref{chap3:def:PR_model_without_gauge_fixing} diverges. However, from the $BF$ theory point of view, these delta functions are the consequences of the integration over the tetrad field, which is hence the cause of the divergences. In fact, with the knowledge of the way we took care of the divergences previously, this result was expected. The reason is quite simple when looking at where the discrete tetra field lives. It is living at the faces of the dual cellular decomposition, i.e. at the edge of the direct decomposition, which are the roots of the divergences as seen previously. This provides us with a clearer picture of the divergences in the Ponzano-Regge model. Looking at the discrete translational symmetry \eqref{chap3:eq:translational_sym_discrete}, we see that it has an action on every vertices of the cellular decomposition with a $\su(2)$ parameter. Recall that $\su(2)$ is isomorphic to $\R^3$, which is clearly not compact. Hence, this gauge symmetry must be gauge-fixed in order not to produce any divergences. It is clear now that the choice of tree $T$ does the job of gauge-fixing the translational symmetry for the discrete tetrad field.
\medskip

Let us start again with our discretized approach of $BF$ theory but starting from the initial partition function before formal integration. The question that still needs to be answer is about the discretized form of the $BF$ action. A straightforward and naive discretization of the $BF$ action is, in terms of $X_f$ and $G_f$
\begin{equation}
S_{BF}^{\cK} = \sum_{f \in \cK^{*}} \tr\left( X_f G_f\right) \; . 
\end{equation} 
That is we straightforwardly replace the tetrad by the $X_f$ and the curvature by $G_f$. Then, the partition function takes the form
\begin{equation}
Z_{BF}^{\cK} = \int_{\SU(2)} \prod_{l \in \cK^{*}} \dd g_l \; \int_{\su(2)} \dd X_f \; e^{i \sum_{f \in \cK^{*}} \tr\left( X_f G_f \right)} \;.
\end{equation}
The integration over the manifold is replaced by the finite sum over the faces of the dual decomposition whereas the path integral is replaced by its equivalent in terms of integration over $\SU(2)$ and $\su(2)$. The partition is still formal since we need to gauge fix the translational symmetry to avoid divergence arising from the integration over $\su(2)$. Forgetting about this particular point for now, the integration over the discrete tetrad field $X_f$ can be exactly done using the usual material of defining the delta function of a group as the Fourier Transform of a distribution. Doing the integrations over the Lie algebra elements $X_f$ carefully return \cite{Freidel:2004vi,Livine:2008sw}
\begin{equation}
Z_{BF}^{\cK} = \int_{\SU(2)} \prod_{l \in \cK^{*}} \dd g_l \; \prod_{f \in \cK^{*}} \delta_{SO(3)}\left( G_f \right) \; 
\end{equation}
where this time the delta is over the group $\SO(3)$ instead of $\SU(2)$. This is the only difference compared to the previous formula. That is, the partition function we obtain compute the volume of the moduli space of flat $\SO(3)$ connection instead of $\SU(2)$. This correponds to the so-called $\SO(3)$ Ponzano-Regge model.

We do not expect, however, the result to differ compared to the formal integration of the tetrad field in the continuum.  The problem lies in the fact that our choice of discretized action for $BF$ is rather naive, and completely loses track of the $\SU(2)$ nature of the model. An easy way to see this problem is to look at the way the Fourier Transform works. Since the $X$ and the $g$ are dual variables, they can naturally be mapped by a Fourier transform and a natural candidate is
\begin{equation}
\hat{f}(X) = \int_{\SU(2)} \dd g f(g) e^{\tr(X g)} \;,
\end{equation}
where $\hat{f}$ is the Fourier component of the $\SU(2)$ function $f$. However, such a Fourier transform has a non-trivial kernel. If $f(-g^{-1}) = -f(g)$, then $\hat{f}(X) = 0$ without the necessity of having $X$ identically zero. Hence, the inverse Fourier Transform is not well defined. This can easily be seen by looking at the inverse Fourier Transform for the delta function. It reads
\begin{equation}
\int_{\su(2)} \dd^3 X e^{\tr(X g)} \propto \delta(g) + \delta(-g) \propto \delta_{SO(3)}(g) \; .
\end{equation}
That is, we do not recover the $\SU(2)$ delta function, but the $\SO(3)$ one. This result could have been expected looking at the symmetry of the kernel previously introduced which identified the identity and its opposite.

A more refined Fourier Transform is needed to completely capture the $\SU(2)$ behaviour. Its expression takes a natural form using spinors \cite{Dupuis:2011fx} (see \ref{chap5} for more details on the spinors in general). At the end of the day, the $\SU(2)$ delta function takes the form
\begin{equation}
	\delta(g) = \int_{\C^{2}} \f{\dd^4 z}{\pi^2}(|z|^2-1)e^{-|z|^2}e^{\la z | g | z \ra} \;.
\end{equation}

Note that this writing in terms of spinor is also in correspondence with the fully local expression of the Ponzano-Regge model. Indeed, while writing the Ponzano-Regge model in a fully local form, we introduced $\SU(2)$ elements with respect to the direct edge of the cellular decomposition. Here, we are saying that to capture the $\SU(2)$ aspect of the theory, it is necessary to discretize the tetra field not by a vector, but by a spinor. We quickly recall that a spinor is given by the data of two complex numbers (see chapter \ref{chap5} for more details), which is the same as a $\SU(2)$ element. At the end of the day, it was shown in \cite{Dupuis:2011fx} that by keeping carefully track of the $\SU(2)$ structure in the discretization of the tetrad field of the $BF$ action, the full $\SU(2)$ Ponzano-Regge model can be recovered.

\subsection{Ponzano-Regge model, $BF$ theory and torsions}

We shortly come back to the explanation about the divergences of the Ponzano-Regge model and its topological invariance. First, it is now clear that the divergences emerge because of gauge symmetry. Secondly, the way we have handled the divergences of the model, even though rather intuitive, is still more heuristic than mathematical. That is, we provide a gauge-fixing in terms of tree. This works quite well for handle-bodies and in three dimensions, but it quickly becomes more complicated in dimension bigger than three and for more complicated manifolds.

In that case, the full study of the partition function of both $BF$ theory and the Ponzano-Regge model is better understood in the context of twisted co-homology and the analytic Ray-Singer torsion \cite{BLAU1991130} and discrete equivalent Reidemeister torsion \cite{Dubois:torsion} respectively. Torsions are, in a sense, a generalization of the notion of determinant. They are homomorphism equivalent, allowing us to probe deeply the topological structure of the object they characterize. Hence, the important point here is that both the Ray-Singer torsion and the Reidemeister are topological invariants. That is, the study of the convergence of the partition function can be summarized into the well-definiteness of these invariants. We will not dwell upon the details here, and refer the reader to \cite{BLAU1991130} for the link between the $BF$ action and the Ray-Singer torsion and to \cite{Barrett:2008wh,Bonzom:2012mb} for the case of the Ponzano-Regge model and the Reidemeister torsion.
\medskip

Let us just remark that it is quite easy to see how the $BF$ theories are, in general related to the co-homology theory. Indeed, consider the equations of motion for the connection coming from the $BF$ action without cosmological constant
\begin{equation}
\dd F[\omega] = 0 \; .
\end{equation}
For simplicity, consider now that the gauge group is not $\SU(2)$, but an abelien group, for example $U(1)$. The previous equation of motion becomes simply $\dd A = 0$ where $A$ is the connection for the abelien group. In the context of De Rham co-homology, this equation has a simple meaning. By definition, it says that $A$ is a closed form. In the case where $A$ is exact, i.e. that it exists a zero form $\alpha$ such that $A = \dd \alpha$, the equation of motion is trivially satisfied. However, that corresponds to a trivial pure gauge connection. Hence, we want to consider exact form which are not close. This basically defines the first De Rham co-homology group.

In case of non-abelian $\SU(2)$ group, we equivalently obtain the twisted co-homology theory. It is a well-known fact since their introduction by Blau and Thompson \cite{BLAU1991130} that the partition function of the $BF$ theory is better understood in terms of the integration on the moduli space of flat connection with measure provided by the Ray-Singer torsion which in turn is related to the volume of the twisted co-homology group due to the $BF$ equations of motion and gauge invariance. And the partition function is finite when the Ray-Singer torsion is well-defined. Equivalently, the discrete counterpart of the $BF$ theory, that is the Ponzano-Regge model, is related to the discrete counterpart of the Ray-Singer torsion. This is the Reidemeister torsion \cite{Dubois:torsion}. It was shown in \cite{Barrett:2008wh,Bonzom:2010zh} that the Ponzano-Regge amplitude can indeed be evaluated as the integration over the moduli of discrete flat connection with measure provided by the Reidemeister torsion. 

Note that these studies were done in absence of a boundary term. It is unclear how much the presence of the boundary affects the statement of this section.

\section{Recovering the historical Ponzano-Regge model}

In the previous sections, we focused on a formulation of the Ponzano-Regge model that fits our needs for the computation presented in the next chapters. This is not, however, the model as derived by Ponzano and Regge in 1968 \cite{PR1968}. The purpose of this section is not to redo the historical derivation of the model but rather to show how to recover it from the definition of the Ponzano-Regge model given in \ref{chap3:def:PR_model_without_gauge_fixing}. First of all, it is necessary to restrict the choice of cellular decomposition to only a simplicial one. Indeed, what made Ponzano and Regge interested in building the model was their observation that $\Big\{ 6j \Big\}$ symbol, describing a (quantum) tetrahedron, has a peculiar asymptotic limit. Considering a tetrahedron $\sigma$, they remarked that in the limit where the spins are large, the associated $\Big\{ 6j_{e \in \sigma} \Big\}$ symbol reads \cite{PR1968,Roberts:1998zka}
\begin{equation}
	\Big\{ 6j_{e \in \sigma} \Big\} \xrightarrow{j\gg1} \frac{1}{\sqrt{12 \pi V_\sigma}} \cos\left(S_{R}[l_e = j_{e}+\f{1}{2}] +  \frac{\pi}{4}\right)
	\qquad\text{if}\qquad
	V^2_\sigma>0 \;
	\label{chap3:eq:asymptotique_6j}
\end{equation} 
where $S_{R}$ is the Regge action \ref{chap3:eq:Regge_action}, without boundary term, and where $V_{\sigma}$ is the volume of the tetrahedron $\sigma$ whose edges length are given by $j+\f{1}{2}$. The interpretation of spins being related to the lengths of the edges of the triangulation naturally comes from the relation with the Regge action which takes for parameter the edges lengths of the triangulation. In the case where the volume square of the tetrahedron is negative, the asymptotic limit is exponentially suppressed. These negative square volume configurations can be interpreted as Lorentzian geometries, see \cite{Barrett:1993db}. It is interesting therefore to note that the Euclidean Ponzano-Regge model already contains Lorentzian configuration, although suppressed. Note also that the corresponding tetrahedron might not exist. This happens if and only if the relevant Caley-Menger determinant, which is the generalization of the Heron formula to arbitrary dimensions, $-6V^2_\sigma(j_{e\in\sigma} )$ is positive. The presence of the cosine in the asymptotic formula is best understood as being a remanent of the two possible orientations for the tetrahedron. This interpretation is reinforced by the presence of the $\pi/4$ phase shift, which can in turn be understood as due to the two possible signs in $\pm\sqrt{V^2_\sigma}$.
\medskip

We now focus on the derivation of the original model. A full derivation, taking care of all the subtleties, can be found in \cite{Barrett:2008wh}. From the definition \ref{chap3:def:PR_model_without_gauge_fixing}, we will, in a sense, go to the dual representation using the spectral decomposition introduced in \eqref{chap3:eq:spectral_decompo}. We replaced all the delta functions by infinite sums over the spins. All the edges of the simplicial decomposition contribute then with a factor
\begin{equation}
\delta(G) = \sum_{j\in\frac12\mathbb N} d_j \chi^j(G) \; ,
\end{equation}
where we recall that $\chi^j(g) = \tr(D^j(h))$ is the character of the spin $j$ representation $V_j$ of dimension $d_j = 2j+1$, with  $D^j(g)$ the Wigner matrices. Now, the faces of $\cK^{*}$ do not carry a delta function, but a spin label, over which we are doing an infinite sum. In a way, this operation can be interpreted as a Fourier decomposition on the group manifold. Geometrically, the delta function associated to a face imposes flatness of the spin connection holonomy around the corresponding triangulation edge. This corresponds to the zero deficit angle imposed by the equations of motion in Regge calculus. A Fourier transform trades two conjugated variables, in this case holonomies for spins, i.e. connections for tetrad. Since the summation is infinite, it is natural that it diverges in general.

The next step to recover the historical formulation is to explicitly expand the characters associated to each face. By doing so, we make the dependencies over the links explicit for each group element. 
\begin{equation}
	\chi^j(G_f)\,=\, \chi^j \left( \prod_{l \in f} g_{l}^{\epsilon(l,f)} \right)\,=\, \prod_{l \in f}  D^{j_l}(g_{l}^{\epsilon(l,f)} ) ^{m'_l}{}_{m_l} \; .
\end{equation}
The summation over the magnetic indices of the Wigner matrices are implicit. Their contractions follow the connectivity given by the dual cellular decomposition $\cK^{*}$. The group integrations in the Ponzano-Regge amplitude can then be done explicitly by means of a standard identity for the Clebsch-Gordan coefficients $C_{j_1,j_2,j_3}\in V_{j_1} \otimes V_{j_2} \otimes V^*_{j_3}$
\begin{equation}\label{CGD1}
\int \dd g \, D^{j_1}(g) \otimes D^{j_2}(g) \otimes \overline{D^{j_3}(g)} = \frac{1}{d_{j_3}}  C_{j_1 j_2 j_3} \otimes \overline{C_{j_1 j_2 j_3}} \; ,
\end{equation}
where we identified $D^j(g): V_j \to V_j$ with an element of $V_j\otimes V_j^*$, and where we omitted the six magnetic indices associated to each copy of $V_j$ or $V_j^\ast$ for simplicity. The reason why each group element appears exactly three times is because we consider a simplicial decomposition in this section. Hence, every link is dual to a triangle. The group element $g_l$ appears for each edge of the triangle to which $l$ is dual.

After performing all the group integrations, we are left with two Clebsch-Gordan coefficients per link. It is rather natural to associate one of the two elements to the source node while the second is associated to the target node. This association is supported by the fact that the Clebsch-Gordan coefficients come with a natural opposite orientation from the integration. Equivalently in the direct representation, we can see that this corresponds to two Clebsch-Gordan coefficients per triangle, i.e. one for each tetrahedron the triangle is shared by.

At the end of the day, since we have considered a simplicial decomposition, all the nodes of $\cK^*$ are 4-valent, since they are dual to tetrahedra in $\cK$. From the previous construction, we have associated a Clebsch-Gordan coefficient to each of the links of the node $n$. Hence, the node $n$ has four coefficients, which naturally contracted into a given $\{6j\}$ symbol by carefully keeping track of the magnetic indices. Hence, we have finally associated one symbol per tetrahedron of the simplicial decomposition. There are a few key subtleties we did not mention in keeping track of the magnetic indices and the signs. Details can be found in \cite{Barrett:2008wh} and are linked to the choice of a spherical category for the model. Finally, putting everything together, we are left with the original Ponzano-Regge partition function
\begin{equation}
	Z_{PR}^{\text{original}}  = \sum_{\{j_e\}} \prod_{e} v^2_{j_e} \prod_t (-1)^{\sum_{e'\in t} j_e'} \prod_\sigma \Big\{ 6 j_{e''\in\sigma} \Big\}
	\label{chap4:eq:PR_original}
\end{equation}
with 
\begin{equation}
v^2_j = (-1)^{2j} d_{j} = (-1)^{2j}(2 j +1) \; 
\end{equation}
for the signed dimension of the $j$-th representation of $\SU(2)$. This is a local state sum model: local weights are associated to edges, triangles, and tetrahedra of the triangulation. Again, recall that the sums are infinite, and that the model is generally divergent. We sum over the spins associated to all the edges of the triangulations with a weight $v_{j_e}$ associated to the edges, a sign weight associated to the triangles, and a $\{6j\}$ symbol associated to the tetrahedra.

From the original partition function, the behaviour over Pachner move of the partition function is clear to see. Recall that the $(2-3)$ move corresponds to transforming two tetrahedra into three. From the $\{6j\}$ symbol perspective, this is just the Biedenharn-Elliot identity. Calling the spins of the shared triangles of the two tetrahedra $(j_1,j_2,j_3)$, see figure \ref{chap4:fig::2_3_move_spin}, the identity reads, calling $s_i = j+\sum_{i}j_i+k_i+h_i$
\begin{equation*}
	\begin{Bmatrix}
		j_1	&	j_2	&	j_3			\\
		k_1	&	k_2	&	k_3
	\end{Bmatrix}
	\begin{Bmatrix}
		j_1	&	j_2	&	j_3			\\
		h_1	&	h_2	&	h_3
	\end{Bmatrix}	
	=
	\sum_{j} (-1)^{s_i} 
	\begin{Bmatrix}
		k_1	&	h_1	&	j			\\
		h_2	&	k_2	&	j_3	
	\end{Bmatrix}
	\begin{Bmatrix}
		k_2	&	h_2	&	j			\\
		h_3	&	k_3	&	j_1	
	\end{Bmatrix}
	\begin{Bmatrix}
		k_3	&	h_3	&	j			\\
		h_2	&	k_2	&	j_2	
	\end{Bmatrix} \; .
	\label{chap3:eq:Biedenharn_elliot_identity}
\end{equation*}
The signs coming from the identity happen to match the one needed for the Ponzano-Regge model. Therefore, the $(2-3)$ Pachner move holds exactly. This is the point we mentioned previously. Even though this Pachner move does add a bubble to the triangulation, it is still an exact transformation even without considering gauge-fixing due to the symmetries of the $\{6j\}$ symbols.

\begin{figure}[!htb]
	\begin{center}
		\begin{tikzpicture}[scale=2]
		\coordinate (A) at (-1.43,0.24);
		\coordinate (B) at (0.3,0.02);
		\coordinate (C) at (0.89,0.72);
		\coordinate (D) at (-0.36,1.59);
		\coordinate (E) at (-0.01,-0.85);

		\draw (A)--node[above,pos=0.6]{$j_1$} (B) ; \draw (B)--node[left,pos=0.6]{$j_2$} (C) ; \draw[dashed] (A)--node[above]{$j_3$} (C); \draw (A)--node[left,pos=0.6]{$k_2$} (D) ; \draw (D)--node[right]{$k_3$} (B) ; \draw (D)--node[right]{$k_1$} (C);
		\draw (A)--node[left,pos=0.6]{$h_2$} (E) ; \draw (B)--node[left]{$h_3$} (E) ; \draw (C)--node[right]{$h_1$} (E);
		
		\coordinate (S) at (1.25,0.2);
		\draw[->-=1,thick] (S) --+ (1,0);
		
		\coordinate (A) at (3,0.24);
		\coordinate (B) at (4.73,0.02);
		\coordinate (C) at (5.32,0.72);
		\coordinate (D) at (4.07,1.59);
		\coordinate (E) at (4.42,-0.85);
		
		\draw (A)--node[above,pos=0.6]{$j_1$} (B1) ; \draw (B)--node[left,pos=0.6]{$j_2$} (C) ; \draw[dashed] (A)--node[above,pos=0.4]{$j_3$} (C); \draw (A)--node[left,pos=0.6]{$k_2$} (D) ; \draw (D)--node[right]{$k_3$} (B) ; \draw (D)--node[right]{$k_1$} (C);
		\draw (A)--node[left,pos=0.6]{$h_2$} (E) ; \draw (B)--node[left,pos=0.3]{$h_3$} (E) ; \draw (C)--node[right]{$h_1$} (E);
		\draw[red] (E)--node[right]{$j$} (D);
		
		\end{tikzpicture}
	\end{center}
	\caption{$(2-3)$ Pachner move corresponding to the Biedenharn-Elliot identity.}
	\label{chap4:fig::2_3_move_spin}
\end{figure}
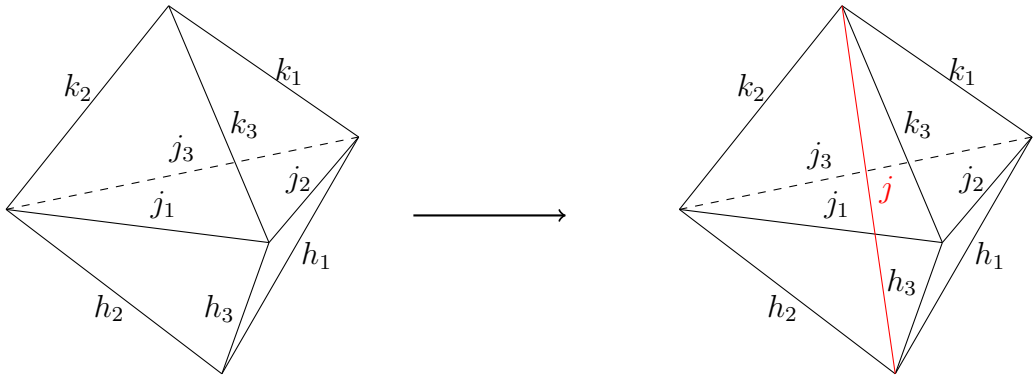

Similarly, we can look at the $(1-4)$ move. As expected, this move does produce an explicit divergence. See figure figure \ref{chap4:fig:1_4_move_spin} for the notations. There are two ways to derive this move on the historical Ponzano-Regge model. The first one is to act with a holonomy operator Wilson loop \cite{Bonzom:2009zd} on the initial tetrahedron, and then to evaluate this action on the identity. The second way is to work directly on the orthogonality relation of the $\{6j\}$ symbols and on the Biedenharn-Elliot identity. The advantage of the Wilson loop approach is that the divergence problem results in an interesting interpretation. The action of the holonomy operator of spin $j$ evaluated at the identity returns
\begin{align*}
	d_j
	\begin{Bmatrix}
		j_1	&	j_2	&	j_3			\\
		j_4	&	j_5	&	j_6
	\end{Bmatrix}
	=
	\sum_{k_1,k_2,k_3} (-1)&^{j + \sum_{i=1}^{6} j_i + \sum_{i=1}^{3} k_i} d_{k_1} d_{k_2} d_{k_3} \\
	&
	\begin{Bmatrix}
		j_1	&	j_2	&	j_3			\\
		k_1	&	k_2	&	k_3
	\end{Bmatrix}
	\begin{Bmatrix}
		j_1	&	j_6	&	j_5			\\
		j	&	k_2	&	k_3
	\end{Bmatrix}
	\begin{Bmatrix}
		j_6	&	j_2	&	j_4			\\
		k_1	&	j	&	k_3
	\end{Bmatrix}
	\begin{Bmatrix}
		j_3	&	j_4	&	j_5			\\
		j	&	k_2	&	k_1
	\end{Bmatrix}	
\end{align*}

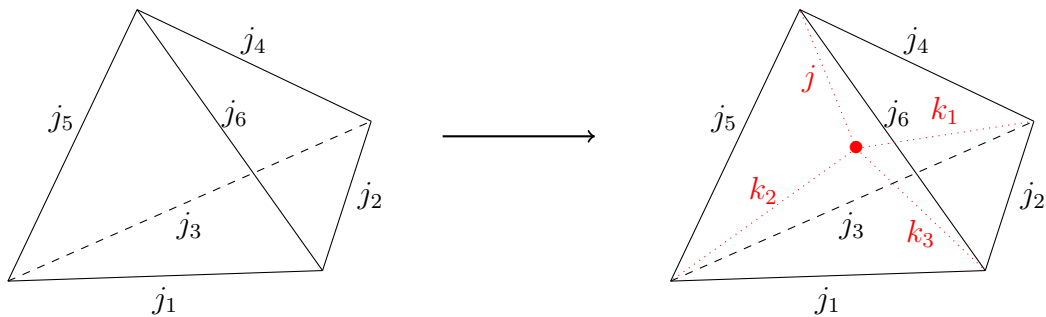
\begin{figure}[!htb]
	\begin{center}
		\begin{tikzpicture}[scale=2]
		\coordinate (A) at (-1.36,-0.76);
		\coordinate (B) at (0.71,-0.69);
		\coordinate (C) at (1.03,0.3);
		\coordinate (D) at (-0.51,1.04);
		
		\draw (A)--node[below]{$j_1$} (B) ; \draw (B)--node[right]{$j_2$} (C) ; \draw[dashed] (A)--node[below,pos=0.5]{$j_3$} (C); \draw (A)--node[left,pos=0.6]{$j_5$} (D) ; \draw (D)--node[right,pos=0.4]{$j_6$} (B) ; \draw (D)--node[above]{$j_4$} (C);
		
		\coordinate (S) at (1.5,0.2) ; 
		\draw[->-=1,thick] (S) --+(1,0);
		
		\coordinate (A) at (3,-0.76);
		\coordinate (B) at (5.07,-0.69);
		\coordinate (C) at (5.39,0.3);
		\coordinate (D) at (3.85,1.04);
		\coordinate (N) at (4.22,0.12);
		
		\draw[red] (N) node{$\bullet$} ;
		
		\draw (A)--node[below]{$j_1$} (B) ; \draw (B)--node[right]{$j_2$} (C) ; \draw[dashed] (A)--node[below,pos=0.5]{$j_3$} (C); \draw (A)--node[left,pos=0.6]{$j_5$} (D) ; \draw (D)--node[right,pos=0.4]{$j_6$} (B) ; \draw (D)--node[above]{$j_4$} (C);
		\draw[red,dotted] (N)--node[above]{$k_2$} (A); \draw[red,dotted] (N)--node[below]{$k_3$} (B);  \draw[red,dotted] (N)--node[above]{$k_1$} (C);  \draw[red,dotted] (N)--node[left]{$j$} (D); 
		
		\end{tikzpicture}
	\end{center}
	\caption{$(1-4)$ Pachner move. On the left a single tetrahedron. On the right, we add a single node (in red) inside the tetrahedron. The spin $j$ is fixed in the holonomy operator action.}
	\label{chap4:fig:1_4_move_spin}
\end{figure}

This holonomy operator action produces four $\{6j\}$ symbols from only one with the correct sign from the viewpoint of the Ponzano-Regge model. However, one of the newly introduced spins, $j$ in figure \ref{chap4:fig:1_4_move_spin}, is fixed and not summed over. Hence, we do not recover the full Ponzano-Regge model. To recover it, it is necessary to sum over this last spin $j$ since all the spins are summed over in the Ponzano-Regge model. Such a summation obviously returns a divergent factor of the form  $\sum\limits_{j} d_j^2$. Hence, from the viewpoint of the present section, the edge carrying the spin $j$ belongs to the tardis of the decomposition and the vertex we add inside the tetrahedron corresponds to a bubble. Geometrically speaking, the divergence is better understood noticing that the vertex inside the tetrahedron can be moved outside the region made of the tetrahedra sharing it. Indeed, recall that the spin $j$ is understood at the length of the edge, and here, is totally unconstrained. The amplitude remains constant in this case. This is due to the sum over orientations that is implemented in the Ponzano-Regge model \cite{Christodoulou:2012af}. The fact that we have this freedom is explained by the residual diffeomorphism symmetry, which in turn also explains the invariance under triangulation changes of the partition function \cite{Bahr:2011uj, Dittrich:2012qb}.

Hence, to recover the Ponzano-Regge model, one needs to sum over $j$. However, by doing so, one creates a divergence. We already studied the gauge fixing of this divergence in the group picture. In the dual picture, with the spin, the gauge fixing basically tells us that we should not sum over $j$, but rather fix $j$ to a particular value. It is interesting to note that in this dual picture case, it is rather non intuitive to fix the spin to zero, since we want the spin to encode the length of the edge.

At the end of the day, one can understand the holonomy operator action evaluated at the identity as implementing a gauge-fixed version of the $(1-4)$ Pachner move.

\section{Ponzano-Regge model and Wheeler-DeWitt equation}

Before moving on to the Ponzano-Regge model on the torus, we shortly review the link between the model and canonical quantization and loop quantum gravity. 

We claim that the Ponzano-Regge model is a good discrete model to describe quantum gravity in three dimensions. The goal of this section is to relate the model with other approaches, namely canonical quantization and loop quantum gravity. To do so, we will first show that the Ponzano-Regge model can be seen as a projector \cite{Ooguri:1991ni,Noui:2004iy,Noui:2004iz} into the physical Hilbert space construct from standard loop quantum gravity technique.

To do so, let us first consider the model for a three-dimensional space-time manifold of the form $\Sigma \times [0,1]$ where $\Sigma$ is a closed orientable surface and a simplicial decomposition of the manifold. This is basically the topology of a cylinder. We denote by $\Delta_t$ and $\Delta_{b}$ the triangulation of the top (resp. bottom) of the boundary cylinder. The Ponzano-Regge partition function defined on such a manifold takes the form $Z_{\Delta_t,\Delta_b}(\{j_t\},\{j_b\})$. The set $\{j_t\}$ (resp. $\{j_b\}$) corresponds to the spins associated to the edges of $\Delta_t$ (resp. $\Delta_b$). The partition function is a functional of the boundary data. In the historical formulation of the model, it corresponds to the spins of the boundary triangulation.

One of the most natural operations to consider in this context is the gluing operation, i.e. the gluing of manifolds together along their boundaries. More specifically, consider two manifolds with the cylinder topology provided above. We denote the boundary triangulation of the second manifold by $\Delta^1_t$ and $\Delta^1_{b}$. To glue the top of the first manifold with the bottom of the second, it is necessary that the triangulations $\Delta_{t}$ and $\Delta_{b}^{1}$ match. That is, we need to have $\Delta_{t} = \Delta_{b}^{1}$. If this condition is true, then we can define the partition function of the glued manifold via the Ponzano-Regge model. It is straightforward to show that this partition function $Z_{\Delta_t^1,\Delta_b}(\{j^1_t\},\{j_b\})$ can be expressed in terms of the partition function of the two initial manifolds. Explicitly, we have
\begin{equation}
	Z_{\Delta_t^1,\Delta_b}(\{j^1_t\},\{j_b\}) \propto \sum_{\{j_t\}} Z_{\Delta_t,\Delta_b}(\{j_t\},\{j_b\}) Z_{\Delta_t^1,\Delta_t}(\{j^1_t\},\{j_t\}) \; .
\end{equation}
This formula is readily interpretable. It is clear that outside the shared boundary, the contribution of each partition function is the same as compared to the initial two manifolds. At the glued boundaries however, what were initially boundary edges become bulk edges. Hence, we need to sum over them to obtain the Ponzano-Regge amplitude on the full manifold.

On a different ground, it feels natural to interpret the direction encoded in the interval $[0,1]$ as being the time direction. Hence, the previous topology is understood as having an initial state at $0$ which evolves to the final state as $1$. Then, every surface of the manifold at fixed $t$ can be interpreted as a kinematic state at fixed time. The loop quantum gravity formalism associates a Hilbert space structure to the family of kinematic state as the space of gauge-invariant wave-functions on the configuration space of $\SU(2)$ connection \cite{Baez:1993id,Ashtekar:1994mh,Ashtekar:1994wa}. See the beginning of chapter \ref{chap5} for more details on this kinematic Hilbert space. For now, we denote by $\Psi$ such a state and note that it is supported by the triangulation of the surface at constant $t$ it characterizes. It naturally depends on the spins of this surface. Physical states are then obtained from kinematical states by restraining them to flat connections only. This is the natural role of the Ponzano-Regge model which encodes a theory of flat connections. We consider a kinematical state $\Psi$ of $\cH_{kin}(\Delta)$, i.e. with support on a triangulation $\Delta$. We define its projected counterpart into the space of physical solutions by 
\begin{equation}
	P(\Psi_{\Delta})(\{j_{e}\}) = \sum_{k_{e}\; e \in \Delta} Z_{\Delta,\Delta}(\{j_e\},\{k_e\}) \psi_{\Delta}(\{k_e\}) \;.
\end{equation}
It is immediate to check that this is indeed a projector using the relation provided earlier on the gluing of the partition functions. This projector can then be used to define a scalar product between states \cite{Noui:2004iy}. A state is then called physical if it is an eigenvector of the projector operator
\begin{equation}
	\Psi_{\Delta} \in \cH_{phys}(\Delta) \quad \implies \quad P(\Psi)(\{j_{e}\}) = \Psi_{\Delta}(\{j_{e}\}) \;.
\end{equation}

Hence, this condition can also be interpreted as implementing a discrete Wheeler-DeWitt condition. On a different ground, an explicit Wheeler-DeWitt equation can be recovered by looking at the recurrence relation of the $\{6j\}$ symbols \cite{Bonzom:2009zd,Dittrich:2008pw}.

From the Biedenharn-Elliot identity \eqref{chap3:eq:Biedenharn_elliot_identity}, one can find recurrence relations for the $\{6j\}$ symbol \cite{Schulten:1975yu,Bonzom:2009zd}. For example, the $\{6j\}$ symbol follows the identity \cite{Varshalovich:1988ye}
\begin{equation}
	A_{-1}(j_i)
	\begin{Bmatrix}
		j_1-1	&	j_2	&	j_3			\\
		j_4	&	j_5	&	j_6
	\end{Bmatrix}
	+
	A_{0}(j_i)
	\begin{Bmatrix}
		j_1	&	j_2	&	j_3			\\
		j_4	&	j_5	&	j_6
	\end{Bmatrix}
	+
	A_{+1}(j_i)
	\begin{Bmatrix}
		j_1+1	&	j_2	&	j_3			\\
		j_4	&	j_5	&	j_6
	\end{Bmatrix}
	=
	0
\end{equation}
This relation, with the data of an initial condition, is enough to entirely determine the $\{6j\}$ symbol. The explicit form of the constants can be found in \cite{Schulten:1975yu}. Going to the asymptotic limit, i.e. scaling the spins with a common factor $c$ going to infinity, it is possible to rewrite the recurrence relation in terms of a second order discrete differential operator. This recurrence relation happens to also be interpretable as coming from the constraint of vanishing curvature from the $BF$ action, and hence implementing a discrete Wheeler DeWitt condition \cite{Bonzom:2009zd}.

\section{Ponzano-Regge model on the sphere}

Finally, before studying the torus case, it seems natural to look at what happens on the trivial topology, that is on a 3-ball with boundary the 2-sphere. We start this section with a short presentation of the gauge-fixing procedure in this case. We then discuss two results obtained in this particular topology. First, the computation of n-point correlation function, allowing a link between the spin-foam approach and the usual quantum field theory. Secondly, a link between three-dimensional gravity on this particular topology and statistical model, namely the Ising model.
\\

Let us begin with the gauge-fixing of the Ponzano-Regge model in that particular topology. The simplest triangulation of a 3-ball is one tetrahedron. It is a discretization without any bulk vertices. Hence, there are no bubbles and the model should not diverge since the tree $T$ is empty. The dual triangulation is represented figure \ref{chap3:fig:tetra_dual}. It has four boundary nodes, six boundary edges and four boundary faces. However, it only has one bulk node, four bulk edges, six bulk faces and four bulk 3-cells. The modal has thus six delta functions for 10 links. However, four group elements are fixed at the identity by the choice of tree $T^{*}$.
\begin{figure}[htb!]
	\begin{center}
		\begin{tikzpicture}[scale = 3]
			\coordinate (T1) at (-1.34,0);
			\coordinate (T2) at (0.66,-0.79);
			\coordinate (T3) at (0.71,0.45);
			\coordinate (T4) at (-0.08,1.53);
			
			\draw[black=20!,dotted]  (T1)--(T2)--(T3)--cycle; 
			\draw[black=20!,dotted]  (T1)--(T2)--(T4)--cycle;
			\draw[black=20!,dotted]  (T1)--(T4)--(T3)--cycle;
			
			\coordinate (B) at (-0.04,0.5);
			\coordinate (F1) at (-0.26,0.7); \coordinate (F2) at (-0.17,0.45); \coordinate (F3) at (0.36,0.56); \coordinate (F4) at (0,0);
			\coordinate (E1) at (-0.74,0.73); \coordinate (E4) at (0.34,0.97); \coordinate (E3) at (0.27,0.44); \coordinate (E2) at (-0.37,-0.38); \coordinate (E5) at (-0.29,0.23); \coordinate (E6) at (0.69,-0.11);
			
			\draw[red] (B)--(F1); \draw (B)--(F2); \draw (B)--(F3); \draw (B)--(F4);
			\draw[dashed,red] (F1)--(E1); \draw[dashed] (F1)--(E4); \draw[dashed] (F1)--(E5);
			\draw[red] (E1)--(F2); \draw[red] (F2)--(E3); \draw (F2)--(E2);
			\draw[red] (F3)--(E3); \draw (F3)--(E4); \draw[red] (F3)--(E6);
			\draw (F4)--(E5); \draw[red] (F4)--(E6); \draw (F4)--(E2);
			
			\draw[red] (-0.36,0.62) node[scale = 0.5]{$1$};
		\end{tikzpicture}
	\end{center}
	\caption{Drawing of a tetrahedron, in black dots, and its dual (in black, dashed black and red). We draw in red the maximal tree $T^{*}$. It has 4 links, and only one is a bulk link.}
	\label{chap3:fig:tetra_dual}
\end{figure}
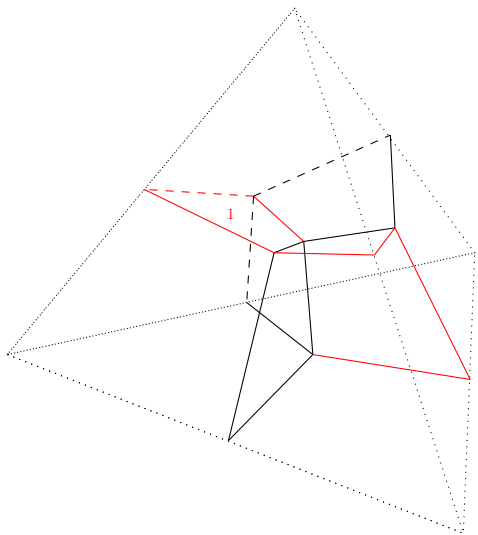
The remaining six can also be fixed to the identity with the six deltas. Starting at the only bulk face where two of the three links are fixed by the choice of tree, denoted $1$ in \ref{chap3:fig:tetra_dual}, the last one is also the identity using the delta function on this face. This generates another face where only one element is left unfixed. This logic can be followed until all the elements are the identity and no delta functions remain. Thus the Ponzano-Regge model on this simple triangulation clearly returns a well-defined partition function. More precisely, {\it all} the boundary group elements are fixed to be the identity after gauge-fixing! Hence, in the case of the 3-ball, evaluating the Ponzano-Regge model on a boundary state $\Psi(g_{\pp})$ just returns the evaluation of the boundary state at identity
\begin{equation}
	\la Z_{PR}^{\cK} | \Psi(\{ g_{\pp} \}) \ra|_{3-ball} = \Psi(\{ g_{\pp} = \id \}) \; .
\end{equation}
In the context of loop quantum gravity, where boundary state are spin networks, this is called the spin network evaluation.

The result can easily be understood remembering that the 3-ball has trivial topology. Hence, all the cycles, even at its boundary, are contractible to a point. This can easily be generalized to any handle-bodies up to the fixing of the group elements. In the following, we will study the case of the three-dimensional cylinder with boundary the two-dimensional torus, i.e. for a handle-body of genus one. With this manifold, one $\SU(2)$ element is not fixed by gauge-fixing, and we still need to integrate over it when evaluating the Ponzano-Regge model amplitude given the boundary data. This remaining integration will represent the remaining information about the bulk.
\\

Formulated as a discrete model, the Ponzano-Regge model is well suited for analytical and numerical computation. This thesis is focused on analytical computation in the case of a non-trivial topology. For the sphere, both the analytical and numerical aspects were already studied. In \cite{Dittrich:2013jxa,Bonzom:2015ova} a link between the Ponzano-Regge model and statistical models was proposed. More specifically, it was shown that quantum gravity on the 3-ball is related to the Ising model. This result comes, in fact, from Westbury theorem \cite{Westbury1998} on explicit expression for the generating function of spin network. In \cite{Bonzom:2015ova} the generating function of spin network as a boundary data for the Ponzano-Regge model was considered. Using some supersymmetric dualities, it was shown that quantum gravity and the Ising model are dual. More specifically, they showed that
\begin{equation}
	Z_{PR}(Y_l) \propto \f{1}{Z_{Ising}(y_l)^2} \; ,
\end{equation}
where $Y_l = \tanh(y_l)$. The set of parameters $\{y_l\}$ are the coupling constants of the Ising partition function while the set $\{Y_l\}$ are the coupling constants of the spin networks generating function. See chapter \ref{chap6} for more detail. Recall that the Ising partition function takes the form
\begin{equation*}
Z_{Ising}(y_l) = \sum_{S_{s_l,t_l} = \pm 1} e^{\sum_{l} y_l S_{s_l}S_{t_l}} \;
\end{equation*}
where a spin variable $S = \pm 1$ is associated to every node of the graph and the $y_l$ are the coupling constants associated to the links of the graph. 	

In the last chapter of this thesis, we will also consider the generating function of spin network as boundary state. However, no duality with existing statistical model will (yet) be proposed in the case of the torus topology.
\\

Even though the Ponzano-Regge model is formulated as a completely background independent model, we expect to recover the conventional perturbative expansion at low-energy described in terms of graviton. The study of this emergence was started in \cite{Speziale:2005ma,Livine:2006ab,Bonzom:2008xd} where analytic and numerical computation of the two-point correlations function were performed. In the usual quantum field theory approach, correlation functions are computed from the path integral formulation by introducing fields to the path integral at some given position of space-time. Then they can  be computed either by Feynman perturbative approach or by introducing a lattice. The problematic point of this definition is that space-time is then equipped with a background metric. The computation of the equivalent correlation functions is then not straightforward in the spin foam formalism, which is a background independent theory. To solve this problem, in the spin foam model, correlations functions are constructed by means of the propagation kernel \cite{Modesto:2005sj}. This provides an amplitude for the value of the field on a boundary of the space-time.

In \cite{Speziale:2005ma,Livine:2006ab,Bonzom:2008xd}, the computation of the graviton of the theory, the two-point correlation function was performed in a toy model consisting of a single tetrahedron and the quantum amplitude defined by the Ponzano-Regge model. Even though the theory is topological, it is possible to compute the correlation function in a gauge where it does not vanish. Of course, gravitons are then pure gauge effect. From these computations, it is already possible to recover the expecting behaviour of the correlation function: it decreases as one over the distance. The advantage of this approach is that it allows to see the implication of the non-perturbative correction naturally arising from the definition of a non-perturbative model. In this thesis, we did not look at such computations. We restricted ourselves to the computation of the partition function. It is, however, a natural follow-up of the work presented here.
\\

It is now time to look at the Ponzano-Regge model with a more complicated topology, the three-dimensional cylinder with boundary the two-dimensional torus.

	\newpage
	~
	\thispagestyle{empty}
\renewcommand{\afterpartskip}{}
\part*{Part II\\[.3cm]
Discrete Quantum Gravity and Partition Function} 
\addcontentsline{toc}{part}{II\ From Discrete Quantum Gravity to the BMS character and Beyond} \label{part:partII}
\newpage
~
\thispagestyle{empty}

	\chapter{The Ponzano-Regge model on a torus}
\label{chap4}

In the previous chapter, we presented the Ponzano-Regge model in its most general form. The present chapter serves as a prelude to the explicit computation of the amplitude of the last two chapters. Similarly to the previous chapter, we start with the case of quantum Regge calculus following \cite{Bonzom:2015ans}, where the partition function for three-dimensional discrete gravity in flat space was computed. Then, we focus on introducing the necessary knowledge for the definition of our boundary states, before describing the cellular decomposition of the torus considered in this thesis. Finally, in the last section, we perform a first computation of the Ponzano-Regge amplitude in a simple case. As simple as this case is, it will allow us to have a basic understanding of the property of the partition function on top of showing a link with a statistical model, the 6-vertex model. The end of this chapter is mainly based on \cite{Dittrich:2017hnl}.

\section{Quantum Regge calculus and the BMS character}

In \cite{Bonzom:2015ans}, Dittrich and Bonzom applied perturbative quantum Regge calculus to compute the thermal partition function of flat gravity. It is a first step toward the direction of the exact quasi-local computation. Indeed, while it allows to work on a finite region of space-time, it is still a perturbative approach. It is also worth noting that the amplitude computed here is again a statistical one obtained after a Wick rotation. The beauty of this computation is to be found in the fact that it allows to recover a BMS-like structure for the amplitude of three-dimensional gravity on a torus, however for a finite region! We will not dwell upon the presentation of the computation and refer the reader to \cite{Bonzom:2015ans} for the details.

The setup of the computation is as follows: consider a solid flat cylinder of height (time extension) $\beta$, and radius $a$, regularly divided into $N_t$ time slices each in turn subdivided into $N_x$ cake-slice-like prisms. This gives a discretization of the cylinder, see the left figure of \ref{chap4:fig:clyndier_decompo}. Recall however that Regge calculus is only defined on a simplicial decomposition. To make the discretization of the cylinder simplicial, it is enough to subsequently divide each prism into three tetrahedra (see the right figure of \ref{chap4:fig:clyndier_decompo}). Consequently, the boundary triangulation consists of a regular rectangular lattice subdivided into triangles along the rectangle's diagonals. To obtain the torus, the top and the bottom of the cylinder are identified, up to a twist parametrized by $N_\gamma \in \N$. The lengths of the background edges are then fixed in terms of $a$ and $\beta$ by the flatness requirement and the fact that the cylinder is embeddable into $\mathbb R^3$. We will detail more the chose of discretization in the next section while discussing the Ponzano-Regge case.

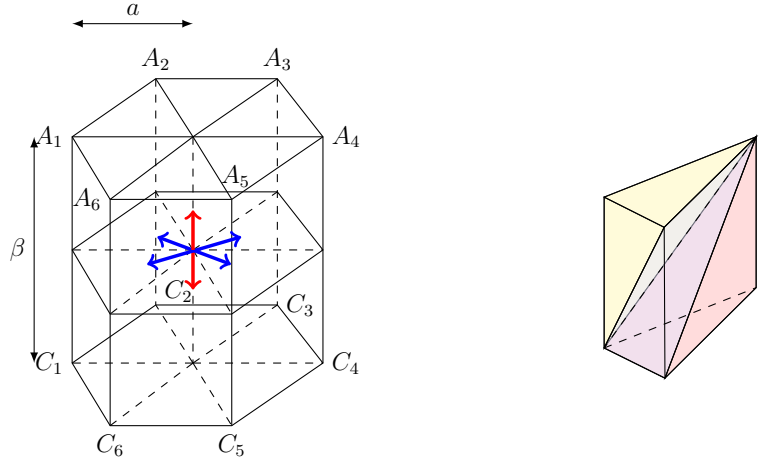
\begin{figure}[h!]
	\begin{center}
		\begin{tikzpicture}[scale=1]
		\coordinate (OA) at (1.59,0);
		\coordinate (A1) at (0,0);
		\coordinate (A2) at (1.1,0.77);
		\coordinate (A3) at (2.7,0.77);
		\coordinate (A4) at (3.3,0);
		\coordinate (A5) at (2.1,-0.83);
		\coordinate (A6) at (0.5,-0.83);
		
		\coordinate (OB) at (1.59,-1.5);
		\coordinate (B1) at (0,-1.5);
		\coordinate (B2) at (1.1,-0.73);
		\coordinate (B3) at (2.7,-0.73);
		\coordinate (B4) at (3.3,-1.5);
		\coordinate (B5) at (2.1,-2.35);
		\coordinate (B6) at (0.5,-2.35);
		
		\coordinate (OC) at (1.59,-3);
		\coordinate (C1) at (0,-3);
		\coordinate (C2) at (1.1,-2.23);
		\coordinate (C3) at (2.7,-2.23);
		\coordinate (C4) at (3.3,-3);
		\coordinate (C5) at (2.1,-3.83);
		\coordinate (C6) at (0.5,-3.83);
		
		\draw (A1) -- (A2) -- (A3) -- (A4) -- (A5) -- (A6) -- cycle; 
		\draw (OA) -- (A1) ; \draw (OA) -- (A2); \draw (OA) -- (A3); \draw (OA)--(A4); \draw (OA) --(A5); \draw (OA)-- (A6); 
		\draw (A1) node[scale=0.8,left] {$A_1$}; \draw (A2) node[scale=0.8,above] {$A_2$}; \draw (A3) node[scale=0.8,above] {$A_3$}; \draw (A4) node[scale=0.8,right] {$A_4$}; \draw (2.15,-0.78) node[scale=0.8,above]{$A_5$}; \draw (A6) node[scale=0.8,left]{$A_6$};
		
		\draw (B1) -- (B2) -- (B3) -- (B4) -- (B5) -- (B6) -- cycle;
		\draw [dashed] (OB) -- (B1); \draw [dashed] (OB) -- (B2); \draw [dashed] (OB) -- (B3); \draw [dashed] (OB)--(B4); \draw [dashed] (OB) --(B5); \draw [dashed] (OB)-- (B6);
		
		\draw (C1) -- (C2) -- (C3) -- (C4) -- (C5) -- (C6) -- cycle;
		\draw [dashed] (OC) -- (C1); \draw [dashed] (OC) -- (C2); \draw [dashed] (OC) -- (C3); \draw [dashed] (OC)--(C4); \draw [dashed] (OC) --(C5); \draw [dashed] (OC)-- (C6);
		\draw (C1) node[scale=0.8,left] {$C_1$}; \draw (1.1,-2.05) node[scale=0.8,right] {$C_2$}; \draw (C3) node[scale=0.8,right] {$C_3$}; \draw (C4) node[scale=0.8,right] {$C_4$}; \draw (C5) node[scale=0.8,below]{$C_5$}; \draw (C6) node[scale=0.8,below]{$C_6$};
		
		\draw (A1)--(B1)--(C1); \draw [dashed] (A2)--(B2)--(C2); \draw [dashed] (A3)--(B3)--(C3); \draw (A4)--(B4)--(C4); \draw (A5)--(B5)--(C5); \draw (A6)--(B6)--(C6);
		\draw [dashed] (OA)--(OB)--(OC); 
		
		\draw [<->,>=latex] (-0.5,0) -- (-0.5,-3); \draw (-0.5,-1.5) node[scale=0.8,left]{$\beta$};
		\draw [<->,>=latex] (0,1.5) -- (1.59,1.5); \draw (0.8,1.5) node[scale=0.8,above]{$a$};
		
		\coordinate (T1u) at (9,0);
		\coordinate (T1d) at (9,-2);
		\coordinate (T2d) at (7.8,-3.2);
		\coordinate (T3d) at (7,-2.8);
		\coordinate (T2u) at (7.79,-1.2);
		\coordinate (T3u) at (7,-0.8);
		
		\draw[fill=yellow!20,opacity=.9] (T3d)--(T1u)--(T3u)--cycle;			
		\draw[fill=red!15,opacity=.9] (T3d)--(T1u)--(T1d)--(T2d)--cycle;
		\draw[fill=blue!10,opacity=.5] (T3d)--(T2u)--(T1u)--(T2d)--cycle;
		\draw[black] (T3u)--(T2u)--(T3d)--cycle; \draw[black] (T3u)--(T1u)--(T2u)--cycle; \draw[black] (T3d)--(T2u)--(T1u);
		
		\draw[black] (T3d)--(T2d)--(T1u); \draw[black] (T1u)--(T2d)--(T1d)--cycle;

		\draw[dashed] (T3d)--(T1d);  \draw[dashed] (T1u)--(T3d); \draw (T2u)--(T2d);
		
		\draw [->,red, line width = 1.3 pt] (OB) -- ( $ (OB)!.35!(OC) $ ); 	
		\draw [<->,blue, line width = 1.3 pt] ( $ (OB)!.45!($(B3)!.5!(B4)$)$ ) -- ( $ (OB)!.45!($(B1)!.5!(B6)$) $ ); 	
		\draw [<->,blue, line width = 1.3 pt] ( $ (OB)!.45!($(B1)!.5!(B2)$)$ ) -- ( $ (OB)!.45!($(B4)!.5!(B5)$) $ ); 	
		\draw [->,red, line width = 1.3 pt] (OB) -- ( $ (OB)!.35!(OA) $ );		
		\end{tikzpicture}
	\end{center}
	\caption{Example of the background triangulation with $N_x=6$ and $N_t=2$. The effect of the twist $N_\gamma$ appears when we identified $A_i$ and $C_i$ through $A_{i}=C_{i+N_\gamma}$. Each prism is triangulated with three tetrahedra, that can be constructed by considering a diagonal per vertical face of the prism. In the right panel we draw a prism triangulated with three tetrahedra, draw in red, blue and white.}
	\label{chap4:fig:clyndier_decompo}
\end{figure}

Now that the background structure for the length parameters is fixed, we can focus on the partition function. First, the classical contribution of the Regge action happens to match the classical contribution from the continuum
\begin{equation}
S^{cl}_\text{R} = -\frac{ \beta}{8 G} \; .
\end{equation}
This is not a surprising result since the on-shell action is defined in terms of $\beta$ and the Newton constant, which are left untouched by the discretization process.
\\

The one-loop contribution is computed by performing a Gaussian integral over the fluctuating bulk edges. This is done via the formula \eqref{chap3:eq:on_loop_regge} at fixed boundary edges. For $N_x$ odd, the result of this computation is
\begin{equation}
	Z_\text{R}^{one-loop}(\beta, \gamma) = \mathcal{N}\,\E^{\frac{\beta}{8 G}} \;\prod_{k=2}^{\tfrac12({N_x-1})} \frac{1}{\left| 1 - \E^{ i \gamma \cdot k} \right|^2} ,
	\label{chap4:eq:one_loop_regge_result}
\end{equation}
where
\begin{equation}
\gamma = 2\pi \frac{N_\gamma}{N_x}
\end{equation}
is the discrete twist angle. The factor $\mathcal N = \mathcal N(\beta, a, N_x, N_t) $ is a complicated normalization factor. Contrary to the classical on-shell contribution it does not contain any exponential dependence on neither $\beta$ nor $a$. More importantly, it also features no dependence on the twist $N_\gamma$. Such dependence is limited to the now familiar product of \eqref{chap4:eq:one_loop_regge_result}.

This result shows that the perturbative Regge calculation displays a discrete regularization of the one presented in chapter \ref{chap2} and formula \eqref{chap2:eq:AdS_one_loop_partition_function}. The regularization is however different compared to the flat case limit leading to the formula \eqref{chap2:eq:flat_one_loop_partition_function}. In the continuum case, it was necessary to keep track of a complex part to the modular parameter due to the product over the modes going to infinity. Here, on the other hand, it is the discrete lattice used to describe the finite-resolution boundary state which naturally provides a cut-off for the product.

The important point of the formula \eqref{chap4:eq:one_loop_regge_result} is that the product starts again at the mode $k=2$. The computation of the partition function is done via Fourier Transform, and the modes $k$ labels the Fourier modes in the spatial direction for the fluctuations of the bulk radial edges. At the end of the day, the disappearance of the first mode is again related to the diffeomorphism invariance also in perturbative quantum Regge calculus. To understand this, we focus on the geometrical meaning of the missing modes $k=0,1$. Given a constant time hyper-surface, theses modes correspond to the rigid translation of the unique internal vertex of the slice. There are two contributions to this translation. It can either be in the same hyper-surface, or in the orthogonal direction. These directions are represented by blue and red arrows respectively in figure \ref{chap4:fig:clyndier_decompo}. This interpretation readily comes from the Fourier transform. Recall that in order to obtain a finite result, most of the gauge was fixed. The symmetry described here is the residual symmetry after gauge fixing and hence must be dropped from the mode expansion to reach convergence.

All the other modes however involve a change of the boundary shape, and contribute to the one-loop amplitude. This geometrical interpretation of the one-loop contribution is in agreement with Carlip picture of boundary modes as would be normal to the boundary diffeomorphisms whose action is broken by the presence of the boundary itself \cite{Carlip:2005zn}.

Another interesting feature of the one-loop result is that it diverges whenever there is a $k\in\{0,\dots,\frac12(N_x-1)\}$ such that $\gamma k \in 2\pi \mathbb Z$. This can only happen when the greatest common divisor between $N_x$ and $N_\gamma$ is strictly larger than one
\begin{equation}
K=\mathrm{GCD}(N_\gamma, N_x) > 1 \; .
\end{equation}
This condition is equivalent to $\gamma \in 2\pi \mathbb Q$ in the continuum. This condition will again arise in the Ponzano-Regge computation describes in the next chapters. For the moment, we can say that, geometrically, if $K>1$ then the homogeneous boundary structure is not rigid enough to provide a unique solution for the lengths of the bulk edges of the {\it linearized} equation of motions. These ambiguities show up as null modes of the (bulk) Hessian, leading to poles for the inverse of its determinant, which determines the one-loop correction. On the other hand, one finds that for a certain class of inhomogeneous perturbations of the boundary data one does not find any solution to the linearized equations of motion. Therefore, this situation rather describes the emergence of an accidental symmetry of the linearized theory due to the (homogeneous) boundary conditions, rather than the emergence of a new gauge symmetry. 
This indicates a breakdown of the linear approximation and thus the cases with $K>1$ need in principle a more refined analysis.

\section{Boundary Hilbert space and spin network}

We consider a manifold $\cM$ with boundary, and the group formulation of the Ponzano-Regge model introduced previously, encoding the space of flat connection. The space of boundary connection can be endowed with a Hilbert space structure \cite{Baez:1993id,Ashtekar:1994mh,Ashtekar:1994wa} using standard loop quantum gravity techniques. Consider now a particular orientated graph $\Gamma$ on the boundary\footnote{When working with the Ponzano-Regge model, we will require that the graph $\Gamma$ is compatible with the dual cellular decomposition of the boundary manifold, namely, $\pp \cK^{*} = \Gamma$.}. The Hilbert space is simply the space of gauge-invariant square integrable with respect to the Ashtekar-Lewandowski measure \cite{Ashtekar:1994wa}
\begin{equation}
	\cH = L^{2}(\SU(2)^{|\Gamma_1|})/ \SU(2)^{|\Gamma_0|} \;,
\end{equation}
where $|\Gamma_i|$ is the number of $i$-cells in $\Gamma$. That is, elements of this Hilbert space are functions whose parameters are $\SU(2)$ elements living on the links, and with a $\SU(2)$ gauge invariance at each node. Considering a element $\Psi$ of $\cH$, it respects the property
\begin{equation}
	\Psi(\{ g_{l} \}) = \Psi\left(G_{t_l} g_{l} G_{s_l}^{-1}\right)\in L^2(\SU(2)^{|\Gamma_1|})\qquad \forall G_n\in \SU(2),
\end{equation}
with, as previously $t_l$ and $s_l$ hold for the target and source nodes of $l$ respectively. The inner product associated to the Hilbert space is simply
\begin{equation}
	\la \Psi | \Phi \ra = \left[\prod_{l_\in\Gamma}\int_{\SU(2)} \dd g_{l}\right] \overline{\Psi(g_{l})} \; \Phi(g_{l}).
\end{equation}

A basis of such Hilbert space is given by the spin network functions \cite{Rovelli:1995ac}. Spin-networks were first introduced by Penrose \cite{penrose1971angular} as abstract graphs to describe discrete space-time. Considering a group $G$, they are defined as such \cite{Baez:1997zt}
\begin{Definition} 
	A spin network $\Psi$ is given by the data of a triplet $(\Gamma,j,\iota)$. $\Gamma$ is a two-dimensional orientated graph. We called l the links of $\Gamma$ and $n$ its nodes. To each link $l$ of $\Gamma$, we associate an irreducible representation of $G$, $V_{l}$ and to each vertex we associate an intertwiner $\iota_v$, which is a map 
	\begin{equation*}
		\iota_v \; : V_{l_1} \otimes V_{l_2} \otimes \; ... \; \otimes V_{l_n} \rightarrow V_{l'_1} \otimes V_{l'_2} \otimes \; ... \; \otimes V_{l'_m}
	\end{equation*}
	where $n$ (resp. $m$) is the number of incoming (resp. outgoing) links to (resp. from) $v$
\end{Definition}
Using the Peter-Weyl theorem, one can understood spin network define in the previous definition as $\SU(2)$ function as introduced in the beginning of this section.
\\

In the case of interest here, the gauge group is $\SU(2)$, and the irreducible representations are labelled by a spin $j \in \f{\N}{2}$. A $\SU(2)$ spin network function based on a graph $\Gamma$ takes the form
\begin{equation}
	\Psi_{j,\iota}( g_l) \,=\, \left(\bigotimes_{v\in \Gamma} \iota_v  \right) \bullet_\Gamma \left(\bigotimes_{l \in \Gamma}   \sqrt{d_{j_l}}  D^{j_l}(g_l)\right) \; ,
	\label{chap4:eq:SN_general}
\end{equation}
where $\bullet_\Gamma$ stands for the contraction of all the magnetic indices as prescribed by the graph $\Gamma$. Note again that a spin network function naturally comes with a choice of orientation. In fact, changing the orientation of one given link $l$ contributes with a factor $(-1)^{2 j_l}$ to the spin network function. Such a property is the reason why spin networks are often said to be unitary up to a sign. Intertwiners implement the state's gauge invariance thanks to their defining property
\begin{equation*}
	\Big(D^{j_1}(G)\otimes\dots \otimes D^{j_k}(G) \otimes D^{j_{k+1}}(G^{-1}) \otimes \dots \otimes D^{j_m}(G^{-1})\Big) \triangleright \iota_n = \iota_n \quad \forall G\in\SU(2).
\end{equation*}

Considering a manifold $\cM$ with a boundary, the spin network function provides us with an admissible boundary state, that is a kinematical state. The transition amplitude for the Ponzano-Regge model given the boundary state $\Psi$ supported on $\Gamma = \pp \cK^{*}$ is just its evaluation on the partition function of the model as defined previously
\begin{equation}
	\la Z_{PR}^{\cK}|\Psi\ra = \left[\prod_{l_\pp\in\Gamma }\int_{\SU(2)} \dd g_{l_\pp}\right]  Z_{PR}^{\cK}(g_{l \in \cK^{*}}) \Psi(g_{l \in \cK^{*}}).
	\label{chap4:eq:PR_amp_with_bdr_initial}
\end{equation}
This amplitude is the integral of the boundary state $\Psi$ projected by the Ponzano-Regge amplitude, in accordance to the previous interpretation of the Ponzano-Regge model being a projector on the space of physical solution. The interpretation of this amplitude is as follows: if the boundary of $\cK$ has two disconnected components corresponding to $\pp \cK_1$ and $\pp \cK_1$, then the previous formula calculates the transition amplitude between two states across the history represented by $\cK$ with weight provided by the Ponzano-Regge partition function. This is exactly the case described previously with the cylinder topology. On the other hand, if $\cK$ has a single boundary component, the amplitude can be interpreted as in the Hartle-Hawking no-boundary proposal \cite{Hartle:1983ai}: as the probability of nucleation of a given state from nothing.

\subsubsection{Geometric interpretation of  higher valent spin network states}

For a three-valent node, the intertwiner is uniquely given by the Clebsch-Gordan coefficient associated to the three adjacent representations $j_1,j_2,j_3$.  They are non-vanishing if and only if the three representations satisfy the triangle inequalities
\begin{equation}
	j_1 \leq j_2+j_3 \quad \text{and cyclic permutations},
\end{equation}
together with the extra integrability condition 
\begin{equation}
	j_1+j_2+j_3 \in \N\; .
\end{equation}
This hints at the fact that the spins $j_l$ can be interpreted as the lengths of the edges $e$ dual to the links $l$. This is confirmed by the construction of length-measuring operators associated to the edges of the boundary triangulation, which is indeed diagonalized by the spin network basis \cite{Rovelli:1995ac}.  

This interpretation can be generalized to higher valent nodes: an $m$-valent intertwiner\footnote{Spin-networks are also the boundary states of four-dimensional quantum gravity and can thus  describe quantum states of 3d geometry. From this perspective, intertwiners are naturally interpreted as {\it polyhedra} embedded in the flat 3d Euclidean space $\mathbb R^3$ \cite{Barbieri:1997ks,Freidel:2009ck,Freidel:2009nu,Bianchi:2010gc,Freidel:2010tt,Livine:2013tsa}. Quantum deforming $\su(2)$ allows to extend this geometrical interpretation to polyhedra in homogeneous curvature \cite{Dupuis:2013lka,Bonzom:2014wva,Haggard:2015ima}.} can be used to define a quantum $m$-sided polygon with fixed edge lengths determined by the spins $(j_1,\ldots,j_m)$ \cite{Livine:2013tsa}. The intertwiner space is not unique anymore, but still finite dimensional. Let us underline an issue with this polygonal interpretation of intertwiners, which is due to the possibility of different orderings of the edges around the polygon. As explained in \cite{Livine:2013tsa}, there are two possibilities. If we do not specify any ordering for the legs of the intertwiner, we can reconstruct multiple possible polygons. It is possible to recover a unique convex polygon, at least in the planar case, in which case the intertwiner contains enough data to deduce an ordering.  The problem is however automatically cured by considering graphs embedded in a surface, as we do here.

The non-uniqueness of the higher-valent intertwiners fits nicely with the fact that the geometry of non simplexes polygons with fixed edge length is also not unique. The intertwiner space describes also a possible bending of polygons: that is if we introduce diagonals in the polygon, there might be non-trivial dihedral angle hinging on such diagonals, see figure \ref{fig:four_intertwiners_to_two_three}.

This space of (possibly bent) polygons admits a canonical symplectic structure named after Kapovich and Millson \cite{Kapovich:1995,Kapovich:1996} which allows its quantization \cite{Livine:2013tsa,Conrady:2009px}. The Kapovich-Millson symplectic structure sets the length of each diagonal and the corresponding dihedral angle to be canonically conjugate variables. Thus, these two variables cannot be determined at the same time by a given intertwiner. This is compatible with the fact that in a boundary polygon the dihedral angle associated to a diagonal encodes some extrinsic curvature of the manifold, dual to the intrinsic metric determined by the length of the diagonal itself. We see that higher-valent nodes can encode quantum geometry, which features non-commutative aspects \cite{Freidel:2005bb,Freidel:2005me,Baratin:2010nn}.

In this thesis, we are mainly interested in boundary states which are four-valent. A 4-valent intertwiner can be decomposed into two 3-valent ones glued by a recoupling spin. Interpreting it as a quadrilateral, the decomposition into 3-valent intertwiners corresponds to cutting the quadrilateral into two triangles along one of its diagonals, see figure \ref{fig:four_intertwiners_to_two_three}. 
\begin{figure}[h!]
	\begin{center}
		\begin{tikzpicture}[scale=0.8]
		
		\coordinate (C) at (0.09,0.21);
		\coordinate (A) at (-1.68,0.32);
		\coordinate (B) at (0.17,1.81);
		\coordinate (D) at (1.88,0.11);
		\coordinate (E) at (-0.14,-1.74);
		
		\draw (-1.91,1.92) -- (1.97,1.72)--(1.75,-1.97)--(-1.41,-1.59)--cycle;
		
		\draw (C) node[above left]{$\iota_4$};
		\draw (C)--(A) node[midway,below]{$j_3$}; \draw (C)--(B) node[midway,left]{$j_1$}; \draw (C)--(D) node[midway,above]{$j_4$}; \draw (C)--(E) node[midway,right]{$j_2$}; 
		
		\draw (3,0) node{$\rightarrow$};
		
		\coordinate (I1) at (5.25,-0.40);
		\coordinate (I2) at (8.7,0.3); 
		\draw (7.75,-1.97)--(4.59,-1.59)--(4.09,1.92)--cycle; \draw (I1) node{$\bullet$}; \draw (5.30,0) node{$\iota^{1}_3$};
		\draw (6.09,1.92) -- (9.97,1.72)--(9.75,-1.97)--cycle; \draw (I2) node{$\bullet$}; \draw (I2) node[below right]{$\iota^{2}_3$};
		
		\draw (I1) -- (4.4,-0.35) node[midway,above]{$j_3$} ; \draw (I1) -- (6,-1.75) node[midway,right]{$j_2$};
		\draw (I1) to[bend left] node[pos=0.4,above]{$j$} node[pos=0.54,below]{$\theta$} (I2);
		\draw (I2)--(8.1,1.82) node[midway, right]{$j_1$}; \draw (I2)--(9.87,0.1) node[midway,above]{$j_4$};
		
		\end{tikzpicture}
	\end{center}
	\caption{Decomposition of a 4-valent intertwiner into two 3-valent ones along one of the diagonals. The recoupling spin associated to the length of the diagonal is denoted by $j$.  $\theta$ is the dihedral angle between the two triangles.}
	\label{fig:four_intertwiners_to_two_three}
\end{figure}
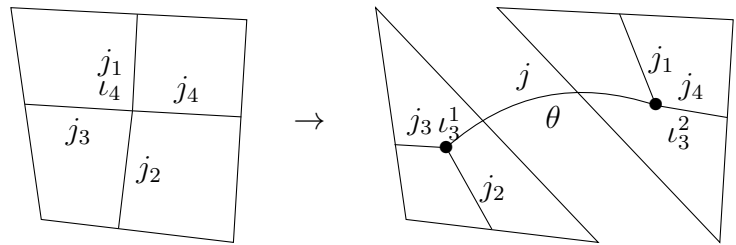
The recoupling spin associated to this diagonal corresponds to the length of the diagonal. The conjugate variable to this length according to the Kapovich-Millson symplectic structure is the dihedral angle between the two triangles hinged by the diagonal itself, as illustrated in figure \ref{fig:four_intertwiners_to_two_three}.  We refer to this angle as the extrinsic curvature of the quadrilateral. Since the length and the angle are canonically conjugate variables, a 4-valent intertwiner with fixed recoupling spin is totally spread in the extrinsic curvature. Once the edge lengths of the quadrilateral are fixed, notice that the length of one diagonal together with the related dihedral angle determine the length of the other diagonal. It follows that the two diagonals do have a non-trivial commutation relation, both in phase space and as quantum observables. Emerging from this arises the question of defining coherent intertwiners, that is intertwiners peaked on a given polygons. We will develop this notion later in this chapter.

\section{Coherent spin network}

Coherent spin network states are a particular representation of spin networks where the intertwiners are chosen to be the so-called coherent intertwiners. These coherent intertwiners describe, in the semi-classical limit, states that are peaked on a particular geometry. They were introduced in \cite{Livine:2007vk} in a three-dimensional context for loop quantum gravity. They are interpreted as quantum polyhedra \cite{Barbieri:1997ks,Freidel:2009ck,Bianchi:2010gc,Livine:2013tsa}. In particular, a four-valent coherent intertwiner is understood as a quantum tetrahedron\footnote{See also \cite{Haggard:2015ima,Charles:2016xzi} for a discussion on polyhedra in homogeneously curved space.}. In the context of three-dimensional gravity, where the boundary is two-dimensional, coherent intertwiners are understood as polygons. Three-valent intertwiners are thus triangles and four-valent, the case of interest here, quadrilateral. Coherent intertwiners are defined using $\SU(2)$ coherent states, which in turn have a nice formulation in terms of spinors.

We will start this section by a short presentation of the spinor formalism before focusing on the construction of coherent spin network. The coherent spin network will be used as boundary state in chapter \ref{chap6}.

\subsection{Spinors and coherent state}

Consider the representation space $V_j$ of a given spin $j$, and choose a basis $\{|j,m\ra\,,\; m=-j,\ldots,+j\}$, which diagonalizes the angular momentum operator $J_z$, in addition to the $\SU(2)$ Casimir $J^2=J_x^2+J_y^2+J_z^2$. As usual, we can think of a basis element $|j,m\ra \in V_j$ as a quantum vector of length $j$ and $z$-projection $m$, the eigenvalue of $J_z$. On the other hand, the azimuthal direction of these quantum vectors is totally uncertain.

The case of $|j,j\ra$ is particularly interesting: it has maximal $z$-projection and is therefore peaked along this direction. In other words, $|j,j\ra$ is a coherent state in $V_j$ representing a vector of length $j$ pointing in the $z$-direction. This is the starting point of the Perelomov coherent states. The (relative) uncertainty about the direction in which the vector is pointing decreases with $j$. Indeed, computing the expectation values of the $\su(2)$ generators on the state $|j,j\ra$, we get $\la \vec{J}\ra=(0,0,j)$. We can also compute the variance $\la \vec{J}^{2}\ra=j(j+1)$, which is simply given by the $\su(2)$ Casimir. The state $|j,j\ra$ corresponds to a semi-classical vector of length $j$ in the $z$-direction,  peaked on $(0,0,j)$ with spread 
\begin{equation}
\f1{\la \vec{J}\ra}\sqrt{\la \vec{J}^{2}\ra-\la \vec{J}\ra^{2}}\,\sim\f1{\sqrt{j}} \; .
\end{equation}
The corresponding polar angle $\theta$ can be estimated to be $\theta \approx \frac{1}{\sqrt j}$ which goes to $0$ with the spin increasing.

To obtain coherent states \`a la Perelomov representing unit three-vectors pointing in an arbitrary direction $\hat n\in S_2$, we rotate $|j,j\ra$ appropriately. Consider the state
\begin{equation}
	|j, \hat n\ra = D^j(G_{\hat n})|j,j\ra
	\label{chap5:eq:perelemov_coherent}
\end{equation}
where $G_{\hat n}$ is some element of $\SU(2)$ which in the vectorial (spin 1) representation corresponds to a rotation $R_{\hat n}$, taking the $z$-axis to the $\hat n$ direction. Of course, the element $G_{\hat n}$ is not uniquely defined, and admissible group elements only differ by a rotation around $\hat z$. This freedom on the choice of element $G_{\hat n}$ translates into a choice of phase  for the vectors $|j,\hat n\ra$. We can keep track of this phase by going to the spinorial representation. First focus on the case $j = \f{1}{2}$.

Since $V_{\frac12} \cong \mathbb C^2$, it is natural to introduce spinors
\begin{equation}
	|w\ra = \mat{c}{w^0 \\ w^1} = w^0 |\uparrow\ra + w^1 |\downarrow\ra \in V_{\frac12},
\end{equation}
where $|\uparrow\ra$ ($|\downarrow\ra$) is the $V_\frac{1}{2}$ basis element with $m=+\tfrac12$ $\left(-\tfrac12\right)$. Denote by
\begin{equation}
	\la w | = \mat{cc}{\bar w^0 & \bar w^1} = \mat{c}{w^0 \\ w^1}^\dagger
	\qquad\text{and}\qquad
	| w ] = \varsigma |w\ra = \mat{c}{- \bar w^1 \\ \bar w^0} \;
\end{equation}
where $\varsigma$ is the $\SU(2)$ structure map, which is anti-unitary 
\begin{equation}
	\varsigma^2 = -1 
	\qquad\text{and}\qquad
	G \varsigma |w\ra = \varsigma G|w\ra \quad \forall G\in\SU(2),\; w\in \mathbb C^2.
	\label{chap5:eq:structure_map_prop}
\end{equation}

We can associate a vector $\vec{n} \in \R^{3}$ to a spinor $| w \ra$ by evaluating it on the Pauli matrices
\begin{equation}
	\vec{n} = \la w | \vec{\sigma} | w \ra \; ,
\end{equation}
where $\vec{\sigma} = (\sigma_1,\sigma_2,\sigma_3)$. It is immediate to see that if $|w \ra$ is associated to the vector $\vec{n}$ then $| w ]$ is associated to $-\vec{n}$ using the property of anti-linearity for the structure map. If the spinor is normalized, then the associated vector $\hat{n}$ is also normalized and we denote normalized spinor by $|\xi\ra$. The spinor $|\xi\ra$ corresponds to a coherent state along the direction $\hat{n}$. This time, the definition is not up to a phase, since the phase is entirely captured by the spinorial representation.

Generalization to higher spins is immediate using the property that $V_j$ can be decomposed into the tensor product of $2j$ times the fundamental representation $V_{\f{1}{2}}$
\begin{equation*}
	|j,j \ra = |\up\ra^{\otimes 2j} \; .
\end{equation*}
This leads to the definition of a coherent state in the spin $j$ representation
\begin{equation}
	|j,\xi\ra = |\xi\ra^{\otimes 2j} \; .
	\label{chap4:eq:spinj_to_spin12}
\end{equation}

These states are the building blocks of the coherent intertwiners.

\subsection{From coherent state to coherent spin network}

We can now proceed to the construction of the coherent intertwiner which will encode polygons embedded in $\mathbb{R}^3$.  Polygons with $m$ edges can be described by a set of $m\geq3$ vectors $\{\vec{v}_i\}_{i=1}^m$ under the closure constraint
\begin{equation*}
	\sum_{i=1}^{m} \vec{v}_i = \vec{0} \; .
\end{equation*}
These vectors are interpreted as edge vectors. Assuming the flatness of the polygon, these vectors span in general the plane $\R^{2}$ and their set is of course an over-complete basis of the plane. Note that this construction allows the polygon to be degenerate. In that case, the edge vectors only span $\R$. This construction also naturally comes with a choice of orientation. It is immediate to see that taking the set of vectors to be defined modulo global rotations is enough to lose track of the orientation. Similarly, a polygon can be defined by the data of its normal vector up to the same closure constraint.

This classical construction is really similar to the construction of the coherent state previously presented. In fact, coherent intertwiners are defined in a similar way \cite{Livine:2007vk}. Consider a set of $m$ coherent states described by a spin $j_i$ and a spinor $\xi_{i}$. We define the $m$-valent coherent intertwiner with only outgoing links by
\begin{equation}
	\iota_{\text{coh}} = \int_{\SU(2)} \dd G \; G \act \left( |j_1,\xi_1 \ra \otimes \; ... \; \otimes | j_m,\xi_m \ra \right)
\end{equation}
The $\SU(2)$ integration is similar in spirit to taking the set of classical vectors up to rotations. Interestingly enough, it turns out that coherent intertwiners built out of a set of closing vectors are enough to provide an (over-)complete basis of the intertwiner space \cite{Conrady:2009px,Freidel:2009nu,Livine:2013tsa}. Hence, we can safely restrict the definition of the spin network state to coherent spin network state without loss of generality.

They described a quantum polygon with {\it normal} edge vectors $\hat{n}_i = \la \xi_{i}| \vec{\sigma}|\xi_{i} \ra$ and edge length $j_i$ under the closure constraint
\begin{equation}
	\sum_{i=1}^{m} j_i \hat{n}_i = 0
\end{equation}
which is equivalent to the classical condition for a polygon defined by its normal vectors. This closure constraint is necessary in the semi-classical limit for the intertwiners not to be exponentially suppressed \cite{Livine:2007vk}.
\\

Having introduced coherent intertwiners, we can now construct the associated coherent spin network state \cite{Dupuis:2011fz}, which is a spin network where all the intertwiners are coherent. When doing so, the question of orientation of the links arises. Indeed, spin networks are defined on an orientated graph, and the definition of a coherent intertwiner given above is for a node with only outgoing links. Clearly, if the link is outgoing at a node, namely its source node, it will be ingoing for the node at its other extremity, namely the target node. We again denote by $s_l$ (resp. $t_l$) the ingoing (resp. outgoing) node $n$ with respect to the link $l$. For every ingoing link $l$ to the node $n$, the associated coherent state of the intertwiner at the node $n$ is
\begin{equation*}
	(\varsigma |\xi_{t_l} \rangle )^\dagger = [ \xi_{t_l}| \; .
\end{equation*}
The action of the structure map $\varsigma$ on the coherent state produces a rotation of $\pi$ with direction orthogonal to the vector associated to the spinor. This corresponds to the geometrical picture of seeing the link from the opposite direction. Finally, we point out that the action of $G \in \SU(2)$ on $[ \xi_{t_l}|$ is
\begin{equation*}
	G\act [ \xi_{t(l)}| = [ \xi_{t_l}| G^{-1} \;.
\end{equation*}
\\

We are now ready to write equation \eqref{chap4:eq:SN_general} in an explicit form given that it is a coherent spin network function. Consider a particular link $l$, with spin $j_l$ and $\SU(2)$ element $g_l$. The target and source nodes of $l$ are described by coherent intertwiners. Called $G_{s_l}, \xi_{s_l}$ and $G_{t_l}, \xi_{t_l}$ the associated $\SU(2)$ element and spinor of the intertwiners. The contribution of the source node to the spin network function is (forgetting about the $\SU(2)$ integration temporarily)
\begin{equation*}
	G_{s_l} |j_l,\xi_{s_{l}} \ra
\end{equation*}
whereas it is 
\begin{equation*}
	[j_l,\xi_{t_l} |G_{t_l}^{-1}
\end{equation*}
for the target node. Gluing everything together taken into account the parallel transport between the nodes encoding by the group element $g_l$, we obtain the link contribution to the spin network function
\begin{equation}
	[\xi_{t_l} |G_{t_l}^{-1} g_l G_{s_l} |\xi_{s_{l}} \ra^{2 j_l} \; ,
\end{equation}
where we used property \eqref{chap4:eq:spinj_to_spin12} to express it in terms of spinor in the fundamental representation. 

That is, a coherent spin network takes the form\footnote{These states are not normalized. This could be corrected by inserting factors of $\sqrt{d_j}$ for every link as well as a normalization factor for the coherent intertwiners $\iota_{(j_a,\xi_a)}$. Such normalization is not needed for our purpose, and we will ignore it in the following.}
\begin{equation}
	\Psi_{(j,\xi)} (g_l) = \left[ \prod_n \int_{\SU(2)} \dd G_n\right] \prod_l [\xi_{t_l} |G_{t_l}^{-1} g_l G_{s_l} |\xi_{s_{l}} \ra^{2 j_l} \; .
	\label{chap4:eq:general_coherent_SN}
\end{equation}

For the record, coherent states play an important role in four-dimensional spin foam models, where they are known to impose boundary conditions reproducing the GHY boundary term in a large spin limit \cite{Conrady:2008mk,Barrett:2009gg,Dowdall:2009eg,Han:2011re,Han:2011rf,Haggard:2014xoa}.

In this form, it is immediate to check the previous claim about the dependency on the orientation of the spin network. Indeed, consider the switch of the orientation of one link $l$. This modification is local, and only affects the link $l$. In that case, the role of the source and target node is reverse and the link contribution is
\begin{align*}
	[\xi_{s_l} |G_{s_l}^{-1} g_l G_{t_l} |\xi_{t_{l}} \ra^{2 j_l} 
	&= [\xi_{s_l} | G_{s_l}^{-1} g_l G_{t_l} \; \xi_{t_{l}} \ra^{2 j_l} 
	\\
	&= (-1)^{2j} [\xi_{t_{l}}| G_{t_l}^{-1} g_l^{-1} G_{s_l} |\xi_{s_l}  \ra^{2 j_l} 
\end{align*}
where we used the property $[w,\eta\ra = - [\eta,w\ra$. That is, changing the orientation of a given link $l$ does not leave the spin network invariant, but contributes with a factor $(-1)^{2 j_l}$ as previously claimed. This is the reason why spin network functions are sometime called unitary "up to a sign". We will see in the following that for the boundary state we consider, this sign uncertainty does not produce any complication in the well-definiteness of our amplitude. 

\subsubsection{Geometrical meaning of the link contribution}

Let us focus more on the contribution of each link for the coherent spin network
\begin{equation}
	[\xi_{t_l} |G_{t_l}^{-1} g_l G_{s_l} |\xi_{s_{l}} \ra^{2 j_l}  \equiv [j_l,\xi_{t_l} |G_{t_l}^{-1} g_l G_{s_l} |j_l,\xi_{s_{l}} \ra \; .
	\label{chap4:eq:link_contribution}
\end{equation}

As we explained previously, the spinors encode the normal vectors and the length of the quantum polygon. The action of the $\SU(2)$ element $G_n$ at each node is to rotate these vectors in space. This rotation has two important features: it is common to all the spinors at one node, thus corresponding to a rotation of the whole quantum polygon, and it is finally integrated over in the state. The integration over $G_n$ has the role of implementing gauge invariance of $\Psi$ by removing any reference to the standard frame mentioned above. 

The remaining element we need to discuss is the $SU(2)$ element $g_l$ associated to the link. This is given by the definition of the boundary state in the Ponzano-Regge model. Recall that $g_l$ encodes the discrete connection between two nodes of the graph. That is, it represents the parallel transport from one node to another. To be more precise, in the coherent spin network state, $g_l$ parallel transports spinors, i.e. edges of the quantum polyhedron, to another spinor. The link contribution \eqref{chap4:eq:link_contribution} calculates how much ``superposition'' there is between the quantum edges $G_{s_l}|j_l,\xi_{s_l}\ra$ and $G_{t_l}|j_l, \xi_{t_l}\ra$ once parallel-transported by $g_l$.

From the point of view of the Ponzano-Regge model, recall that both $G_n$ and $g_l$ are integrated over under the flatness constraint. One can then ask at which value these integrals happen to concentrate. Intuitively, one expects these integrals to concentrate precisely where the superposition is maximal, that is at the values of the group variables which would induce (provided it exists) a consistent gluing among {\it all} the edges of {\it all} the polygons in the discretization. Here, maximal does not mean a perfect superposition. We will see in the next chapter that indeed, this superposition is not perfect and involves a phase. Geoemtrically, this phase encodes the dihedral angle between the quantum polyhedron. In other words, the left-over phase encodes the extrinsic curvature.

\subsubsection{Symmetry of the integrand}

The integrand of \eqref{chap4:eq:general_coherent_SN} has a interesting symmetry with respect to the $G_n$. Consider the local transformation at the node $n$
\begin{equation}
	G_{n} \rightarrow - G_{n} \; .
	\label{chap6:eq:intertwiner_coherent_Z2_sym}
\end{equation}
Under this transformation, the integrand becomes
\begin{equation}
	\prod_l [ \xi^{t(l)}_l| G_{t(l)}^{-1} g_l G_{s(l)} | \xi^{s(l)}_l\ra^{2j_l} \rightarrow (-1)^{2\sum j_{l}} \prod_l [ \xi^{t(l)}_l| G_{t(l)}^{-1} g_l G_{s(l)} | \xi^{s(l)}_l\ra^{2j_l}
\end{equation}
where the sums are over all the links connecting to the node $n$. Indeed, the $\SU(2)$ element $G_n$ appears as many times as the number of links connected to the node $n$. Now, recall that a necessary condition for a non-trivial intertwiner to exist at a given node $n$ is that the sums of the spins of the links around it is an integer. That implies that the integrand is actually invariant under the transformation given by \eqref{chap6:eq:intertwiner_coherent_Z2_sym}. This symmetry will be of the utmost importance for the well-definiteness of our boundary state in the following.

\section{Torus topology and discretization}

From the first chapter, we have learned that the torus topology is of interest for gravity, since it corresponds to the geometry of Euclidean AdS and BTZ. Moreover, we already discussed the case of the ball in chapter \ref{chap3} for the Ponzano-Regge model. In this section, we start considering the next topology after the trivial one: the torus topology. In this section, we will introduce the considered discretization of the torus and perform a first computation of the partition function in a very simple case. The starting point of the computation is given by the definition of the Ponzano-Regge amplitude for a boundary state $\Psi$ \eqref{chap4:eq:PR_amp_with_bdr_initial}
\begin{equation}
	\la Z_{PR}^{\cK}|\Psi\ra = \left[\prod_{l_\pp\in\Gamma }\int_{\SU(2)} \dd g_{l_\pp}\right]  Z_{PR}^{\cK}(g_{l \in \cK^{*}}) \Psi(g_{l \in \cK^{*}}).
\end{equation}
We recall that $\cK$ is the cellular decomposition of the manifold, and $Z_{PR}^{\cK}$ the partition function of the Ponzano-Regge model, imposing the flatness of the bulk, given by the definition \ref{chap3:def:PR_amplitude}.

\subsection{Torus, cellular decomposition and gauge fixing}

We start our investigation of the torus topology by considering the three-dimensional cylinder with a two-dimensional torus as boundary. To apply the Ponzano-Regge model on such manifold, we need to introduce a cellular decomposition. Consider a cylinder with $N_t$ verticals slices and $N_x$ horizontal pie slices, see figure \ref{chap4:fig:cylinder_discretization}. Comparing to the triangulation introduced for Regge calculus from the previous chapter, the only difference comes from the fact that we did not sub-divide the resulting prism into three tetrahedra.
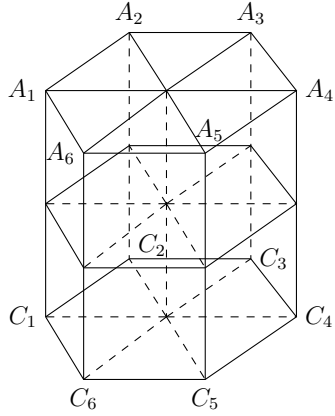
\begin{figure}[!htb]
	\begin{center}
		\begin{tikzpicture}[scale=1]
		\coordinate (OA) at (1.59,0);
		\coordinate (A1) at (0,0);
		\coordinate (A2) at (1.1,0.77);
		\coordinate (A3) at (2.7,0.77);
		\coordinate (A4) at (3.3,0);
		\coordinate (A5) at (2.1,-0.83);
		\coordinate (A6) at (0.5,-0.83);
		
		\coordinate (OB) at (1.59,-1.5);
		\coordinate (B1) at (0,-1.5);
		\coordinate (B2) at (1.1,-0.73);
		\coordinate (B3) at (2.7,-0.73);
		\coordinate (B4) at (3.3,-1.5);
		\coordinate (B5) at (2.1,-2.35);
		\coordinate (B6) at (0.5,-2.35);
		
		\coordinate (OC) at (1.59,-3);
		\coordinate (C1) at (0,-3);
		\coordinate (C2) at (1.1,-2.23);
		\coordinate (C3) at (2.7,-2.23);
		\coordinate (C4) at (3.3,-3);
		\coordinate (C5) at (2.1,-3.83);
		\coordinate (C6) at (0.5,-3.83);
		
		\draw (A1) -- (A2) -- (A3) -- (A4) -- (A5) -- (A6) -- cycle; 
		\draw (OA) -- (A1) ; \draw (OA) -- (A2); \draw (OA) -- (A3); \draw (OA)--(A4); \draw (OA) --(A5); \draw (OA)-- (A6); 
		\draw (A1) node[scale=0.8,left] {$A_1$}; \draw (A2) node[scale=0.8,above] {$A_2$}; \draw (A3) node[scale=0.8,above] {$A_3$}; \draw (A4) node[scale=0.8,right] {$A_4$}; \draw (2.15,-0.78) node[scale=0.8,above]{$A_5$}; \draw (A6) node[scale=0.8,left]{$A_6$};
		
		\draw (B1) -- (B2) -- (B3) -- (B4) -- (B5) -- (B6) -- cycle;
		\draw [dashed] (OB) -- (B1); \draw [dashed] (OB) -- (B2); \draw [dashed] (OB) -- (B3); \draw [dashed] (OB)--(B4); \draw [dashed] (OB) --(B5); \draw [dashed] (OB)-- (B6);
		
		\draw (C1) -- (C2) -- (C3) -- (C4) -- (C5) -- (C6) -- cycle;
		\draw [dashed] (OC) -- (C1); \draw [dashed] (OC) -- (C2); \draw [dashed] (OC) -- (C3); \draw [dashed] (OC)--(C4); \draw [dashed] (OC) --(C5); \draw [dashed] (OC)-- (C6);
		\draw (C1) node[scale=0.8,left] {$C_1$}; \draw (1.1,-2.05) node[scale=0.8,right] {$C_2$}; \draw (C3) node[scale=0.8,right] {$C_3$}; \draw (C4) node[scale=0.8,right] {$C_4$}; \draw (C5) node[scale=0.8,below]{$C_5$}; \draw (C6) node[scale=0.8,below]{$C_6$};
		
		\draw (A1)--(B1)--(C1); \draw [dashed] (A2)--(B2)--(C2); \draw [dashed] (A3)--(B3)--(C3); \draw (A4)--(B4)--(C4); \draw (A5)--(B5)--(C5); \draw (A6)--(B6)--(C6);
		\draw [dashed] (OA)--(OB)--(OC); 
		

		\end{tikzpicture}
	\end{center}
	\caption{Example of discretization of the torus with $N_x=6$ and $N_t=2$. The twist parameter $N_\gamma$ is recovered by identifying the $A$'s with the $B$'s  to obtain the torus through $A_{i}=C_{i+N_\gamma}$.}
	\label{chap4:fig:cylinder_discretization}
\end{figure}

To recover the boundary torus, we need to identify the top and the bottom of the cylinder. Before such identification, as we previously mentioned, we have the freedom to consider a twist. For example, in figure \ref{chap4:fig:cylinder_discretization}, we identify $A_1$ to $C_1+2$, for a shift of $2$. In general, this shift is parametrized by a parameter $N_\gamma$. From the continuum point of view, i.e. in thermal AdS, Euclidean BTZ or Euclidean flat space, this corresponds to a twist $\gamma$ in the identification of the Euclidean time parametrized by
\begin{equation}
	\gamma = 2 \pi \f{N_\gamma}{N_x} \; .
\end{equation}
For future utility, we introduced the parameter $K$ to be the greatest common divisor of $N_\gamma$ and $N_x$ and $\W$
\begin{equation}
	K = GCD(N_\gamma,N_x) \; , \quad W = \f{N_x}{K} \; .
\end{equation}
The parameter $K$ basically counts the number of independent vertical loops. Indeed, if $K=1$ it is immediate to see that, for a loop to close, it must go through every vertical link, giving only one vertical loop. This logic can easily be applied for any $K$ (bounded by $N_x$). The case $K=0$ is equivalent to the case $K=N_x$, giving as expected $N_x$ closed vertical loops. On the other hand, $\W$ is the winding number of the closed loop, that is the number of times each closed loop winds around the torus before closing.

Such a choice of cellular decomposition naturally induces a particular boundary cellular decomposition $\pp \cK$. It is immediate to see that the boundary discretization is that of a square lattice with periodic conditions. Denoting the vertices of the discretization by their positions $(t,x)$ with $t \in [0,N_t-1]$ and $x \in [0,N_x-1]$ the periodic conditions are given by
\begin{equation}
	(t, x+N_x) \sim (t,x) \sim (t+N_t , x +N_\gamma).
	\label{chap4:eq:periodic_condition}
\end{equation}

The spin network graph $\Gamma$ dual to the previous square lattice is also a square lattice. Since we are mainly working on $\Gamma$, we also denote the nodes of the dual boundary graph by $(t,x)$. The vertical and horizontal links of the dual graph, labelled by subscripts $v$ and $h$, are dual to the {\it space} and {\it time} edges respectively. To fix the notation for the discretization and the dual graph, we consider the top-left node of the boundary square lattice as the origin $(t,x)=(0,0)$. Vertical links between nodes  $(t,x)$ and $(t+1,x)$ are labelled by $(t,x)$ and similarly for the horizontal ones. Vertical and horizontal links are oriented in the direction of growing time and space coordinates. See figure \ref{chap4:fig:induced_bdry_square} for the details.
\begin{figure}[h!]
	\begin{center}
		\begin{tikzpicture}[scale=1.3]
		\foreach \i in {0,...,6}{
			\foreach \j in {-1,...,3}{
				\draw (\i,\j) node {$\bullet$};
				\draw[<-] (\i,\j-.5) --(\i,\j+.5);
				\draw[->] (\i-.5,\j) --(\i+.5,\j);
			}
		}
		
		\foreach \i in {0,...,4}{
			\draw[rounded corners=3 pt,->] (\i,4.5+0.3) --(\i,4+0.1+0.3)-- (\i+2,4-0.1+0.3)--(\i+2,3.5+0.3)   ;
		}
		
		\draw (0,-1.7) node{$0$};
		\draw (2,-1.72) node{$N_\gamma$};
		\draw (3,-1.7) node{$x$};
		\draw (6,-1.7) node{$N_x$-1};
		
		\draw(-.9,-1) node{$N_t$-1};
		\draw(-0.7,1) node{$t$};
		\draw(-0.7,3) node{$0$};
		
		\draw(3.5,1) node[below]{$g^{h}_{t,x}$}; \draw (2.5,1) node[above]{$T_{t,x-1}$};
		\draw(3,0.5) node[left]{$g^{v}_{t,x}$}; \draw (3,1.5) node[right]{$L_{t-1,x}$};
		
		\end{tikzpicture}
	\end{center}
	\caption{Oriented square lattice on the twisted torus. The twist angle is $\gamma=2\pi \f{N_{\gamma}}{N_{x}}$. Starting from the vertex $(t,x)$, the edge on the right is associated to $g^{h}_{t,x}$ and the edge below to $g^{v}_{t,x}$. The horizontal periodic condition is without twist.}
	\label{chap4:fig:induced_bdry_square}  
\end{figure}
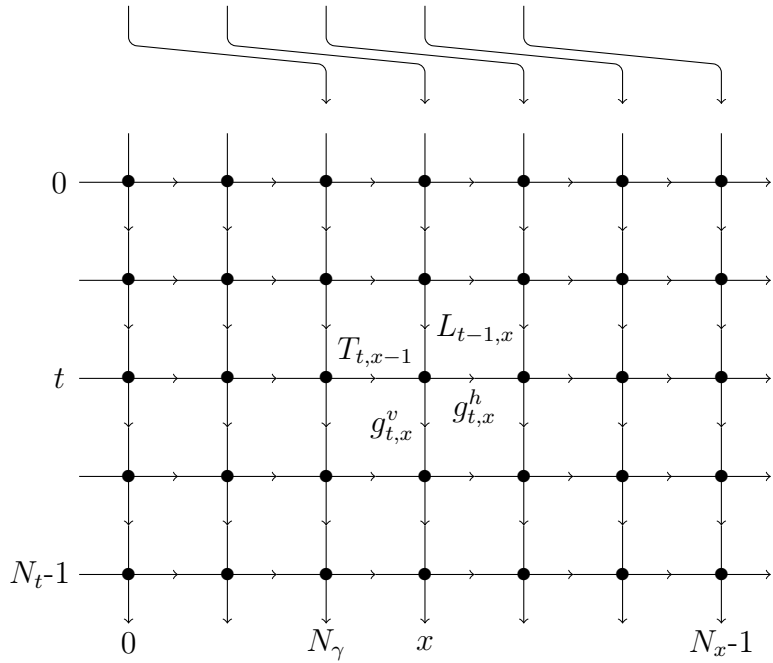

To obtain a spin network state on this graph, we associate to each link a $\SU(2)$ element $g^{h,v}_{t,x}$ and a spin $j_{t,x}^{h,v}$ where $(h,v)$ holds for either a horizontal or a vertical link. To each vertex we associate a coherent intertwiner. A generic spin network on such a discretization is denoted by
\begin{equation}
	\Psi = \Psi( g^h_{t,x}, g^v_{t,x} ).
\end{equation}
To make the notation lighter, we keep silent about the presence of the spins and intertwiners as parameters in the definition of the spin network state.

In this whole thesis, we are interested in a homogeneous choice of intertwiners on the whole boundary lattice. That is, we assign to each node the exact same intertwiner. To construct a generic intertwiner respecting this property, the choice of orientation of the boundary graph is really constrained: we need all the vertical and horizontal links to have the same orientation. This constraint implies that there are only four possible orientations for the boundary graph, see figure \ref{chap5:fig:four_possible_orientation}. The key point is that the intertwiners defined on these four different orientations are in fact the same, hence the spin networks on these four orientations are the same. This can be seen by applying the $\SU(2)$ invariance of the integrand under $G_{n} \rightarrow -G_{n}$ of the coherent spin network to absorb all the $(-1)^{2j_l}$ coming from the switch of the orientation.
\begin{figure}[!htb]
	\centering
	\begin{tikzpicture}[scale=1.5]
		\coordinate (A) at (-4.5,0);
		\coordinate (B) at (-1.5,0);
		\coordinate (C) at (1.5,0);
		\coordinate (D) at (4.5,0);
		
		\draw[-<-=0.5,red] (A)--+(-1,0); \draw[->-=0.5,blue] (A)--+(1,0); \draw[->-=0.5,blue] (A)--+(0,1); \draw[-<-=0.5,red] (A)--+(0,-1);
		\draw[->-=0.5,blue] (B)--+(-1,0); \draw[-<-=0.5,red] (B)--+(1,0); \draw[->-=0.5,blue] (B)--+(0,1); \draw[-<-=0.5,red] (B)--+(0,-1);
		\draw[-<-=0.5,red] (C)--+(-1,0); \draw[->-=0.5,blue] (C)--+(1,0); \draw[-<-=0.5,red] (C)--+(0,1); \draw[->-=0.5,blue] (C)--+(0,-1);
		\draw[->-=0.5,blue] (D)--+(-1,0); \draw[-<-=0.5,red] (D)--+(1,0); \draw[-<-=0.5,red] (D)--+(0,1); \draw[->-=0.5,blue] (D)--+(0,-1);
	\end{tikzpicture}
	\caption{Four possible orientations to have a homogeneous choice of intertwiners on the whole boundary square lattice. One vertical (resp. horizontal) link must be ingoing (red) while the other one must be outgoing (blue).}
	\label{chap5:fig:four_possible_orientation}
\end{figure}
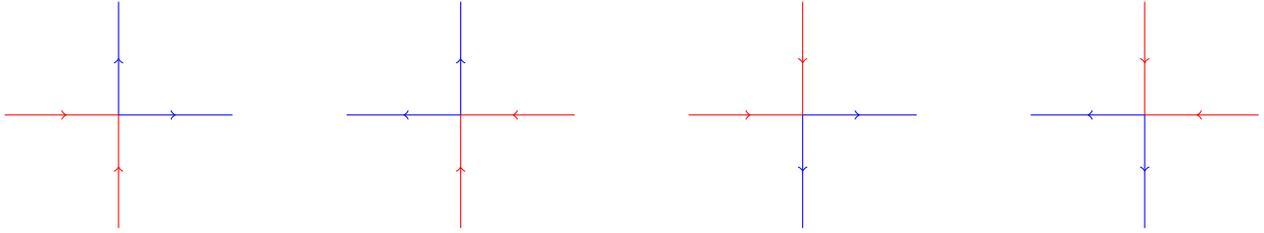
We will not detail the actual construction of such an intertwiner here, and leave this task for the next chapter. 

In fact, the rational reason behind the use of a quadrangulation rather than a triangulation (as in \cite{Dowdall:2009eg}) is that it allows us to perform calculations explicitly. We will be able to explicitly evaluate the one loop amplitude considering coherent spin network state on the boundary (see next section for a definition). Moreover, thanks to the use of a quadrangulation, considering a slightly  more general boundary state, constructed as the superposition of coherent spin network state, we will be able to exactly compute, for the first time, a quasi-local amplitude for three-dimensional gravity on the torus.

\subsubsection{Gauge fixing of the Ponzano-Regge amplitude on the torus}

From this cellular decomposition, it is straightforward to write the formal Ponzano-Regge amplitude in the group representation as given by the definition \ref{chap3:def:PR_model_without_gauge_fixing}. As we explained previously however, it is necessary to gauge-fix to obtain a finite expression. This is the object of this paragraph.

Recall that the gauge-fixing procedure require two trees. The first one is an internal maximal tree $T$ of the cellular decomposition $\cK$ touching the boundary at one vertex. Say, start at $(t,x)=(0,0)$ then follow a radial vertical edge toward the bulk. By doing so, we are at the "centre" of the cylinder. Then follows the $(N_t-1)$ central edges in the horizontal direction, see figure \ref{chap4:gauge_fixing_T}. The regularization proceeds by removing for every edge $e\in T$ the corresponding delta function on the dual face $f$.
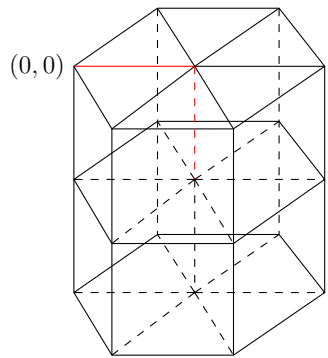
\begin{figure}[!htb]
	\begin{center}
		\begin{tikzpicture}[scale=1]
		\coordinate (OA) at (1.59,0);
		\coordinate (A1) at (0,0);
		\coordinate (A2) at (1.1,0.77);
		\coordinate (A3) at (2.7,0.77);
		\coordinate (A4) at (3.3,0);
		\coordinate (A5) at (2.1,-0.83);
		\coordinate (A6) at (0.5,-0.83);
		
		\coordinate (OB) at (1.59,-1.5);
		\coordinate (B1) at (0,-1.5);
		\coordinate (B2) at (1.1,-0.73);
		\coordinate (B3) at (2.7,-0.73);
		\coordinate (B4) at (3.3,-1.5);
		\coordinate (B5) at (2.1,-2.35);
		\coordinate (B6) at (0.5,-2.35);
		
		\coordinate (OC) at (1.59,-3);
		\coordinate (C1) at (0,-3);
		\coordinate (C2) at (1.1,-2.23);
		\coordinate (C3) at (2.7,-2.23);
		\coordinate (C4) at (3.3,-3);
		\coordinate (C5) at (2.1,-3.83);
		\coordinate (C6) at (0.5,-3.83);
		
		\draw (A1) -- (A2) -- (A3) -- (A4) -- (A5) -- (A6) -- cycle; 
		\draw[red] (OA) -- (A1) ; \draw (OA) -- (A2); \draw (OA) -- (A3); \draw (OA)--(A4); \draw (OA) --(A5); \draw (OA)-- (A6); 
		\draw (A1) node[scale=0.8,left] {$(0,0)$};
		
		\draw (B1) -- (B2) -- (B3) -- (B4) -- (B5) -- (B6) -- cycle;
		\draw [dashed] (OB) -- (B1); \draw [dashed] (OB) -- (B2); \draw [dashed] (OB) -- (B3); \draw [dashed] (OB)--(B4); \draw [dashed] (OB) --(B5); \draw [dashed] (OB)-- (B6);
		
		\draw (C1) -- (C2) -- (C3) -- (C4) -- (C5) -- (C6) -- cycle;
		\draw [dashed] (OC) -- (C1); \draw [dashed] (OC) -- (C2); \draw [dashed] (OC) -- (C3); \draw [dashed] (OC)--(C4); \draw [dashed] (OC) --(C5); \draw [dashed] (OC)-- (C6);
		
		\draw (A1)--(B1)--(C1); \draw [dashed] (A2)--(B2)--(C2); \draw [dashed] (A3)--(B3)--(C3); \draw (A4)--(B4)--(C4); \draw (A5)--(B5)--(C5); \draw (A6)--(B6)--(C6);
		\draw [dashed,red] (OA)--(OB); \draw[dashed] (OB)--(OC);

		\end{tikzpicture}
	\end{center}
	\caption{Internal maximal tree $T$ in red for $N_x=6$ and $N_t=2$. It starts at the boundary vertex $(0,0)$ then follows the unique radial edge at $(0,0)$ toward the bulk. It then moves to the next temporal slice $N_t=1$ through the dashed red edge. The tree $T$ stops here in this example since it will produce a loop otherwise.}
	\label{chap4:gauge_fixing_T}
\end{figure}

For the maximal tree $T^{*}$ of $\cK^{*} \cup \pp \cK^{*}$ we also start at the initial node $(t,x)=(0,0)$. Then, we go along all the links of the boundary discretization on the slice $t=0$, except the one between the nodes $(0,0)$ and $(0,N_x-1)$. Then, we move to the slice $t=1$, and do the same. Once arriving at the slice $(t=N_t-1)$, we move along the radial link in the bulk, and repeat the same operation in the reverse order, ensuring that $T^{*}$ is maximal.

Considering these two trees, it is immediate to see that the Ponzano-Regge amplitude defined in \ref{chap3:def:PR_amplitude} becomes
\begin{equation}
	\la Z_{PR}^{\cK}| \Psi \ra = \int_{\SU(2)} \dd g \; \Phi(g^{h}_{t,x} = \id,g^{v}_{t \neq N_t-1,x} = \id, g^{v}_{N_t-1,x} = g) \; .
\end{equation}
There is still one non trivial integration over $\SU(2)$. This was to be expected. Contrary to the ball, the case of the torus is not topological trivial, and it exists one class of non-contractible cycle, represented by this remaining integration. This is basically the last, and only information coming from the bulk. This expression can be further simplified by using the remaining global $\SU(2)$ invariance to fix $g$ along a specific direction. Consider the following parametrization for $g$ in $\SU(2)$
\begin{equation}
	g = \E^{\I \varphi \hat{u}.\vec{\sigma}} \; ,
\end{equation}
with $\varphi \in [0,2 \pi]$ and $\vec{\sigma}$ the Pauli matrices vector. The corresponding $\SO(3)$ angle is $2\varphi$. As usual, the previous integral can be expressed in terms of the class angle of $g$ using the remaining global $\SU(2)$ invariance
\begin{equation}
	\la Z_{PR}^{\cK}| \Psi \ra = \f{1}{\pi} \int_{0}^{2\pi} \sin^{2}(\varphi) \dd \varphi \; \Phi(g^{h}_{t,x} = \id,g^{v}_{t \neq N_t-1,x} = \id, g^{v}_{N_t-1},x = e^{i \varphi \sigma_3}) \; \;
\end{equation}
where $\sigma_3$ is the Pauli matrix along the $z$ direction. Even then, there is still a global $U(1)$ gauge invariance corresponding to a global $\SU(2)$ transformation along the $z$ direction.

This is not necessarily the end of the story. For computational reason\footnote{We are going to make use of Fourier transform to compute the amplitude.}, it is needed to recast the amplitude in a more symmetrical form, where the dependency on $\varphi$ is homogeneous on the boundary lattice. Recall that the Haar measure is left and right invariant by definition. First, consider the change of variable $\varphi \rightarrow N_t \varphi$. The amplitude becomes
\begin{equation}
	\la Z_{PR}^{\cK}| \Psi \ra = \f{N_t}{\pi}\int_{0}^{\f{2\pi}{N_t}} \dd\varphi \; \sin^2(N_t \varphi) \; \Phi(\{ g_{t,x}^{h} = \id, g_{t \neq N_t-1,x = \id}, g_{N_{t}-1,x} = e^{i N_t \varphi \sigma_z} = g^{N_t}\}) \; .
\end{equation}
Now, apply this series of gauge transformation
\begin{equation}
	g^{v,h}_{t,x} \rightarrow g^{-t} g^{v,h}_{t,x} \; .
\end{equation}
starting from $t = N_t-1$ to $t = 1$. These transformations are constant in each time slice, so the condition $g^{h}_{t,x} = \id$ is left untouched. After this chain of transformations, the amplitude becomes
\begin{equation}
	\la Z_{PR}^{\cK}| \Psi \ra = \f{N_t}{\pi}\int_{0}^{\f{2\pi}{N_t}} \dd\varphi \; \sin^2(N_t \varphi) \; \Phi(\{ g_{t,x}^{h} = \id, g_{t,x} = e^{i \varphi \sigma_z} \}) \; .
\end{equation}

Finally, we can express the integration over the usual $[0,2\pi]$ interval again
\begin{equation}
	\la Z_{PR}^{\cK}| \Psi \ra = \f{1}{\pi}\int_{0}^{2\pi} \text{d}\varphi \sin^2(\varphi) \; \Psi(\{ g_{t,x}^{h} = \id, g_{t,x} = e^{i \f{\varphi}{N_t} \sigma_z} \}) \; .
	\label{chap4:eq:formula_to_compute}
\end{equation}
Note that doing the transformation $\varphi \rightarrow N_t \varphi$ seems rather unnecessary. Clearly, we would have obtained the same result without doing this change of variable. The only use of this change of variable is to avoid the discussion about taking the roots of unity for $e^{i\varphi \sigma_{z}}$, since they are not unique. The result of this tedious procedure could have been guessed on the basis of triangulation invariance as well as from the fact that the flatness of the model requires all contractible loops to be flat, and integrates over all possible values for the non-contractible cycle. 

The main goal of this thesis is to compute the amplitude \eqref{chap4:eq:formula_to_compute} given some class of boundary state $\Psi$. By looking at two classes of boundary state, we will see that the Ponzano-Regge model allows to recover the BMS character as the partition function in the asymptotic limit on top of providing quantum correction and regularization for quasi-local region.

\section{A first computation of the Ponzano-Regge amplitude}

This section is dedicated to a first computation of \ref{chap4:eq:formula_to_compute} with a very simple choice of boundary states. Going through the standard intertwiners expressions, we focus on the spin-0 recoupling intertwiner in the s-channel. The advantage of this simple choice is that it allows for an exact computation of the Ponzano-Regge amplitude. This simple and even naive case will allow us to illustrate the dependence of the asymptotic partition function on the twist angle $\gamma$, and especially the distinction that arises between its rational vs. irrational values: in the appropriate limit, we will get poles for all rational angles. Hence, we recover the basic feature of the BMS character.

\subsection{Intertwiner basis and spin network evaluation}

Before focusing on the actual computation, it is necessary to introduce the boundary state we consider. In this section, we decide to fix all the spins of the boundary lattice at some given values. Then we need to choose a corresponding for 4-valent intertwiner per node. We already explained previously that 4-valent intertwiners are not unique. A basis of the intertwiner Hilbert (vector) space can however be easily constructed. See appendix \ref{App:4valent_intertwiner} for the details of the construction.

In short, we need to choose a pairing of the spins,  $(12)-(34)$, or $(13)-(24)$, or $(14)-(23)$, and to split the 4-valent intertwiner into two 3-valent ones linked by an intermediate spin. Basis states are then defined by the spin $J$ carried by that intermediate link, as shown on figure \ref{chap4:fig:split_intertwiner}. Similarly to the terminology used in particle scattering, we refer to the three possible pairings as the channels $s$, $t$ or $u$. The $t$ and $u$ choices correspond geometrically to fixing the lengths of a diagonal in the quadrilateral picture we discussed with figure \ref{fig:four_intertwiners_to_two_three}.

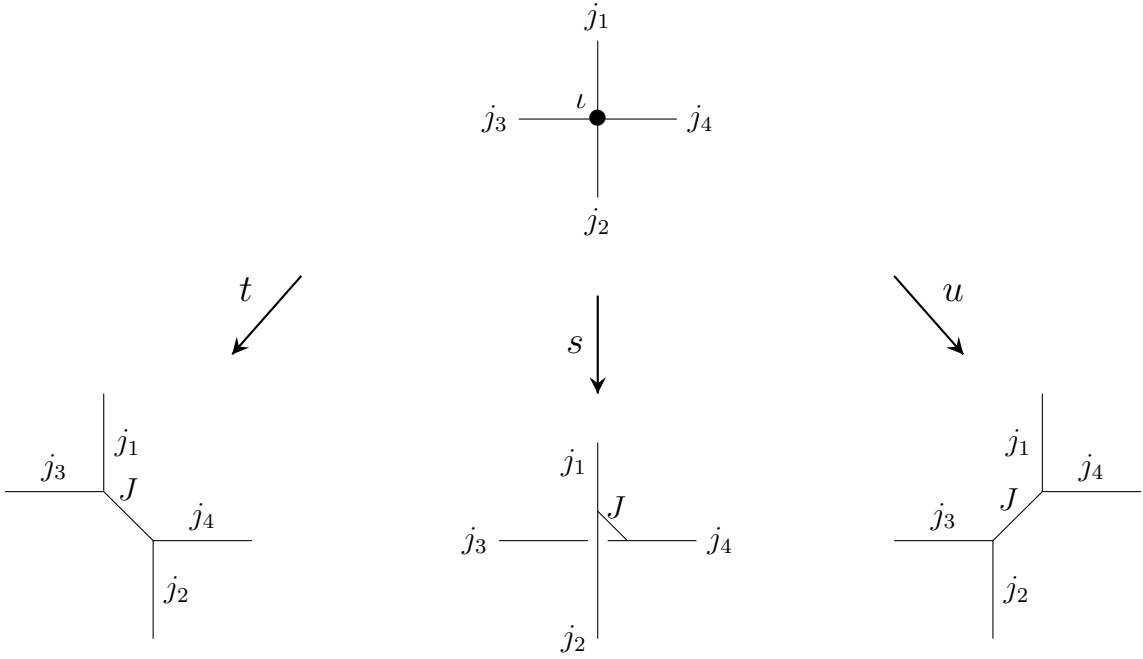
\begin{figure}[t]
	\begin{center}
		\begin{tikzpicture}[scale=1.3]
		
		\draw (-.8,0) -- node[pos=0, left]{$j_3$} node[pos=1, right]{$j_4$} (.8,0) ; \draw (0,-.8)-- node[pos=0, below]{$j_2$} node[pos=1, above]{$j_1$}(0,.8); \draw (0,0) node[scale=1.3]{$\bullet$} node[above left]{$\iota$}; 
		
		\draw[thick,decoration={markings,mark=at position 1 with {\arrow[scale=1.5,>=stealth]{>}}},postaction={decorate}]
		(0,-2+0.2) -- (0,-3+0.2) node[midway,left,scale=1.2]{$s$};
		\draw[thick,decoration={markings,mark=at position 1 with {\arrow[scale=1.5,>=stealth]{>}}},postaction={decorate}]
		(-3,-2+0.4) -- (-3.7,-2.8+0.4) node[midway,above left,scale=1.2]{$t$};
		\draw[thick,decoration={markings,mark=at position 1 with {\arrow[scale=1.5,>=stealth]{>}}},postaction={decorate}]
		(3,-2+0.4) -- (3.7,-2.8+0.4) node[midway,above right,scale=1.2]{$u$};
		
		\draw (-6,-4+0.2)--node[pos=0.5, above]{$j_3$} (-5,-4+0.2)-- node[pos=0.5, right]{$j_1$} (-5,-3+0.2); \draw (-5,-4+0.2) -- node[pos=0.4, above]{\;$J$} (-4.5,-4.5+0.2); \draw (-4.5,-5.5+0.2) -- node[pos=0.5, right]{$j_2$} (-4.5,-4.5+0.2) -- node[pos=0.5, above]{$j_4$} (-3.5,-4.5+0.2);
		
		\draw (-1,-4.5+0.2)--node[pos=0, left]{$j_3$}(-0.1,-4.5+0.2); \draw (0.1,-4.5+0.2)--node[pos=1, right]{$j_4$}(1,-4.5+0.2); \draw (0,-5.5+0.2)--node[pos=0, left]{$j_2$} node[pos=0.9, left]{$j_1$} (0,-3.5+0.2); \draw (0,-4.2+0.2)--node[pos=0.6, above]{$J$} (0.3,-4.5+0.2);
		
		\draw (3,-5+0.7)--node[pos=0.5, above]{$j_3$} (4,-5+0.7)-- node[pos=0.5, right]{$j_2$} (4,-6+0.7); \draw (4,-5+0.7)-- node[pos=0.4, above]{$J$\;} (4.5,-4.5+0.7); \draw (4.5,-3.5+0.7)-- node[pos=0.5, left]{$j_1$} (4.5,-4.5+0.7)-- node[pos=0.5, above]{$j_4$} (5.5,-4.5+0.7);
		
		\end{tikzpicture}
	\end{center}
	\caption{The three channels for splitting a 4-valent intertwiner into two 3-valents ones linked by an intermediate link carrying a spin $J$.}
	\label{chap4:fig:split_intertwiner}
\end{figure}

Let us explicitly consider the $s$-channel, corresponding to the $(12)-(34)$ pairing. In this case, the 4-valent intertwiner basis state with intermediate spin $J$ reads 
\begin{equation*}
	|\iota^{s|J}\ra
	=
	\f1{d_{J}}
	\sum_{\{m_{i\}},M} (-1)^{J+M}
	|\otimes_{i=1}^4(j_{i}m_{i})\ra \,
	\la (j_{1}m_{1})(j_{2}m_{2})| J,M \ra
	\la (j_{3}m_{3})(j_{4}m_{4})| J,-M \ra
	\,.
\end{equation*}
Here, we have taken the two sets of Clebsh-Gordan coefficients, recoupling $j_{1}$ and $j_{2}$ into $J$ on one side and recoupling $j_{3}$ and $j_{4}$ into $J$ on the other, and glued them using the $\su(2)$ structure map $\varsigma$ along the intermediate link.
This map identifies the spin $j$ representation with its conjugate, according to
\begin{equation}
D^j(\varsigma ) \, |j,m\ra=(-1)^{(j+m)}\,|j,-m\ra
\nn
\end{equation}
(see appendix \ref{App:4valent_intertwiner} for further details). 

By convention, the above formula describes an intertwiner for four links {\it outgoing} from a node $n$. Thus for gluing the intertwiners along auxiliary two-valent nodes positioned between two half-links, we need to insert again the $\su(2)$ structure map at each such node, possibly together with the insertion of a group element associated to this link.  

We can know evaluate the Ponzano-Regge amplitude specializing equation \eqref{chap4:eq:formula_to_compute} to our choice of boundary state. In other words, we glue and contract the intertwiners together along the lattice links, with a group element insertion along the links of the last slice of the torus: 
\begin{equation}
	\la Z_{PR}^{\cK}| \Psi_{j_l,\iota^J_n}  \ra
	=
	\int_{\SU(2)} d g \,
	\bigotimes_{n}\iota_n  \bullet_{\pp\cK^{*}} \Big{[}\bigotimes_{l\ne(N_{t},x)} D^{j_{l}}(\varsigma)  \bigotimes_{l=(N_{t},x)} D^{j_{l}}(g\varsigma)\Big{]}
\end{equation}
where $\bullet_{\pp\cK^{*}}$ again stands for a trace over the magnetic indices following the connectivity of the boundary square lattice. For fixed spins $j_{l}$ and intertwiners $\iota_{n}$, this is the integral of a polynomial over $\SU(2)$. Indeed the Wigner matrix elements $D^j(g)^{m'}{}_m$ are polynomials of degree $2j$ in the $\SU(2)$ group element $g$ (defined as a 2$\times$2 matrix). We recover the fact that the Ponzano-Regge amplitude always gives a finite result. In the rest of this section, we focus on the computation of the amplitude for a simple case.

\subsection{The homogeneous $J=0$ s-channel intertwiner case}

Consider the spin network basis state on the square lattice defined by a homogeneous assignation of spins $j_l=j$ to all links and a homogeneous 4-valent intertwiner with spin $J=0$ in the $s$-channel at all nodes. This choice of intertwiner not only allows to completely decouple the horizontal and vertical links, allowing us to explicitly evaluate the partition function and its asymptotic limit, but it also comes with a clear geometrical interpretation.

\subsubsection{Geometric interpretation of the $J=0$ $s$-channel intertwiner}

This intertwiner is given by
\begin{align*}
	|\iota^{s|0}\ra
	& =
	\f1{d_{j}}
	\sum_{\{m_a\}_{a=1,\dots,4}} (-1)^{2j+m_1+m_3} \delta_{m_1+m_2,0} \delta_{m_3+m_4,0}
	| (j,m_1)(j,m_2)(j,m_3)(j,m_4)\ra\nn\\
	&= \f1{d_{j}}
	\sum_{m,\tilde{m}} (-1)^{2j+m+\tilde{m}}
	| (j,m)(j,-m)\ra_{(12)}\otimes| (j,\tilde{m})(j,-\tilde{m})\ra_{(34)}\,.
\end{align*}
Notice that all four edge vectors of the quadrilateral plaquette have equal norm, $\la \vJ_{a}^{2} \ra=j(j+1)$, for $a=1,2,3,4$. Moreover, the $J=0$ on the intermediate link implies that the edge 2 is exactly opposite to the edge 1, while the edge 3 is exactly opposite to  the edge 4. Indeed, one can check the following expectation values from the intertwiner formula above:
\begin{equation}
	\la \vJ_{1}\cdot\vJ_{2} \ra=\la \vJ_{3}\cdot\vJ_{4} \ra=-j(j+1)
	\,,
\end{equation}
Moreover, one can further compute the expectation values of the angles between the other pairs of vectors:
\begin{equation}
	\la \vJ_{1}\cdot\vJ_{3} \ra=\la \vJ_{1}\cdot\vJ_{4} \ra=\la \vJ_{2}\cdot\vJ_{3} \ra=\la \vJ_{2}\cdot\vJ_{4} \ra=0
	\,.
	\label{chap4:eq:mean}
\end{equation}
This means that, in average, the edge 1 (and thus also the edge 2) is orthogonal to the edges 3 and 4, and therefore the intertwiner seems to be dual to a geometrical square. However, this is only true in average. Computing the variance of the scalar products between the would-be orthogonal edges,
\begin{equation}
	\la (\vJ_{1}\cdot\vJ_{3})^{2} \ra=\f13j^{2}(j+1)^{2}\; ,
	\label{chap4:eq:variance}
\end{equation}
one realizes that it takes its maximal value.

Thus, the $J=0$ $s$-channel intertwiner is the furthest possible from a semi-classical state with a good geometrical interpretation. More precisely, while the spin-0 intertwiner in the $s$ channel defines a maximal entanglement between the opposite edges (1 and 2 on the one hand, and 3 and 4 on the other), the other pairings are left to be independent random vectors.\footnote{In fact, considering two random vectors $\hat{u}$ and $\hat{v}$ on the 2-sphere of radius $r$, $\mathbb S^2_r$, one can easily compute:
\begin{equation*}
	\int_{(\mathbb{S}^{2}_{r})^{\times 2}} \f{d^{2}\hat{u}}{4\pi}\f{d^{2}\hat{v}}{4\pi}\,
	\big(
	\hat{u}\cdot\hat{v}
	\big{)}
	=
	0
	\,,\qquad
	\int_{(\mathbb{S}^{2}_{r})^{\times 2}} \f{d^{2}\hat{u}}{4\pi}\f{d^{2}\hat{v}}{4\pi}\,
	\big(
	\hat{u}\cdot\hat{v}
	\big{)}^{2}
	=
	\f13 r^{4}
	\; ,
\end{equation*}
which are the corresponding classical calculations for the expectation values \eqref{chap4:eq:mean} and for the variances \eqref{chap4:eq:variance}.} More correctly, this intertwiner can be said to represent a superposition of all possible parallelograms centred on the rectangular one.

Even if the $J=0$ $s$-channel intertwiner is not fully peaked on the intrinsic geometric data encoded at a node, it nevertheless defines a legitimate quantum state, yielding a well-defined Ponzano-Regge amplitude and offering interesting insight in its structure and potential asymptotic limit. In the next chapter, we will consider coherent intertwiner states which have semi-classical properties in the intertwiner degree of freedom, defining coherent rectangular plaquettes from coherent intertwiners.

\subsubsection{Evaluation of the Ponzano-Regge amplitude}

Inserting the $J=0$ $s$-channel intertwiner in the Ponzano-Regge amplitude results in the complete decoupling of the horizontal from the vertical links. Absorbing the $\su(2)$ structure maps $\varsigma$ in the intertwiners, they become
\begin{equation}
\left( \iota^{s|0} \otimes D^j(\varsigma) \otimes D^j(\varsigma)\right)^{m',\tl m'}{}_{m,\tl m} \,\propto\, \delta^{m'}_m \delta^{\tl m'}_{\tl m} ,
\end{equation}
where $(m',m)$ are the magnetic indices associated to the two vertical links, and $(\tl m',\tl m)$ the magnetic indices associated to the two horizontal ones.  
This formula is evident from the graphical representation of the $s$-channel given in figure \ref{chap4:fig:split_intertwiner}: since a 0-spin link is mathematically the same as no link at all, the 0-spin intertwiner means that the two links go through the node without interacting with each other. 
As illustrated in figure \ref{chap4:fig:graph_integration}, this result is $N_{t}$ horizontal loops of spin $j$, completely decoupled from a number of vertical loops carrying the group element $g$ and winding around the torus with twist $N_{\gamma}$.

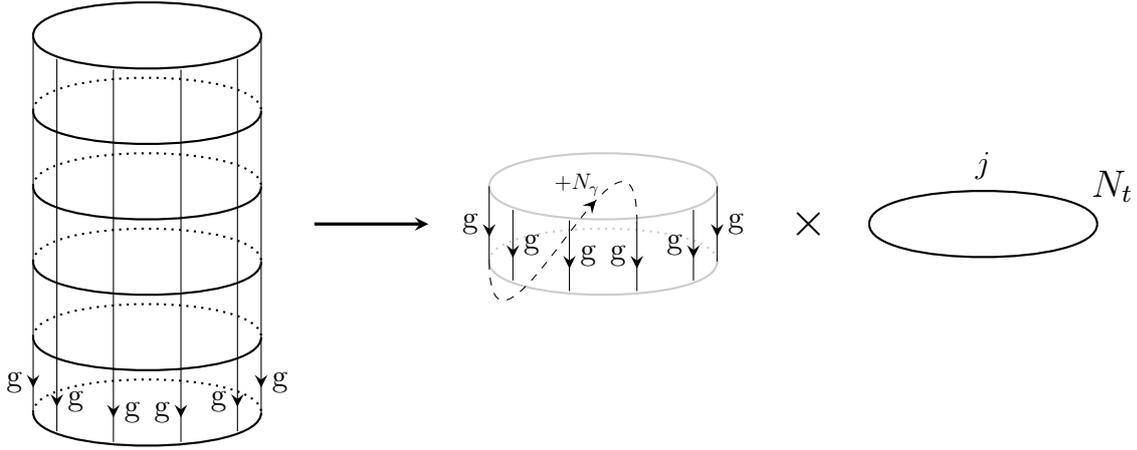
\begin{figure}[t]
	
	\begin{tikzpicture}
	\coordinate(ai) at (0,0);
	\coordinate(bi) at (0,5);
	\draw[dotted,thick,in=90,out=90,looseness=.5] (0,0) to (3,0);
	\draw[thick,in=-90,out=-90,looseness=.5] (0,0) to node[pos=0.2,inner sep=0pt](a1){}  node[pos=0.4,inner sep=0pt](a2){}  node[pos=0.6,inner sep=0pt](a3){}  node[pos=0.8,inner sep=0pt](a4){} (3,0);
	\draw[in=90,out=90,looseness=.5,thick] (0,5) to  (3,5);
	\draw[thick,in=-90,out=-90,looseness=.5] (0,5) to node[pos=0.2,inner sep=0pt](b1){}  node[pos=0.4,inner sep=0pt](b2){}  node[pos=0.6,inner sep=0pt](b3){}  node[pos=0.8,inner sep=0pt](b4){} (3,5);
	\coordinate(af) at (3,0);
	\coordinate(bf) at (3,5);
	
	\foreach \i in {1,...,4}{
		\draw[dotted,thick,in=90,out=90,looseness=.5] (0,\i) to (3,\i);
		\draw[thick,in=-90,out=-90,looseness=.5] (0,\i) to (3,\i);
	}
	
	\draw[decoration={markings,mark=at position 0.1 with {\arrow[scale=1.5,>=stealth]{<}}},postaction={decorate}] (ai) -- node[left,pos=0.08]{g}(bi);
	\draw[decoration={markings,mark=at position 0.1 with {\arrow[scale=1.5,>=stealth]{<}}},postaction={decorate}] (a1) --node[right,pos=0.08]{g} (b1);
	\draw[decoration={markings,mark=at position 0.1 with {\arrow[scale=1.5,>=stealth]{<}}},postaction={decorate}] (a2) -- node[right,pos=0.08]{g} (b2);
	\draw[decoration={markings,mark=at position 0.1 with {\arrow[scale=1.5,>=stealth]{<}}},postaction={decorate}] (a3) -- node[left,pos=0.08]{g} (b3);
	\draw[decoration={markings,mark=at position 0.1 with {\arrow[scale=1.5,>=stealth]{<}}},postaction={decorate}] (a4) --node[left,pos=0.08]{g} (b4);
	\draw[decoration={markings,mark=at position 0.1 with {\arrow[scale=1.5,>=stealth]{<}}},postaction={decorate}] (af) -- node[right,pos=0.08]{g} (bf);
	
	\draw[very thick,->,>=stealth] (3.7,2.5) --(5.2,2.5);
	
	\coordinate(ci) at (6,2); 
	\coordinate(di) at (6,3);
	\coordinate(cf) at (9,2);
	\coordinate(df) at (9,3);
	
	\draw[decoration={markings,mark=at position 0.7 with {\arrow[scale=1.5,>=stealth]{>}}},postaction={decorate}, dashed](6,2) to[bend right,in=75,out=-105,looseness=2.5] node[scale=0.9,pos=0, left]{
	} node[scale=0.7,pos=0.75,left]{$+N_\gamma\;$} node[pos=1,above,scale=0.9]{%
	} (7.945,2.55) ;
	
	\draw[opacity=0.2,dotted,thick,in=90,out=90,looseness=.5] (ci) to (cf);
	\draw[opacity=0.2,thick,in=-90,out=-90,looseness=.5] (ci) to node[pos=0.2,inner sep=0pt](c1){}  node[pos=0.4,inner sep=0pt](c2){}  node[pos=0.6,inner sep=0pt](c3){}  node[pos=0.8,inner sep=0pt](c4){} (cf);
	\draw[opacity=0.2,in=90,out=90,looseness=.5,thick] (di) to  (df);
	\draw[opacity=0.2,thick,in=-90,out=-90,looseness=.5] (di) to node[pos=0.2,inner sep=0pt](d1){}  node[pos=0.4,inner sep=0pt](d2){}  node[pos=0.6,inner sep=0pt](d3){}  node[pos=0.8,inner sep=0pt](d4){} (df);
	
	\draw[decoration={markings,mark=at position 0.5 with {\arrow[scale=1.5,>=stealth]{<}}},postaction={decorate}] (ci) -- node[left,pos=0.5]{g}(di);
	\draw[decoration={markings,mark=at position 0.5 with {\arrow[scale=1.5,>=stealth]{<}}},postaction={decorate}] (c1) --node[right,pos=0.5]{g} (d1);
	\draw[decoration={markings,mark=at position 0.5 with {\arrow[scale=1.5,>=stealth]{<}}},postaction={decorate}] (c2) -- node[right,pos=0.5]{g} (d2);
	\draw[decoration={markings,mark=at position 0.5 with {\arrow[scale=1.5,>=stealth]{<}}},postaction={decorate}] (c3) -- node[left,pos=0.5]{g} (d3);
	\draw[decoration={markings,mark=at position 0.5 with {\arrow[scale=1.5,>=stealth]{<}}},postaction={decorate}] (c4) --node[left,pos=0.5]{g} (d4);
	\draw[decoration={markings,mark=at position 0.5 with {\arrow[scale=1.5,>=stealth]{<}}},postaction={decorate}] (cf) -- node[right,pos=0.5]{g} (df);
	
	\node[scale=1.5] at (10.2,2.5){$\times$};
	
	\coordinate(ci) at (11,2.5);
	\coordinate(cf) at (14,2.5);
	\draw[thick,in=90,out=90,looseness=.5] (ci) to node[pos=0.5,above]{$j$} (cf);
	\draw[thick,in=-90,out=-90,looseness=.5] (ci) to (cf);
	
	\node[scale=1.2] at (14.2,3){$N_{t}$};		
	
	\end{tikzpicture}
	
	\caption{The figure indicates the loops that arise with the choice of s-channel $J=0$ intertwiner. Due to this choice of intertwiner, the loops decouple into vertical and $N_t$ horizontal ones. The winding number $\W$  and total number $K$ of the vertical loops on the cylinder is determined by the  shift $N_\gamma$ in the periodic identification of the cylinder and the spatial size $N_x$ of the cylinder.}
	\label{chap4:fig:graph_integration}
\end{figure}

Whereas horizontal loops simply factor out, with the number of time slices $N_{t}$ only contributing with an overall volume factor, vertical loops acquire a non-trivial structure due to the interplay between the spatial size $N_{x}$ and the shift number $N_{\gamma}$. Putting these two ingredients together, we obtain the Ponzano-Regge amplitude in the form 
\begin{equation}
	\la Z_{PR}^{\cK}| \Psi_{j_l,\iota^{s|0}}  \ra
	=
	\frac{d_{j}^{N_{t}}}{d_j^{N_t N_x}}
	\int_{\SU(2)} dg \; \chi_{j}(g^{\W})^{K}
	= 
	\frac{1}{d_j^{N_t (N_x-1)}}
	\frac{2}{\pi} \int_{0}^{\pi} \text{d}\theta \;\sin^2(\theta) \; \chi_{j}(\W \theta)^K \; ,
	\label{chap4:eq:amplitude_J_0_before}
\end{equation}
where we recall that $\chi_{j}$ is the character in the spin-$j$ representation, which in terms of the  (half) class angle $\theta$ of $g$ reads
\begin{equation}
\chi_{j}(g)\equiv\chi_j(\theta) = \f{\sin d_{j}\theta}{\sin\theta} \; ,
\end{equation}
and where we recall that $K = GCD(N_x,N_\gamma)$ is the number of independent close vertical loops and $W =N_x/K $ their winding numbers. In \eqref{chap4:eq:amplitude_J_0_before}, the volume factor $d_{j}^{-N_{t}N_{x}}$ comes from the normalization of the intertwiner $\iota^{s|0}$. The factor $d_{j}^{N_{t}}$, on the other hand, comes from the contribution of $\chi_{j}(\id)$ given by each time slice.

We have two ways to evaluate this integral. We can either express it in terms of random walks and compute it exactly, or we can extract its asymptotic behaviour at large $K$ by a saddle point approximation.
Before proceeding, it is useful to notice that the integral vanishes for odd values of $2jN_x$, since $\chi_j(W(\pi-\theta))^K = \chi_j(W(\theta-\pi))^K =  (-1)^{2j K W} \chi_j(W\theta)^K$ and $N_x \equiv W K$. For the same reason, whenever the integral does not vanish, the integrand is periodic of period $\pi$, and the integration domain can be compactified to a circle $\cong\mathbb S_1$. 

\medskip
{\bf Exact evaluation}
First aiming for an exact evaluation, we expand the character into a sum over exponentials by expressing the trace of the group element $g$ in the $|j,m\ra$ basis of the Hilbert space of the spin-$j$ representation,
\begin{equation}
	\chi_{j}(\theta)
	=
	\f{\sin d_{j}\theta}{\sin\theta}
	=\sum_{m=-j}^{+j} \E^{2\I m\theta}
	\nn\,,
\end{equation}
from which
\begin{equation}
	\la Z_{PR}^{\cK}| \Psi_{j_l,\iota^{s|0}}  \ra
	=
	\frac{1}{4\pi d_j^{N_t (N_x-1)}}
	\int_{0}^{2\pi} {\dd}\theta\,
	(2-\E^{2\I \theta}-\E^{-2\I \theta})
	\sum_{m_{1},..,m_{K}} \E^{2\I \sum_{k=1}^{K}m_{k}\W\theta}\,.
	\label{chap4:eq:recouplingJ0_channel_general_amp}
\end{equation}
This integral has a straightforward combinatorial interpretation: we are counting the number of returns after $K$ steps to either the origin or to the positions $\pm 2$, of a random walk characterized by steps of arbitrary size between $-2\W j$ and $+2\W j$. One must always distinguish the case of a half-integer spin $j\in(\mathbb N+\frac12)$ for which each step is an odd multiple of $\W$, from the case of integer spin $j\in\mathbb N$ for which each step is an even multiple of $\W$. Due to the measure factor $(2-\E^{2\I \theta}-\E^{-2\I \theta})$, one must also distinguish the special cases $\W=1$ and $\W=2$ from the generic case $\W\ge 3$.  We also notice that the case $\W=1$ corresponds to computing the dimension of the intertwiner space between $K=N_x$ copies of the spin $j$. 

The computation is simple when considering the case $K=1$. For $\W=N_{x}\ge 3$, we do not have to take into account the terms in $\E^{\pm2\I \theta}$, and
\begin{equation}
	\la Z_{PR}^{\cK}| \Psi_{j_l,\iota^{s|0}}  \ra_{ K=1 \atop N_x\ge3}
	=
	\frac{1}{2\pi d_j^{N_t (N_x-1)}}
	\int_{0}^{2\pi} {\dd}\theta\,
	\sum_{m=-j}^{+j} \E^{2\I m N_x\theta}
	=
	\begin{cases}
	1/d_j^{N_t (N_x-1)} \quad &\text{if} \quad j \in \mathbb{N} \\
	0 \quad &\text{if} \quad j \in {\N+ \f12}
	\end{cases}.
	\label{chap4:eq:irra_case}
\end{equation}
When $\W=N_x=1$, the integral always vanishes (as soon as the spin is non-zero, $j\ne 0$). When $\W=N_x=2$, one gets half of the volume factor $d_j^{-N_t (N_x-1)}$ when the spin $j$ is an integer and minus half of this volume factor when $j$ is a half-integer. %
The exact expression for arbitrary $K$ as a rational function in the spin $j$ is given in appendix \ref{App:exact_computation_recouplingJ0} and is related to the Fourier series expansion of the cardinal sine function $\mathrm{sinc}\,\theta\equiv\sin \theta/\theta$ and to the Duflo map coefficients for $\SU(2)$.

\medskip

{\bf Asymptotic limit} 
Here, we are mainly interested in the asymptotic limit. We define it as a double scaling limit where both $N_{x}$ and $N_{\gamma}$ are sent to infinity while their ratio $2 \pi \frac{N_\gamma}{N_x} \rightarrow \gamma \in \mathbb{R}$ is kept finite. 
Although similar in spirit to a lattice refinement limit, this should be more correctly considered as an asymptotic limit: since the spin $j$ is fixed, the spatial size $\sim j N_{x}$ diverges.
At this point, it becomes clear that we need to distinguish the cases where $\gamma$ is rational or not:

\begin{itemize}
	
	\item {\it Irrational twist angle} $\gamma\in 2\pi(\R\setminus\Q)$
	
	In this case we approximate $\gamma$ via a sequence of pairs of integers $\left(N_{\gamma}^{(n)},N_{x}^{(n)}\right)_{n\in\N}$ which are always prime with each other (e.g. taking the continued fraction approximation):
	\begin{eqnarray}
	2\pi \frac{N_\gamma^{(n)}}{N_x^{(n)}} \underset{n\rightarrow\infty}\longrightarrow {\gamma}
	\,,\quad
	&&N_{\gamma,x}^{(n)}\underset{n\rightarrow\infty}\longrightarrow\infty
	\,,\quad
	K^{(n)}:=\mathrm{GCD}\left(N_{\gamma}^{(n)},N_{x}^{(n)}\right)=1
	\,,\nn\\
	&&\quad
	\text{and}\quad
	\W^{(n)}=N_x^{(n)}\to\infty.
	\end{eqnarray}
	This is exactly the case computed above in equation \eqref{chap4:eq:irra_case}. For half-integer spins, the amplitude always vanishes. For integer spins, putting aside the volume factor $d_{j}^{-N_{t}(N_{x}-1)}$, the amplitude  remains finite, always equal to 1.

	\item {\it Rational twist angle} $\gamma\in 2\pi \Q$
	
	In this case, the twist angle can be implemented exactly on a whole sequence of discrete lattices, at least provided one chooses $N_x^{(n)}$ appropriately. To do this, one first identifies the corresponding minimal fraction and then  considers multiples of its numerator and denominator:
	\begin{equation*}
	\gamma={2\pi}\f{P}{Q}\quad\text{with}\quad \mathrm{GCD}(P,Q)=1
	\quad\text{and hence}\quad
	\left(N_{\gamma}^{(n)},N_{x}^{(n)}\right)=(nP,nQ).
	\end{equation*}
	This case is combinatorially the reverse situation compared to the case of an irrational angle, since the number of loops $K$ grows to infinity while the winding number remains constant:
	\begin{equation*}
	N_{\gamma,x}^{(n)}\underset{n\rightarrow\infty}\longrightarrow\infty
	\,,\qquad
	K^{(n)}:=\mathrm{GCD}\left(N_{\gamma}^{(n)},N_{x}^{(n)}\right)=n\rightarrow\infty
	\,,\quad\text{and}\quad
	\W^{(n)}:=\f{N_{x}^{(n)}}{K^{(n)}}\equiv Q
	\,.
	\end{equation*}
	In this case, the simplest way to evaluate the Ponzano--Regge partition function is to compute its saddle point approximation at large $n$. We rewrite the integral as 
	\begin{equation}
	\la Z_{PR}^{\cK}| \Psi_{j_l,\iota^{s|0}} \ra
	=
	\frac{2}{\pi d_j^{N_t (N_x-1)}} \int_{0}^{\pi} \text{d}\theta \sin^2(\theta) \; \E^{\f{N_x}{\W} \ln(\chi_{j}(\W\theta))} .
	\end{equation}
	As noticed above, whenever non-vanishing, the integral can be considered defined on a circle with the points $\theta=0$ and $\theta=\pi$ identified.
	On this circle, the exponent $\ln(\chi_{j}(\W\theta))$ reaches its maximal value of $\ln(d_{j})$ exactly $\W$ times at the locations $\theta_{l}= \f{\pi l}{\W}$, with $l=0,..,(\W-1)$. The second derivative at those points is given by the $\SU(2)$ Casimir:
	\begin{equation}
	\f12 \left.\f{\pp^{2}\ln\chi_{j}(\W\theta)}{\pp\theta^{2}}\right|_{\theta=\theta_{l}}=-\f16(d_j^2-1)\W^2 = -\f{4}{6}\W^{2}j(j+1)\,.
	\nn
	\end{equation}
	In  the special case $\W=1$, we have a unique stationary point at $\theta=0$, which gives the asymptotics (recall that for $\W=1$, $N_x=K$):
	\begin{equation}
		\la Z_{PR}^{\cK}| \Psi_{j_l,\iota^{s|0}} \ra_{\W=1}
		\underset{ N_x\rightarrow\infty}\sim
		\sqrt{\f{3}{2\pi}}\,\f{12 d_{j}^{N_x} {N_x}^{-\f32} }{d_j^{N_t (N_x-1)} (d_j^2-1)^{\f32}}
		\, 
	\end{equation}
	Here, the $N_x^{-\f32}$ decrease is due to the measure factor, $\sin^{2}\theta \sim\theta^2$ around the saddle\footnote{This is analogous to a standard log-correction to the black hole entropy computed from the dimensions of the intertwiner spaces	 \cite{Livine:2005mw,Livine:2012cv,Kaul:2012pf}.}.
	In the generic case $\W\ge 2$, we sum over all the maxima and get (recall $N_x=\W K$): 
	\begin{equation*}
		\sum_{l=1}^{\W-1}\sin^{2}\f{\pi l}{\W}=\f \W2
	\end{equation*}
	which implies that
	\begin{equation}
		\la Z_{PR}^{\cK}| \Psi_{j_l,\iota^{s|0}}  \ra_{\W\geq 2}
		\underset{ N_x\rightarrow\infty}\sim
		{
			=\sqrt{\frac{3 }{2\pi }}\,\f{2 d_{j}^{\f{N_x}{\W}} \left(\f{N_x}{\W}\right)^{-\f12}}{d_j^{\f{N_t (N_x-\W)}{\W}}\left(d_j^2-1\right)^{\f12}} 
		}.
		\label{chap4:eq:easy_case_asympt}
	\end{equation}
	Notice the usual decrease in $(N_x/W)^{-\f12}\equiv K^{-\f12}$ expected for a random walk. The same can be said for volume factor $d_j^K$, since it corresponds to $K$ steps with $d_j$ possibilities. 
	Figure \ref{chap4:fig:exact_to_asympt_plot} compares this asymptotics to the exact value of the partition function, and shows that a good approximation is already obtained for $K=N_x/\W\sim 10$.
	
	We conclude the discussion of a rational twist angle, by observing that in this case, aside for the volume factor $d_j^{-N_t (N_x-1)}$, the partition function {exponentially {\it diverges} in the asymptotic limit $n\to\infty$, where $N_x^{(n)}=\W K^{(n)}\rightarrow\infty$. }
	
\end{itemize}

\begin{figure}[!]
	\centering
	\includegraphics[scale=1]{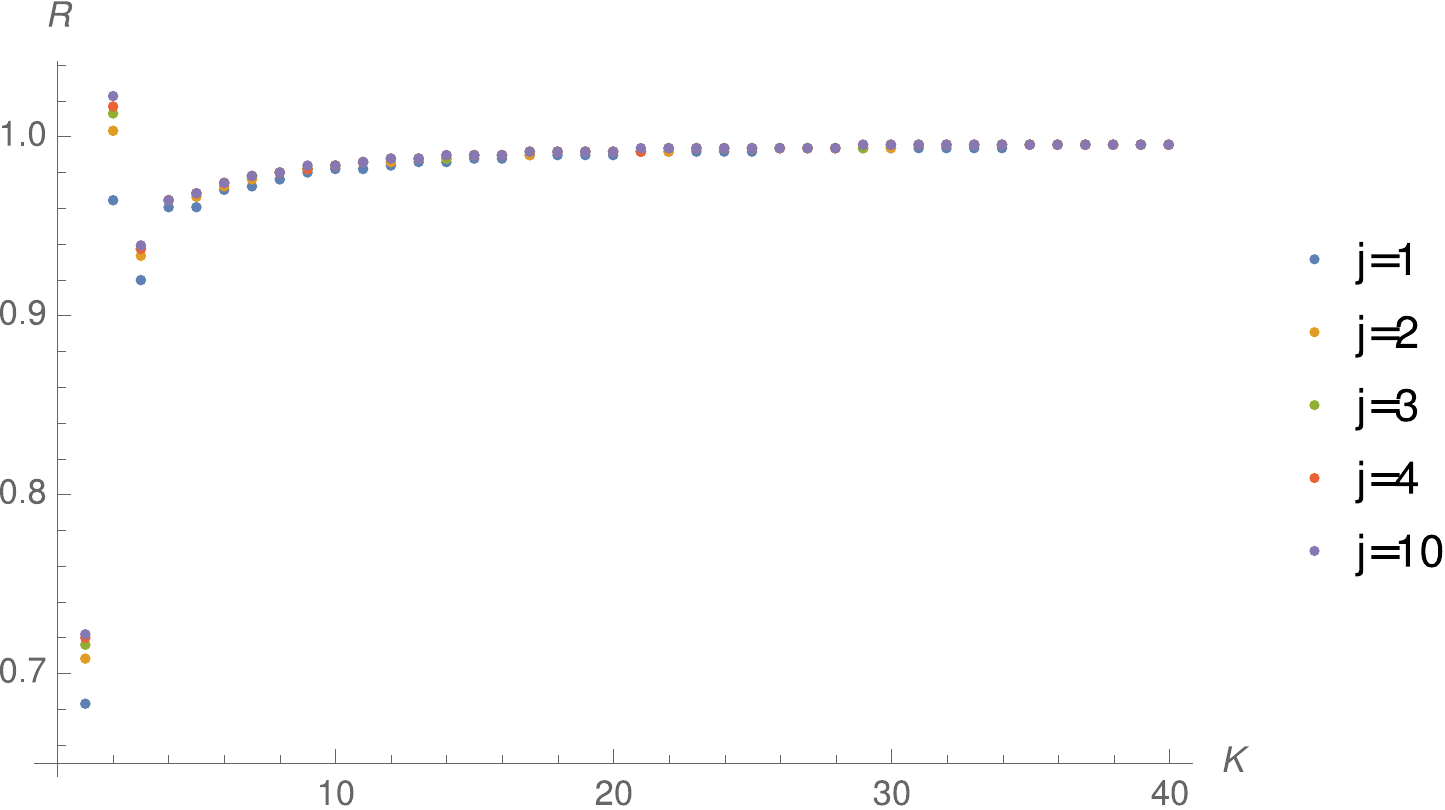}	
	\caption{ This shows a plot of the ratio $R$ of the exact partition function divided by its asymptotic form (\ref{chap4:eq:easy_case_asympt}) as a function of $K$ for several choices of integer spins. The asymptotic gives a good approximation already for $K\sim 10$. The case of half spins shows a similar structure, the only difference is that the partition function vanishes for all odd $K$.)}
	\label{chap4:fig:exact_to_asympt_plot}
\end{figure}

To summarize, we focussed on the asymptotics of the renormalized Ponzano-Regge amplitude $d_j^{N_t (N_x-1)}\,\la \text{PR} | \Psi_{j,\iota^{s|0}}\ra$, where the volume factor $d_j^{N_t (N_x-1)}$ stabilizes the limiting process (similarly to a wave-function renormalization in quantum field theory). In the continuum limit, irrational angles correspond to a trivial renormalized Ponzano-Regge amplitude, always equal to 1, while rational twists lead to divergent amplitudes and thus signify a pole in the asymptotic partition function.

This difference between rational and irrational twist angles is a crucial feature of the BMS character formula for the three-dimensional quantum gravity partition function as we explained previously.  We do not however obtain the exact formula. This is due to the fact that we are not working with a semi-classical boundary state (even in the asymptotical limit) since, as previously discussed, the $J=0$ $s$-channel intertwiner is as far as classicality as possible. 

Nonetheless, we see that even for such a deeply quantum spin network state, we obtain the correct pole structure for the {\it asymptotic} partition function. And this happens despite the fact that the partition function is finite for a finite-sized boundary.  There is, however, a key remark that must be done here. It is not possible to look at a true continuum limit in this context. We are considering a limit where the spin $j$ is fixed. Hence, intrinsically, the boundary is still discrete. Even considering the deep quantum case where the spin is in its fundamental representation, the continuum limit is not reached. Thus the question remains: how much can the previous result can be compared to the continuum result? In the last chapter, we will provide a formulation of the Ponzano-Regge amplitude with a boundary state that will allow to take a true continuum limit.

\subsection{Spin $\f{1}{2}$ chain and integrable model}

In the previous section, we have explicitly computed the Ponzano-Regge partition function for a boundary spin network state with fixed (but arbitrary) spin $j$ and the special choice of a $J=0$ $s$-channel intertwiner. Considering small spins, the Hilbert spaces have low dimensionality and the characterization of the intertwiners is therefore simple. In this section, we will consider such a regime, where the spin network state features a uniform choice of the smallest possible spin $j=\f12$. The low dimensionality of the space on intertwiners allows us to consider an arbitrary, but homogeneous on the lattice, intertwiner.

Geometrically, this boundary state corresponds to a lattice with all edge lengths set at the shortest possible allowed distance. In a sense, we are probing the deep quantum regime of the boundary geometry. For what concerns the dual theory, we will find that this boundary state maps exactly onto the 6-vertex (or ``ice-type'') model of statistical physics, with couplings defined by the choice of intertwiner. This will provide the archetype of the mapping of spin network evaluations and quantum gravity amplitudes onto condensed matter models as it was started for the trivial topology in \cite{Dittrich:2013jxa,Bonzom:2015ova}.
\\

Henceforth, we will suppose that all spins have been set to $j=\f12$. The space of 4-valent intertwiners with all spins $1/2$ has dimension two. Choosing a channel, $s$, $t$ or $u$, an orthonormal basis is provided by the two states with intermediate spins $J=0$ and $J=1$. Explicit formulas for those intertwiner states are given in details in appendix \ref{App:4valent_intertwiner}. Instead of considering such a basis $|0\ra_{s},|1\ra_{s}$, it is more convenient to consider the over-complete basis span by the 0-spin intertwiners in each of the possible three channels. Namely, the elements of the over-complete basis are $\{|0\ra_s , |0 \ra_t ,|0 \ra_u\}$, corresponding to the three pairings $(12)-(34)$, $(13)-(24)$ and $(14)-(23)$, see figure \ref{chap4:fig:split_intertwiner}. By making use of the low dimensionality of the space of intertwiner, we still get a basis by restricting us to the $0$ recoupling spins.

An arbitrary interwiner in this basis is defined considering three couplings $(\lambda,\mu,\rho)$ by
\begin{equation}
	|\iota\ra=\lambda|0\ra_s +\mu |0 \ra_t +\rho|0 \ra_u	\; .
\end{equation}
Since the basis is over-complete, the three elements are not independent. We have the relation $|0\ra_{s}-|0\ra_{t}+|0\ra_{u}=0$. Using this equality, we get that the intertwiner
\begin{equation}
	|\iota\ra=(\lambda+\eta)|0\ra_s +(\mu-\eta) |0 \ra_t +(\rho+\eta)|0 \ra_u
\end{equation}
does not actually depend on $\eta$ and any choice can be done. In the following, we consider the case where $\eta = \mu$, such that the intertwiner reads $|\iota \ra = \lambda |0\ra_s + \rho |0 \ra_u$. The intertwiner can be graphically represented, see figure \ref{chap5:fig:spin12:intetwiner}, allowing us to get a nice interpretation of its action on the lattice. Recall that, geometrically, the $s$-channel intertwiner represents a (maximally fuzzy) square. Mixing it with a $u$-channel intertwiner corresponds to turning the square into (maximally fuzzy) parallelograms with the angle between adjacent edges depending on the ratio $\rho/\lambda$.
\begin{figure}[h!]
	\begin{center}
	\begin{tikzpicture}
	\draw (-0.6,0) node {$|\iota[\lambda,\rho]\ra$ :=}; \draw (0.5,0) node{$\lambda$};
	
	\draw (1.5,-0.5) -- node[pos=0,scale=0.7,below]{$2$} node[pos=1,scale=0.7,above]{$1$} (1.5,0.5); \draw (1,0)-- node[pos=0,scale=0.7,left]{$3$} (1.4,0); \draw (1.6,0)-- node[pos=1,scale=0.7,right]{$4$} (2,0);
	
	\draw (2.8,0) node{$+ \, \rho$};
	
	\draw[rounded corners=5] (4,.5) --node[pos=0,scale=0.7,above]{$1$} (4,0)--(4.5,0)node[pos=1,scale=0.7,right]{$4$} ;
	\draw[rounded corners=5] (4,-0.5)-- node[pos=0,scale=0.7,below]{$2$} (4,0)--(3.5,0) node[pos=1,scale=0.7,left]{$3$};	
	\end{tikzpicture}
	\end{center}
	\caption{An arbitrary 4-valent intertwiner between four spins $\f12$ decomposes onto the non-orthogonal basis of 0-spin intertwiners in the $s$ and $u$ channel,  $|0\ra_s$ and  $|0 \ra_u$, which can be represented as the lines crossing or bending by the vertex  without interacting.}
	\label{chap5:fig:spin12:intetwiner}
\end{figure}

We focus now on a given time slice. We can expand the product of intertwiners over all nodes into a sum of configurations with each node coming equipped with either a $|0\ra_s$ or a $|0 \ra_u$ intertwiner. The $s$-channel goes straight through the time slice, while the $u$ channel bends the line which propagates along the time slice until it reaches the next $u$ channel intertwiner to exit the time slice. This is illustrated in figure \ref{chap5:fig:time_slice}.
\begin{figure}[!htb]
	\begin{center}
	\begin{tikzpicture}[scale=1]
	\draw (0,.7)--(0,.1); \draw[->,>=stealth] (0,-.1)--(0,-.7);
	\draw (0+.5,.7)--(0+.5,.1); \draw[->,>=stealth] (0+.5,-.1)--(0+.5,-.7);
	\draw (0+1.5,.7)--(0+1.5,.1); \draw[->,>=stealth] (0+1.5,-.1)--(0+1.5,-.7);
	\draw[very thick,rounded corners=5,->,>=stealth] (0+1,.7)--(0+1,0) -- (0+2,0)--(0+2,-.7);
	\draw (0+2.5,.7)--(0+2.5,.1); \draw[->,>=stealth] (0+2.5,-.1)--(0+2.5,-.7);
	\draw (0+3,.7)--(0+3,.1); \draw[->,>=stealth] (0+3,-.1)--(0+3,-.7);
	\draw[very thick,rounded corners=5,->,>=stealth] (0+2,.7)--(0+2,0) -- (0+3.5,0)--(0+3.5,-.7);
	\draw[very thick,rounded corners=5,->,>=stealth] (0+3.5,.7)--(0+3.5,0) -- (0+4,0)--(0+4,-.7);
	\draw (0+4.5,.7)--(0+4.5,.1); \draw[->,>=stealth] (0+4.5,-.1)--(0+4.5,-.7);
	\draw[very thick,rounded corners=5] (0+4,.7)--(0+4,0) -- (0+5,0);
	\draw[very thick,rounded corners=5,->,>=stealth] (0-.5,0)--(0+1,0) -- (0+1,-0.7);
	
	\draw[thick,dotted] (0+5,0)--(0+5.4,0);
	\draw[thick,dotted] (-.5-.4,0)--(0-.5,0);
	\end{tikzpicture}
	\end{center}
	\caption{One time slice: at every node, we insert either a $s$-channel intertwiner, in which case the horizontal and vertical links decouple, or a $u$-channel intertwiner in which case the incoming link bends and propagates along the time slice until it reaches the next $u$-channel intertwiner, where it bends again and propagates to the next time slice. To build the whole partition function, we need to compose such time slices together, insert the group element $g\in\SU(2)$ on all the lines going through the last time slice, take into account the twist when gluing back the final time slice with the initial one, and finally sum over all possible assignments of $s-$ and $u$-channel intertwiners at all the nodes.}
	\label{chap5:fig:time_slice}
\end{figure}
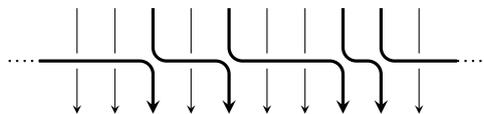

To compute the full partition function, we have to stack time slices together. For the last time slice, we need to insert the remaining $\SU(2)$ holonomy and then perform the gluing while taking into account the twist parameter. At this point, pretty much as in the previous section, we have to follow the lines across the nodes and time slices to see the loops that they form. The computation largely reduces to a purely combinatorial problem.
Notice that, whereas the choice of a purely $J=0$ $s$-channel intertwiner ($\rho=0$) lead to a factorization of the time slices, this is not the case anymore for an arbitrary intertwiner. The time slices are now non-trivially coupled to each other.

\subsubsection{The simplest case: $N_\gamma = 0$ and $N_t=1$}
We solve this combinatorial problem in the simple case of a vanishing twist, $N_{\gamma}=0$  and $N_t=1$, that is only one time slice. Then, with a look at figure \ref{chap5:fig:time_slice}, the Ponzano-Regge partition function is readily decomposed according to the number $p$ of $u$-channel intertwiners on the slice. The partition function sums over all possible $u$-channel intertwiner insertions along the time slice. For $p$ insertions, we are left with $N_x-p$ closed loop containing only one link, hence one $g$ element. The $p$ other links are all on the same closed loop, therefore containing $p$ times the element $g$. That is, the contribution of the $p$ insertion to the partition function is 
\begin{equation*}
	\int_{\SU(2)} dg \; \chi_{\f12}(g)^{N_{x}-p}\chi_{\f12}(g^{p}) \; .
\end{equation*}
Of course, there is more than one choice of $p$ nodes. At the end of the day, the full partition function reads
\begin{equation*}
	\la Z_{PR}^{\cK}| \Psi_{j = \f12,\iota[\lambda,\rho]}  \ra
	=
	\sum_{p=0}^{N_x} \binom{N_{x}}{k} \lambda^{N_x-p}\rho^k \int_{\SU(2)} dg \; \chi_{\f12}(g)^{N_{x}-p}\chi_{\f12}(g^{p}) \; ,
\end{equation*}
where the sum is over all possible choices of $p$ and the binomial factor counts the possibility of $p$ nodes. The integral can again be exaclty computed by expanding the character into its exponential form. Since the amplitude clearly vanishes for an odd number of nodes in the spatial direction, we fix
\begin{equation}
	N_x=2M\qquad  \text{with} \quad M\in \mathbb{N} \; .
\end{equation}

Focusing first on the integral, we get
\begin{align*}
	\int_{\SU(2)} \dd g \; &\chi_{\f12}(g)^{N_{x}-k}\chi_{\f12}(g^{k})
	\\
	&=
	\f{1}{2\pi}\int_{0}^{2\pi}\dd\theta \;  \left(1-\f{\E^{2\I\theta}+\E^{-2\I\theta}}2\right) \; (\E^{\I k\theta}+\E^{-\I k\theta}) \sum_{n=0}^{2M-k}\binom{2M-k}{n}\E^{\I(2M-k-2n)\theta}
	\\
	&=
	\binom{2M-k}{M} \; \f{\big{[}2(M+1)-k(k+1)\big{]}}{(M+1)(M-k+1)} \; .
\end{align*}

Plugging this back into the sum, we get the explicit expression for the Ponzano-Regge amplitude
\begin{eqnarray}
	\la Z_{PR}^{\cK}| \Psi_{j = \f12,\iota[\lambda,\rho]}  \ra
	&=&
	\sum_{k=0}^{M+1}
	\lambda^{2M-k}\rho^k 
	\,
	\f{(2M)!}{k!(M+1)!(M-k+1)!}
	\,
	\bigg{[}2(M+1)-k(k+1)\bigg{]}
	\nn\\
	&=&
	\f2{M+1}\binom{2M}{M}
	\lambda^{M-1}(\lambda+\rho)^{M-1}
	\bigg{[}\lambda^{2}+\lambda\rho-\f M2\rho^{2}\bigg{]}
	\,.
	\label{chap4:eq:no_twist_simple_onehalf}
\end{eqnarray}

\subsubsection{The general case}

The case of a non-vanishing twist $N_\gamma \neq 0$ and $N_t>1$ leads to a considerable combinatorial problem and is best formalized using transfer matrix techniques.

On one time slice, each set of $u$-channel insertions, at the positions  $0\le x_{1}<\dots<x_{k}\le N_{x}-1$, defines a permutation of $N_{x}$ elements given by the cycle $C_{\{x_{n}\}}\equiv(x_{1},x_{2},..,x_{k})$. The sum over all such cyclic permutations over subsets of nodes defines our transfer matrix.
Now, we have to choose arbitrary $u$-channel insertions on each time slice $\{x_{n}^{(t)}\}$ for $t=0,\dots,(N_{t}-1)$ and compose with a twist, i.e. the cyclic permutation $\cC_{N_{\gamma}}$ sending every position $i$ to $(i+N_{\gamma}) \,\mathrm{mod}\,N_{x}$:
\begin{equation}
	\cC_{N_{\gamma}}\circ C_{\{x_{n}^{(N_{t}-1)}\}}\circ\dots\circ C_{\{x_{n}^{(0)}\}}
	\,.
\nn
\end{equation}

The last ingredient before the identification of the first and last time slices, is the introduction of the twist and the integration over the holonomy $g\in\SU(2)$. This sits on all the vertical edges belonging to the final time slice. This means that we have to evaluate the integral
\begin{equation}
\int_{\SU(2)} dg\,
\prod_{i=0}^{N_{x}-1}D^{\f12}_{a_{i}b_{i}}(g)\,.
\end{equation}
This integral, also known as the Haar intertwiner, is non-vanishing if and only if $N_{x}$ is even, that is $N_{x}=2M$ as above. It can also be expressed purely in terms of permutations $\omega$'s, which match the incoming magnetic indices $a_{i}$ with permuted outgoing magnetic indices $b_{\omega(i)}$. The Haar intertwiner is given in terms of the characters $s_{[M,M]}$ of the symmetric group of $N_{x}$ elements $S_{N_{x}}$ in the representation associated to the partition of the integer $N_{x}=2M$ as $M+M$. That is
\begin{equation}
	\int_{\SU(2)} dg\,
	\prod_{i=0}^{2M-1}D^{\f12}_{a_{i}b_{i}}(g)
	=
	\f{M!}{(2M)!}
	\sum_{\omega\in S_{2M}} s_{[M,M]}[\omega] \prod_{i=0}^{2M-1}\delta_{a_{i}b_{\omega(i)}}\,.
\end{equation}
Taking the trace of this expression, we recover the dimension of the intertwiner spaces between $2M$ spins $\f12$, given by the Catalan numbers:
\begin{equation}
	\int_{\SU(2)} dg\,\big{(}\chi_{\f12}(g)\big{)}^{2M}
	=
	\f1{M+1}\binom{2M}{M}
	\,.
\end{equation}
The character for arbitrary permutations $s_{[M,M]}(\omega)$ can be computed using Young tableaux. Putting all these ingredients together, the Ponzano-Regge amplitude for $N_{t}$ time slices and a twist $N_{\gamma}$ is expressed as:
\begin{equation}
	\la \text{PR} | \Psi_{j = \f12,\iota[\lambda,\rho]}  \ra
	\,=\,
	\sum_{\{x_{n}^{(t)}\}}
	\lambda^{N_{x}N_{t}-\# x}
	\rho^{\# x}
	s_{[M,M]}
	\Bigg{[}
	\Big{(}
	\cC_{N_{\gamma}}\circ C_{\{x_{n}^{(N_{t}-1)}\}}\circ\dots\circ C_{\{x_{n}^{(0)}\}}
	\Big{)}^{-1}\Bigg{]}
	\,,
	\label{eqn:permut}
\end{equation}
where $\# x$ is the total number of $u$-channel intertwiner insertions on the whole lattice, i.e. the sum over all time slices of the cardinal of the sets $\{x_{n}^{(t)}\}$.

Studying the statistics of the composition of cycles is definitely a non-trivial combinatorial problem. Since we are mostly interested in the thermodynamical limit $N_{x},N_{t}\rightarrow\infty$, the most efficient approach is to look for a mapping of our spin evaluation onto known statistical models. This spin network evaluation for spin $\f12$ maps onto the 6-vertex model, which is integrable and for which the transfer matrix is known (and actually expressed and solved in terms of sums over permutations, see e.g. \cite{2016arXiv161109909D}). 
\\

\subsubsection{Mapping onto the 6-vertex model}
The 6-vertex model is defined on a regular square lattice. The variables of this model are a sign assigned to each edge of the lattice, which can be thought of as the orientation of the edges as represented on figure \ref{fig:6_vertex}. Each node is required to have the same number of ingoing and outgoing edges. In other words, at each node there are two incoming and two outgoing arrows. The partition function is defined as a sum over all admissible arrow configurations. 
As drawn on figure \ref{fig:6_vertex}, this gives six allowed vertex configurations. At each node, the simultaneous reversal of all four arrows is considered to be a symmetry of the model. This leaves us with three pairs of node configurations to which one associates three weights $a$, $b$ and $c$. Finally the partition function is given as
\begin{equation}
Z_\text{6-vertex} \,=\, \sum_{\text{arrows}} a^{(\#_{\rm I}+\#_{\rm II})} \,b^{(\#_{\rm III}+\#_{\rm IV})}\, c^{(\#_{\rm V}+\#_{\rm VI})}
\end{equation}
where $\#_i$ with $i={\rm I},\ldots,{\rm VI}$ represents the number of vertices in configuration $i$.
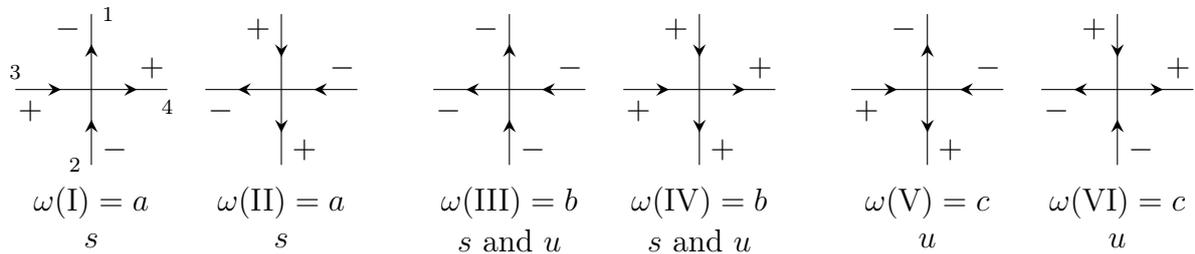
\begin{figure}[htb!]
	\begin{center}
		\begin{tikzpicture}[scale=1]
		\draw[decoration={markings,mark=at position 0.3 with {\arrow[scale=1.5,>=stealth]{>}}},decoration={markings,mark=at position 0.8 with {\arrow[scale=1.5,>=stealth]{>}}},postaction={decorate}]  (0,-1) node[left ]{\tiny{2}}-- node[pos=0.1,right]{$-$} node[pos=0.9,left]{$-$} (0,1)node[right ]{\tiny{1}}; \draw[decoration={markings,mark=at position 0.3 with {\arrow[scale=1.5,>=stealth]{>}}},decoration={markings,mark=at position 0.8 with {\arrow[scale=1.5,>=stealth]{>}}},postaction={decorate}] (-1,0)node[above ]{\tiny{3}}-- node[pos=0.1,below]{$+$} node[pos=0.9,above]{$+$} (1,0)node[below ]{\tiny{4}};
		\draw (0,-1.5) node[scale=1]{$\omega(\mathrm I) =a$}; \draw (0,-2) node[scale =1]{$s$};
		
		\draw[decoration={markings,mark=at position 0.3 with {\arrow[scale=1.5,>=stealth]{<}}},decoration={markings,mark=at position 0.8 with {\arrow[scale=1.5,>=stealth]{<}}},postaction={decorate}]  (2.5,-1) -- node[pos=0.1,right]{$+$} node[pos=0.9,left]{$+$} (2.5,1); 
		\draw[decoration={markings,mark=at position 0.3 with {\arrow[scale=1.5,>=stealth]{<}}},decoration={markings,mark=at position 0.8 with {\arrow[scale=1.5,>=stealth]{<}}},postaction={decorate}] (1.5,0)-- node[pos=0.1,below]{$-$} node[pos=0.9,above]{$-$} (3.5,0);
		\draw (2.5,-1.5) node[scale=1]{$\omega(\mathrm{II}) =a$}; \draw (2.5,-2) node[scale =1]{$s$};

		\draw[decoration={markings,mark=at position 0.3 with {\arrow[scale=1.5,>=stealth]{>}}},decoration={markings,mark=at position 0.8 with {\arrow[scale=1.5,>=stealth]{>}}},postaction={decorate}]  (5.5,-1) -- node[pos=0.1,right]{$-$} node[pos=0.9,left]{$-$} (5.5,1); 
		\draw[decoration={markings,mark=at position 0.3 with {\arrow[scale=1.5,>=stealth]{<}}},decoration={markings,mark=at position 0.8 with {\arrow[scale=1.5,>=stealth]{<}}},postaction={decorate}] (4.5,0)-- node[pos=0.1,below]{$-$} node[pos=0.9,above]{$-$} (6.5,0);
		\draw (5.5,-1.5) node[scale=1]{$\omega(\mathrm{III}) =b$}; \draw (5.5,-2) node[scale =1]{$s$ and $u$};
		
		\draw[decoration={markings,mark=at position 0.3 with {\arrow[scale=1.5,>=stealth]{<}}},decoration={markings,mark=at position 0.8 with {\arrow[scale=1.5,>=stealth]{<}}},postaction={decorate}]  (8,-1) -- node[pos=0.1,right]{$+$} node[pos=0.9,left]{$+$} (8,1); 
		\draw[decoration={markings,mark=at position 0.3 with {\arrow[scale=1.5,>=stealth]{>}}},decoration={markings,mark=at position 0.8 with {\arrow[scale=1.5,>=stealth]{>}}},postaction={decorate}] (7,0)-- node[pos=0.1,below]{$+$} node[pos=0.9,above]{$+$} (9,0);
		\draw (8,-1.5) node[scale=1]{$\omega(\mathrm{IV}) =b$}; \draw (8,-2) node[scale =1]{$s$ and $u$};
		
		\draw[decoration={markings,mark=at position 0.3 with {\arrow[scale=1.5,>=stealth]{<}}},decoration={markings,mark=at position 0.8 with {\arrow[scale=1.5,>=stealth]{>}}},postaction={decorate}]  (11,-1) -- node[pos=0.1,right]{$+$} node[pos=0.9,left]{$-$} (11,1); 
		\draw[decoration={markings,mark=at position 0.3 with {\arrow[scale=1.5,>=stealth]{>}}},decoration={markings,mark=at position 0.8 with {\arrow[scale=1.5,>=stealth]{<}}},postaction={decorate}] (10,0)-- node[pos=0.1,below]{$+$} node[pos=0.9,above]{$-$} (12,0);
		\draw (11,-1.5) node[scale=1]{$\omega(\mathrm{V}) =c$}; \draw (11,-2) node[scale =1]{$u$};
		
		\draw[decoration={markings,mark=at position 0.3 with {\arrow[scale=1.5,>=stealth]{>}}},decoration={markings,mark=at position 0.8 with {\arrow[scale=1.5,>=stealth]{<}}},postaction={decorate}]  (13.5,-1) -- node[pos=0.1,right]{$-$} node[pos=0.9,left]{$+$} (13.5,1); 
		\draw[decoration={markings,mark=at position 0.3 with {\arrow[scale=1.5,>=stealth]{<}}},decoration={markings,mark=at position 0.8 with {\arrow[scale=1.5,>=stealth]{>}}},postaction={decorate}] (12.5,0)-- node[pos=0.1,below]{$-$} node[pos=0.9,above]{$+$} (14.5,0);
		\draw (13.5,-1.5) node[scale=1]{$\omega(\mathrm{VI}) =c$}; \draw (13.5,-2) node[scale =1]{$u$};
		
		\end{tikzpicture}
	\end{center}
	\caption{The 6 arrow configurations around a node in the 6-vertex model and the 0-spin intertwiner channel that they correspond to, with the magnetic moment $m$ on each link.}
	\label{fig:6_vertex}
\end{figure} 

Let us compare this to the Ponzano-Regge partition function. Ignoring for the moment the group element $g$ attached to the last time slice, in the Ponzano-Regge partition function one sums over magnetic indices $m=\pm\f12$ on each link and the weights are determined by the choice of intertwiners. To map the spin-$\f12$ Ponzano-Regge model to the 6-vertex model, we match magnetic index configurations with arrow configurations. Specifically, we identify $m=+\f12$ with arrows that point to the right or downwards (along the time direction) and $m=-\f12$ to arrows that point to the left or upwards (opposite to the time direction).

We furthermore choose to expand the general spin $\f12$ intertwiner on the spin-0 intertwiners in the $s$-channel and $u$-channel, leaving the $t$-channel aside, as explained above.\footnote{
In order to include the $t$-channel in the mapping by considering a generic intertwiner as:
$$
\iota_{m_1m_2m_2m_4} \,=\,
\lambda \,\delta_{m_1 m_2} \delta_{m_3 m_4}
+ \mu \,\delta_{m_1 m_3} \delta_{m_2 m_4} 
+ \rho \,\delta_{m_1 m_4} \delta_{m_2 m_3}
\,, 
$$
we would need to introduce a new pair of node configurations corresponding to four incoming (or outgoing) arrows. This readily leads to a mapping onto an 8-vertex model. This is however not necessary in our spin-$\f12$ case, but might turn out to be useful for higher spin evaluations.}
This general intertwiner can be parametrized in such a way that the Ponzano-Regge partition function amounts to contracting the following four-valent tensors associated to the nodes of the graph ( we keep the notation $\iota$, although it already accounts for the presence of $\su(2)$ structure map on each link):

\begin{equation}
\iota{[\lambda,\rho]}_{m_1m_2m_2m_4} \,=\, \lambda \,\delta_{m_1 m_2} \delta_{m_3 m_4} + \rho \,\delta_{m_1 m_4} \delta_{m_2 m_3} \, . 
\end{equation}
Notice that this tensor only gives non-vanishing weights for the six configurations allowed by the 6-vertex model (after translating arrows into magnetic indices in the way just described). Evaluating this tensor for each of the six allowed node configurations, we obtain the following weights for the 6-vertex models in terms of our intertwiner parametrization
\begin{equation}
a=\lambda \,,
\quad
b=\lambda+\rho \,,
\quad
c=\rho \, .
\end{equation}
The expression of the $b$-coupling is actually reminiscent of the factors $(\lambda+\rho)$ of the no-twist formula \ref{chap4:eq:no_twist_simple_onehalf} derived earlier.

We can then use all the results obtained for the 6-vertex model to study our spin-$\f12$ Ponzano-Regge amplitude, especially the diagonalization and thermodynamic limit of its transfer matrix \cite{2016arXiv161109909D}. Of particular interest will be the gravitational interpretation of phase transitions in the model. However, we defer its study to future work. Before moving on, let us stress a key point. The Ponzano--Regge partition function on the twisted solid torus does {\it not} get mapped onto the 6-vertex partition function on a twisted torus, for it requires the insertion of the Haar intertwiner on the last time slice. This should correspond to the insertion of a specific  (non-local) operator in the 6-vertex model. If we call $\cF$ the 6-vertex transfer matrix, this means that we should not only look at the  6-vertex partition function $\mathrm{Tr}\left( \cF^{N_{t}}\right)$, but rather study $\mathrm{Tr}\left( \cF^{N_{t}}\cT_\gamma\cG \right)$ where $\cT_\gamma$ and $\cG$  are the operators implementing the torus twist and the Haar intertwiner, respectively.
This step is crucial, since it is precisely how the integration over all possible non-trivial (bulk) monodromies is taken into account. 
\newpage

In this chapter we started a systematic investigation of quasi-local three-dimensional non-perturbative gravity by looking at the partition function of the Ponzano-Regge model for a simple class of boundary state. Contrary to the quantum Regge calculus approach, the bulk theory is exactly solved, and we are left with a simple theory living at the boundary of the space-time. As such, it is a perfect ground to start looking at quasi-local holography.

As a first test, we considered the question of whether our setting can reproduce the structure of the partition function on the twisted solid torus as computed by a range of other methods, see the first section of this chapter and chapter \ref{chap2}. We found that already a very simple, and somewhat non-geometric, choice of boundary state can, in an appropriate limit, reproduce the characteristic pole structure of the one-loop partition function of three-dimensional gravity, seen as a function of the (Dehn) twist angle. The state involves a specific choice of intertwiner, representing a discretization of the toroidal boundary by ``fuzzy squares'', and of a homogeneous spin. The appropriate limit, on the other hand, consists in taking an infinitely refine lattice. As a consequence of keeping a fixed value of the spin, this limit corresponds to an asymptotic, infinite radius, limit but not to a continuum limit. The result is that in the limit, the renormalized amplitude, develops poles at every rational value of the twist angle.

We also analysed in some detail the partition function for a spin network state featuring only minimal spins $j=\f12$. In this case we keep the choice of intertwiner completely arbitrary, albeit uniform. What we showed is that the resulting partition function can be mapped to an interesting combinatorial problem, which can also be mapped onto a version of the 6-vertex model. To take into account the possibility of monodromies around the single (bulk) non-contractible cycle, the model has to be augmented by the insertion of specific non-local operators, which winds around the opposite cycle of the twisted boundary torus. This operator is essentially a rewriting of the so-called Haar intertwiner, and can be expressed in terms of combinatorial objects. The details of such insertions is still something that need to be work out. 

In the next chapter, we will focus on a more geometrical approach, and we will relate the Ponzano-Regge amplitude with coherent boundary state with the exact formula of the BMS character.

	\newpage
	~
	\thispagestyle{empty}
	
	\chapter{One-loop evaluation of the Ponzano-Regge Amplitude for Coherent Boundary State}
\label{chap5}

In this chapter, we focus on the computation of the Ponzano-Regge amplitude given a coherent spin network state as boundary state. The advantage of such a state is that it has a good behaviour in the asymptotic limit, since it is then peaked on a particular polygon as explained in the previous chapter. Even though we are not able to perform an exact computation of the amplitude in that case, we can easily do a saddle point approximation. This computation allows to recover the BMS character, on top of non-perturbative quantum corrections. The starting point of the computation is given by equation \eqref{chap4:eq:formula_to_compute}. Note that it is not one-loop in the sense of quantum field theory. It corresponds to a WKB approximation of the path integral, leading to the first correction at the classical action contribution for the evaluation of the path integral. This result can, and will, however, be compared with the quantum field theory like one-loop computation, as those detailed in the chapter \ref{chap2} and to the quantum Regge calculus detailed in the chapter \ref{chap4}. This chapter is based on \cite{Dittrich:2017rvb,Dittrich:2018xuk}

\section{Coherent spin network boundary state on a square lattice}

The discretization of the torus considered here is the one introduced in the last chapter. Recall that the boundary is therefore described by a square lattice, as depicted in figure \ref{chap4:fig:induced_bdry_square}. As we introduced in the last chapter, to define a coherent spin network in the boundary, we assign to each node of the dual discretization a coherent intertwiner. Instead of the general framework presented previously, we want to keep in mind now a particular geometrical picture: the square lattice is built as the superposition of a given rectangle. Hence, the coherent intertwiners is given by only two spinors and two spins, encoding the quantum rectangle, instead of four of each. The reason for this choice will make more sense later when we will make use of a Fourier Transform to compute the one-loop amplitude. If the spins are kept depending on the lattice position, the Fourier Transform will not be adapted for the computation. In the next chapter however, we will look at a boundary state built as a spin superposition, i.e. we will sum over the spins associated to the links. 

More particularly, we consider the vertical edges of the rectangle to be along the direction $\hat z$ with spin $T$ and the horizontal edges to be along the direction $\hat x$ with spin $L$. Thus, the coherent intertwiner is
\begin{equation}
	\iota_{t,x} = \int_{\SU(2)} \dd G_{t,x} \: G_{t,x} \act \left( |L, +\ra \otimes  |T,\up ]\otimes | L , +] \otimes | T, \up\ra \right) \;.
	\label{chap5:eq:coherent_intertwiner_square_lattice}
\end{equation}
In these notations, $| \uparrow \ra = (1,0)^{t} $  and $| + \ra = \f{1}{\sqrt{2}}(1,1)^{t} $ encode the direction $\hat z$ and $\hat x$ respectively. Due to the orientation of the rectangle, it is evident to see that the opposite edges are then associated to the spinors $| \uparrow ] = (0,1)^{t} $, $| + ] = \f{1}{\sqrt{2}}(-1,1)^{t}$ associated to $-\hat z$, $- \hat x$, see figure \ref{chap5:fig:coherent_intertwiner_rectangle}. Note that the previous intertwiner is defined for all links outgoing. On the other hand, the spinors are defined with respect to the orientation of the edges of the rectangle.
\begin{figure}[!htb]
	\centering
	\begin{tikzpicture}[scale=1.3]		
		\coordinate (O) at (-5,1); \coordinate (Oz) at (-5,0.5); \coordinate (Ox) at (-4.5,1);
		
		\draw[->] (O) --node[pos=1,below]{$\hat{z}$} (Oz); \draw[->] (O) --node[pos=1,right]{$\hat{x}$} (Ox);
		
		\coordinate (R1) at (-1.5,1); \coordinate (R2) at (-1.5,-1); \coordinate (R3) at (1.5,-1); \coordinate (R4) at (1.5,1);
		
		\draw[-<-=0.75] (R1)--(R2); \draw[-<-=0.75] (R2)--(R3); \draw[-<-=0.75] (R3)--(R4); \draw[-<-=0.75] (R4)--(R1);
		
		\coordinate (I) at (0,0);
		\coordinate (Iup) at (0,2); \coordinate (Idown) at (0,-2);
		\coordinate (Ir) at (2.5,0); \coordinate (Il) at (-2.5,0);
		
		\draw[red] (I) node{$\bullet$};
		\draw[->-=0.25,red] (I)-- node[pos=0.75,left]{$L$}(Iup); \draw[->-=0.25,red] (I)--node[pos=0.75,right]{$L$}(Idown); \draw[->-=0.5,red] (I)--node[pos=0.75,above]{$T$}(Ir); \draw[->-=0.5,red] (I)--node[pos=0.75,below]{$T$}(Il); 
		
		\draw[red] (Iup) node[above]{$| + \ra$}; \draw[red] (Idown) node[below]{$| + ]$}; \draw[red] (Ir) node[right]{$| \uparrow \ra$}; \draw[red] (Il) node[left]{$| \uparrow ]$};
		
	\end{tikzpicture}
	\caption{Representation of a rectangle, in black, acting as the building block of the discretization of the boundary of the torus. The vertical edges are associated to the spin $T$ and are in the direction $\pm \hat{z}$ while the horizontal edges are associated to the spin $L$ and the direction $\pm \hat{x}$. In red is the dual of the rectangle. At the bullet lives an intertwiner associated with the spinor $|+\ra$, $|+]$, $|\uparrow\ra$ and $|\uparrow]$ and the spins $L$ and $T$ respectively.}
	\label{chap5:fig:coherent_intertwiner_rectangle}
\end{figure}
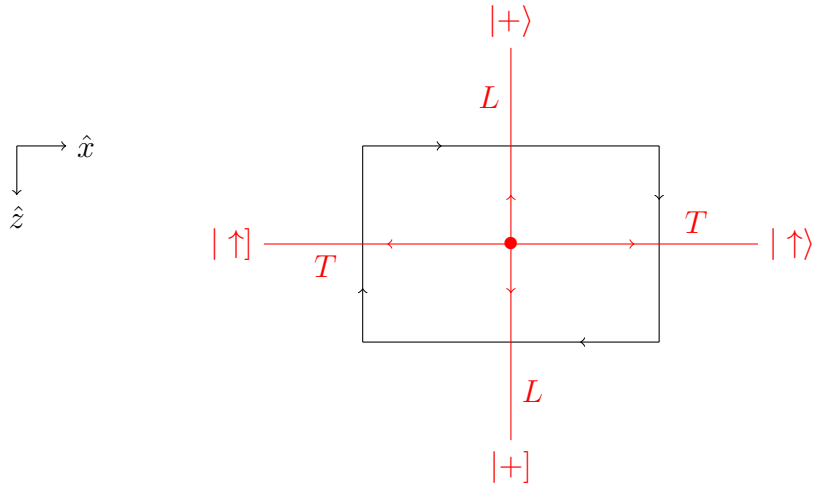
The main difference with the dressed boundary discretization depicted figure \ref{chap4:fig:induced_bdry_square} is that the spins do not depend on the lattice position anymore. We consider an anisotropic distribution for the spins, with the spin $L$ associated to the vertical links, dual to spatial edges and the spin $T$ associated to horizontal links, dual to temporal edges.

The boundary state then associated to such a choice of coherent intertwiners for each node of the dual discretization coming from the general definition \ref{chap4:eq:general_coherent_SN} is then
\begin{Definition}
	The coherent spin network state of an anisotropic spins square lattice with homogeneous coherent intertwiners defined by \eqref{chap5:eq:coherent_intertwiner_square_lattice} is\footnote{For simplicity of the notation, we do not make explicit the dependency on $\iota$ and the spins of the boundary state.}
	\begin{equation*}
	\Psi_{coh}(g^h_{t,x},g_{t,x}^v) =  \left[ \prod_{(t,x)} \int_{\SU(2)} \dd G_{t,x} \right] \, 
	\prod_{t,x} \la \up| G_{t,x+1}^{-1} g^h_{t,x} G_{t,x} | \up\ra^{2T}  \la +| G_{t+1,x}^{-1} g^v_{t,x} G_{t,x} | +\ra^{2L} \; .
	\end{equation*}
	\label{chap5:def:coherent_bdry_state}
\end{Definition}

Recall that the $g_{t,x}^{h,v}$ correspond to the $\SU(2)$ holonomies associated to the horizontal (resp. vertical) edges of the square lattice, see figure \ref{chap4:fig:induced_bdry_square}. One might note that the state does not depend on the $\SU(2)$ structure map anymore. The reason is to be found in its use for ingoing links as previously explained. Taking into account the orientation of the boundary graph, the spinors $|\uparrow ]$ and $| + ]$ are then associated to ingoing links. In the definition of the intertwiner \eqref{chap5:eq:coherent_intertwiner_square_lattice}, they must be replaced by
\begin{equation}
	[\uparrow| \varsigma = \la \uparrow | \varsigma^2 = \la \uparrow | (- \id) \quad \text{and} \quad \la + | (-\id) \;.
\end{equation}

After this replacement, the intertwiner \eqref{chap5:eq:coherent_intertwiner_square_lattice} becomes
\begin{equation*}
	\iota_{t,x} =
	\int_{\SU(2)} \dd G_{t,x} \: G_{t,x} \act \left( |L, +\ra \otimes  \la T,\up |(-\id) \otimes \la L , + | (-\id) \otimes | T, \up\ra \right) \; .
\end{equation*}

Therefore, the contribution per horizontal link is
\begin{equation}
	(-1)^{2 T} \la \uparrow| G_{t,x+1}^{-1} g_{t,x}^{h} G_{t,x} | \uparrow \ra
\end{equation}
and per vertical link
\begin{equation}
	(-1)^{2 L} \la + | G_{t+1,x}^{-1} g_{t,x}^{v} G_{t,x} | + \ra \; .
\end{equation}

The sign contribution that arises from the structure map then corresponds to a switch of the link orientation. However, as we previously explained, due to the invariance of the integrand of the spin network state, all four possible orientation for a homogeneous definition of the intertwiner on the lattice are equivalents. Hence, the signs can be safely removed without loss of generality and the boundary state we consider is truly given by the definition \ref{chap5:def:coherent_bdry_state}.

\subsection{Ponzano-Regge amplitude on a homogeneous square lattice}

Now that we have the definition of the boundary state, we can focus on the computation of the Ponzano-Regge amplitude for this class of boundary condition. Inserting the definition of the boundary state \ref{chap5:def:coherent_bdry_state} into the definition of the gauge-fixed Ponzano-Regge amplitude for a given boundary state \eqref{chap4:eq:formula_to_compute} on the torus, we get
\begin{equation}
\begin{split}
	\la Z_{PR}^{\cK}| \Psi_{coh}  \ra = 
	\f{1}{\pi}\int_{0}^{2\pi} \text{d}\varphi \; \sin^2(\varphi) \;  \prod_{(t,x)} \int_{\SU(2)} \dd G_{t,x} \; &\la \up| G_{t,x+1}^{-1} G_{t,x} | \up\ra^{2T} \\ &  \la +| G_{t+1,x}^{-1} e^{i \varphi \sigma_{z}} G_{t,x} | +\ra^{2L} \; .
\end{split}
\end{equation}

For the purpose of the computation, it is convenient to rewrite the amplitude making the appearance of an "action" explicit.
\begin{subequations}
	\begin{equation}
	\la Z_{PR}^{\cK}| \Psi_{coh}  \ra 
	=  \left[\frac{1}{\pi}\int_{0}^{2\pi} \dd \varphi \,\sin^2\left(\varphi\right)  \prod_{(t,x)} \int_{\SU(2)} \dd G_{t,x} \right]  \, \E^{-S (G_{t,x},\varphi)},
	\end{equation} 
	where the action is defined by
	\begin{equation}
	S(G_{t,x},\varphi)  = - \sum_{t,x}
	2T \ln \la \up| G_{t,x+1}^{-1}  G_{t,x} | \up\ra + 2L \ln  \la +| G_{t+1,x}^{-1} \E^{\frac{\varphi}{N_t}\sigma_z} G_{t,x} | +\ra.
	\label{chap5:eq:action} 
	\end{equation}
	\label{chap5:eq:amp_to_compute}
\end{subequations}
The intervention of the logarithm function is a mathematical shortcut and an abuse of notation. We do not actually require a choice of branch cuts for the complex logarithm. The exponential $\exp(-S)$ is well-defined and that is all that actually matters. Looking for stationary points of the action $S$ is exactly equivalent to looking for the saddle points of $\exp(-S)$. This notation however clarifies the role of the spins $T$ and $L$ as parameters assumed to be large in the logic of a saddle point approximation of the integral.

We recall that the amplitude is different compared to the case of the three dimensional ball in the sense that it is not a spin network evaluation. Due to the non-trivial topology of the bulk manifold, it remains one integration over $\SU(2)$, containing all the bulk information.

\section{Computation of the Ponzano-Regge amplitude}

We have everything we need to actually compute the amplitude now. This section is sub-divided six sub-section. The two first focus on the saddle points approximation equation of motions. The next one focuses on the geometrical reconstruction of the torus from the equations of motions while the next two sections are about the one-loop contribution. Finally, we end this section and the chapter by the discussion of the one-loop expansion of the Ponzano-Regge amplitude.

\subsection{Saddle point approximation: critical point equations}

The dominant classical contribution is given by a critical configuration $o$ where the real part of the action is an absolute minimum and its first derivative vanishes
\begin{equation}
	\Re(S)|_o \leq \Re(S) 
	\qquad\text{and}\qquad
	S' |_o = 0.
	\label{chap5:eq:critical_pts}
\end{equation}

To resolve the first condition of the critical configuration, we make use of the Cauchy-Schwarz inequality. First, recall that the real part of the complex logarithm is the logarithm of the module of the argument $\Re(\ln(z)) = \ln(|z|)$ for all complex $z$. In our case, we consider two normalized spinors $|\xi \ra$ and $|\eta \ra$. The real part of such logarithm is
\begin{equation}
	\Re(\ln\la \xi | \eta \ra) = \ln |\la \xi | \eta \ra| \;.
\end{equation}
Considering now the Cauchy-Schwarz inequality and the monotony of the logarithm, we get the relation
\begin{equation}
	\Re(\ln\la \xi | \eta \ra) = \ln |\la \xi | \eta \ra| \leq 0
\end{equation}
since $|\la \xi | \eta \ra| \leq 1$. In particular, the inequality can only be saturated if the spinors are proportional. That is, if it exist some phase $\psi\in [0,2\pi]$ such that
\begin{equation*}
	\la \xi | \eta \ra = e^{i \psi} \; .
\end{equation*}

Applying this result on the action \eqref{chap5:eq:action} returns that the critical action as vanishing real part and is determined by the gluing equation
\begin{subequations}
	\begin{align}
	\la \up| G_{t,x+1}^{-1}  G_{t,x} | \up\ra & = \E^{-\I \psi^T_{t,x}} \\
	\la +| G_{t+1,x}^{-1} \E^{\frac{\varphi}{N_t}\sigma_z} G_{t,x} | +\ra & \,=\, \E^{-\I \psi^L_{t,x}}
	\end{align}
	\label{chap5:eq:gluing_equation}
\end{subequations}
for some phases $\psi^{T,L}_{t,x}\in [0,2\pi]$ representing the proportionality conditions between spinors. This is, however, not the end of the story in our case. Indeed, recall that the integrand is symmetric under the transformation $G_{n} \rightarrow -G_{n}$. It is immediate to see that this corresponds to the freedom of adding $\pi$ to the phases $\psi^{T,L}_{t,x}$. Hence, taking into account this symmetry, we can safely defined the phases in a range of $\pi$: $\psi^{T,L}_{t,x}\in [0,\pi]$.

On-shell of these equations, the action $S$ takes the particular form
\begin{equation}
	S|_o = - \I \sum_{t,x} T (2\psi^T_{t,x}) + L (2\psi^L_{t,x}).
	\label{chap5:eq:classical_action}
\end{equation}
We see that from the gravitational perspective it is appealing to interpret these angles as (demi) dihedral angles so that the on-shell action above reproduces a discrete version of the Gibbons-Hawking-York boundary term, see figure \ref{chap6:fig:dihedral_angle}. This interpretation will be confirmed in the following when we will focus on the geometrical analysis and reconstruction of the critical point equations. Note also that, due to our parametrization of $\SU(2)$, the associated true $\SO(3)$ angles are $2\psi^{L,T}_{t,x}$, corresponding to the dihedral angles. Thanks to the symmetry of the integrand, these angles are defined in $[0,2\pi]$ as expected of dihedral angles. 

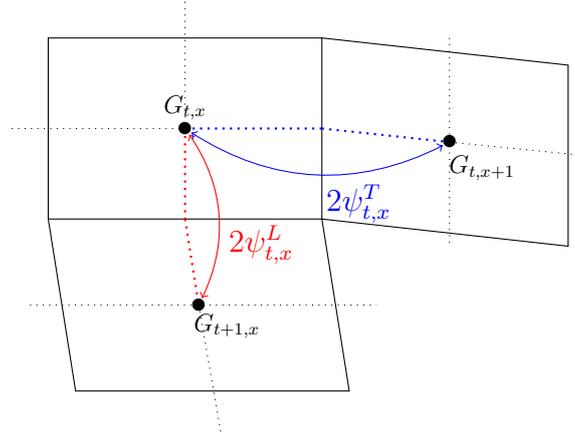
\begin{figure}[!htb]
	\begin{center}
		\begin{tikzpicture}[scale=1.2,line/.style={<->,shorten >=0.1cm,shorten <=0.1cm}]
		\coordinate(a1) at (0,0); \coordinate(b1) at (0.3,-1.9);
		\coordinate(a2) at (3,0); \coordinate(b2) at (3.3,-1.9); \coordinate(d1) at (5.7,-0.3);
		\coordinate(h1) at (0,2); 
		\coordinate(h2) at (3,2); \coordinate(d2) at (5.7,1.7);

		\draw (a1)--(a2)--(h2)--(h1)--cycle;
		\draw (a1)--(b1)--(b2)--(a2);
		\draw (a2)--(d1)--(d2)--(h2);
		
		\draw[dotted] (1.5,2.4)--(1.5,1);
		
		\draw[dotted] (-0.4,1)--(1.5,1); 
		
		\draw[dotted,blue,thick] (1.5,1)--(3,1); \draw[dotted, blue, thick] (3,1) -- (4.4,0.85);
		\draw[dotted,thick,red] (1.5,1)--(1.5,0); \draw[dotted,red,thick] (1.5,0)--(1.65,-0.95);

		\draw[dotted] (1.65,-0.95)--(1.9,-2.4);
		\draw[dotted] (-0.2,-0.95)--(3.6,-0.95);
		
		\draw[dotted] (4.4,0.85)--(5.9,0.7); \draw[dotted] (4.4,2)--(4.4,-0.3);
		
		\draw (1.5,1) node[above, scale=0.8]{$G_{t,x}$}; \draw (1.5,1) node{$\bullet$};
		\draw (1.5,-0.95) node[below right,scale=0.8]{$G_{t+1,x}$};\draw (1.65,-0.95) node{$\bullet$};
		\draw (4.3,0.8) node[below right,scale=0.8]{$G_{t,x+1}$};\draw (4.4,0.85) node{$\bullet$};
		
		\path [red,line,bend left] (1.5,1) edge node[midway,below right]{$2\psi_{t,x}^{L}$} (1.65,-0.95);
		\path [blue,line,bend right] (1.5,1) edge node[midway,below right]{$2\psi_{t,x}^{T}$} (4.4,0.85);
		\end{tikzpicture}
	\end{center}
	\caption{Three rectangles of the boundary discretization dual to 3 nodes. The dotted lines are the links of the dual lattice. In red, the dihedral angle $2 \psi_{t,x}^{L}$ between the rectangle dual to the vertices $(t,x)$ and $(t+1,x)$ and in blue the dihedral angle $2 \psi_{t,x}^{T}$ between the rectangle dual to the vertices $(t,x)$ and $(t,x+1)$.}
	\label{chap6:fig:dihedral_angle}
\end{figure}

It is useful to make explicit the parallel transport between the spinors in the gluing equations. We have
\begin{subequations}
	\begin{align}
		G_{t,x+1}^{-1}  G_{t,x} | \up\ra & = \E^{-\I \psi^T_{t,x}} |\up \ra \\
		G_{t+1,x}^{-1} \E^{\frac{\varphi}{N_t}\sigma_z} G_{t,x} | +\ra & \,=\, \E^{-\I \psi^L_{t,x}} |+ \ra
	\end{align}
\end{subequations}
It is clear then that the gluing equations tell us that the spinor $|\up\ra$ (resp. $|+\ra$) is left invariant, up to a phase, by its parallel transport by $G_{t,x+1}^{-1}  G_{t,x}$ (resp. $G_{t+1,x}^{-1} \E^{\frac{\varphi}{N_t}\sigma_z} G_{t,x}$). In the following, we will see that the phases are constrained due to the periodicity conditions on the lattice.
\\

The stationary condition $S'|_o=0$ is most easily studied introducing right derivatives (left-invariant vector fields) of functions on $\SU(2)$. Schematically, for a function on $\SU(2)$ $f$, we have
\begin{equation*}
	\nabla_k f(G) = \frac{\pp}{\pp a_k}\Big|_{\vec a =0 } f(G \E^{\I \vec a . \vec\sigma}) =\frac{d}{d t}\Big|_{t=0} f\left(G\E^{\I t\sigma_k}\right) =  \frac{d}{d t}_{|t=0} f\Big(G(\id + \I t\sigma_k)\Big).
\end{equation*}
Denoting by $\nabla_k^{t,x}$ the derivative with respect to $G_{t,x}$ in the direction $k$, the first derivatives are then
\begin{subequations}
	\begin{align}
	\nabla_k^{t,x} S
	&=    
	2\I T \frac{\la \up | \sigma_k G_{t,x}^{-1}G_{t,x-1} |\up\ra }{\la \up | G_{t,x}^{-1}G_{t,x-1} |\up\ra} 
	- 2\I T \frac{\la \up | G_{t,x+1}^{-1}G_{t,x} \sigma_k |\up\ra }{\la \up | G_{t,x+1}^{-1}G_{t,x}  |\up\ra} \notag\\
	& ~~ 2\I L \frac{\la + | \sigma_k G_{t,x}^{-1}\E^{\I \frac{\varphi}{N_t} \sigma_3} G_{t-1,x}  |+\ra }{\la + | G_{t,x}^{-1}\E^{\I \frac{\varphi}{N_t} \sigma_3} G_{t-1,x}  |+ \ra}
	- 2\I L \frac{\la + | G_{t+1,x}^{-1}\E^{\frac{\varphi}{N_t} \sigma_3} G_{t,x}   \sigma_k |+\ra }{\la + | G_{t+1,x}^{-1}\E^{\frac{\varphi}{N_t} \sigma_3} G_{t,x}  |+ \ra},\\
	\pp_\varphi S
	= & 
	-\frac{2\I L}{N_t} \sum_{t,x}  \frac{\la +| G_{t+1,x}^{-1}  \E^{\frac{\varphi}{N_t}\sigma_3} \sigma_3 G_{t,x} | +\ra}{ \la +| G_{t+1,x}^{-1} \E^{\frac{\varphi}{N_t}\sigma_3} G_{t,x} | +\ra}.
	\end{align}
	\label{chap5:eq:first_derivatives}
\end{subequations}
Evaluated on-shell of the gluing equations \eqref{chap5:eq:gluing_equation}, the stationary conditions simplify greatly and we obtain
\begin{subequations}
	\begin{align}
	\nabla^k_{t,x} S |_o
	= &   -2 \I\Big( T \la \up | \sigma_k |\up\ra 
	- T \la \up | \sigma_k |\up\ra + L {\la + | \sigma_k  |+\ra }
	- L \la + |   \sigma_k |+\ra\Big)
	\equiv 0 \; ,\\
	\pp_\varphi  S|_o
	= & \frac{2 \I L}{N_t} \sum_{t,x}  \la +| G_{t,x}^{-1}  \sigma_3 G_{t,x} | +\ra %
	= \frac{2 \I L}{N_t} \hat z. \sum_{t,x} G_{t,x}\triangleright \hat x = 0 \; .
	\end{align} 
\end{subequations}
On the one hand, the first equation just gives us the closure for the intertwiner at the node $(t,x)$. This corresponds to the closure constraint we consider previously when defining quantum polygons. The fact that we recover this result in the saddle approximation is more of a safety check than anything else. It only confirms that the geometrical conditions in the saddle points correspond geometrically to a quantum polygons.

The second equation, on the other hand, gives a global constraint on the solutions for the critical points. In order to obtain this expression, we used the fact that the projection of a spinor along the Pauli matrices $\sigma_{i}$ returns its component along the direction $i$.

To ease the following analysis, let us define the rotate element
\begin{equation*}
	\tl G_{t,x} = \E^{- \frac{t}{N_t} \varphi \sigma_3} G_{t,x}.
\end{equation*}
In terms of these new variables, the gluing and saddle point equations read:
\begin{subequations}
	\begin{align}
	& \tl G_{t,x+1}^{-1} \tl  G_{t,x} |\up\ra  = \E^{-\I \psi^T_{t,x}} |\up\ra \; , \label{chap5:eq:gluing1}	\\
	& \tl G_{t+1,x}^{-1} \tl G_{t,x} |+\ra = \E^{-\I \psi^L_{t,x}}  |+\ra  \; , \label{chap5:eq:gluing2} \\
	& \hat z. \sum_{t,x} \tl G_{t,x} \triangleright \hat x = 0 \; . \label{chap5:eq:constitancy_cycle}
	\end{align}
\end{subequations}
It is immediate to see that, thanks to the change of variables, the gluing condition in the temporal and spatial direction are now symmetric. Hence, a unique treatment will take care of both equations. The key to solve the gluing equations is to be found in the periodic conditions of the square lattice given by \eqref{chap4:eq:periodic_condition}
\begin{equation*}
(t, x+N_x) \sim (t,x) \sim (t+N_t , x +N_\gamma) \; .
\end{equation*}

For the $G_{t,x}$ they imply
\begin{equation}
	G_{t+N_t,x} = \pm G_{t,x+N_\gamma} \qquad \text{and} \qquad G_{t,x+N_x} = \pm G_{t,x} \; .,
\end{equation}
while, for the $\tl G_{t,x}$, they return
\begin{equation}
	\tl G_{t+N_t,x} = \pm e^{- \I \varphi \sigma_{3}} \tl G_{t,x+N_\gamma} \qquad \text{and} \qquad \tl G_{t,x+N_x} = \pm \tl G_{t,x} \; .
\end{equation}
The presence of the sign $\pm$ in the periodicity condition for the $G_{t,x}$ and $\tl G_{t,x}$ is of course due to the invariance of the action under the local transformation $G_{t,x} \rightarrow - G_{t,x}$.

\subsection{Saddle point approximation: the equations of motion}

We now focus on solving the gluing equations to find the critical solutions. As we said before, the gluing equations \eqref{chap5:eq:gluing1} and \eqref{chap5:eq:gluing2} tell us that $\tl G_{t,x+1}^{-1} \tl  G_{t,x}$ is leaving the spinor $|\up\ra$ invariant whereas $\tl G_{t+1,x}^{-1} \tl G_{t,x}$ is leaving $|+\ra$ invariant. Therefore, these $\SU(2)$ elements must be along the direction given by the spinor, namely along $\hat z$ and $\hat x$ respectively. The gluing equations also give us the value of the class angle. In formula, we have
\begin{equation}
	\tl G^{-1}_{t,x}\tl G_{t,x+1} = \E^{\I \psi^T_{t,x} \sigma_3}
	\qquad\text{and}\qquad
	\tl G^{-1}_{t,x}  \tl G_{t+1,x} = \E^{\I \psi^L_{t,x} \sigma_1} \; .
	\label{chap5:eq:recursion_relation_G}
\end{equation}

The key to solve the equations consistently is to focus on a given dual rectangle. Indeed, the previous constraints linked the four phases together in an interesting way. The basic idea is to construct the link between $G_{t+1,x+1}$ and $G_{t,x}$ either going through $G_{t,x+1}$ or $G_{t+1,x}$, see figure \ref{chap5:fig:two_way_link_GG}
\begin{figure}[htb!]
	\begin{center}
		\begin{tikzpicture}[scale = 1]
			\coordinate (A) at (3,-2); \coordinate (B) at (3,2);
			\coordinate (C) at (-3,2); \coordinate (D) at (-3,-2);

			\draw[blue] (A)--node[right,pos=0.5]{$\psi^{L}_{t,x+1}$} (B)--node[above,pos=0.5]{$\psi^{T}_{t,x}$} (C);
			\draw[red] (A)--node[below,pos=0.5]{$\psi^{T}_{t+1,x}$} (D)--node[left,pos=0.5]{$\psi^{L}_{t,x}$}(C);
		
			\draw (A) node{$\bullet$}; \draw (A)node[below right]{$(t+1,x+1)$};
			\draw (B) node{$\bullet$}; \draw (B)node[above right]{$(t,x+1)$};
			\draw (C) node{$\bullet$}; \draw (C)node[above left]{$(t,x)$};
			\draw (D) node{$\bullet$}; \draw (D)node[below left]{$(t+1,x)$};
		\end{tikzpicture}
	\end{center}
	\caption{The two ways, in red and in blue, between $(t,x)$ and $(t+1,x+1)$. The consistency between these two ways impose a constraint on the dihedral angle $\psi_{t,x}^{T}$,$\psi_{t,x+1}^{L}$, $\psi_{t,x}^{L}$ and $\psi_{t,x+1}^{T}$.}
	\label{chap5:fig:two_way_link_GG}
\end{figure}
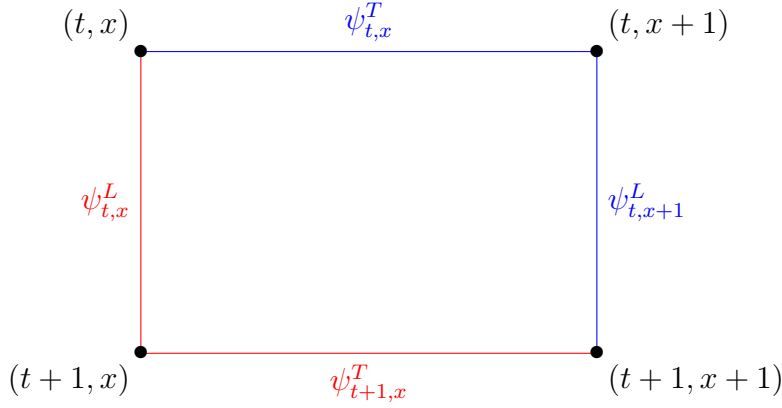

Indeed, the relation between $(t+1,x)$ and $(t+1,x+1)$ returns
\begin{equation*}
	\tl G_{t+1,x+1} = \tl G_{t+1,x} \E^{\I \psi^T_{t+1,x} \sigma_3}  \; 
\end{equation*}
while the relation between $(t,x+1)$ and $(t+1,x+1)$ is
\begin{equation*}
	\tl G_{t+1,x+1} = \tl G_{t,x+1} \E^{\I \psi^L_{t,x+1} \sigma_1} \; .
\end{equation*}
In turn, the group elements on the right hand side can both be expressed with $\tl G_{t,x}$. At the end of the day, we get
\begin{subequations}
	\begin{align}
		\tl G_{t+1,x+1} &= \tl G_{t,x} \E^{\I \psi^L_{t,x} \sigma_1} \E^{\I \psi^T_{t+1,x} \sigma_3}  
		\\
		\tl G_{t+1,x+1} &= \tl G_{t,x} \E^{\I \psi^T_{t,x} \sigma_3} \E^{\I \psi^L_{t,x+1} \sigma_1} \; .
	\end{align}
\end{subequations}

Comparing these two relations between $\tl G_{t+1,x+1}$ and $\tl G_{t,x}$, we get the constraint
\begin{equation}
	\E^{\I \psi^T_{t+1,x}\sigma_3} = \E^{-\psi^L_{t,x}\sigma_1} \E^{\I \psi^T_{t,x} \sigma_3} \E^{\I \psi^L_{t,x+1} \sigma_1}\;.
\end{equation}
This is the Euler decomposition for the $\SU(2)$ group element $\E^{\I \psi^T_{t+1,x}\sigma_3}$. Thanks to the uniqueness of the Euler decomposition, we are able to explicitly write the three different families of solutions. We name the first two the $X$ and $Z$ families and they are parametrized by
\begin{equation}
	X:\;\Big( \psi^L_{t,x} = \psi^L_t , \psi^T_{t,x} = 0\Big)
	\qquad\text{and}\qquad
	Z:\;\Big( \psi^L_{t,x} = 0 , \psi^T_{t,x} = \psi^T_x \Big) \; .
\end{equation}
Recall that the angles are defined modulo $\pi$ by the symmetry on $G_{t,x}$. The remaining family is the folding family At a given time slice $(t,x)$, this family is parametrized by
\begin{equation}
	\psi^{L}_{t,x} = \psi^L_{t,x+1} = \f{\pi}{2} \quad \text{and} \quad \psi^T_{t+1,x} = - \psi_{t,x}^{T} \; .
\end{equation}
The analogue solution $\psi^{T}_{t,x} = \psi^{T}_{t,x+1}= \f{\pi}{2}$ is however contained in the $Z$ family solution. For now, we will ignore this particular solution and come back to it later on, while discussing the geometry of the boundary.
\\

Focusing on the $X$ and $Z$ families, it is straightforward to see that \eqref{chap5:eq:recursion_relation_G} can be solved recursively. Once this is done, we find for solutions
\begin{equation}
	X:\;\tl G_{t,x} = \tl G \,\E^{\I \sum\limits_{k=0}^{t-1} \psi^L_{k} \sigma_1} 
\qquad\text{and}\qquad
	Z:\;\tl G_{t,x} = \tl G\, \E^{\I \sum\limits_{k=0}^{x-1} \psi^T_{k} \sigma_3} \; ,
\end{equation}
where $\tl G = \tl G_{0,0}$ is the starting point of the recursion process.

When expressed in terms of the variables $\tl G_{t,x}$, the gluing equations are totally symmetric in the directions $\hat x$ and  $\hat z$. The asymmetry between the two directions arises in the boundary periodic conditions and is related to the presence of  the angle $\varphi$. This makes sense considering the very origin of the variable $\varphi$ as encoding the holonomy around the non-trivial cycle of the solid torus.

The next step to the study of the solutions is to make use of the periodicity condition of the lattice. Indeed, a "true" solution must respect the periodicity conditions. We analyse the two families with respect to this aspect, starting with the $X$ family.
\\

\paragraph*{\bf $X$-family}

The $X$-family is a family of solutions that does not depend on the spatial lattice position $x$. Therefore, the periodic condition in the $x$ direction is trivially satisfied, since
\begin{equation*}
	\tl G_{t,x} = \tl G_{t,x+1} \quad \forall \, x \; .
\end{equation*}

The periodic condition in the temporal direction however returns the constraint
\begin{equation}
	\tl G_{t+N_t,x} = \pm \E^{- \I \varphi \sigma_{3}} \tl G_{t,x} \implies \E^{\I \varphi \sigma_{3}} = \pm \tl G^{-1} \, e^{-\I \sum\limits_{k=0}^{N_t-1} \psi^{L}_k \;  \sigma_1} \, \tl G\; .
\end{equation}

From this constraint, we deduce that the value of $\varphi$ is determined by the phase and that the action of $\tl G$ on the vector $\hat x$ is constrained to be
\begin{equation}
	\varphi = -\sum_{k=0}^{N_t-1} \psi^{L}_k  \; [\pi]  \qquad \text{and} \quad \tl G \act \hat x = \hat z \; .
\end{equation}
The constraint on $\tl G$ only exists however if $\varphi \neq 0 \; [\pi]$. If the sum vanishes, the periodic condition is trivially satisfied for any $\tl G$ and it is therefore unconstrained. 

The last point is to check that this family satisfies the condition \eqref{chap5:eq:constitancy_cycle}. However, it is i immediate to see that if $\sum_{t,x} \tl G_{t,x} \act \hat x$ is not the zero vector on its own, this condition can never be satisfied for $\tl G_{t,x}$ along $\hat x$. Indeed, the action of $\tl G_{t,x}$ on $\hat{x}$ is the action of $\tl G$ on $\hat{x}$, which is constrained to return $\hat{z}$. However, if the sum is the zero vector on its own, it implies that the $\psi^T_t$ must sum to $0$ modulo $\pi$. Therefore, the angle $\varphi$ is also $0$ modulo $\pi$, and we recover the unconstrained case. 

The unconstrained case corresponds to a continuous space of solutions for the saddle point equation parametrized by the unconstrained group element $\tl G$. We will get back to this case later on while studying the Hessian of the action. Here, we just point out an argument explaining why it does not contribute to the saddle point. The reason is to be found in the measure factor $\sin^{2}(\varphi)$. It is immediate to see that the case $\varphi = 0$ corresponds to the conjugacy classes of vanishing volume, and hence is killed by the measure factor.

\paragraph*{\bf $Z$-family}

For the $Z$-family, both periodicity conditions give rise to a constraint. The periodicity in the $x$ direction gives
\begin{equation}
	\tl G_{t,x+N_x} = \pm \tl G_{t,x} \implies \tl G \E^{\I \sum\limits_{k=0}^{N_x-1} \psi^{T}_k \sigma_3} \tl G^{-1} = \pm \id \;,
\end{equation}
while the periodicity condition in the $t$ direction returns
\begin{equation}
	\tl G_{t+N_t,x} = \pm \E^{-\I \varphi \sigma_3} \tl G_{t,x+N_\gamma} \implies \E^{\I \varphi \sigma_3} = \pm \tl G \E^{\I \sum\limits_{k=x}^{N_\gamma-1+x} \psi^T_{k} \sigma_3} \tl G \quad \forall \; x
\end{equation}

From these two constraints, we deduce that
\begin{subequations}
	\begin{equation}
		\sum_{k=0}^{N_x-1} \psi^{T}_{k} = 0 \; [\pi] \quad \text{and} \quad \varphi = \sum_{k=0}^{N_\gamma-1} \psi^{T}_k \; [\pi]
		\label{chap5:eq:sum_phase_solution}
	\end{equation}
	with the periodicity condition on $\psi^{T}_x$
	\begin{equation}
		\psi_{x+N_\gamma}^T = \psi_{x}^T \quad \forall \; x \; ,
		\label{chap5:eq:periodicity_closed_cycle}
	\end{equation}
	and that $\tl G$ is such that it leaves the direction $\hat z$ invariant. That is, it exists $\bar{\varphi}$ such that
	\begin{equation}
		\tl G = \E^{\I \bar{\varphi} \sigma_3} \; .
	\end{equation}
\end{subequations}

The explicit form of the $Z$ family solution is therefore
\begin{equation}
	\tl G_{t,x} = \E^{\I \left(\bar\varphi + \sum\limits_{k=0}^{x-1} \psi^T_{k}   \right) \sigma_3}
\qquad\text{or equivalently}\qquad
	G_{t,x} = \E^{\I \left(\bar\varphi + \frac{t}{N_t}\varphi + \sum\limits_{k=0}^{x-1} \psi^T_{k}   \right) \sigma_3} \; ,
\end{equation}
parametrized by the $\psi^{T}_{x}$ and $\bar \varphi$.

This solution, being along the direction $\hat z$, trivially satisfies the critical equation \eqref{chap5:eq:constitancy_cycle}. Indeed, the rotation of $\hat x$ along $\hat{z}$ necessary produces a vector orthogonal to $\hat z$.  Also, the constraint on the initial element $\tl G$ could have been easily guessed. Indeed, recall that, at the end of the day, there is still a remaining global gauge symmetry along $\hat z$, which exactly coincide with the choice of parameter $\bar\varphi$. 

Another remark can be made here. It is the constraint on $\varphi$ that select the solutions. And it is also $\varphi$ that tells us which cycle of the torus is non contractible. The two families of solutions $X$ and $Z$ corresponds to the two possible directions of the non-contractible cycle. That fact that it is the $Z$ family that matter at the end is in accordance with our choice of periodic condition. From the torus perspective, it corresponds to a choice of a given modular parameter. Therefore, this selection of families of solutions corresponds to the choice of either the Euclidean AdS or the BTZ torus as we explained previously.
\\

It is necessary to make further assumption to actually compute the saddle. Recall that $K$ is the greatest common divisor between $N_x$ and $N_\gamma$. The periodicity condition \eqref{chap5:eq:periodicity_closed_cycle} tells us that there are as many different angles that the number of close vertical loops. Considering $K=1$, hence one close loop, we have
\begin{equation}
	K= 1 \implies \psi_{x}^{T} = \psi_{x+1}^{T} = \psi^{T} \; [\pi] \; \forall \, x \; .
\end{equation}
Using equation \eqref{chap5:eq:sum_phase_solution}, we deduce that in that case
\begin{equation}
	\psi^T = \frac{\pi}{N_x} n \, [\pi]
	\quad\text{and}\quad
	\varphi = - \f{\gamma}{2} n  \, [\pi]\;
	\label{chap6:eq:solution_phase}
\end{equation}
for some $n\in\mathbb Z $, $|n| \leq \lfloor\frac{N_x}{2}\rfloor$, where $\lfloor . \rfloor$ is the floor function.

There are two cases which stand out, the case $n=0$ and, if $N_x$ is even, the case $n=\frac{N_x}{2}$. It is not complicated to see that the status of the $n=0$ solution is somewhat different. It imposes $\psi^T = 0 $ and $\varphi = 0$. Thus, it coincides to the allowed $\varphi=0$ case of the $X$-family solution and hence it belongs to a continuum set of solutions. We already argued why this configuration is suppressed in the saddle approximation. The case of $n=\frac{N_x}{2}$ corresponds however to $\psi^T_{t,x} = \pi$. For such value, we also have a continuum set of solutions for which $\psi^L_{t,x+1}=-\psi^L_{t,x}$ with the constraint $\sum\limits_{k}\psi^L_{k,x}=0\,[\pi]$. The origin of this continuum set of solutions is similar to the $n=0$ case. But the situation is much more complicated, since this time this contribution is not killed by any measure terms. We will come back to this particular case in the following. For now on, unless explicitly stated otherwise, we will restrict to the case where $N_x$ is odd. Some arguments about this situation will be given while discussing the Hessian of the action.

If $K>1$ however, the space of solutions is again a continuum one. This is easily understood by looking at the number of different $\psi_{x}^{T}$ unconstrained. There are $K-1$ unconstrained angles. For example, for $K=2$, it exists two angles, related by 
\begin{equation}
	\psi_{2}^{T} = \pi - \psi_{1}^{T} \; [\pi] \; .
\end{equation}
Therefore, the space of solutions is one-dimensional. This is easily generalizable for every $K$.
\medskip

In summary, {\it if $N_x$ is odd and $K=\mathrm{GCD}(N_\gamma, N_x)=1$, there is a finite number of (relevant) solutions labelled by parameter $n \in \N$, such that $1\leq|n|\leq \frac{N_x-1}{2}$. If $K>1$, each of the solutions above is part of a continuum $(K-1)$-dimensional family of solutions.}

\subsection{Geometry reconstruction}

The solutions to the saddle point equations encode a twisted torus {\it locally} embedded in $\mathbb R^3$ as a quadrangulated cylinder of height $\beta = N_t T $ and ``circumference'' $2\pi a = N_x L$. Recall that we refer to the horizontal direction of the torus as its spatial direction, and to the vertical one as its time direction. The details of the geometry can be read from the data above by juxtaposing neighbouring quadrilateral cells identifying their respective sides according to the gluing equations and orienting them in the embedding space according to the action of the $G_{t,x}$. 

For $n=1$, this allows to build a prism whose base is a $N_x$-sided polygon embedded in $\mathbb R^3$. The twisted torus is finally obtained by identifying the first and the last time slice after application of the twist encoded in the periodicity condition \eqref{chap4:eq:periodic_condition}. The resulting spatial cycle is contractible in the bulk, while the time cycle is not due to the topological identification. This is in agreement with the non-triviality of the holonomy $g=\E^{i \frac{\varphi}{N_t} \sigma_3}$ along the time cycle of the $Z$ family solution. Between two spatially neighbouring rectangular cells, there is a dihedral angle equal to $2\psi^T=2\pi/N_x$, while the dihedral angle $\psi^L$ between two temporally neighbouring cells vanishes, implying the flatness of the time direction, see figure \ref{chap5:fig:reconstruction_n_1}.
\begin{figure}
	\begin{center}
		\begin{tikzpicture}[scale = 3,line/.style={<->,shorten >=0.1cm,shorten <=0.1cm}]
			\coordinate (A1) at (-0.65,0); \coordinate (A2) at (-0.5,-0.1); \coordinate (A3) at (-0.3,-0.15); \coordinate (A4) at (0,-0.18); \coordinate (A5) at (0.3,-0.15); \coordinate (A6) at (0.5,-0.1); \coordinate (A7) at (0.65,0); \coordinate (A8) at (0.5,0.1); \coordinate (A9) at (0.3,0.15); \coordinate (A10) at (0,0.18); \coordinate (A11) at (-0.3,0.15); \coordinate (A12) at (-0.5,0.1);
			
			\coordinate (B1) at (-0.65,0.5); \coordinate (B2) at (-0.5,0.4); \coordinate (B3) at (-0.3,0.35); \coordinate (B4) at (0,0.32); \coordinate (B5) at (0.3,0.35); \coordinate (B6) at (0.5,0.4); \coordinate (B7) at (0.65,0.5); \coordinate (B8) at (0.5,0.6); \coordinate (B9) at (0.3,0.65); \coordinate (B10) at (0,0.68); \coordinate (B11) at (-0.3,0.65); \coordinate (B12) at (-0.5,0.6);
			
			\coordinate (C1) at (-0.65,1); \coordinate (C2) at (-0.5,0.9); \coordinate (C3) at (-0.3,0.85); \coordinate (C4) at (0,0.82); \coordinate (C5) at (0.3,0.85); \coordinate (C6) at (0.5,0.9); \coordinate (C7) at (0.65,1); \coordinate (C8) at (0.5,1.1); \coordinate (C9) at (0.3,1.15); \coordinate (C10) at (0,1.18); \coordinate (C11) at (-0.3,1.15); \coordinate (C12) at (-0.5,1.1);
			
			\coordinate (D1) at (-0.65,1.5); \coordinate (D2) at (-0.5,1.4); \coordinate (D3) at (-0.3,1.35); \coordinate (D4) at (0,1.32); \coordinate (D5) at (0.3,1.35); \coordinate (D6) at (0.5,1.4); \coordinate (D7) at (0.65,1.5); \coordinate (D8) at (0.5,1.6); \coordinate (D9) at (0.3,1.65); \coordinate (D10) at (0,1.68); \coordinate (D11) at (-0.3,1.65); \coordinate (D12) at (-0.5,1.6);
			
			\coordinate (Od) at (0,-0.3); \coordinate (OA) at (0,0); \draw coordinate (OD) at (0,1.5); \coordinate (Ou) at (0,1.8);
			
			\draw[thick] (Od)--(OA); \draw[dashed,thick] (OA)--(OD); \draw[thick,->] (OD)--(Ou); \draw (Ou) node[above right]{$g=e^{i \sigma_{3} \f{\varphi}{N_t}}$};
			
			\draw (A1)--(A2)--(A3)--(A4)--(A5)--(A6)--(A7); 
			\draw[dashed] (A7)--(A8)--(A9)--(A10)--(A11)--(A12)--(A1);
			
			\draw (B1)--(B2)--(B3)--(B4)--(B5)--(B6)--(B7); 
			\draw[dashed] (B7)--(B8)--(B9)--(B10)--(B11)--(B12)--(B1);

			\draw (C1)--(C2)--(C3)--(C4)--(C5)--(C6)--(C7); 
			\draw[dashed] (C7)--(C8)--(C9)--(C10)--(C11)--(C12)--(C1);

			\draw (D1)--(D2)--(D3)--(D4)--(D5)--(D6)--(D7); 
			\draw[dashed] (D7)--(D8)--(D9)--(D10)--(D11)--(D12)--(D1);
			
			\draw (A1)--(D1); \draw (A2)--(D2); \draw (A3)--(D3); \draw (A4)--(D4); \draw (A5)--(D5); \draw (A6)--(D6); \draw (A7)--(D7);
			
			
			\coordinate (Angle1) at (0.56,-0.16);
			\coordinate (Angle2) at (0,-0.43);
			
			\draw[thick,->] (Angle1) arc (0:-180: 0.6cm and 0.2cm);
			\draw (Angle2) node{$\varphi$};
			
			\coordinate (L1) at (-0.58,0.7); \coordinate (L2) at (-0.57,0.18);
			\coordinate (L1t) at (-0.86,0.58); \coordinate (L2t) at (-0.82,0.07);
			\coordinate (L1tt) at (-0.87,0.53); \coordinate (L2tt) at (-0.83,0.09);
			
			\draw[thick, ->] (L1)--(L1t); \draw[thick, ->] (L2)--(L2t); \draw[thick,->] (L1tt) to[bend right] node[left]{$2 \psi_{t,x}^{L}$} (L2tt);
			
			\coordinate (T1) at (0.4,0.6); \coordinate (T2) at (0.57,0.69);
			\coordinate (T1t) at (0.57,0.34); \coordinate (T2t) at (0.83,0.6);
			\coordinate (T1tt) at (0.62,0.37); \coordinate (T2tt) at (0.8,0.57);
			
			\draw[thick, ->] (T1)--(T1t); \draw[thick, ->] (T2)--(T2t); \draw[thick,->] (T1tt) to[bend right] node[below right]{$2 \psi_{t,x}^{T}$} (T2tt);

			\coordinate (O) at (1.13,1.32);
			\draw (O) node{$n=1$};
		\end{tikzpicture}
		\caption{The reconstructed toroidal geometry, for $n=1$, represented as a cylinder with ends identified up to a twist of an angle $\varphi$. The definition of the dihedral angles $\psi^{T,L}_{t,x}$ have also been highlighted. On each vertical links of the boundary lives a group element $g=e^{i \sigma_{3} \f{\varphi}{N_t}}$.}
		\label{chap5:fig:reconstruction_n_1}
	\end{center}
\end{figure}
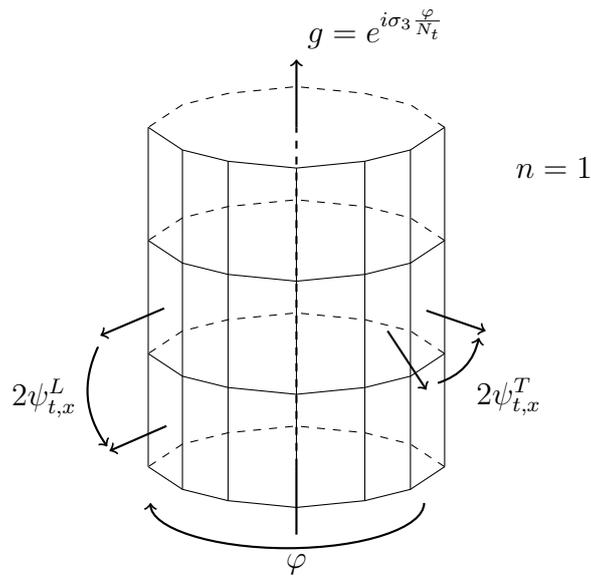

For a generic $n\neq 0$, the surface of the cylinder wraps around itself exactly $n~$ times before closing. Indeed, in that case, the sum over the dihedral angles is not $2\pi$ but $2 \pi n$ and each local dihedral angle is n time bigger that in the case $n=1$. Schematically, for $n=2$, this corresponds to a cylinder wrapping one time around itself before closing, see figure \ref{chap5:fig:wrapping_n_2}. This surface cannot be embedded in $\mathbb R^3$. It can, however, be immersed, see \cite{Dowdall:2009eg}. 
\begin{figure}[!htb]
	\begin{center}
		\includegraphics[width=.3\textwidth]{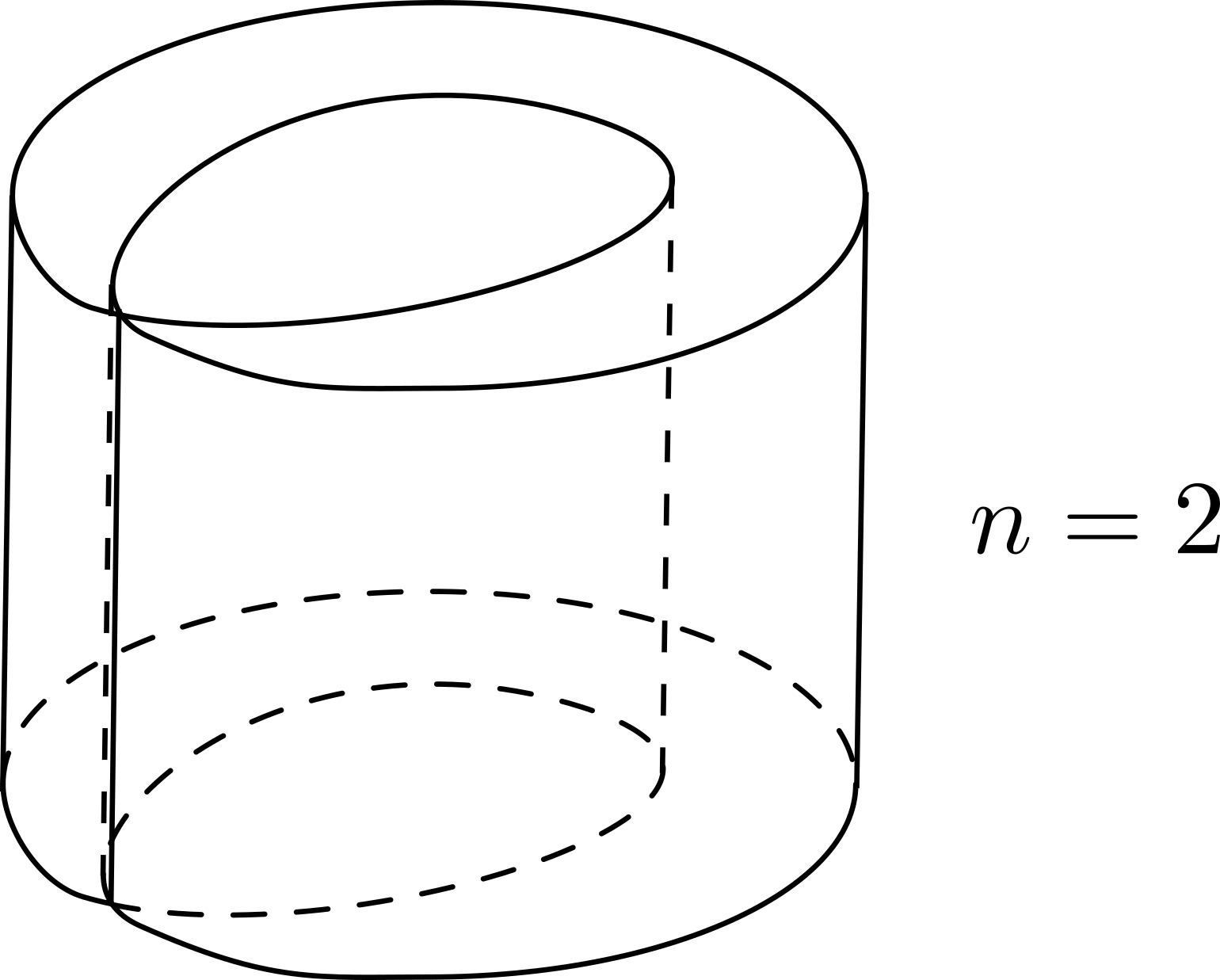}
		\caption{A sketch of the surface reconstructed for $n=2$.}
		\label{chap5:fig:wrapping_n_2}
	\end{center}
\end{figure}
As we anticipated, the case $n=0$ is peculiar. In that case, the torus is completely flat in the $x$ direction, while the curvature in the time direction switch from one angle to its opposite, see figure \ref{chap5:fig:case_n_0}. The identification in the $x$ direction is then not recovered from the geometrical perspective, but is more a compactification of a two-dimensional plane.
\begin{figure}
	\begin{center}
		\includegraphics[width=.6\textwidth]{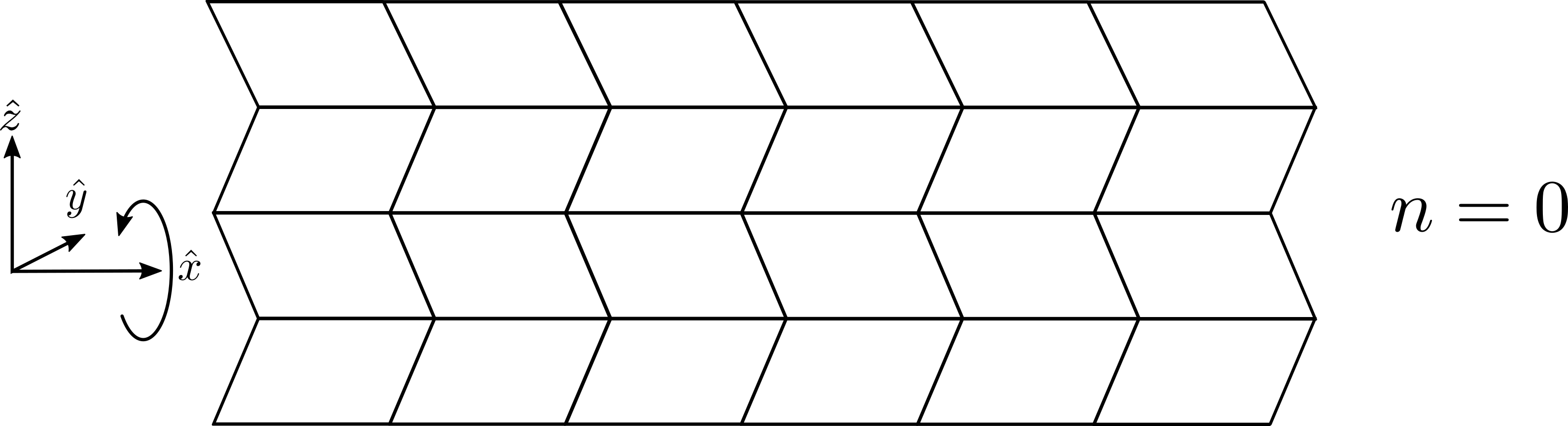}
		\caption{A sketch of the bending along constant time slices which is at the origin of the continuous set of solutions from which the $n=0$ solution is part.}
		\label{chap5:fig:case_n_0}
	\end{center}
\end{figure}
 
If $K=1$, the operation of moving from one cell to the temporally following one takes the initial cell to visit {\it all} the other cells before coming back to the initial one. This fact is what gives the rigidity to the structure, and enforces all the $\psi^T_x$ to be constant. Hence, for $K=1$, the prism described above has a regular polygon for a basis.

If $K>1$, on the other hand, this procedure produces exactly $K$ independent closed cycles of cells. The extrinsic geometry structure needs to be periodic only modulo $K$.
Considering groups of $K$ spatially consecutive cells as a single unit, we find again the same regular structure as the one discussed above for the regularly quadrangulated torus, the only difference being that the fundamental cells are now not necessarily planar polygons.  As a consequence, one expects that a regular solution, $\tl G_{t,x} = \E^{\I \alpha_{t,x} \sigma_z}$ with $\alpha_{t,x} = ( \bar\varphi + \tfrac{\varphi}{N_t} t + \psi^T x)$, can be deformed to another neighbouring solution by adding first-order perturbations of the type 
\begin{equation}
\alpha_{t,x} \mapsto \alpha_{t,x} +  \eps\sum_{m=1}^{k-1}\alpha_m\sin\left( \frac{2\pi }{k} m x \right).
\end{equation}
where $\epsilon\ll1$. We can check that, at first order in $\eps$, these are still solutions of the equation of motion, at least if $N_x$ is even and $K$ is odd, showing the existence of a continuous space of solutions. In full generality, this fact is imprinted in the zeros of the one-loop determinant \footnote{Similar redundancies arise in the Regge calculus treatment of \cite{Bonzom:2015ans}.}. A constant-time section is sketched in figure \ref{chap5:fig:constant_time_slice}.

\begin{figure}[!htb]
	\begin{center}
		\includegraphics[width=.6\textwidth]{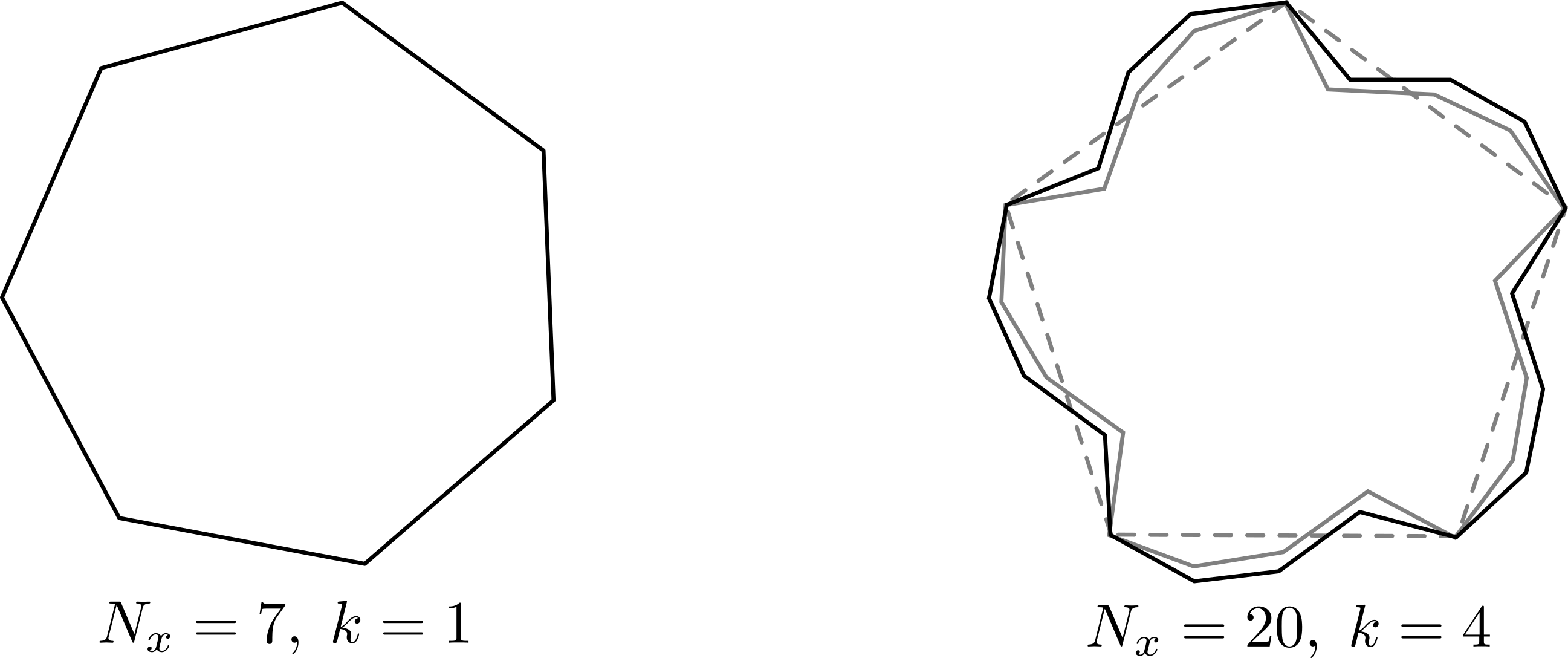}
		\caption{The reconstruction of a single time slice for $K=1$ and $K>1$. In the second case, two infinitesimally close solutions of the saddle point equations are shown.}
		\label{chap5:fig:constant_time_slice}
	\end{center}
\end{figure}

Finally, we comment on the interpretation of negative values of $n$. Under the change $n\mapsto-n$, there are no major changes in the geometric interpretation apart from $\psi^T\mapsto-\psi^T$, globally.
The presence of two sectors of solutions for the dihedral angles is well-known in the Ponzano-Regge model, and is in general attributed to the contribution of two oppositely oriented geometries. 
Indeed, fixing the boundary metric of a manifold, the saddle point analysis is supposed to determine the corresponding classical conjugated momentum (provided the chosen intrinsic metric admits one). The sign of the momentum cannot, however, be determined by this analysis due to time-reversal invariance. In gravity, such momentum is precisely the extrinsic curvature here encoded in the $\psi^T$.

\subsubsection{Foldings}

In this paragraph, we focus on the remaining solution we did not consider. Recall that this solution is parametrized by the choice
\begin{equation}
	\psi^{L}_{t,x} = \psi^L_{t,x+1} = \f{\pi}{2} \quad \text{and} \quad \psi^T_{t+1,x} = - \psi_{t,x}^{T} \; .
\end{equation}

We have already seen that the analogue solution $\psi^T_{t,x}=\pi$ (which is only possible if $N_x$ is even) contains a continuum of solutions, which falls outside the $X$ and $Z$-family classification and that we chose to ignore by considering $N_x$ odd.

These equations imply
\begin{equation}
	\tl G_{t,x+1} = \tl G_{t,x} \E^{\I \psi^T \sigma_3}
	\qquad\text{and}\qquad
	\tl G_{t+1,x} = \tl G_{t,x} \E^{\I \f{\pi}{2} \sigma_1},
\end{equation}
as well as 
\begin{equation}
	\tl G_{t,x+1}\E^{\I \f{\pi}{2} \sigma_1} = \tl G_{t+1,x+1} = \tl G_{t+1,x} \E^{-\I \psi^T \sigma_3}  .
\end{equation}
Extending these solutions homogeneously on a spacial slice, we see that the geometry encoded is that of a folding, i.e. a dihedral angle of $2 \f{\pi}{2} = \pi$ along a line of equal-time spatial edges of the quadrangulation. 
In particular, the difference in sign of $\psi^T_{t,x}$ from one time-slice to the next across the folding, means that the ``inside'' and the ``outside'' of the cylinder get swapped across the folding itself. Of course, periodicity in time enforces an even number $2m<N_t$ of such foldings. The case $m=1$ is depicted in figure \ref{chap5:fig:fig_folding}.
\begin{figure}[t]
	\begin{center}
		\includegraphics[width=.25\textwidth]{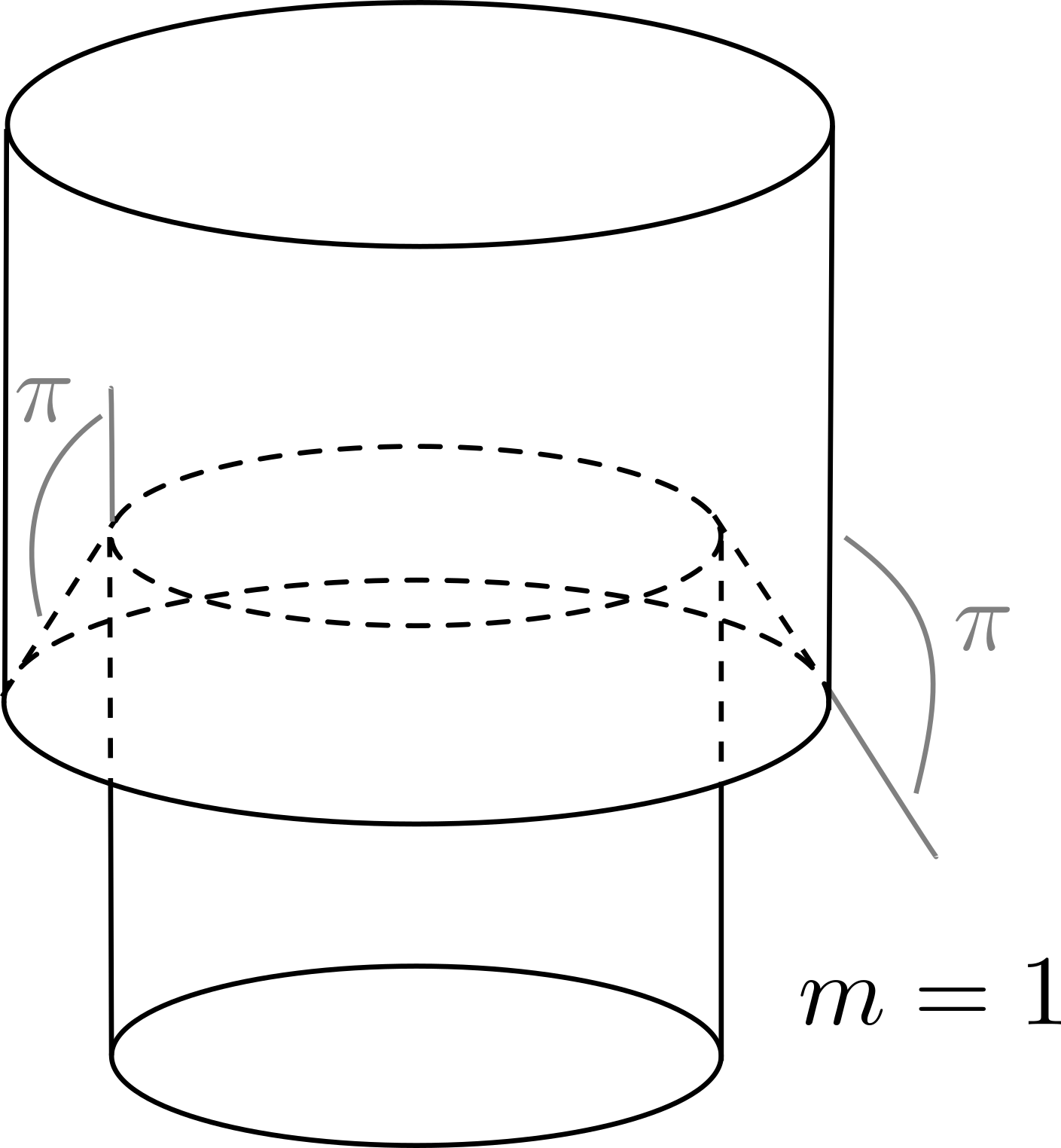}
		\caption{A schematic representation of the folding for $m=1$.}
		\label{chap5:fig:fig_folding}
	\end{center}
\end{figure}

The on-shell value of the LS action of such configurations, for $\psi^T=\tfrac{2\pi}{N_x}n$, is given by
\begin{equation}
	S|_o = - 2\I \pi  T(N_t^+ - N_t^-)n   - 2\I  \pi L N_x m,
\end{equation}
where $N_t^\pm$ are the number of time slices with positive and negative values of $\psi^T_{t,x}$, respectively.
E.g., if $m=0$, $N_t^+=N_t$ and $N_t^-=0$.
Thus, we find that the on-shell action effectively ``sees'' a shorter cylinder of inverse temperature
\begin{equation}
	\beta = (N_t^+ - N_t^-) T \;.
\end{equation}
The second term in the action simply counts the number of foldings. 

It is interesting to note that the foldings happen in the temporal direction. That is, the case with $N_t$ foldings corresponds to no temporal evolution at all. This, in a sense, breaks the unitarity of the temporal evolution and it is a case that must be studied in more detailed.

\subsection{Hessian of the action}

We now come back to the actual computation of the amplitude. In order to obtain the one-loop determinant, we need to develop the action to quadratic order around the solution of interest and calculate the associated Hessian.

First, fix a solution $o$ of the saddle point equations  $(G^o_{t,x}, \varphi^o)$ by
\begin{equation}
	G_{t,x}^o = \E^{\I \left(\bar\varphi+\frac{\varphi^o}{N_t} t + \psi^T_o x  \right)\sigma_3},
	\qquad\text{with}\qquad
	\psi^T_o = \frac{\pi}{N_x}n
	\qquad\text{and}\qquad
	\varphi^o = - N_\gamma \psi^T_o  + \pi n',
\end{equation}
and where $\bar\varphi$ is an arbitrary global rotation parameter due to the remaining gauge symmetry. 

This solution is valid for any $n$ and $K$. If $N_x$ is odd, $n\neq0$, and $K=1$, this solution is isolated, modulo the gauge parameter $\bar\varphi$, and the Hessian of the action is non degenerate.  As we will prove in a following section this is not the case for $n=0$ or $K>1$, confirming the claims of the previous sections

Consider the following parametrization of the linear perturbations around $(G^o_{t,x}, \varphi^o)$: $\vec a_{t,x}\in\mathbb R^3$ and $\phi \in \mathbb R$, such that
\begin{equation}
	G_{t,x} = G^o_{t,x} \E^{\vec a_{t,x} . \vec \sigma} 
	\qquad\text{and}\qquad
	\varphi = \varphi^o + \phi.
\end{equation}

The expansion of the action to second order reads
\begin{align}
	S & = S_o
	+ \left(%
	\frac{1}{2}\left.\frac{\pp^2  S}{\pp{a^j_{s,y} }\pp{a^k_{t,x} }}\right|_o  a^j_{s,y} a^k_{t,x}%
	+ \left.\frac{\pp^2  S}{\pp\phi \pp{a^k_{t,x} } }\right|_o       a^k_{t,x} \phi%
	+ \frac{1}{2}\left.\frac{\pp^2  S}{\pp\phi^2} \right|_o  \phi^2 %
	\right) + \mathrm o(a^2, \phi^2, a \phi)\notag\\
	& = S_o
	+\frac{1}{2} \left(%
	H^{j;k}_{s,y;t,x} a^j_{s,y} a^k_{t,x} %
	+ (H^{k}_{\phi; t,x} + H^{k}_{t,x; \phi})    a^k_{t,x} \phi%
	+  H_{\phi;\phi} \phi^2 %
	\right) + \mathrm o(a^2, \phi^2, a \phi),
\end{align}
where in the last line we have introduced the following notation $H_{\alpha,\beta}$ for the Hessian matrix:
\begin{subequations}\label{Hessiandef1}
	\begin{align}
	H^{j;k}_{s,y;t,x} & = \frac{1}{2}\left.(\nabla_j^{s,y} \nabla_k^{t,x} + \nabla_k^{t,x} \nabla_j^{s,y} ) S \right|_o \\
	H^{k}_{t,x; \phi} & =  \left.  \nabla_k^{t,x} \pp_\phi S \right|_o  \\
	H^{k}_{\phi; t,x} & = \left. \pp_\phi \nabla_k^{t,x}  S \right|_o  =  H^{k}_{t,x; \phi}  \\
	H_{\phi;\phi} & = y\left. \pp^2_\phi S \right|_o 
	\end{align}
\end{subequations}
where $\nabla_k^{t,x}|_o =\pp/\pp a^k_{t,x}$ and $j,k=1,2,3$ are indices for the $\su(2)$ Lie algebra components. Repeated indices are summed over. Since the action only involves correlations between the nearest neighbours, the only contribution to Hessian term $H^{j;k}_{s,y;t,x}$  is for $(s,y) = (t \pm 1,x)$ and $(s,y) = (t,x \pm 1)$.

The explicit form of the Hessian can be worked out by further deriving the first order derivation of the action. Organizing the $(1 + N_t\times N_x )$-dimensional perturbation vector as
\begin{equation}
	\bfa^T = \Big(\phi, (\vec a_{t=1,x=1}, \vec a_{t=1, x=2} , \cdots) , \cdots, (\vec a_{t=N_t,x=1}, \vec a_{t=N_t, x=2}, \cdots)\Big)^T,
\end{equation}
the Hessian matrix of second derivatives of the action can be put into the form (empty entries are vanishing entries)
\begin{equation}
\mathbf{H} = 
\mat{c|ccccc}%
{%
	F & \DD & \DD & \DD &\cdots& \DD  \\\hline%
	\DD^T &\GG & \CC & &   &\CC^T_\gamma \\%
	\DD^T&\CC^T& \GG &\CC&&\\%
	\DD^T&& \CC^T&\GG&\ddots&\\%
	\vdots&&&\ddots&\ddots&\CC\\%
	\DD^T&\CC_\gamma&&&\CC^T&\GG\\%
}.
\end{equation}
This is a matrix made of $(1+ N_t)\times(1+N_t)$ blocks built as follows. The details of the computation can be found in appendix \ref{app:hessian_computation}. Here, we only recapitulate the results.
\medskip

In the top left corner there is a $1\times1$-dimensional block (remember that $L$ and $T$ denote the values of the spins associated to the edges along the space and time direction, respectively),
\begin{equation}
	F\,=\,H_{\phi\phi}\,=\, 2L N_x \; .
\end{equation}
The first row and the first column are occupied by the element $\DD$ and its transpose respectively, with $\DD$ the $(3N_x)$-dimensional covector
\begin{equation}
\DD=\overbrace{D\otimes \cdots \otimes D}^{N_x-\text{times}}\; .
\end{equation}
The three components, $k=1,2,3$, of $D$ are
\begin{equation}
D_k\,=\, H^{k}_{t,x; \phi}
\,=\, \frac{4 \I L}{N_t} \delta^k_2 .
\end{equation}

Similarly, the blocks $\GG$ are ${3N_x\times 3N_x}$ blocks. They encode the spatial coupling of perturbations on a given time-slice. They are defined by 
\begin{equation}
\GG =  \mat{ccccc}%
{%
	A &B&&& B^T\\
	B^T&A&B&&\\
	&B^T&A&\ddots&\\
	&&\ddots&\ddots&B\\
	B&&&B^T&A
}
\end{equation}
with $A$ and $B$ given by the following $3\times3$ matrices 
\begin{align}
A^{jk}&\,=\, H^{j;k}_{t,x;t,x}
\,=\,  4 T \left( \delta^{jk} - \delta^k_3\delta^j_3 \right) + 4 L \left( \delta^{jk} - \delta^k_1\delta^j_1 \right) ,\\
B^{jk} &\,=\, H^{j;k}_{t,x-1;t,x}\,=\,
-2 T\left(  R_z(-2\psi^T_o)^k{}_j  + \I R_z(-2\psi^T_o)^k{}_i \epsilon^{ij3} -   \delta^k_3\delta^j_3    \right) .
\end{align}
($R_z(\alpha)$ denotes the matrix describing the rotation around the $z$--axis by an angle $\alpha$.). Explicitly in terms of a $3\times 3$ matrices, we have
\begin{equation}
	A = 4 \begin{pmatrix}
	T & 0   & 0 \\
	0 & T+L & 0 \\
	0 &  0  & L  
	\end{pmatrix}
\end{equation}
and
\begin{equation}
	B = - 2 T \E^{-2 \I \psi^T_o} 
	\begin{pmatrix}
	1 & -i & 0 \\
	i & 1  & 0 \\
	0 & 0  & 0
	\end{pmatrix}
	\; .
\end{equation}

The blocks $\CC$ are also ${3N_x\times 3N_x}$ blocks. They encode the coupling between subsequent time slices, 
\begin{equation}
\CC =  \mat{cccc}%
{%
	\,C\, &&&\\
	&\,C\,&&\\
	&&\,C\,&\\
	&&&\,\ddots
},
\end{equation}
with $C$ given by the $3\times3$ matrix 
\begin{equation}
C^{jk}\,=\,H^{j;k}_{t-1,x;t,x} \, =\, -2 L \Big(  \delta^{kj} + \I\epsilon^{kj1} - \delta^k_1\delta^j_1 \Big) .
\end{equation}
Explicitly
\begin{equation}
	C = -2 L
	\begin{pmatrix}
	0 & 0 & 0   \\
	0 & 1 & -\I \\
	0 & i & 1    
	\end{pmatrix} \; .
\end{equation}

The first and last time-slices of the cylinder, however, couple under a twist of $N_\gamma$ units. This is encoded in the shifted $\CC$ block we named $\CC_\gamma$:
\begin{equation}
\CC_\gamma = 
\begin{array}{c} 1\\2\\\vdots\\N_\gamma-1\\ N_\gamma\\ N_\gamma+1\\\vdots\\N_x \end{array}
\mat{cccc|cccc}%
{%
	&&&&\,C\,&&&\\
	&&&&&\,C\,&&\\
	&&&&&&\,\ddots&\\
	&&&&&&&\,C\,\\
	\hline
	\,C\,&&&&&&&\\
	&\,C\,&&&&&&\\
	&&\,\ddots&&&&&\\
	&&&\,C\,&&&&\\
}.
\end{equation}

Developed at second order around the solution, the action can now be written as
\begin{align*}
	S &= S_o + \frac12 \bfa. \mathbf{H} \bfa + \mathrm O(\bfa^2) \\
	  &=S_{o} + \f{1}{2} \phi F \phi +  \f{1}{2}\sum_{t,x=0}^{N_t-1,N_x-1} 2 \phi D. \vec{a}_{t,x} + \vec{a}_{t,x} .A. \vec{a}_{t,x} + \vec{a}_{t,x-1} .B. \vec{a}_{t,x} + \vec{a}_{t-1,x} .C. \vec{a}_{t,x}
\end{align*}
The one-loop determinant is therefore simply given by the determinant of $\mathbf H$.

Notice that $\mathbf H$ is essentially a band matrix, but that its entries (almost) do not depend on the $(t,x)$ labels.  Such matrices can be diagonalized via a Fourier transform in $(t,x)$. We have however to introduce a `twist' due to the shifts appearing in the blocks $\CC_\gamma$. This will enable us to compute $\det(\mathbf H)$.\\

\subsection{Twisted Fourier transform}

The determinant of the Hessian is readily computed by performing a Fourier Transform. However, for it to respect the peculiar boundary conditions we imposed on the lattice, we have to consider a particular twisted definition. For $\alpha = 1,2,3$ we consider the Fourier component $\hat a_{\omega,k}^{\alpha}$ defined by
\begin{equation*}
	a_{\omega,k}^{\alpha} = \frac{1}{\sqrt{N_t N_x}} \sum_{t,x=0}^{N_t-1,N_x-1} \E^{\I\frac{2\pi}{N_x}k x} \E^{\I \frac{2\pi}{N_t} \left(\omega -\f{\gamma}{2 \pi} k\right)t} a_{t,x}^{\alpha} \; ,
\end{equation*}
where $\omega \in [0,N_t-1]$ denotes the time modes and $k \in [0,N_x-1]$ the spatial modes. Introducing the notation
\begin{equation}
	\psi  = \frac{2\pi }{N_x},
	\qquad \text{and} \qquad
	\chi_{\omega,k} = \frac{2\pi}{N_t} \left( \omega - \frac{ \gamma}{2\pi}k\right)
\end{equation}
we have
\begin{equation}
	a_{\omega,k}^{\alpha} = \frac{1}{\sqrt{N_t N_x}} \sum_{t,x=0}^{N_t-1,N_x-1} \E^{\I\psi k x} \E^{\I \chi_{\omega,k}t} a_{t,x}^{\alpha} \; .
\end{equation}
Note that in these notation, $\psi= 2 \psi^{T}_{o}$. To keep the notation a bit light, we use the same letter for the Fourier component and the direct component. The name of the indices always allow us to know which object is considered without any problem.

The consequence of such a twist is that the Brillouin zone is reciprocally twisted, but in the spacial direction only
\begin{equation}
	a_{\omega+N_\gamma, k+N_x}^{\alpha} = a_{\omega,k}^{\alpha} = a_{\omega + N_t, k}^{\alpha} \; .
\end{equation}
The main difference with the untwisted Fourier Transform is that the periodic conditions do not involve any phase. Without considering the twist, the time direction periodic condition reads $a_{N_t,k}^{\alpha} = \E^{\I \gamma k}\omega_{0,k}$. Since this phase depends on $k$, we would not have been able to diagonalize the Hessian with this transformation.

Now, recall that by definition of the perturbation, $a_{t,x}^{\alpha}$ is real. Therefore, we can, as usual, introduce the complex conjugate of $a_{\omega,k}^{\alpha}$ by considering negative modes 
\begin{equation}
	\bar{a}_{\omega,k}^{\alpha} = a_{-\omega,-k}^{\alpha} \; .
\end{equation}
In term of modes in the Brillouin zone, this relations reads
\begin{equation}
	a_{N_\gamma - \omega + N_t \Theta(\omega-N_\gamma)  , N_x -k}^{\alpha}= \bar{a}^\alpha_{E, p}
\end{equation}
where $\Theta$ is the Heaviside function with the choice $\Theta(0) = 0$. Excluded the zeroth modes, the above equation always relates modes at two different momenta, unless
\begin{subequations}
	\begin{equation*}
	\Big(\;k = N_x/2 \qquad\text{and}\qquad \omega = N_\gamma/2\;\Big),
	\end{equation*}
	or
	\begin{equation*}
	\qquad\quad\Big(\;k=N_x/2\qquad\text{and}\qquad \omega=(N_\gamma+N_t)/2\;\Big).
	\end{equation*}
\end{subequations}
Both these cases are excluded with the requirement of having $N_x$ odd. Note also that the first case implies at least $K=2$.

Restricting ourselves to the case where $N_x$ is odd, we can then always consider $\hat a^{\alpha}_{\omega,k}$ as independent complex variables by doubling each degree of freedom. The only exception is for the zeroth mode, which is real, as always. We will make use of this property to explicitly compute the determinant of the Hessian. At the end of the day, the Hessian will be diagonal by block. All blocks but one are $3 \times 3$ blocks. The remaining block is $4 \times 4$, and corresponds to the coupling between the zeroth Fourier mode and the remaining bulk information and their self-interactions.
\medskip

We now quickly derive the expression of the Hessian in Fourier modes. From its expression, it is immediate to see that the space-time independent perturbation $\phi$ is simple in the sense that the Hessian expression in the Fourier basis is exactly the same as in the direct basis. This is obvious since these terms do not involve any coupling with the lattice direction. The same can be said for the terms only involving a given lattice position. At the end of the day, we really need to explicitly compute two contributions to the Hessian in Fourier modes. That is the one coming from the coupling of spatial and temporal slices, i.e. the terms of the form $\Delta_{s} = \sum_{t,x} \vec{a}_{t,x-1} .B. \vec{a}_{t,x}$ and $\Delta_{t} = \sum\limits_{t,x}\vec{a}_{t-1,x} .C. \vec{a}_{t,x}$.

Starting with $\Delta_{s}$, we explicitly expand the action of the matrix $B$
\begin{equation*}
	\vec{a}_{t,x-1} .B. \vec{a}_{t,x} = -2T \E^{-\I \psi n}(a_{t,x-1}^1 a_{t,x}^1 + a_{t,x-1}^2 a_{t,x}^2 + i (a_{t,x-1}^2 a_{t,x}^1 - a_{t,x-1}^1 a_{t,x}^2))
\end{equation*}
Going to the Fourier components, the expression becomes
\begin{equation*}
	\begin{split}
		\vec{a}_{t,x-1} .B. \vec{a}_{t,x}
		= \f{-2T \E^{-\I \psi n}}{N_t N_x} \sum_{\omega_1,k_1} \sum_{\omega_2,k_2} \Big(\bar{a}_{\omega_1,k_1}^1& a_{\omega_2,k_2}^1 + \bar{a}_{\omega_1,k_1}^2 a_{\omega_2,k_2}^2 +  \I (\bar{a}_{\omega_1,k_1}^2 a_{\omega_2,k_2}^1 - \bar{a}_{\omega_1,k_1}^1 a_{\omega_2,k_2}^2) \Big) \\ & \E^{\I \psi k_1} \E^{\I \psi (k_2-k_1)x} \E^{\I \chi_{\omega_2-\omega_1,k_2-k_1}t} \;.
		\end{split}
\end{equation*}
Taking into account the sums over $x$ and $t$, we get that $k_2 = k_1$ and that $\omega_2 = \omega_1$. In Fourier mode, $\Delta_s$ becomes
\begin{equation}
	\Delta_s = -2T \E^{-\I \psi n}\sum_{\omega,k=0}^{N_t-1,N_x-1}  \Big(\bar{a}_{\omega,k}^1 a_{\omega,k}^1 + \bar{a}_{\omega,k}^2 a_{\omega_2,k}^2 +  \I (\bar{a}_{\omega,k}^2 a_{\omega,k}^1 - \bar{a}_{\omega,k}^1 a_{\omega,k}^2) \Big) \E^{\I \psi k}
\end{equation}

It is possible to simplify this expression using the symmetry
\begin{equation*}
\Delta_{s} = \sum_{t,x=0}^{N_t-1,N_x-1} \f{1}{2} \left(\vec{a}_{t,x-1} .B. \vec{a}_{t,x} + \vec{a}_{t,x} .B. \vec{a}_{t,x-1}  \right) \;.
\end{equation*}

At the end of the day, we obtain
\begin{equation}
	\Delta_s =4 \sum_{\omega,k=0}^{N_t-1,N_x-1} 
		a^*_{\omega,k}
			\begin{pmatrix}
			-T \E^{- \I \psi n} \cos(\psi k) & -T \E^{- \I \psi n} \sin(\psi k) & 0 \\
			T \E^{- \I \psi n} \sin(\psi k) & -T \E^{- \I \psi n} \cos(\psi k)  & 0 \\
							0				&				0					& 0
			\end{pmatrix}
		a_{\omega,k}
\end{equation}
where $a^*_{\omega,k}$ is the complex transpose of $a_{\omega,k}$.

A similar computation can be done for $\Delta_t$ and returns
\begin{equation}
	\Delta_t = 4\sum_{\omega,k=0}^{N_t-1,N_x-1} 
		a^*_{\omega,k}
			\begin{pmatrix}
			0	&				 0 									& 				0 							\\
			0	& -L \E^{- \I \psi n} \cos(\chi_{\omega,k})			& -L \E^{- \I \psi n} \sin(\chi_{\omega,k}) \\
			0	& L \E^{- \I \psi n} \sin(\chi_{\omega,k})			& -L \E^{- \I \psi n} \cos(\chi_{\omega,k})
			\end{pmatrix}
		a_{\omega,k} \; .
\end{equation}

Adding these two terms plus the one depending on only one lattice site returns
\begin{equation}
	\footnotesize
	H_{\omega,k} = 4
	\begin{pmatrix}
		T\left[1- \E^{-\I \psi n }\cos(\psi  k) \right] & - T\E^{-\I \psi n}\sin(\psi  k)   																	& 0 														\\
		T\E^{-\I \psi n}\sin(\psi k)   					& T\left[1 - \E^{-\I \psi n}\cos(\psi k)\right] + L\left[1- \cos\left (\chi_{\omega,k}\right)\right] 	&  -L\sin\left (\chi_{\omega,k}\right)						\\
		0 							 					&  L\sin\left ( \chi_{\omega,k}\right)  																&  L\left [1 -\cos\left ( \chi_{\omega,k}\right)\right]
	\end{pmatrix} \; ,		
	\normalsize
\end{equation}
where we recall that
\begin{equation*}
	\psi  = \frac{2\pi }{N_x} \; ,
	\qquad
	\chi_{\omega,k} = \frac{2\pi}{N_t} \left( \omega - \frac{ \gamma}{2\pi}k \right)
	\qquad\text{and}\qquad
	\gamma = \frac{2\pi N_\gamma}{N_x} \;.
\end{equation*}

The remaining elements we did not compute in Fourier mode are the one coming from the perturbation $\phi$. It is immediate to see that the self-interacting term does not change in Fourier mode whereas the coupling of the field $\varphi$ with the lattice site translates into a coupling with the zeroth Fourier mode. At the end of the day, we get
\begin{equation}
S = S_o + \f{1}{2} \left( \phi^2 F + 2 \phi D.a_{0,0} + \bar{a}_{0,0} H_{0,0} a_{0,0} \right) + \f{1}{2} \sum_{t,x \neq 0,0}^{N_t-1,N_x-1} \bar{a}_{\omega,k} H_{\omega,k} a_{\omega,k} \; ,
\end{equation}
where $a_{0,0}$ is the zeroth Fourier mode. As such, this perturbation computation is not well-defined. Indeed, looking at $H_{\omega,k}$ carefully, we see that for $(\omega,k) = (0,0)$ the third row and third column of the matrix are identically zero. That is not surprising at all. Recall that our system is still globally invariant by a gauge transformation along the direction $z$. For the following computation, we introduce the matrix $H^{'}_{0,0}$ corresponding to $H_{0,0}$ but where the last row and column are removed.

\subsection{One-loop expansion of the Ponzano-Regge amplitude}

We are now ready to compute the one-loop expansion of the Ponzano-Regge amplitude. Recall that the amplitude reads
\begin{equation}
	\la Z_{PR}^{\cK}| \Psi_{coh}  \ra 
	=  \left[\frac{1}{\pi}\int_{0}^{2\pi} \dd \varphi \,\sin^2\left(\varphi\right)  \prod_{(t,x)} \int_{\SU(2)} \dd G_{t,x} \right]  \, \E^{-S (G_{t,x},\varphi)} \; .
\end{equation}

We will focus on the contribution of the $n$-th term of the saddle point without any possible folding. The contribution is given by
\begin{align}
	\la Z_{PR}^{\cK}| \Psi_{coh}  \ra ^\text{1-loop}_{o,n} 
	& =2\pi\times \frac{1}{\pi}\sin^2\left( \frac{\gamma n}{2} \right) \times \E^{-S_o} \left( \int_{\mathbb R}\dd \phi \int_{\mathbb R^2} \dd a_{0,0} \;  \E^{\frac12 F \phi^2 + \phi D.a_{0,0} + \frac12  a^{*}_{0,0} H'_{0,0} a_{0,0}} \right)\times\notag\\
	&\hspace{5cm}\times\prod_{(\omega,k)\neq(0,0)}\left( \int_{\mathbb C^3} \dd a_{\omega,k}  \,  \E^{ \frac12  a^*_{\omega,k} \hat H_{\omega,k} \hat a_{\omega,k}   } \right)^{1/2} \; .
\end{align}
As we said previously, we have introduced the matrix $\hat H'_{0,0}$ due to the residual global gauge symmetry by a transformation along any element along the $z$ direction. This corresponds to the choice of $\bar \varphi$ previously introduced. This exact symmetry of the action produces the $2\pi$ volume factor coming from an integral over it.

The second factor in the one-loop expression corresponds to the evaluation of the measure term at the critical solution, given by $\varphi_{o} = \f{\gamma}{2} n$ while the exponential $\E^{\I S_{o}}$ corresponds to the evaluation of the action at the critical solution. Recall that the on-shell action is given by equation \eqref{chap5:eq:classical_action}. In our case, the critical solution imposes the (demi) dihedral angles $\psi^{T} = \f{\pi}{N_x}n$ and $\psi^{L} = 0$. The on-shell action is then
\begin{equation*}
	S_o = \I \sum_{t,x=0}^{N_t-1,N_x-1} T \f{2 \pi}{N_x}n = \I 2 \pi n N_t \; .
\end{equation*}
Hence, the corresponding term of the one-loop amplitude is just a sign factor $(-1)^{2 T N_t n}$. Therefore, although the on-shell action takes formally the expected form of an on-shell discretized GBY action, the consequence of the discreteness of the lengths, the contribution is just a sign factor. Note that the on-shell action can really be interpreted as a discretized GBY term only in the case $n=1$. Indeed, as we previously mentioned, this is the only case that is embeddable in $\R^{3}$ and therefore providing us with the usual geometrical picture. 

Finally, the remaining factors are the Gaussian integrals on the linear perturbations, which have been approximated, as usual, to be integrals over the full real line, rather than over their original compact spaces. Note that we have used the trick previously mentioned of doubling the degrees of freedom in the first Brillouin zone and hence we have taken the square root in the second series of integrals. We are now left the computation of these integrals. Both are computed using the usual expression of a Gaussian integral in terms of the determinant
\begin{equation}
	\int_{-\infty}^{\infty} \E^{-\f{1}{2} x^{t}.A.x } \dd^n x = \sqrt{\f{(2\pi)^n}{\det(A)}} \; ,
\end{equation}
where $x^t = (x_1,...,x_n)$ and $A$ and $n$ square matrix.

\subsubsection{Computation of the first Gaussian integral}

We focus first on the integration over $\phi$ and $\hat a_{0,0}$
\begin{equation*}
	\cN_{(0,0),\phi} \equiv \int_{\R}\dd \phi \int_{\R^2} \dd a_{0,0} \;  \E^{\frac12 F \phi^2 + \phi {} D. a_{0,0} + \frac12 a^*_{0,0} \hat H'_{0,0} a_{0,0} } 
\end{equation*}

To compute this integral, it is easier to rewrite everything in only one matrix. Consider the vector $a' = (\phi,a^1_{0,0},a^2_{0,0})$. The previous integral becomes
\begin{equation*}
	\cN_{(0,0),\phi} = \int_{\R^3} \dd^3 a' \E^{\f{1}{2} (a')^*.A.a'}
\end{equation*}
where the matrix $A$ is
\begin{equation*}
	A=4
	\begin{pmatrix}
			\f{L N_x}{2}				&	0					&	\f{\I L}{N_t}			\\
			0					&	T(1-\E^{\I \psi n})	&	0			\\
			\f{\I L}{N_t}		&	0					&	T(1-\E^{\I \psi n})
	\end{pmatrix} \; .
\end{equation*}
 Therefore, the Gaussian integral is
\begin{equation}
		\cN_{(0,0),\phi} = \left(\frac{(2\pi)^3}{64 LT(1-\E^{-i\psi n})\left( \frac{L}{N_t^2} + \frac{TN_x}{2}(1- \E^{-i\psi n}) \right) }\right)^{1/2},
\end{equation}
a result which is independent of $\gamma$.

\subsubsection{Computation of the second Gaussian integral}

Each $(\omega,k)$ term gives
\begin{align*}
	\mathcal N_{\omega,k}& \equiv \left( \int_{\mathbb C^3} \dd a_{\omega,k}  \,  \E^{ \frac12  a^*_{\omega,k} H_{\omega,k} a_{\omega,k}   } \right)^{1/2} 
	\,=\,
	\left( \frac{(2\pi)^3}{\det H_{\omega,k} }\right)^{1/2} \notag\\
	& = \left(\frac{(2\pi)^3}{64 LT \E^{-i \psi n} (2- 2\cos \chi_{\omega,k})\Big( (L+T) (\cos(n\psi) - \cos(p \psi) ) + i L \sin(n\psi)  \Big) }\right)^{1/2}.
\end{align*}
As expected, the result is even in $(\omega,k)\mapsto(-\omega,-k)$. 

The last step is to perform the multiplication over the $(\omega,k)\neq(0,0)$. To do so, we first reorganize the product as follows
\begin{equation*}
\prod_{(\omega,k)\neq(0,0)} \mathcal N_{\omega,k} = \left(\prod_{\omega=1}^{N_t-1}\mathcal  N_{\omega,0}\right)\left(  \prod_{k=1}^{N_x-1}\prod_{\omega=0}^{N_t-1} \mathcal N_{\omega,k}\right).
\end{equation*}

This product can be computed using the following lemma (see appendix \ref{app:proof_formula_product_cos} for the proof)
\begin{lemma}
	We consider two integers $N$ and $M$. We denote the greatest common divisor (GCD) between $N$ and $M$ by $K$. We define two integers $n$ and $m$ such that
	\begin{equation*}
	N = K \, n \qquad \text{and} \qquad M = K \, m \; .
	\end{equation*}
	The following relation holds for all $x$ and $z$ complex numbers
	\begin{equation*}
	\prod_{k=0}^{N-1} \left(2 z + 2 \cos\left( \f{2 \pi M}{N}k + x\right) \right)= \left(2 \left( T_{n}(z) - (-1)^{n} \cos(n x) \right)\right)^K
	\end{equation*}
	where $T_{n}$ is the n-Chebyshev polynomial of the first kind.
	\label{lemma:cosine_product}
\end{lemma}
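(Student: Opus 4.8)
\textbf{Proof strategy for Lemma \ref{lemma:cosine_product}.}

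The plan is to reduce the product over the full set of $N$ roots of a shifted cosine to a product over a single fundamental block of size $n$, repeated $K$ times, and then to identify the block product with a Chebyshev polynomial. First I would note that since $K = \mathrm{GCD}(N,M)$ with $N=Kn$ and $M=Km$, the integers $m$ and $n$ are coprime. As $k$ runs over $\{0,1,\dots,N-1\}$, the argument $\frac{2\pi M}{N}k = \frac{2\pi m}{n}k$ takes each value in $\{0,\frac{2\pi}{n},\dots,\frac{2\pi(n-1)}{n}\}$ exactly $K$ times modulo $2\pi$: indeed $k\mapsto mk \bmod n$ is a bijection of $\mathbb Z/n\mathbb Z$ because $\gcd(m,n)=1$, and each residue class mod $n$ contains exactly $K$ elements of $\{0,\dots,N-1\}$. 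Consequently
\begin{equation*}
\prod_{k=0}^{N-1}\left(2z+2\cos\!\left(\tfrac{2\pi M}{N}k+x\right)\right)
=\left(\prod_{j=0}^{n-1}\left(2z+2\cos\!\left(\tfrac{2\pi j}{n}+x\right)\right)\right)^{K},
\end{equation*}
so it suffices to prove the $K=1$ case, i.e. to evaluate the $n$-fold block product.

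For the block product I would pass to exponentials. Writing $2z+2\cos\theta = e^{i\theta}+e^{-i\theta}+2z$ and factoring, one has $2z+2\cos\theta = e^{-i\theta}(e^{i\theta}-r_+)(e^{i\theta}-r_-)$ where $r_\pm$ are the two roots of $w^2+2zw+1=0$, namely $r_\pm = -z\pm\sqrt{z^2-1}$, which satisfy $r_+r_-=1$. Setting $\theta_j = \frac{2\pi j}{n}+x$, the product becomes
\begin{equation*}
\prod_{j=0}^{n-1}\left(2z+2\cos\theta_j\right)
=\left(\prod_{j=0}^{n-1}e^{-i\theta_j}\right)
\left(\prod_{j=0}^{n-1}(e^{i\theta_j}-r_+)\right)
\left(\prod_{j=0}^{n-1}(e^{i\theta_j}-r_-)\right).
\end{equation*}
Now $\prod_{j}e^{i\theta_j} = e^{i(nx + \pi(n-1))} = (-1)^{n-1}e^{inx}$, so the prefactor $\prod_j e^{-i\theta_j} = (-1)^{n-1}e^{-inx}$. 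For the other two factors, observe that $\{e^{i\theta_j}\}_{j=0}^{n-1}$ are exactly the $n$ solutions of $w^n = e^{inx}$; hence $\prod_{j}(w - e^{i\theta_j}) = w^n - e^{inx}$ as a polynomial identity in $w$. Evaluating at $w=r_+$ and $w=r_-$ and using $\prod_j(e^{i\theta_j}-r_\pm) = (-1)^n\prod_j(r_\pm - e^{i\theta_j}) = (-1)^n(r_\pm^n - e^{inx})$, we get
\begin{equation*}
\prod_{j=0}^{n-1}\left(2z+2\cos\theta_j\right)
= (-1)^{n-1}e^{-inx}\cdot(-1)^{2n}\left(r_+^n-e^{inx}\right)\left(r_-^n-e^{inx}\right).
\end{equation*}
Expanding the last product and using $r_+r_-=1$ (so $r_+^n r_-^n = 1$) gives $r_+^nr_-^n - e^{inx}(r_+^n+r_-^n) + e^{2inx} = 1 - e^{inx}(r_+^n+r_-^n)+e^{2inx}$, and multiplying by $(-1)^{n-1}e^{-inx}$ yields
\begin{equation*}
\prod_{j=0}^{n-1}\left(2z+2\cos\theta_j\right)
= (-1)^{n-1}\left(e^{-inx}+e^{inx}\right) - (-1)^{n-1}\left(r_+^n+r_-^n\right)
= (-1)^{n-1}\!\left(2\cos(nx) - (r_+^n+r_-^n)\right).
\end{equation*}
Finally I would identify $r_+^n + r_-^n$ with the Chebyshev polynomial: since $r_\pm$ are the roots of $w^2+2zw+1$, writing $r_\pm = -e^{\pm i\phi}$ with $\cos\phi = z$ (valid by analytic continuation for all complex $z$), one finds $r_+^n+r_-^n = (-1)^n(e^{in\phi}+e^{-in\phi}) = (-1)^n\cdot 2\cos(n\phi) = (-1)^n\,2\,T_n(z)$, using $T_n(\cos\phi)=\cos(n\phi)$. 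Substituting back,
\begin{equation*}
\prod_{j=0}^{n-1}\left(2z+2\cos\theta_j\right)
= (-1)^{n-1}\Big(2\cos(nx) - (-1)^n 2\,T_n(z)\Big)
= 2\,T_n(z) - (-1)^{n}2\cos(nx)
= 2\big(T_n(z) - (-1)^n\cos(nx)\big),
\end{equation*}
where the sign bookkeeping $(-1)^{n-1}\cdot(-(-1)^n) = (-1)^{n-1}\cdot(-1)^{n+1} = (-1)^{2n} = 1$ on the Chebyshev term and $(-1)^{n-1}$ on the cosine term (and $(-1)^{n-1} = -(-1)^n$) gives precisely the claimed expression. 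Raising to the $K$-th power via the reduction of the first paragraph completes the proof.

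\textbf{Main obstacle.} The routine algebra is harmless; the one place requiring care is the sign bookkeeping — tracking the factors $(-1)^{n-1}$, $(-1)^n$ through the exponential prefactor, the $(-1)^n$ from reversing each linear factor, and the $(-1)^n$ relating $r_\pm^n$ to $T_n(z)$ — and making sure the Chebyshev identification $r_+^n+r_-^n = (-1)^n 2T_n(z)$ is stated for general complex $z$ (the substitution $z=\cos\phi$ is a formal/analytic-continuation device, legitimate because both sides are polynomials in $z$). It is also worth a sentence confirming that the bijection argument mod $n$ is valid precisely because $\gcd(m,n)=1$, which is where the GCD hypothesis is used.
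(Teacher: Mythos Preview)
Your proof is correct. The reduction to the coprime block via the bijection $k\mapsto mk\bmod n$ is exactly what the paper does, but your evaluation of the block product is genuinely different from the paper's argument.

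The paper's appendix proves the $K=1$ identity by treating both sides as degree-$n$ polynomials in the variable $a$ (your $z$) and matching their roots: the left side vanishes at $a=-\cos(\tfrac{2\pi k}{n}+x)$, and using $T_n(a)=\cos(n\arccos a)$ one checks the right side has the same zeros, whence the two differ by a constant, fixed by evaluating at $a=1$. Your route is instead a direct closed-form computation: factor each term through the quadratic $w^2+2zw+1$, recognise $\{e^{i\theta_j}\}$ as the $n$-th roots of $e^{inx}$, and collapse the product to $(-1)^{n-1}\big(2\cos(nx)-(r_+^n+r_-^n)\big)$ before invoking $r_+^n+r_-^n=(-1)^n 2T_n(z)$. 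This is slightly longer but has the advantage of being a self-contained equality of numbers rather than a root-matching argument; in particular you never need to argue that the roots are simple or that the leading coefficients agree, both of which the paper's sketch leaves implicit. Your remark that the substitution $z=\cos\phi$ is legitimate because both sides are polynomials in $z$ is the right justification and worth keeping.
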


Applying this lemma for $z=1$, $x=\f{\gamma k}{N_t}$ $M=1$ and $N = N_t$ returns
\begin{equation*}
	\prod_{\omega=0}^{N_t-1} (2-2\cos\chi_{\omega,k}) = 2- 2\cos(\gamma k).
\end{equation*}
From this, one finds
\begin{equation*}
	\prod_{\omega=1}^{N_t-1} (2-2\cos\chi_{\omega,0})
	= \lim_{k\to 0}\frac{\prod_{\omega=0}^{N_t-1} (2-2\cos\chi_{\omega,k})}{2 - 2\cos\chi_{0,k}}
	=\lim_{k\to 0}\frac{ 2- 2\cos(\gamma k)}{2 - 2\cos\chi_{0,k}}
	= N_t^2 \; .
\end{equation*}

Finally, putting everything together, we find that the products over $(\omega,k)$ give
\begin{align}
	\prod_{(\omega,k)\neq(0,0)} & \mathcal N_{\omega,k} =
	\left(\frac{(2\pi)^3}{64 LT \E^{-i \psi n}}\right)^{\frac{N_x N_t -1}{2}}%
	\left(\frac{1}{(L+T)(\cos(n\psi)-1) + i L \sin(n\psi)}\right)^{\frac{N_t-1}{2}} \frac{1}{N_t} \times\notag\\%
	&\prod_{k=1}^{N_x-1} \left(   \frac{1}{\Big(2-2\cos(\gamma k)\Big)\Big((L+T)(\cos(n\psi)-\cos(k\psi)) + i L \sin(n\psi)\Big)^{N_t}}  %
	\right)^\frac{1}{2}  .
\end{align}
This formula holds whenever $(N_x/2 , N_\gamma/2)$ and $(N_x/2, (N_\gamma+N_t)/2)$ are {\it not} in $\mathbb N \times \mathbb N$, since in these cases we would be overcounting one real mode. In other words, these conditions ensure that the only real Fourier mode in the first Brillouin zone is $(\omega,k)=0$. In our case of $N_x$ odd, a simple rearrangement of the terms return
\begin{align*}
	\prod_{(\omega,k)\neq(0,0)} \mathcal N_{\omega,k}  \stackrel{N_x\text{ odd}}{=}&%
	\left(\frac{(2\pi)^3}{64LT \E^{-i \psi n}}\right)^{\frac{N_x N_t -1}{2}}%
	\left(\frac{1}{(L+T)(\cos(n\psi)-1) + i L \sin(n\psi)}\right)^{\frac{N_t-1}{2}} \frac{1}{N_t} \\%
	&\times \prod_{k=1}^{\frac{N_x-1}{2}}  \frac{1}{2-2\cos(\gamma k)} \\
	&\times \prod_{k=1}^{\frac{N_x-1}{2}} \left(\frac{1}{(L+T)(\cos(n\psi)-\cos(k\psi)) + i L \sin(n\psi)}  %
	\right)^{N_t} .
\end{align*}

\subsubsection{One-loop-amplitude and interpretation}

Finally, the total one-loop amplitude is
\begin{align}
	\la Z_{PR}^{\cK}| \Psi_{coh}  \ra^\text{1-loop}_{o}
	& = \sum_{n=1}^{N_x-1} (-1)^{2T N_t n} \times {\cal A}(n) \times 	{\cal D}(\gamma, n)  \nonumber\\
	&=  \sum_{n=1}^{N_x-1} (-1)^{2T N_t n}  {\cal A}(n) \big(2-2\cos(\gamma n)\big)  \times  \prod_{k=1}^{\frac{N_x-1}{2}}       \frac{1}{ 2-2\cos(\gamma k) }\;.
	\label{chap5:eq:amp_final}
\end{align}

In this expression, the first factor is the contribution of the on-shell LS action $S|_o$. The factor ${\cal A}_{\text{LS}}(n)$, does {\it not} depend on the twisting angle $\gamma$ but carries the dependence on the winding number. It is obtained by combining all the expressions of the one-loop computation that do not depend on the twist. Explicitly we have
\begin{align}
	{\cal A}(n)  =&\frac12 \left(\frac{(2\pi)^3}{64LT(1-\E^{-i\psi n})\left( \frac{L}{N_t^2} + \frac{TN_x}{2}(1- \E^{-\I\psi n}) \right) }\right)^{1/2} 
	\notag\\
	&\times
	\left(\frac{(2\pi)^3}{64LT \E^{-\I \psi n}}\right)^{\frac{N_x N_t -1}{2}}%
	\left(\frac{1}{(L+T)(\cos(n\psi)-1) + \I L \sin(n\psi)}\right)^{\frac{N_t-1}{2}} \frac{1}{N_t} 
	\notag\\%
	&\times%
	\prod_{p=1}^{\frac{N_x-1}{2}} \left(   \frac{1}{(L+T)(\cos(n\psi)-\cos(p\psi)) + \I L \sin(n\psi)}  %
	\right)^{N_t},
\end{align}
As we see, it is a rather intricate function of the spins $L$ and $T$, the lattice sizes $N_{x}$ and $N_{t}$ and of the label $n$. By explicitly performing the summation over the Fourier modes $k$ using lemma \ref{lemma:cosine_product}, this expression can be greatly simplified. Introducing the simple trigonometric function
\begin{equation}
	c_{n}=\cos n\psi+\f{iL}{T+L} \sin n\psi 
\end{equation}
the factor $\cA(n)$ becomes
\begin{eqnarray}
	\cA(n)
	&=&
	\f{1}{64}\left[\f{iL - (L+T)\tan\f{\psi n}{2} }{i L - (L +4TN_t^3) \tan\f{\psi n}{2}}\right]^{\f12}
	\\
	&&\times
	\,
	\left[
	\f{2(2\pi)^{3}e^{in\psi}}{LT(L+T)}
	\right]^{\f{N_{x}N_{t}}2}
	\Big[2T_{N_{x}}(a_{n})-2\Big]^{-\f{N_{t}}2}
	\; .
	\nn
\end{eqnarray}
From this expression one can first check the reality condition
\begin{equation}
\mathcal A(-n) = \overline{\mathcal A}(n)
\,.
\end{equation}
This is consistent with a Hamilton-Jacobi functional, which cannot distinguish a momentum from its opposite, and more specifically with a first-order Einstein-Cartan formulation of General Relativity, of which the Ponzano-Regge model is a quantization. 

Moreover, ${\cal A}(n)$ is, in the large $N_x, N_t$ limit (but already true when they are larger than 5),  overwhelmingly peaked in modulus at the minimal and maximal values of $n$, that is at $n=1$ and $n=\lfloor\frac{N_x-1}{2}\rfloor$ as illustrated by the plots on  figure \ref{chap5:fig:ALS_plot}.
\begin{figure}[htb!]
	\begin{center}
		\includegraphics[height=5cm]{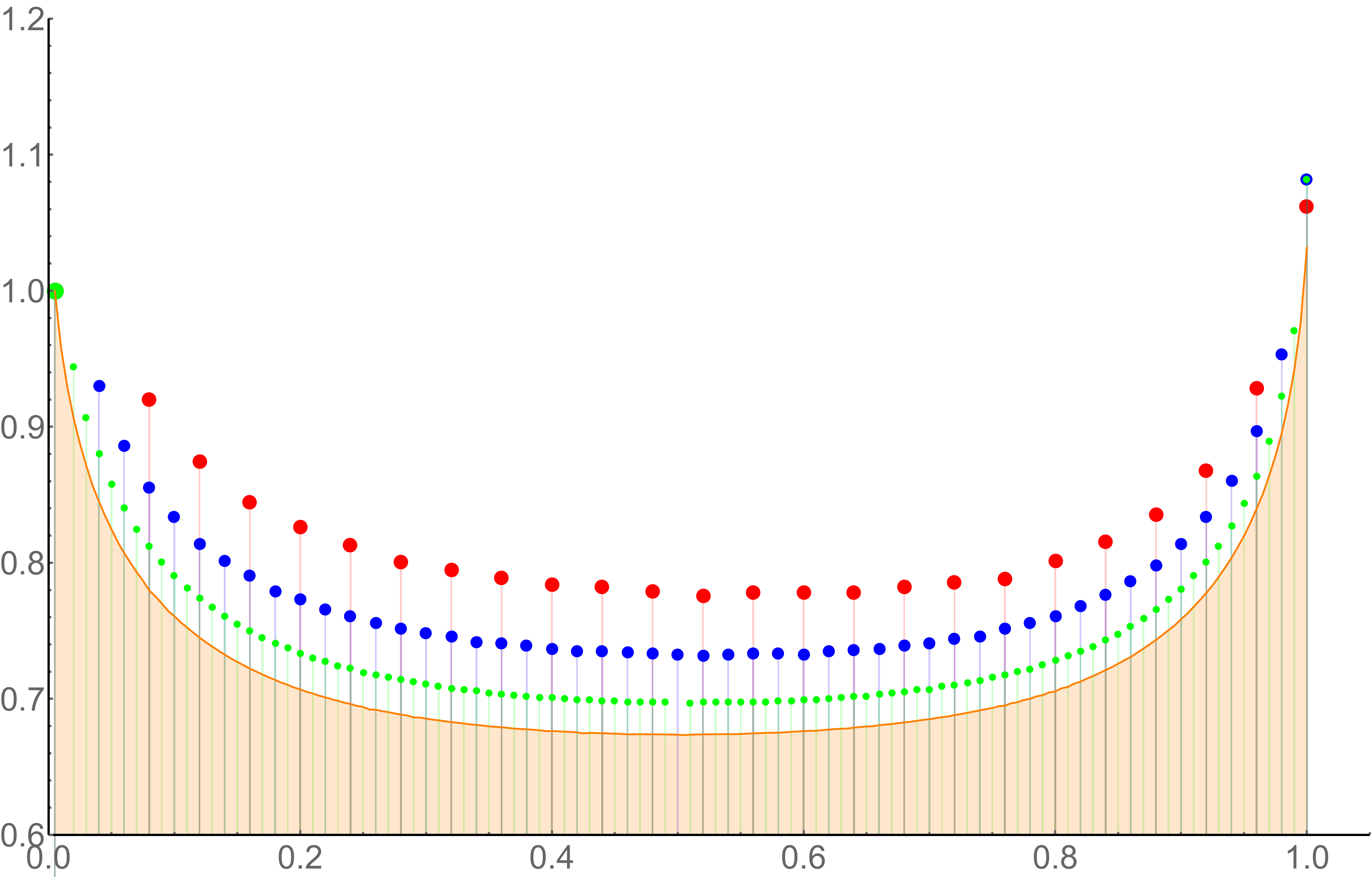}
	\end{center}
	\caption{Plots of $\frac{\log(|\mathcal A(1)|)}{\log(|\mathcal A(n)|)}$ for $n$ running from $1$ to $\frac{N_x-1}{2}$ with the parameters $N_t=20$, $L=8$, $T=8$. The four plots correspond to $N_x=50,100,200,400$ (red, blue, green, orange). The $x$-axis corresponds to $\frac{2n}{N_x-1}$, which runs from 0 to 1.}
	\label{chap5:fig:ALS_plot}
\end{figure}
This behaviour is entirely due to the factor with the Chebyshev polynomial, $\big{(}2T_{N_{x}}(c_{n})-2)\big{)}^{-\f{N_{t}}2}$. We can actually plot this at fixed $N_{x},N_{t}$ as a function of the continuous variable $x=n\psi\in[0,\pi]$, as in figure \ref{chap5:fig:chebplot}.
\begin{figure}[htb!]
	\begin{center}
		\includegraphics[height=4cm]{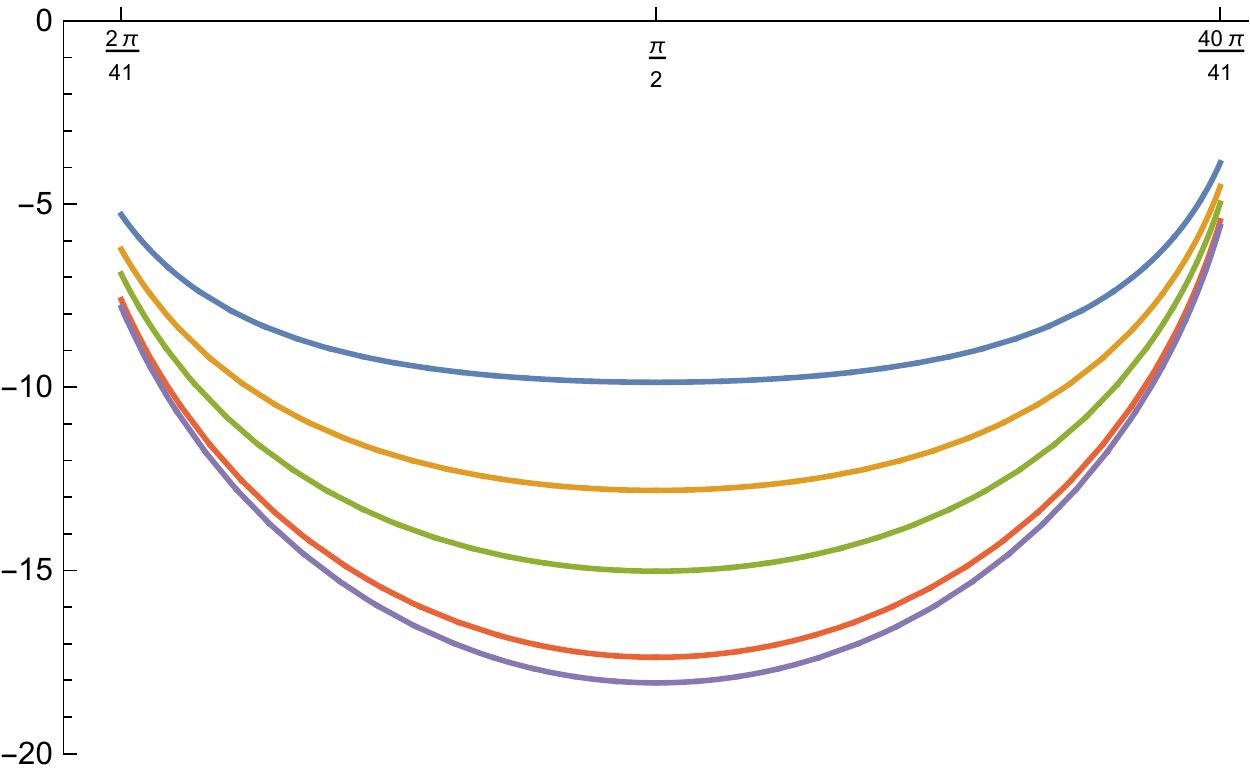}
	\end{center}
	\caption{ 
		Plot of ${\f{-N_t}2}\log\big{|}2T_{N_{x}}(a_{n})-2)\big{|}$ in terms of the continuous angle variable $x=n\psi\in\left[\f{2\pi}{N_x},\pi-\f{\pi}{N_x}\right]\subset[0,\pi]$ for the odd lattice size $N_{x}=41$ and $N_{t}=1$ and for spins $T=5$ and $L=5,10,20,100$. As $L$ increases and thus the ratio $\f TL$ goes to 0, the curves get more and more curved and goes to the limit function (lowest curve).}
	\label{chap5:fig:chebplot}
\end{figure}

In order to better understand the asymptotic behaviour at large $N_{x}$, let us focus on this factor. The function $\big{(}2T_{N_{x}}(c_{n})-2)\big{)}$ is well-behaved, both in modulus and phase, as one can see on figure \ref{chap5:fig:chebmodargplot}. The moot point is that the first winding number $n=1$ corresponds to the angle $x=\psi=\f{2\pi}{N_{x}}$ which goes to $x\rightarrow 0$ as $N_{x}$ grows large but not fast enough so that the asymptotic of $\big{(}2T_{N_{x}}(a_{1})-2)\big{)}$ be simply $\big{(}2T_{N_{x}}(0)-2)\big{)}=0$. Indeed, the two appearances of the lattice size $N_{x}$ conspire to give a non-trivial asymptotic
\begin{equation}
\big{[}2T_{N_{x}}(c_{n})-2)\big{]}^{\f12}\sim e^{(1+i)\sqrt{\f{\pi\lambda N_{x}n}{2}}}
\,,\quad\forall n\ll N_{x}
\,.
\end{equation}
This also gives the behaviour for large winding numbers $n\lesssim\lfloor\frac{N_x-1}{2}\rfloor$ since the function $\big{(}2T_{N_{x}}(x)-2)\big{)}$ is (almost) symmetric under reflections $x\leftrightarrow \pi-x$. In fact, the symmetry of $\big{(}2T_{N_{x}}(a(x))-2)\big{)}$ depends on the sign of $(-1)^{N_{x}}$. Under the reflection $x\rightarrow \pi-x$, it changes to its complex conjugate for even $N_{x}$ while it further gets an extra minus sign for odd $N_{x}$. This means that the modulus  $\big{|}2T_{N_{x}}(a_{n})-2)\big{|}$ is exactly symmetric under $n\leftrightarrow \frac{N_x}{2}-n$ for even $N_{x}$ while it is slightly skewed under the exchange $n\leftrightarrow \frac{N_x-1}{2}-n$ for odd $N_{x}$ due to the $\f12$ shift. This explains the peakedness of the amplitude pre-factor ${\cal A}(n)$ on the two limiting winding numbers  $n=1$ and $n=\lfloor\frac{N_x-1}{2}\rfloor$.

\begin{figure}[htb!]
	\begin{center}
		\includegraphics[height=40mm]{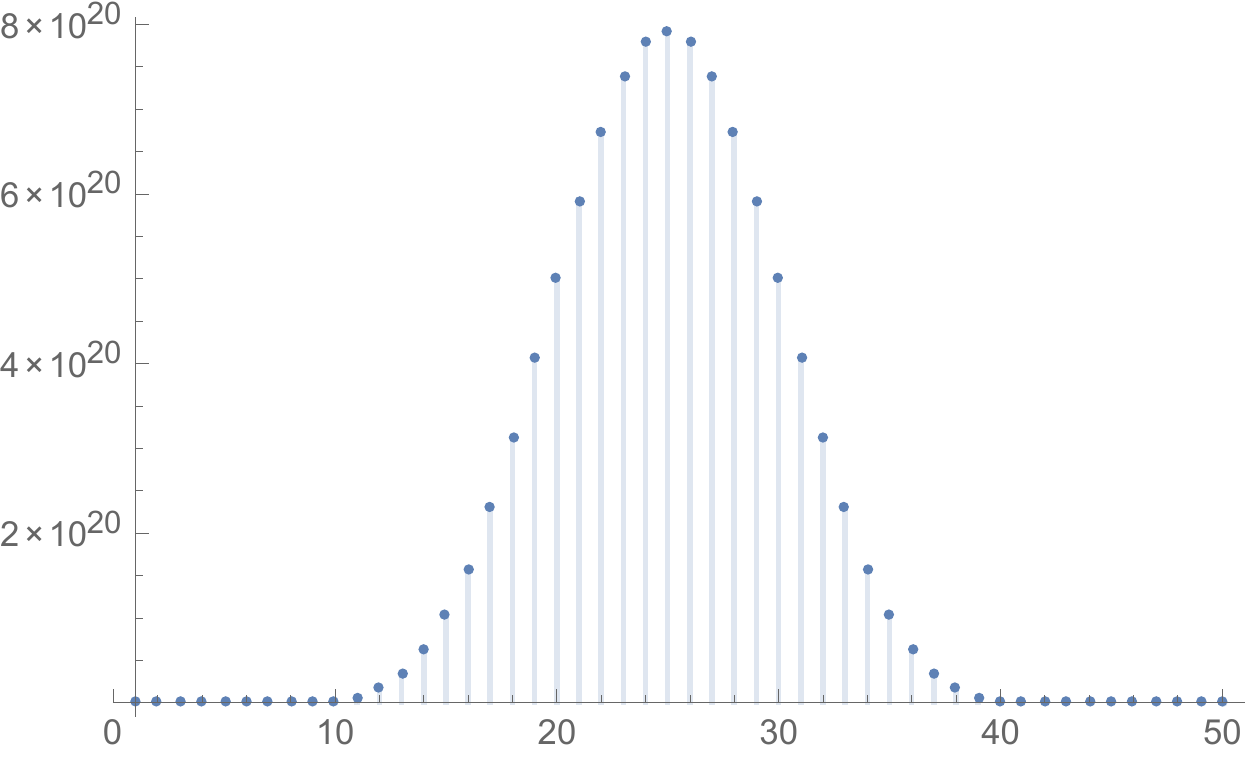}
		\hspace*{2mm}
		\includegraphics[height=40mm]{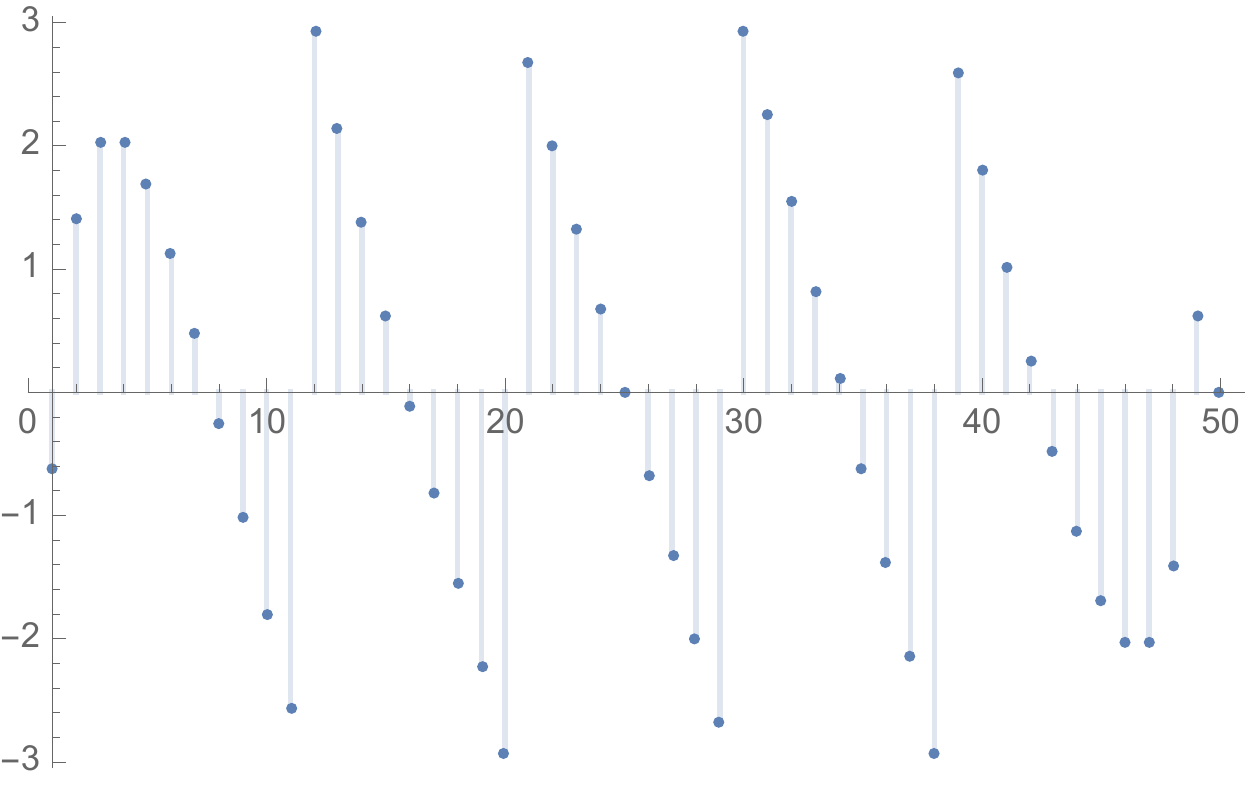}
	\end{center}
	\caption{Plots of the modulus (on the left) and argument (on the right) of $\big{(}2T_{N_{x}}(a_{n})-2)\big{)}$ for $N_{x}=100$ with $n$ running from 1 to 50, for spin parameters $L=T=1$.}
	\label{chap5:fig:chebmodargplot}
\end{figure}

The intuition behind the peakedness at the minimal and maximal winding numbers, $n=1$ and $n=\lfloor\frac{N_x-1}{2}\rfloor$, is that 
these solutions reconstruct locally almost-flat geometries (although one can be visualized as being folded onto itself), and at flat geometries the Hessian degenerates. This mechanism is analogous to Ditt-invariance \cite{Dittrich:2012jq,Rovelli:2011fk}. This peakedness can be used to argue that the slightest (semiclassical) knowledge of the extrinsic curvature, such as the fact that it is non-Planckian as in the maximal $n$ case, collapses the result onto the desired classical solution at $n=1$.\footnote{The very same physical argument allows to discard the folded solutions mentioned above.}

To summarize, the role of the amplitude pre-factor ${\cal A}(n)$ is to select the first winding number $n=1$ in the asymptotic limit, which corresponds to the semi-classical embedding of the torus surface in flat $\R^{3}$ space. The higher winding modes $n\ge 2$ however only allow for local embeddings (immersions) and seem to represent non-perturbative modes (instantons). We will see next that it also allows to reproduce the expected semi-classical partition function for three-dimensional quantum gravity as a function of the twist angle $\gamma$, namely the BMS character.

The last term is certainly the most interesting contribution to the amplitude
\begin{equation}
	\cD(\gamma,n)
	=  4 \sin^2 \left( \frac{\gamma n}{2} \right )  \times  \prod_{k=1}^{\frac{N_x-1}{2}}       \frac{1}{ 2-2\cos(\gamma k) } 
	= \big(2-2\cos(\gamma n)\big)  \times  \prod_{k=1}^{\frac{N_x-1}{2}}       \frac{1}{ 2-2\cos(\gamma k) } \; .
\end{equation}
This term is the only one depending on the twist angle. There are two contributions to this term. The first one comes from the measure factor from the integration over the remaining bulk information, and the last one is directly the Hessian contribution. As we see, the measure factor cancels exactly the contribution of the Fourier mode $k=n$. For the first winding mode $n=1$, we get a truncated product starting at $n=2$,
\begin{equation}
	\cD(\gamma,n=1)
	=   \prod_{k=2}^{\frac{N_x-1}{2}}       \frac{1}{ 2-2\cos(\gamma k) } 
	\; .
\end{equation}
This reproduces the results derived in \cite{Bonzom:2015ans} obtained from the path integral of Regge calculus, and fitting the 1-loop quantum General Relativity and he BMS character calculations of \cite{Barnich:2015mui,Barnich:2014kra,Oblak:2016eij}.

This beautiful interpretation needs to be put in balance against the fact that the product over $k=1,\dots,\frac12(N_x-1)$ is actually computable exactly, and gives\footnote{It comes from evaluating the polynomial $(X^{N_{x}}-1)/(X-1)$ at $X=1$.}

\begin{equation}
	\prod_{k=1}^{\frac{N_x-1}{2}} \big(2-2\cos(\gamma k) \big)=  \prod_{k=1}^{\frac{N_x-1}{2}}   4\sin^{2}\left(\f{\pi N_{\gamma} k}{N_{x}} \right)
	=
	\begin{cases}
		N_x \quad\textrm{if}\,\, K=\mathrm{GCD}(N_{\gamma},N_{x})=1 \\
		0 \quad\,\,\,\,\textrm{if}\,\, K=\mathrm{GCD}(N_{\gamma},N_{x})>1
	\end{cases}
	\;.
\end{equation}
Note, however, that the the derivation is only valid for $K=1$. It is interesting however that this derivation give rise to a divergence if we consider $K>1$, which is a case where the number of saddle is infinite, and hence the full analysis might provides a divergence.
\\

At this point two remarks are necessary.

First, it seems that the closed formula for the product over the Fourier modes kills the dependence of the Ponzano--Regge partition function in the twist angle $\gamma$. However, one key point is the tremendous difference in behaviour between the case $K=1$ and the case $K>1$. Taking $K=1$ gives a constant finite  result (simply $N_{x}$) for the product over Fourier modes $k$, leaving us simply with the measure factor $\sin^2\left(\f{\gamma n}2\right)$, while having $K>1$ leads formally to  a divergent amplitude. This divergence is actually due to a continuum of stationary points, which requires a finer analysis. As seen previously, in the asymptotic limit $(N_{\gamma},N_{x})\rightarrow \infty$, the case $K=1$ corresponds to an irrational value $\gamma\in2\pi(\R\setminus \Q)$, while $K>1$ corresponds to rational values $\gamma\in2\pi\Q$. This, again, corresponds to the behaviour obtained via computation in the continuum.

Secondly, even if the product over the spatial modes simplifies, giving the final result for the Ponzano-Regge partition function as a simple number, what is truly important is its explicit mode decomposition. Indeed the product formula \eqref{chap5:eq:amp_final} for  $\cD(\gamma,n)$ promises an interesting limit  $N_{x}\to\infty$ in terms of the inverse squared Dedekind $\eta$ function, which would establish explicitly the bridge between the Ponzano-Regge model for three-dimensional quantum gravity and the AdS${}_{3}$/CFT${}_{2}$ correspondence. However, the Dedekind function is only well-defined for $\gamma$ on the upper complex half-plane. As we recall previously, the perturbative one-loop calculations also need to introduce a regularization  $\gamma\mapsto\gamma + \I \epsilon^+$ to obtain meaningful amplitudes \cite{Giombi:2008vd,Barnich:2015mui}. In the present framework, the angle $\gamma$ is hardcoded into the calculation as a geometrical property of the lattice, as a ratio $\gamma=\f{2\pi N_{\gamma}}{N_{x}}$, which seems to make it unfeasible to extend to complex values. We will see in the next chapter that a natural complex regularization arises when considering a slightly more general boundary state.

Finally, we conclude this chapter with a important remark. In true, the result obtain in \eqref{chap5:eq:amp_final} is, in a sense, beautiful. This result basically tells us two things. First, by pushing the boundary to infinity, we do recover exactly the BMS character. It is interesting to point out that this is, again, not a true continuum limit, since we computed the amplitude in a large spins regime. Hence, the BMS structure seems also present for a intrinsically discrete boundary. Secondly, if the boundary is not at infinity, then non perturbative quantum corrections naturally arise. It is rather appealing to see the sum as taking into account all the possible geometries and possibly, topologies compatible with the boundary condition.
	
	\newpage
	~
	\thispagestyle{empty}
	
	\chapter{Exact evaluation of the Ponzano-Regge Amplitude for a Superposition of Coherent States}
\label{chap6}

In the previous chapter, we computed the Ponzano-Regge amplitude in the WKB approximation for a coherent boundary state. In this chapter, we want to go one step further, and try to exactly compute the amplitude of three-dimensional quantum gravity. To do so, we will consider boundary states construct as the generating function of spin-network. These states were first introduced in \cite{Freidel:2012ji}, where they were shown to allow an exact evaluation of the spin-network function on the sphere. Similar state where used in \cite{Bonzom:2015ova} for the Ponzano-Regge model on the sphere to show the duality between quantum gravity on the 3-ball and two copies of the Ising model. In this chapter, we want to use the same type of state in the torus topology to do an exact computation of the amplitude. This will allow us to define a completely generalized and regularized version of the BMS character. As in the previous chapter, it will feature non perturbative quantum corrections to the classic BMS character, on top of providing an expression valid for any twist parameter. This is to be compared with the previous computation where twist parameters leading to irrational angles were excluded from the computation. It is also worth noting that it is only in the following case that a true continuum limit can be taken. We will not, however study it presently. Instead, we focus on deriving the exact amplitude for quantum gravity on the torus.

\section{Generating function of spin-network state}

The generating function with weight $\omega$ is straightforwardly defined from the coherent spin-network state $\Psi_{coh}$ defined in \eqref{chap4:eq:general_coherent_SN} by
\begin{equation}
	\Psi_{\{y_l\}}(\{g_l\}) = \sum_{\{j_l\}} \omega(\{j_l\}) \left(\prod_{l} y_l^{2 j_l}\right) \Psi_{coh}(\{g_l\}) \; .
\end{equation}
As usual, the definition of the generating function is done by introducing dual variables for the spins. It is the generating function of coherent intertwiners at fixed spins $T_{t,x}$ and $L_{t,x}$. The dual parameters are coupling constants $\{y_l\}$ living on the links of the dual cellular decomposition. Since the sums over the spins are infinite, the weight $\omega$, which is a combinatorial factor depending on the spins must be carefully chosen to ensure convergence. It was shown in \cite{Freidel:2012ji} that the choice
\begin{equation}
	\omega = \prod_{n} \f{(J_n+1)!}{\prod_{l} (2 j_l)!} \; , \quad J_n = \sum_{l,n\in l} j_l
	\label{chap6:eq:weight_general}
\end{equation}
leads to a non-vanishing radius of convergence with respect to the coupling constants for the generating function. Up to the node factorial contribution $(J_n+1)!$n this weight defines a Poisson distribution on the spins. More particularly, such a choice for the weight reproduces a Gaussian as generating function for the spin-network \cite{Freidel:2012ji,Bonzom:2012bn,Dupuis:2011fz}. 

The boundary state is defined outside of the radius of convergence as the analytic continuation of the series. Such a coherent state is interpreted as a superposition of discrete geometries on the 2D boundary with different edge lengths \cite{Bonzom:2015ova}. The probability distribution of those edge lengths is given by the  weight $ \omega\big{(}\{j_l\}\big{)}$, the couplings $\prod_{l} y_l^{2 j_l}$ and the norm of the LS spin networks.
Indeed, we can compute the expectation value of any observable $\cO$ of the spins on a state $\Psi_{\{y_{l},\xi^{n}_{l}\}}$, 
\begin{equation}
\la\cO[\{j_l\}]\ra_{\{y_{l},\xi^{n}_{l}\}}
=
\f{\la \Psi_{\{y_{l},\xi^{n}_{l}\}}|\cO[\{j_l\}]|\Psi_{\{y_{l},\xi^{n}_{l}\}}\ra}{\la \Psi_{\{y_{l},\xi^{n}_{l}\}}|\Psi_{\{y_{l},\xi^{n}_{l}\}}\ra}
=
\f{\sum_{\{j_l\}}\cO[\{j_l\}]\cP_{\{y_{l},\xi^{n}_{l}\}}[\{j_l\}]}{\sum_{\{j_l\}}\cP_{\{y_{l},\xi^{n}_{l}\}}[\{j_l\}]}
\,,
\end{equation}
where the probability distribution for the spins is computed directly in terms of the norm of the coherent intertwiners:
\begin{equation}
\cP_{\{y_{l},\xi^{n}_{l}\}}(\{j_l\})
=
\omega\big{[}\{j_l\}\big{]}^{2} \left(\prod_{l} y_l^{4 j_l}\right)
\,\prod_{l}\f1{(2j_{l}+1)}
\,\prod_{n} \big{\la} \iota_{n}[\{\xi_{l}\}_{l \ni n}]
\,\big{|}\,
\iota_{n}[\{\xi_{l}\}_{l \ni n}]\big{\ra}
\,.
\end{equation}
The norm of a coherent intertwiner was computed in \cite{Freidel:2010tt,Bonzom:2012bn} from its generating function, here keeping the node index $n$ implicit:
\begin{align}
\big{\la} \iota[\{\xi_{l}\}]
\,\big{|}\,
\iota[\{\xi_{l}\}]\big{\ra}
&=
\int_{\SU(2)} \dd g\,
\prod_{l}\la\xi_{l}|g|\xi_{l} \ra^{2j_{l}}
\,, \\
\int_{\SU(2)} \dd g\,\,
e^{\sum_{l}\la\xi_{l}|g|\xi_{l} \ra}
&=
\sum_{J\in\N}
\f1{J!(J+1)!}
\left(\f{1}{2}\sum_{l,\tilde{l}}\big|[\xi_{l}|\xi_{\tilde{l}}\ra\big|^{2}\right)^{J}
\,,
\end{align}
from which we can extract the term of power $2j_{l}$ in each spinor $\xi_{l}$ to get the norm of a coherent intertwiner. Aside the exact expression, it is actually straightforward to derive its behaviour at large spins by computing the saddle point approximation of the integral over $\SU(2)$ as one rescales homogeneously the spins by a large factor, $j_{l}\mapsto \lambda j_{l}$ with $\lambda \rightarrow +\infty$. As shown in \cite{Livine:2007vk}, the maximum of the integrand $\prod_{l}\la\xi_{l}|g|\xi_{l} \ra^{2j_{l}}$ is necessarily reached at $g=\id$ and this is a stationary point if and only if the spinors satisfy the closure condition
\begin{equation}
\vcC=\sum_{l}j_{l}\f{\la\xi_{l}|\vsigma|\xi_{l}\ra}{\la\xi_{l}|\xi_{l}\ra}
=\sum_{l}j_{l}\f{\vu_{\xi_{l}}}{|\vu_{\xi_{l}}|}
=0
\,.
\end{equation}
Let us stress that the closure conditions are invariant under homogeneous rescaling of the spins $j_{l}\mapsto \lambda j_{l}$.
Thus, as confirmed by numerics, the behaviour is very different whether  the closure condition is satisfied or not. On the one hand, when the closure vector vanishes, $\vcC=0$, it means geometrically that the edge vectors around the node actually close and form a polygon. In that case, the norm $\big{\la} \iota[\{\xi_{l}\}] \,\big{|}\,\iota[\{\xi_{l}\}]\big{\ra}$ generically behaves as $\lambda^{-\f32}\prod_{l}\la\xi_{l}|\xi_{l}\ra^{2j_{l}}$ .  On the other hand, the case when the closure vector does not vanish, the integral is exponentially suppressed and the coherent intertwiner norm decreases as $\exp[-\lambda^{2}|\vcC|^{2}]$.

This means that the effect of the coherent intertwiner norm in the spin probability distribution is two-fold. First, it selects the spins configuration such that the closure conditions are satisfied around every node $n$.
Second, putting aside the algebraic term and focusing on the leading order exponential behaviour, it produces spinor norm factors $\la\xi_{l}|\xi_{l}\ra^{2j_{l}}$ that can be reabsorbed in the couplings $y_{l}\rightarrow Y_{l}=\sqrt{\la\xi^{n}_{l}|\xi^{n}_{l}\ra}\, y_{l}$ upon normalizing all the spinors $\xi^{n}_{l}\rightarrow \sqrt{\la\xi^{n}_{l}|\xi^{n}_{l}\ra}\,\hat{\xi}^{n}_{l}$.

Putting all the contributing factors together, combinatorial weight, couplings and intertwiner norms, using the Stirling formula approximating the factorials in the combinatorial weight at large spins, assuming that the spins satisfy the closure conditions around every node, and finally focusing on the exponential factors and considering the algebraic factors as sub-leading,  the spin probability distribution behaves at large spins as:
\begin{equation}
\cP_{\{y_{l},\xi^{n}_{l}\}}[\{j_l\}]
\sim
e^{2\Phi[\{j_l\}]}
\qquad\textrm{with}\quad
\Phi[\{j_l\}]
=
\sum_{l }2j_{l}(\ln Y_{l}-\ln j_{l} ) +\sum_{n}J_{n}\ln J_{n}
\,.
\end{equation}
The extrema of this probability distribution is given by a vanishing derivative with respect to each spin $j_{l}$:
\begin{equation}
\forall l\in\Gamma\,,\qquad
\pp_{j_{l}}\Phi=0
\quad\Longleftrightarrow\quad
\ln \f{Y_{l}^{2}J_{s(l)}J_{t(l)}}{4 j_{l}^{2}}=0
\quad\Longleftrightarrow\quad
\f{j_{l}^{2}}{J_{s(l)}J_{t(l)}}=\f{Y_{l}^{2}}4
\,.
\end{equation}
This equation for the peaks of the spin  probability distribution has the crucial property of being invariant homogeneous rescaling of the spins $j_{l}\mapsto \lambda j_{l}$. This scale-invariance of the extrema of the   probability distribution defined by this class of coherent spin networks was put forward in \cite{Bonzom:2015ova}. The fixed point equations for the spins induce non-trivial constraints on the couplings $Y_{l}$. But, once  the couplings allow for a solution for the spins $\{j_{l}\}$, then there exist a whole line of solution obtained by arbitrary global rescalings of the spins.
In the case of a planar 3-valent graph, these constraints determine a triangulation whose angles are fixed in terms of the couplings $Y_{l}$ and whose edge lengths give the spins \cite{Bonzom:2015ova}. Furthermore, in that case, it is understood that such fixed point couplings are related to the critical couplings of the Ising model on the considered graph \cite{Bonzom:2015ova,Bonzom:2019dpg}.

So we have three types of behaviour for the probability distribution. Either, the couplings $Y_{l}$ are within the convergence radius of the series defining the coherent spin network and the spin probability distribution is peaked on low spins. Or the couplings $Y_{l}$ make the series diverge in which case the spin probability distribution favors large spins. And finally, in between, there is a critical regime, where the couplings $Y_{l}$ lead to a line of maxima of the spin probability distributions where the spins are the edge lengths, up to arbitrary global rescaling, of a planar 2D cellular decomposition whose angles are determined by the critical couplings. Such a line of stationary points corresponds to a pole for the coherent spin network wave-function \cite{Freidel:2010tt,Bonzom:2012bn,Bonzom:2015ova,Bonzom:2019dpg}. 

We will explicitly show this property in the following while focusing on the case of interest, the torus. It is the same state that was used in \cite{Bonzom:2015ova} to show that the Ponzano-Regge model on the sphere with a triangulation is related to two copies of the Ising model on the same triangulation.

It is easy to write the generating function in a more explicit form using the discretization of the torus introduced in chapter \ref{chap4}. Recall that we discretize the torus with $N_t$ horizontal slices and $N_x$ vertical ones such that the boundary lattice is a square lattice. With this choice of discretization, the weight takes the particular form
\begin{equation*}
	\omega = \prod_{t,x = 0}^{N_t-1,N_x-1} \frac{\lambda_{t,x}^{2 L_{t,x}} \tau_{t,x}^{2T_{t,x}}}{(2 L_{t,x})!(2T_{t,x})!} (J_{t,x}+1)! \quad \text{where} \quad J_{t,x} := T_{t,x} + L_{t,x} + T_{t,x-1} + L_{t-1,x} \; .
\end{equation*}
To differentiate vertical and horizontal links, we introduced two different names for the coupling, namely $\lambda_{t,x}$ and $\tau_{t,x}$ respectively instead of $y_l$. The generating function of coherent spin networks then trivially takes the form
\begin{align}
	\Psi_{\{\lambda,\tau\}}(\{g_{t,x}^v,g_{t,x}^h\}) = &\sum_{T_{t,x}} \sum_{L_{t,x}} \prod_{t,x} \left( \frac{\lambda_{t,x}^{2 L_{t,x}} \tau_{t,x}^{2T_{t,x}}}{(2 L_{t,x})!(2T_{t,x})!} (J_{t,x}+1)!\right) \times \nonumber \\ & ~~ \int_{\SU(2)} \prod_{t,x=0}^{N_t-1,N_x-1} \dd G_{t,x} \; \la \uparrow |G^{-1}_{t,x+1} g_{t,x}^h G_{t,x} | \uparrow \ra^{2 T_{t,x}} \la + |G^{-1}_{t+1,x} g_{t,x}^v G_{t,x} | + \ra^{2 L_{t,x}} \; ,
	\label{chap6:eq:boundary_state_scale_free}
\end{align}
with the explicit expression of the coherent spin-network state $\Psi_{coh}$.

In the next section, we will focus on the large spins behaviour for the boundary state to have a better understanding of its geometry. To do so, the previous expression will be useful. However, for the actual computation of the Ponzano-Regge amplitude, it is more practical to rewrite the state making the Gaussian integral explicit. To do so, we rewrite \eqref{chap6:eq:boundary_state_scale_free} in terms of integrations over spinors instead of group elements. This transformation is based on the following lemma, see \cite{Freidel:2012ji} for the proof
\begin{lemma}
	Consider a spinor $|w \ra$. From $|w\ra$, construct the group element
	\begin{equation*}
	g(w) = |\uparrow\ra\la w| + |0][w| \; , \quad \text{with} \;\; g^{\dagger}(w)g(w) = \la w | w \ra \; .
	\end{equation*}
	Consider a function $F$ such that $F(g(w))$ is homogeneous of degree $2J$ in $|w \ra$
	\begin{equation*}
	F(g(c w)) = c^{2J} F(g(w)) \quad \text{for} \; c \in \R \;.
	\end{equation*}
	Then the following equality holds
	\begin{equation}
	(J+1)! \int_{SU(2)} \text{d}g F(g) = \int_{\C^2} \f{\text{d}^4 w}{\pi^2} e^{-\la w | w \ra} F(g(w))
	\label{chap6:eq:int_SU(2)_to_spinor}
	\end{equation}
	where $\text{d}g$ is the Haar measure on $\SU(2)$.
\end{lemma}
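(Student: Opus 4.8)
The plan is to establish the identity \eqref{chap6:eq:int_SU(2)_to_spinor} by a direct computation on both sides, exploiting the $\SU(2)$-invariance built into $g(w)$ and the homogeneity hypothesis on $F$. First I would record the basic structural fact about the map $w\mapsto g(w)$: writing $|w\ra=(w^0,w^1)^t$ and $|w]=\varsigma|w\ra=(-\bar w^1,\bar w^0)^t$, one checks by a short matrix computation that $g(w)^\dagger g(w)=\la w|w\ra\,\id$, so that whenever $\la w|w\ra=1$ the element $g(w)$ lies in $\SU(2)$; and more generally $g(w)=\sqrt{\la w|w\ra}\,g(\hat w)$ with $\hat w=w/\sqrt{\la w|w\ra}$. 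The homogeneity assumption $F(g(cw))=c^{2J}F(g(w))$ for $c\in\mathbb R^+$ then gives $F(g(w))=\la w|w\ra^{J}F(g(\hat w))$.

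The key step is a change of variables on $\mathbb C^2$ adapted to the polar-type decomposition $w=\rho\,\hat w$ with $\rho=\sqrt{\la w|w\ra}\ge 0$ and $\hat w\in S^3\cong\SU(2)$. Under this decomposition the flat measure factorizes as $\tfrac{\dd^4 w}{\pi^2}=c_3\,\rho^3\,\dd\rho\,\dd\mu(\hat w)$ for an appropriate normalization, where $\dd\mu$ is (a multiple of) the round measure on $S^3$. I would fix the constants by demanding that $\int_{\mathbb C^2}\tfrac{\dd^4w}{\pi^2}e^{-\la w|w\ra}=1$, which is a standard Gaussian integral, and that the induced measure on $S^3\cong\SU(2)$ coincides with the normalized Haar measure $\dd g$ — this uses that the Haar measure is the unique bi-invariant probability measure, together with the rotational invariance of the flat Gaussian integral. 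Concretely, since $g(k\hat w)=k\,g(\hat w)$ for $k\in\SU(2)$ acting on the left (again a one-line matrix check using $k|\uparrow\ra,k|0]$), the push-forward of $\dd\mu$ along $\hat w\mapsto g(\hat w)$ is left-invariant, hence proportional to Haar.

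With these preliminaries in place the computation is mechanical: substituting $F(g(w))=\rho^{2J}F(g(\hat w))$ into the right-hand side of \eqref{chap6:eq:int_SU(2)_to_spinor} and integrating out the radial variable,
\begin{equation*}
\int_{\mathbb C^2}\frac{\dd^4 w}{\pi^2}e^{-\la w|w\ra}F(g(w))
=\Big(\int_0^\infty c_3\,\rho^{3+2J}e^{-\rho^2}\dd\rho\Big)\int_{\SU(2)}\dd g\,F(g),
\end{equation*}
and the radial integral evaluates, after the substitution $u=\rho^2$, to a Gamma function giving precisely the factor $(J+1)!$ once the normalization $c_3$ is fixed as above. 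Matching this to the claimed identity finishes the proof. The main obstacle I anticipate is purely bookkeeping: pinning down the normalization constant $c_3$ and verifying that the radial Gaussian integral produces exactly $(J+1)!=\Gamma(J+2)$ rather than an off-by-one factor, and carefully checking that the polar decomposition of $\dd^4 w$ matches the Haar measure with the right total mass. A secondary subtlety is that the homogeneity hypothesis is only assumed for real positive scaling $c$, so I must be careful to split $w=\rho\hat w$ with $\rho$ real and keep all phase/angular degrees of freedom inside $\hat w\in S^3$, which is exactly what the decomposition above does.
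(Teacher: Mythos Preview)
Your argument is correct and is exactly the standard proof: polar decomposition $w=\rho\hat w$ with $\hat w\in S^3\cong\SU(2)$, homogeneity to factor $F(g(w))=\rho^{2J}F(g(\hat w))$, then a radial Gamma integral. The paper itself does not give a proof of this lemma but simply quotes it from \cite{Freidel:2012ji}; your outline is precisely the computation one finds there. For the bookkeeping you flag: writing $\tfrac{\dd^4 w}{\pi^2}=2\rho^3\,\dd\rho\,\dd g$ with $\dd g$ the \emph{normalized} Haar measure (total mass of $S^3$ is $2\pi^2$), the radial integral is $2\int_0^\infty\rho^{2J+3}e^{-\rho^2}\dd\rho=\Gamma(J+2)=(J+1)!$, so there is no off-by-one issue. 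One small correction: $g(kw)=g(w)k^{-1}$, so the induced action is on the right, not the left; but since $\SU(2)$ is compact the Haar measure is bi-invariant and your conclusion stands unchanged.
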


This lemma can be applied to the generating function of the spin-network. We denote by $F$ the integrand of the $\SU(2)$ integration
\begin{equation}
	F(\{G_{t,x}\}) = \la \uparrow |G^{-1}_{t,x+1} g_{t,x}^h G_{t,x} | \uparrow \ra^{2 T_{t,x}} \la + |G^{-1}_{t+1,x} g_{t,x}^v G_{t,x} | + \ra^{2 L_{t,x}} \; .
\end{equation}
The $g^{h,v}_{t,x}$ are parameters of the function, and we do not need to look at them any further for now. This function can easily be extended to be over $\UU(2)$ and not $\SU(2)$ defining its extension to $\UU(2)$ by
\begin{equation}
	F(\{G_{t,x}\}) = \la \uparrow |G^{\dagger}_{t,x+1} g_{t,x}^h G_{t,x} | \uparrow \ra^{2 T_{t,x}} \la + |G^{\dagger}_{t+1,x} g_{t,x}^v G_{t,x} | + \ra^{2 L_{t,x}} \; ,	
\end{equation}
since $G^{-1}=G^{\dagger}$ for an $\SU(2)$ element. It is clear that $F$ is a polynomial in $G_{t,x}$ of degree $2J_{t,x} = 2 (T_{t,x} + L_{t,x} + T_{t,x-1} + L_{t-1,x})$ for all pairs $(t,x)$. Hence, the lemma can be applied. Conveniently, the weight $\omega$ comes with the necessary $(J_{t,x}+1)!$ factor. Taking it into account, the integration over $\SU(2)$ is replaced by an integration over $\C^2$ applying the previous lemma. We get
\begin{align*}
	\int_{\C^2} \prod_{t,x = 0}^{N_t-1,N_x-1} \f{\text{d}^4 w_{t,x}}{\pi^2} e^{-\la w_{t,x} | w_{t,x} \ra}  &\la \uparrow |G^{\dagger}(w_{t,x+1}) g_{t,x}^h G(w_{t,x}) | \uparrow \ra^{2 T_{t,x}} \\ & \la + |G^{\dagger}(w_{t+1,x}) g_{t,x}^v G_{t,x}(w_{t,x}) | + \ra^{2 L_{t,x}}\; .
\end{align*}

The last step to obtain the Gaussian integration is to consider the sum over the spins. Considering for example a given spin $L_{t,x}$, the remaining factor we need to take into account is $\sum\limits_{L_{t,x}} \f{\lambda_{t,x}^{2 L_{t,x}}}{(2 L_{t,x})}$ such that the sum can be done explicitly and return an exponential. At the end of the day, all the sums can be taken care of in the same way, and we obtain a purely Gaussian state for the generating function of spin-network
\begin{align*}
	\Psi(\{g_{t,x}^v,g_{t,x}^h\}) = \int_{\C^2} \prod_{t,x = 0}^{N_t-1,N_x-1} \f{\text{d}^4 w_{t,x}}{\pi^2} \text{exp}\Big(-\la w_{t,x} | w_{t,x} \ra + &\tau_{t,x} \la \uparrow |G^{\dagger}(w_{t,x+1}) g_{t,x}^h G(w_{t,x}) | \uparrow \ra \\+ &\lambda_{t,x} \la + |G^{\dagger}(w_{t+1,x}) g_{t,x}^v G_{t,x}(w_{t,x}) | + \ra \Big) \; .
\end{align*}

Note that while the integration over the spinorial variables in a Gaussian, it is not straightfoward to have an explicit evaluation for it in the general case. Indeed, the dependency on the lattice site of the group elements $g_{t,x}^{h,v}$ makes the Hessian of the Gaussian integration complicated. However, when evaluating the boundary state on the Ponzano-Regge partition function, most of these parameters are fixed to the identity. Recall that it only remains one integration over $\SU(2)$ that can be made homogeneous on the whole lattice. Before moving on to the actual computation, we focus in the next section on the geometry of the boundary state at the saddle point. 

\subsection{Critical Regime and Scale Invariance of the Boundary State}

In this section, we look at a saddle point approximation of the generating function starting from equation \eqref{chap6:eq:boundary_state_scale_free}. More particularly, we look at saddle point approximation on the arguments of the sums in \eqref{chap6:eq:boundary_state_scale_free} on a large spins limit. Calling the arguments of the sums $I$, we have
\begin{align*}
	I_{\lambda,\tau,g^{h,v}}(\{T_{t,x}\},\{L_{t,x}\},\{G_{t,x}\}) = \prod_{t,x}& \left( \frac{\lambda_{t,x}^{2 L_{t,x}} \tau_{t,x}^{2T_{t,x}}}{(2 L_{t,x})!(2T_{t,x})!} (J_{t,x}+1)!\right) \\
	&\begin{split}
	\int_{SU(2)} \prod_{t,x=0}^{N_t-1,N_x-1} \text{d} G_{t,x} &\la \uparrow |G^{-1}_{t,x+1} g_{t,x}^h G_{t,x} | \uparrow \ra^{2 T_{t,x}}\\& \la + |G^{-1}_{t+1,x} g_{t,x}^v G_{t,x} | + \ra^{2 L_{t,x}}.
	\end{split}
\end{align*}
It is a function of $G_{t,x}$ and of the spins $L_{t,x}$ and $T_{t,x}$ while the coupling constants and the discrete connection group elements are parameters.

In the large spins limit, we can make use of the Stirling formula $n! \sim \sqrt{2 \pi n } \frac{e^n}{n^n}$ to greatly simplified the previous expression. Only keeping the exponential contribution to the Stirling formula, we get
\begin{equation*}
	I \underset{T,L >>1}{\sim} e^{I_1} \int_{SU(2)} \left( \prod_{t,x} \text{d} G_{t,x} \right) e^{-I_2}
\end{equation*}
where
\begin{align*}
	I_1(\{T_{t,x},L_{t,x}\}) = \sum_{t,x=0}^{N_t-1,N_x-1} &2 L_{t,x} \ln(\lambda_{t,x}) + 2 T_{t,x} \ln(\tau_{t,x}) + J_{t,x} \ln(J_{t,x}) \\ &- 2L_{t,x} \ln(2L_{t,x}) - 2T_{t,x} \ln(2T_{t,x}) \\
\end{align*}
and
\begin{align*}
	I_2(\{T_{t,x},L_{t,x},G_{t,x}\}) = -\sum_{t,x=0}^{N_t-1,N_x-1} &2 T_{t,x} \ln(\la \uparrow |G^{-1}_{t,x+1} g_{t,x}^h G_{t,x} | \uparrow \ra) \\ +& 2 L_{t,x} \ln(\la + |G^{-1}_{t+1,x} g_{t,x}^v G_{t,x} | + \ra)	\; .
\end{align*}
The saddle point approximation is done on the group variables $G_{t,x}$ and for the spins variables $T_{t,x}$ and $L_{t,x}$ and not on the parameters $g_{t,x}^h$ and $g_{t,x}^v$  or on the complex coupling constants $\lambda_{t,x}$ and $\tau_{t,x}$.

The factor $I_2$ is the action that was used in the previous chapter \ref{chap5} and is the contribution coming from the coherent state. Therefore, we already studied its the saddle point approximation in details in chapter \ref{chap5}. We know that it is peaked on the geometrical contribution given by the dihedral angles between the square of the discretization. See equation \eqref{chap5:eq:gluing_equation}. The remaining part that we need to study here is the contribution coming from the weight on the spin variables. This study was already partially done in \cite{Bonzom:2015ova}, where the geometrical meaning of the weight was understood. Considering a triangulation of the sphere, it was shown that, at the saddle point, the weight relates the spins to the coupling parameter in such a way that they encode the $2D$ inner angles of the triangles. Recall that we have three dual links for a triangle, hence three couplings parameters, corresponding to the tree angles of the triangle. At the saddle point the coupling parameters were naturally scale invariant. Here, we want to consider at the same time the contribution from the weight and from the coherent states. We will recover the scale invariance property, but not the relation with in terms of angles.

Note however, that we do not really perform a saddle point approximation, but only a stationary point. That is, we only consider real spins here, coming from the $\SU(2)$ recoupling theory. Nothing, however, stops us from looking at the analytical continuation of such an object. In that case, it was shown to lead to complex geometry \cite{Bonzom:2019dpg} when considering the sphere. Here, we consider only real geometry, which is easier to understand.
\medskip

The stationary point equations read
\begin{align*}
	\partial_{L_{t,x}} (I_1+I_2) = 0 \implies  2 \ln(\lambda_{t,x}) = &\ln\left(\f{(2L_{t,x})^2}{J_{t,x}J_{t+1,x}}\right) \\ &- 2\ln\Big(\la + |G^{-1}_{t+1,x} g_{t,x}^v G_{t,x} | + \ra\Big)\; ,\\ 
\end{align*}
and
\begin{align*}
	\partial_{T_{t,x}}(I_1+I_2) = 0 \implies  2 \ln(\tau_{t,x}) = &\ln\left(\f{(2T_{t,x})^2}{J_{t,x}J_{t,x+1}}\right) \\&- 2\ln\Big(\la + |G^{-1}_{t+1,x} g_{t,x}^v G_{t,x} | + \ra\Big)\; .
\end{align*}
On shell of the saddle point equations for $G_{t,x}$, that is, given the relation \eqref{chap5:eq:gluing_equation}, these equations can be easily solved and return a relation between spins, dihedral angles and coupling parameters
\begin{equation}
	\lambda_{t,x}^2 = \f{(2L_{t,x})^2}{J_{t,x}J_{t+1,x}} e^{-i 2\psi_{t,x}^L} \; ,
	\label{chap6:eq:saddle_lambda}
\end{equation}
\begin{equation}
	\tau_{t,x}^2 = \f{(2T_{t,x})^2}{J_{t,x}J_{t,x+1}} e^{-i 2\psi_{t,x}^T} \; .
	\label{chap6:eq:saddle_tau}
\end{equation}

From \eqref{chap6:eq:saddle_lambda} and \eqref{chap6:eq:saddle_tau} we see that, on-shell and in the asymptotic limit, the couplings are left unchanged under global rescaling of the spins. Since the spins are related to the length of the discretization, we recover the interpretation that the saddle point contribution is scale-invariant. That is, we can freely rescale all the spins, i.e. all the lengths by the same global parameter without changing the value of the coupling constants. Note that the phase of the coupling constants are related to the dihedral angles between two neighbourhood squares of the discretization. That is, the coupling constants readily encode at the same time the information about the extrinsic curvature and the scale of the torus. It is not possible however, to interpret the module of the coupling constant at the saddle point as also being related to the angles of the square as it was done in \cite{Bonzom:2015ova}.

Note that we have also used the fact that the spins $T_{t,x}$ determining the time intervals are constant in space $T_{t,x}=T_{t}$ and the spins $L_{t,x}$ determining the space intervals are constant in time $L_{t,x}=L_{x}$, as  resulting from the  closure constraints induced by the saddle point equations in $g_{t,x}$.

We must not forget that the couplings $\lambda_{t,x}$ and $\tau_{t,x}$ are given a priori and that the stationary point equations determine extremal spin configurations $\{L_{x},T_{t}\}$ in terms of those couplings.
A first important point is that such stationary point do not always exist. Indeed, the existence of a solution to the stationary point equations requires the couplings to satisfy some constraints. We will refer to couplings that fulfil those conditions as {\it critical couplings}.
The second important point is that, once a set of critical couplings has been chosen and one solution for the spin configuration $\{L_{x},T_{t}\}$ has been identified, then any arbitrary global rescaling of  these spins still gives a solution to the stationary point equations. This means that we actually a whole line of stationary points. This induced scale invariance for critical couplings then leads to a divergence of the series defining the coherent spin network wave-function. For non-critical couplings, there is no finite real stationary points, we lose the scale-invariant line of stationary points in the spins and the dominant contributions are given  either by low spins or  or by spins growing to infinity, respectively corresponding to an absolutely convergent or totally divergent series.

So let us start by highlighting necessary constraints satisfied by the critical couplings. Since the phases of the couplings give the dihedral angles determined by the stationary points in $g_{t,x}$, the stationary point equations in the spins will give conditions on the modulus of the couplings.

Let us focus on the stationary point equations involving $T_{t}, T_{t+1}, L_{x}, L_{x+1}$:
\begin{align*}
|\lambda_{t,x}| = \f{2L_{x}}{\sqrt{J_{t,x}J_{t+1,x}}}
\,,\quad
|\tau_{t,x}| = \f{2T_{t}}{\sqrt{J_{t,x}J_{t,x+1}}} 
\,,\quad
\\
|\lambda_{t,x+1}| = \f{2L_{x+1}}{\sqrt{J_{t,x}J_{t+1,x+1}}}
\,,\quad
|\tau_{t+1,x}| = \f{2T_{t+1}}{\sqrt{J_{t+1,x}J_{t+1,x+1}}}
\,.
\end{align*}
We can re-write these equations directly in terms of the ratios of $L$'s over $T$'s:
\begin{equation*}
|\tau_{t,x}| =
\f1{\sqrt{(1+A)(1+C)}}
\,,\qquad
|\tau_{t+1,x}| =
\f1{\sqrt{(1+B)(1+D)}}
\,,
\end{equation*}
\begin{equation*}
|\lambda_{t,x}| =
\f{\sqrt{AB}}{\sqrt{(1+A)(1+B)}}
\,,\qquad
|\lambda_{t,x+1}| =
\f{\sqrt{CD}}{\sqrt{(1+C)(1+D)}}
\,,\nn
\end{equation*}
with the notation:
\begin{align*}
A:=\f{L_{x}}{T_{t}}&
\,,\quad
B:=\f{L_{x}}{T_{t+1}}
\,,\quad
C:=\f{L_{x+1}}{T_{t}}
\,,\quad
D:=\f{L_{x+1}}{T_{t+1}}
\,,\quad
\\
&\textrm{satisfying the relation}\,\,
AD=BC
\,.
\end{align*}
These expressions directly imply that the couplings satisfy a polynomial equation:
\begin{equation}
\label{eq:critical}
1+|\lambda_{t,x}|^2|\lambda_{t,x+1}|^2+|\tau_{t,x}|^2|\tau_{t+1,x}|^2
=
2|\lambda_{t,x}||\lambda_{t,x+1}||\tau_{t,x}||\tau_{t+1,x}|
+|\lambda_{t,x}|^2+|\lambda_{t,x+1}|^2+|\tau_{t,x}|^2+|\tau_{t+1,x}|^2
\,.
\end{equation}
This looks  similar to the equation for the zeroes of the 2D Ising model. In light of the recently uncovered relation between the geometrical stationary point of coherent spin networks on triangulations and the critical couplings of the 2D Ising model on dual graphs \cite{Bonzom:2015ova,Bonzom:2019dpg}, it would probably be enlightening to investigate this similarity further and understand if there is indeed a straightforward mapping between critical couplings of coherent spin networks and 2D Ising model on the square lattice.

Then, once the couplings satisfy such a condition, it is possible to compute the spin ratios A,B,C,D in terms of the couplings. Extending this from node to node and enforcing the periodicity conditions of the twisted torus, we finally get a spin configuration solution on the whole square lattice. 

\bigskip

The rest of this chapter focuses on the explicit and {\it exact} computation for the amplitude of the Ponzano-Regge model. In this context, it will be complicated to keep all the couplings independent: we will again perform a Fourier Transform to compute the integral. To do so, we require that the couplings parameters do not depend on the lattice position. In the following we consider the homogeneous but anisotropic case where
\begin{equation*}
\tau_{t,x} = \tau  \qquad \text{and} \qquad \lambda_{t,x} = \lambda \qquad \forall (t,x) \; .
\end{equation*}
From the point of view of the continuum limit, this choice also makes sense. If the couplings are kept depending on the lattice position, they will become field. Hence, they cannot really be considered coupling anymore. However, this choice completely destroys the interpretation of being the generating function of the coherent spin networks. Further studies keeping the coupling arbitrary are left for the future.

\section{Three-dimensional Quantum Gravity Amplitude with Boundary}

Now that we have a better understanding of the boundary state geometry at the saddle point, it is time to look at the actual computation of the Ponzano-Regge amplitude. Taking into account the necessary gauge-fixing of the model, the amplitude reads, making the dependencies on $\lambda$ and $\tau$ explicit 
\begin{equation}
	\la Z_{PR}^{\cK}| \Psi  \ra_{\lambda,\tau} = \f{1}{\pi} \int_{0}^{2\pi} \text{d}\varphi \sin^{2}(\varphi) \int_{\C^2}\prod_{t,x=0}^{N_t-1,N_x-1} \frac{\text{d}^4 w_{t,x}}{\pi^2} e^{- S_{\lambda,\tau}[\{w_{t,x}\},\varphi]} \; .
	\label{chap6:eq:PR_amplitude_starting}
\end{equation}
with 
\begin{align}
	S_{\lambda,\tau}[\{w_{t,x}\},\varphi] = \sum_{t,x} \la \omega_{t,x} | \omega_{t,x} \ra \nonumber - &\tau \la \uparrow | \omega_{t,x+1} \ra \la \omega_{t,x} | \uparrow \ra -  \tau \la \uparrow | \omega_{t,x+1} ] [ \omega_{t,x} | \uparrow \ra \\ &- \lambda e^{i \frac{\varphi}{N_t}} \la + | \omega_{t+1,x} \ra \la \omega_{t,x} | + \ra - \lambda e^{-i \frac{\varphi}{N_t}} \la + | \omega_{t+1,x} ] [ \omega_{t,x} | + \ra\; .
\end{align}

In the action, $\lambda$ and $\tau$ are parameters, whereas they are the variables of the amplitude. We begin this section with the computation of the Gaussian integral. Then, we look at a particular case, allowing us to recover the result of the previous chapter in the right limit. Finally, we exactly compute, for the first time, the amplitude of quantum gravity with finite boundary on the torus. The computation are fairly technical, hence, most of the section starts with a proposition containing the result.

\subsection{Gaussian Integral and Interpretation}

The first step of the computation is to perform the Gaussian integration on the spinors. This can be done without any complication. To do so, we again consider a twisted Fourier Transform. We denote the Fourier component by $\tilde{w}_{\omega,k}$, defined by
\begin{equation}
	\tilde{w}_{\omega,k}^a = \f{1}{\sqrt{N_t N_x}} \sum_{t,x=0}^{N_t-1,N_x-1} w_{t,x}^a e^{i \f{2\pi}{N_x}k} e^{i \left(\f{2\pi}{N_t}\omega - \f{1}{N_t}\gamma k \right)} \; ,
\end{equation}
where $a=0,1$. As previously, we denote by $\omega$ the time modes, running from $0$ to $N_t-1$ and by $k$ the spatial modes, running from $0$ to $N_x-1$. We recall that $\gamma = \f{2 \pi N_\gamma}{N_x} $ is the discrete twist angle. 

With this transformation, the action takes a simple form
\begin{equation}
	S_{\lambda,\tau}[\{\tilde{w}_{\omega,k} \},\varphi] = \sum_{\omega,k=0}^{N_t-1,N_x-1} \la \tilde{w}_{\omega,k} |Q_{\omega,k}(\varphi)| \tilde{w}_{\omega,k}\ra
	\label{chap6:eq:arg_exp_action}
\end{equation}
where
\begin{equation*}
Q_{\omega,k}(\varphi) = \id - \tau e^{i \f{2\pi}{N_x}k \sigma^z} - \lambda e^{i \left( \f{\varphi}{N_t} + \f{2\pi}{N_t}\omega - \f{1}{N_t}\gamma k \right) \sigma^x} \; .
\end{equation*}
For the sake of keeping the notations simple, we did not make the dependency of $Q_{\omega,k}$ on the coupling constants $\lambda$ and $\tau$ explicit. 

In this form, the Gaussian integration is apparent and can be done exactly.
\begin{prop}
	The Gaussian integration over the spinors $w_{t,x}$ in equation \eqref{chap6:eq:PR_amplitude_starting} gives
	\begin{equation}
	\la Z_{PR}^{\cK}| \Psi  \ra_{\lambda,\tau} = \f{1}{\pi} \int_{0}^{2\pi} \text{d}\varphi \sin^{2}(\varphi) \prod_{\omega,k = 0}^{N_t-1,N_x-1} \f{1}{\det(Q_{\omega,k}(\varphi))} 
	\label{chap6:eq:starint_point_amp}
	\end{equation}
	with
	\begin{align*}
	\det(Q_{\omega,k}(\varphi)) 
	= 1 + &\lambda^2 + \tau^2  -2\tau \cos\left(\f{2\pi}{N_x}k\right) - 2\lambda \cos\left(\f{\varphi}{N_t} + \f{2\pi}{N_t}\omega - \f{1}{N_t}\gamma k\right) \\ + & 2 \lambda \tau \cos\left(\f{2\pi}{N_x}k\right) \cos\left(\f{\varphi}{N_t} + \f{2\pi}{N_t}\omega - \f{1}{N_t}\gamma k\right) \; .
	\end{align*}
\end{prop}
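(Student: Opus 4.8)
The plan is to perform the Gaussian integration over the spinor variables $w_{t,x}$ by first diagonalising the quadratic form via the twisted Fourier transform already introduced, and then applying the standard formula for complex Gaussian integrals. The key observation, which I would establish first, is that the action $S_{\lambda,\tau}$ is a \emph{local, translation-covariant} quadratic form in the $w_{t,x}$: the coupling terms only involve nearest-neighbour pairs $(t,x)$ and $(t,x+1)$ or $(t,x)$ and $(t+1,x)$, and the coefficients do not depend on the lattice site (this is precisely why we restricted to homogeneous couplings $\tau_{t,x}=\tau$, $\lambda_{t,x}=\lambda$). The twist in the torus periodicity forces us to use the twisted Fourier modes $\tilde w_{\omega,k}$, exactly as in Chapter~\ref{chap5}; under this transform the periodicity condition carries no residual phase, so the quadratic form becomes block-diagonal with one $2\times2$ block $Q_{\omega,k}(\varphi)$ per mode $(\omega,k)$. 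I would verify \eqref{chap6:eq:arg_exp_action} by substituting the Fourier expansion into each of the five terms of $S_{\lambda,\tau}$, using $\sum_{t,x}e^{i(\cdots)t}e^{i(\cdots)x}=N_tN_x\,\delta$ to collapse the double sums, and carefully tracking that the two spinorial pairings $\la\uparrow|w\ra\la w|\uparrow\ra$ and $\la\uparrow|w]\,[w|\uparrow\ra$ combine into the Pauli-matrix structures $e^{i\theta\sigma^z}$, $e^{i\theta\sigma^x}$ acting in the two-dimensional spinor space (this uses $|w]=\varsigma|w\ra$ and the identity $|w\ra\la w|+|w]\,[w|=\la w|w\ra\,\id$ together with its off-diagonal analogue).

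Once the action is in the form $\sum_{\omega,k}\la\tilde w_{\omega,k}|Q_{\omega,k}(\varphi)|\tilde w_{\omega,k}\ra$, the measure $\prod_{t,x}\frac{d^4w_{t,x}}{\pi^2}$ transforms (since the Fourier transform is unitary, its Jacobian is $1$) into $\prod_{\omega,k}\frac{d^4\tilde w_{\omega,k}}{\pi^2}$, and the integral factorises across modes. For each mode the complex Gaussian integral over $\tilde w_{\omega,k}\in\mathbb{C}^2$ gives $\int_{\mathbb{C}^2}\frac{d^4w}{\pi^2}e^{-\la w|Q|w\ra}=\det(Q)^{-1}$, valid whenever $\Re Q>0$; away from that domain the result is understood by analytic continuation in $\lambda,\tau,\varphi$, consistently with how the boundary state itself was defined beyond its radius of convergence. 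Collecting the mode contributions yields \eqref{chap6:eq:starint_point_amp}. The only remaining ingredient is the explicit evaluation of $\det(Q_{\omega,k}(\varphi))$: writing $\theta_z=\frac{2\pi}{N_x}k$ and $\theta_x=\frac{\varphi}{N_t}+\frac{2\pi}{N_t}\omega-\frac{1}{N_t}\gamma k$, one has $Q=\id-\tau e^{i\theta_z\sigma^z}-\lambda e^{i\theta_x\sigma^x}$, and a direct $2\times2$ determinant computation (using $\det(a\,\id+b\,\sigma^z+c\,\sigma^x)=a^2-b^2-c^2$ after expanding the exponentials as $\cos\theta\,\id+i\sin\theta\,\sigma$) gives the stated trigonometric expression, with the $\lambda\tau$ cross-term arising from the anticommutator $\{\sigma^z,\sigma^x\}=0$ contributing only through the $\cos\theta_z\cos\theta_x$ piece.

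\textbf{Main obstacle.} I expect the bookkeeping in the first step — showing that the two spinor pairings in $S_{\lambda,\tau}$ reassemble into the clean matrix form $Q_{\omega,k}(\varphi)$ — to be the delicate part, because one must handle the $\varsigma$-map carefully, keep the source/target orientation of each link straight, and ensure the phases $e^{\pm i\varphi/N_t}$ attached to the two halves of the vertical coupling combine correctly into $e^{i\theta_x\sigma^x}$ rather than producing spurious cross terms. The positivity/convergence caveat for the Gaussian integral is a minor point that can be dealt with by a one-line remark about analytic continuation, paralleling the treatment of the boundary state's radius of convergence discussed above. Everything else is a routine application of the unitary change of variables and the $2\times2$ determinant formula.
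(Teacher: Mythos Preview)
Your proposal is correct and follows essentially the same route as the paper: the Fourier diagonalisation \eqref{chap6:eq:arg_exp_action} is in fact established in the text immediately \emph{before} the proposition, so the paper's proof reduces to the one line ``straightforward given \eqref{chap6:eq:arg_exp_action}'', i.e.\ apply the complex Gaussian formula mode by mode and read off $\det Q_{\omega,k}$. Your sketch simply makes explicit the two steps the paper leaves implicit (unitary change of variables, direct $2\times2$ determinant via $Q=(1-\tau\cos\theta_z-\lambda\cos\theta_x)\id-i\tau\sin\theta_z\,\sigma^z-i\lambda\sin\theta_x\,\sigma^x$); your remark that the cross term comes from $\{\sigma^z,\sigma^x\}=0$ is slightly misphrased --- it is the anticommutation that \emph{kills} the $\sin\theta_z\sin\theta_x$ cross term, while the surviving $\cos\theta_z\cos\theta_x$ piece is just from squaring the identity part --- but the computation itself is right.
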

The proof is straightforward given the expression of the arguments of the exponential in formula \eqref{chap6:eq:arg_exp_action}.

\medskip

In this chapter, we are computing the amplitude of the Ponzano-Regge model exactly given the considered boundary conditions. Basically, we are computing the partition function integrated on a specified class of boundary data. This is just a number that does not, usually, give a lot of information. It is however possible to extract a lot from this number if the partition function is written in a correct form, that is with a apparent pole structure. We want to look at the poles of the amplitude as if they were the dispersion relation of the theory, therefore characterizing the theory we are working on. This is the whole point of this chapter, to compute the Ponzano-Regge amplitude, which is by definition finite, in such a form that it has apparent poles.

Now working from the expression of the Ponzano-Regge amplitude as a trigonometric integral, the next step is to identify the poles in $\varphi$ and calculate their residue  in order to compute this integral as a contour integral in the complex plane. The class angle $\varphi$ encodes the holonomy along the non-contractible cycle of the solid torus and describes the curvature of the geometry. Poles in $\varphi$, corresponding to zeros of the determinants, $\det(Q_{\omega,k}(\varphi))=0$, can be loosely thought as resonances in the connection.

From the perspective of standard (free) quantum field theory, zeroes of the determinants $\det[Q_{\omega,k}(\varphi)]=0$ would correspond to the classical dispersion relation between the frequency $\omega$ and the momentum $k$. Here, not only this relation depends on the curvature angle $\varphi$, but we further have to integrate over that angle. Before computing that integral, it is nevertheless enlightening to investigate the physical meaning of those $\varphi$-dependant dispersion relations. Let us start with the case of a trivial connection in the solid torus, $\varphi=0$. In that case, the determinant simply reads
\begin{align*}
	\det(Q_{\omega,k})
	= 1 + \lambda^2 + \tau^2  -&2\tau \cos\left(\f{2\pi}{N_x}k\right) - 2\lambda \cos\left(\f{2\pi}{N_t}\omega - \f{1}{N_t}\gamma_k\right) \\ + &2 \lambda \tau \cos\left(\f{2\pi}{N_x}k\right) \cos\left(\f{2\pi}{N_t}\omega - \f{1}{N_t}\gamma k\right) \; .
\end{align*}

In the spirit of viewing the zeroth of the determinant as the dispersion relation, we first look at the zeroth mode configuration $(\omega,k) = (0,0)$, returning the square mass term. Here we have
\begin{equation}
M_0^2 = 1 + \lambda^2+\tau^2 - 2\tau - 2\lambda + 2\lambda \tau = \left(1-\tau-\lambda \right)^2 \;.
\end{equation}	

Once we have the mass term, we can easily rewrite the determinant such that the mass term is explicit
\begin{align*}
\det(Q_{\omega,k}) = M_0^2 + &\tau\left(2-2\cos\left(\f{2\pi}{N_x}k\right)\right) + \lambda \left(2-2\cos\left(\f{2\pi}{N_t}\omega - \f{1}{N_t}\gamma k\right) \right)\\ - & 2\lambda\tau \left(2-2\cos\left(\f{2\pi}{N_x}k\right) \cos\left(\f{2\pi}{N_t}\omega - \f{1}{N_t}\gamma k\right)\right)
\end{align*}

We recognize here the discrete Laplacians $\Delta_k$ and $\Delta_{\omega}$ in the spatial and time direction respectively
\begin{align*}
\Delta_{k} &= 2-2\cos\left(\f{2\pi}{N_x}k\right) \; ,\\
\Delta_{\omega} &= 2-2\cos\left(\f{2\pi}{N_t}\omega - \f{1}{N_t}\gamma k\right) \; .
\end{align*}
Note that the Laplacian in the time direction is twisted with a term 
$\gamma k$. This is completely natural due to the Fourier Transform we are considering because of the periodic conditions of the lattice. Besides the mass term, we also recognize a coupling term between the Laplacian making the dispersion relation non trivial.

Let us remember that the critical coupling equation was simply $|\tau|+|\lambda|=1$. This means that the case of real positive critical couplings\footnotemark, $\lambda+\tau=1$ with $\lambda, \tau\ge 0$, corresponds to a vanishing  mass $M_{0}=0$ for the boundary modes propagating on the background defined by a bulk curvature $\varphi=0$.
\footnotetext{
	Real critical couplings naturally occur in the asymptotic limit, as $N_{x}, N_{t}$ grows to $\infty$, since the dihedral angles between the faces of the cylinder go to 0 in this continuum limit.
}
This remarkable fact extends in the complex plane to possible phases between $\tau$ and $\lambda$. Indeed, looking at the (complex) mass term for an arbitrary class angle $\varphi$, we get:
\begin{equation}
M_{\varphi}^{2}
=
\det[Q_{\omega,k}(\varphi)]\Big|_{\omega=k=0}
=
\big{(}1-\tau-\lambda e^{i\f\varphi{N_{t}}}\big{)}\big{(}1-\tau-\lambda e^{-i\f\varphi{N_{t}}}\big{)}
\,.
\end{equation}
Thus  any critical couplings with $\tau\in\R$ satisfy  $\tau+|\lambda|=1$ and thus correspond to a vanishing mass for some value of the angle $\varphi$ depending on the relative phase of $\tau$ and $\lambda$. 
This provides a neat interpretation of criticality of the boundary state as massless 0-modes on the boundary leading to a divergence of the partition function.

\bigskip

To do the actual computation, it is easier to rewrite the determinant in a semi-factorize form
\begin{equation}
\det(Q_{\omega,k}(\varphi)) = (\tau^2 + \lambda^2 - 1) +\left(-2  + 2\tau \cos\left(\f{2\pi}{N_x}k\right) \right)  \left(-1 + \lambda \cos\left(\f{\varphi}{N_t} + \f{2\pi}{N_t}\omega - \f{1}{N_t}\gamma k\right) \right) \; .
\label{chap6:eq:starting_point_det}
\end{equation}
In the previous expression, we have factorized as much as possible the contribution of the spatial direction, coming with the parameters $\tau$ and of the temporal direction, coming with the parameters $\lambda$. 

From this expression we see that a really peculiar condition arises. Consider the sub-space of the parameters $\lambda$ and $\tau$ defined by the condition
\begin{equation*}
\tau^2 + \lambda^2 = 1 \;.
\end{equation*}
The question of how, and if, this condition is related to the mass term remains a mystery. It is clear that they coincide in the limit where one of the parameters is $1$ whereas the other one is $0$, but in general this is not the case. The point is that this condition completely decouples the temporal and spatial direction on the torus, in the sense that they factorized. Indeed, under this condition, the determinant clearly becomes
\begin{equation}
\det(Q_{\omega,k}(\varphi)) = \left(-2  + \tau \cos\left(\f{2\pi}{N_x}k\right) \right)  \left(-1 + \lambda \cos\left(\f{\varphi}{N_t} + \f{2\pi}{N_t}\omega - \f{1}{N_t}\gamma k\right) \right) \; .
\end{equation}
The condition $\tau^{2}+\lambda^{2} = 1$ effectively decouples the spatial modes from the (twisted) temporal modes. Although this happens in momentum space and not directly in coordinate space on the original lattice, we can consider it as similar to the zero-spin recoupling channel for boundary spin network states at fixed spins considered in \cite{Dittrich:2017rvb}. Indeed it plays the same role of decoupling the spatial and temporal structures of the boundary state, trivializing in some sense the role of the number of time slices $N_{t}$ and thereby allowing allowing to focus solely on the effect of the twisted periodic conditions and how the Ponzano-Regge amplitude depends on the twisted angle $\gamma$. 

It is crucial to stress that this decoupling condition is not related to the criticality condition $|\tau|+|\lambda|=1$. In fact, criticality corresponds to extremal couplings satisfying $\tau^2 + \lambda^2 = 1$. For instance, considering real couplings $\tau,\lambda\in [0,1]$ satisfying $\tau^2=1- \lambda^2 $, critical couplings are $\lambda=0$ and $\lambda=1$. Thus, assuming $\tau^2 + \lambda^2 = 1$ for complex couplings does not restrict to specific behaviour of the partition functions and allow to study the transition from a convergent boundary state to the critical regime. We will actually show that we recover in that critical regime the asymptotic BMS character formula for the 3D quantum gravity path  integral.

We also discuss in the following the general calculation of the Ponzano-Regge amplitude when $\tau^2 + \lambda^2 \ne 1$. Evaluating the partition function as a contour integral over the class angle $\varphi$ in the complex plane, the poles and residues are not regularly spaced as in the decoupled case, but we can still write it as a finite sum over poles, which can be interpreted as a regularized and deformed BMS character.

\medskip

In order to compute the Ponzano-Regge amplitude, we will use the same lemma \ref{prop:cosine_product} as in the previous section. We recall it
\begin{lemma}
	Consider two integers $N$ and $M$ and denote the greatest common divisor (GCD) between $N$ and $M$ by $K$. Define then two integers $n$ and $m$ such that
	\begin{equation*}
	N = K \, n \qquad \text{and} \qquad M = K \, m \; .
	\end{equation*}
	The following relation holds for all $x$ and $a$ complex numbers
	\begin{equation*}
	\prod_{k=0}^{N-1} \left(2 a + 2 \cos\left( \f{2 \pi M}{N}k + x\right) \right)= \left(2 \left( T_{n}(a) - (-1)^{n} \cos(n x) \right)\right)^K
	\end{equation*}
	where $T_{n}$ is the n-Chebyshev polynomial of the first kind.
	\label{prop:cosine_product}
\end{lemma}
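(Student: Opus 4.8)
The statement is a purely algebraic identity about a product of linear-in-cosine factors evaluated at equally spaced (and twisted) arguments, so the plan is to reduce it to a standard factorization of Chebyshev polynomials via roots of unity. First I would substitute $z = \mathrm e^{\mathrm i(\frac{2\pi M}{N}k+x)}$ so that $2a+2\cos(\frac{2\pi M}{N}k+x) = z^{-1}(z^2+2az+1) = z^{-1}(z-r_+)(z-r_-)$, where $r_\pm$ are the two roots of $Z^2+2aZ+1=0$, i.e. $r_+r_-=1$ and $r_++r_-=-2a$. Writing $r_\pm = -\mathrm e^{\pm\theta}$ with $a=\cosh\theta$ (or $a=\cos\phi$, whichever branch — the identity is polynomial in $a$ so analytic continuation handles all cases), the product becomes
\begin{equation*}
\prod_{k=0}^{N-1}\Big(2a+2\cos(\tfrac{2\pi M}{N}k+x)\Big) = \prod_{k=0}^{N-1} \mathrm e^{-\mathrm i(\frac{2\pi M}{N}k+x)}(z_k-r_+)(z_k-r_-),
\end{equation*}
with $z_k = \mathrm e^{\mathrm i(\frac{2\pi M}{N}k+x)}$.

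The key combinatorial step is to evaluate $\prod_{k=0}^{N-1}(z_k - r)$ for a fixed $r$. Since $z_k = \mathrm e^{\mathrm i x}\zeta^{k}$ with $\zeta = \mathrm e^{2\pi\mathrm i M/N}$, and $\zeta$ is a primitive $n$-th root of unity where $n = N/K$ (because $K=\gcd(M,N)$), the multiset $\{z_k : k=0,\dots,N-1\}$ consists of the $n$-th roots of $\mathrm e^{\mathrm i n x}$, each repeated $K$ times. Hence
\begin{equation*}
\prod_{k=0}^{N-1}(z_k-r) = \Big(\prod_{\zeta'^n = \mathrm e^{\mathrm i n x}} (\zeta'-r)\Big)^K = \big((-1)^n(r^n - \mathrm e^{\mathrm i n x})\big)^K.
\end{equation*}
Applying this to $r=r_+$ and $r=r_-$ and multiplying, and separately computing $\prod_k \mathrm e^{-\mathrm i(\frac{2\pi M}{N}k+x)} = \mathrm e^{-\mathrm i N x}\zeta^{N(N-1)/2}$ (which one checks equals $(-1)^{N+K}$-type sign, or more robustly is handled by pairing with the $(-1)^n$ factors), I would collect everything in terms of $r_+^n + r_-^n$. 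Using $r_\pm = -\mathrm e^{\pm\theta}$ one gets $r_+^n r_-^n = 1$ and $r_+^n + r_-^n = (-1)^n\cdot 2\cosh(n\theta) = (-1)^n\, 2\,T_n(a)$, since $T_n(\cosh\theta)=\cosh(n\theta)$ and $T_n(\cos\phi)=\cos(n\phi)$ give the same polynomial $T_n$. Expanding $\prod_\pm (r_\pm^n - \mathrm e^{\mathrm i n x})$ and $(r_\pm^n - \mathrm e^{-\mathrm i n x})$ carefully, the cross terms assemble into $r_+^n+r_-^n$ and the pure-exponential terms into $2\cos(nx)$, yielding the factor $2\big(T_n(a) - (-1)^n\cos(nx)\big)$, raised to the power $K$.

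The main obstacle I anticipate is bookkeeping of signs and phase factors: tracking the $(-1)^n$, $(-1)^K$, the leftover $\mathrm e^{-\mathrm i Nx}\zeta^{N(N-1)/2}$, and making sure the branch choice $a=\cosh\theta$ versus $a=\cos\phi$ does not matter. I would circumvent this by first proving the identity for $a$ in a regime where everything is manifestly real and positive (e.g. $a>1$, $x$ real), establishing it there by the root-of-unity computation above, and then invoking the fact that both sides are polynomials in $a$ (and trigonometric polynomials in $x$) of the same bounded degree, so agreement on an interval forces agreement everywhere. A cleaner alternative, which I would present if the sign-chasing gets unwieldy, is an inductive/recursive argument: show both sides satisfy the same recursion in $n$ coming from the three-term recurrence $T_{n+1}(a) = 2a\,T_n(a) - T_{n-1}(a)$, by grouping the $N$ factors into the $K$ residue classes mod $n$ and recognizing each class as a "subproduct" obeying the Chebyshev recursion — but the roots-of-unity proof is the most direct and is the one I would write first.
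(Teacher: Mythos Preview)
Your approach is correct and complete in principle; the sign bookkeeping you flag as the main obstacle does in fact close cleanly (the accumulated sign is $(-1)^{(n+1)K}(-1)^K(-1)^{nK}=1$), so no fallback to the polynomial-identity argument is strictly needed.

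The paper's proof takes a different and somewhat more elementary route. It first observes that the multiset $\{\tfrac{2\pi M}{N}k+x\}_{k=0}^{N-1}$ is exactly $K$ copies of $\{\tfrac{2\pi}{n}k+x\}_{k=0}^{n-1}$ (this is your observation too), thereby reducing to the coprime case $K=1$ and absorbing $m$ by periodicity. It then treats both sides as degree-$n$ polynomials in $a$ and compares their \emph{roots} in $a$: on the left these are manifestly $a_k=-\cos(\tfrac{2\pi}{n}k+x)$, and on the right, writing $T_n(a)=\cos(n\arccos a)$, the equation $T_n(a)=(-1)^n\cos(nx)$ yields the same set. A single evaluation at $a=1$ fixes the proportionality constant. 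Your method instead factors each linear term as $z_k^{-1}(z_k-r_+)(z_k-r_-)$ and evaluates the resulting root-of-unity product directly, assembling the Chebyshev polynomial through $r_+^n+r_-^n=(-1)^n\,2T_n(a)$. The trade-off: the paper's root-matching avoids all the phase and sign accounting but requires the extra normalization step; your direct computation gets the constant for free at the cost of tracking several $(-1)$'s. Both are short, and either would be acceptable.
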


Using the explicit expression of the Chebyshev polynomial in terms of the analytic continuation of hyperbolic functions makes the notations simpler. That is, we denote 
\begin{equation*}
T_{n}(x) = \cosh(n X_x) \qquad \text{with} \quad X_{x} = \text{arccosh}(x) \;.
\end{equation*}
We recall that under a parity transformation, the Chebyshev polynomial transforms following
\begin{equation*}
T_{n}(-x) = (-1)^{n} T_{n}(x) \; . 
\end{equation*}
\medskip

As usual we denote by $K$ the GCD between $N_x$ and $N_\gamma$. We also introduce $n_x$ and $n_\gamma$ such that
\begin{equation*}
N_x = K n_x \qquad \text{and} \qquad N_\gamma = K n_\gamma \; .
\end{equation*}

\subsection{Ponzano-Regge determinant and multi-variate Chebytchev polynomial}

Before going deeper into the interpretation of the determinant formula and the computation, we are making a brief remark. The formula of the determinant has a nice expression in terms of trace of $SU(2)$ elements, allowing a possible link with multi-variate Chebytchev polynomials. We consider pairs of $SU(2)$ element $h^\tau_{k}$ and $h_{\omega,k}^\lambda$ defined by
\begin{align*}
	h_{k}^\tau &= \cos\left(\f{2\pi}{N_x}k\right) \id + i \sigma_{z} \sin\left(\f{2\pi}{N_x}k\right) = e^{i \f{2\pi}{N_x}k \sigma_z} \\
	h_{\omega,k}^\lambda &= \cos\left(\f{\varphi}{N_t} + \f{2\pi}{N_t}\omega - \f{1}{N_t}\gamma k\right) \id + i \sigma_{x} \sin\left(\f{\varphi}{N_t} + \f{2\pi}{N_t}\omega - \f{1}{N_t}\gamma k\right) = e^{i \f{\varphi}{N_t} + \f{2\pi}{N_t}\omega - \f{1}{N_t}\gamma_k \sigma_x} \; .
\end{align*}
The trace of $h_{k}^\tau$, $h_{\omega,k}^\lambda$ and $h_{k}^\tau h_{\omega,k}^\lambda$ is
\begin{align*}
	\tr(h_{k}^\tau) &= 2\cos\left(\f{2\pi}{N_x}k\right) \; , 
	\\ 
	\tr(h_{\omega,k}^\lambda) &= 2 \cos\left(\f{\varphi}{N_t} + \f{2\pi}{N_t}\omega - \f{1}{N_t}\gamma k\right) \; ,
	\\
	\tr(h_{k}^\tau h_{\omega,k}^\lambda) &= 2 \cos\left(\f{2\pi}{N_x}k\right) \cos\left(\f{\varphi}{N_t} + \f{2\pi}{N_t}\omega - \f{1}{N_t}\gamma k\right) \; .
\end{align*}
In terms of the traces, the determinant is
\begin{equation}
	\det(Q_{\omega,k}(\varphi)) =  1 + \lambda^2 + \tau^2 - 2\tau \tr(h_{k}^\tau) - 2\lambda \tr(h_{\omega,k}^\lambda) +2\lambda\tau \tr(h_k^\tau h_{\omega,k}^\lambda) \; .
	\label{chap6:eq:det_SU(2)_element_expression}
\end{equation}

Using this expression of the determinant, we can easily look for a generalization of our boundary state. Instead of considering a rectangular lattice, we can consider a tilted lattice. Let's introduce the spinor $|\theta \ra$ associated to the vector $\vec{u}_{\theta} = (\cos(\theta),0,\sin(\theta))$. The case $\theta = 0$ corresponds to the rectangular case we have been working on in this thesis. Replacing $|+\ra$ by $|\theta\ra$ corresponds to replacing $h^{\lambda}_{\omega,k}$ by
\begin{equation*}
h^{\lambda}_{\omega,k}(\theta) = \cos\left(\f{\varphi}{N_t} + \f{2\pi}{N_t}\omega - \f{1}{N_t}\gamma k\right) \id + i \sigma_{\theta} \sin\left(\f{\varphi}{N_t} + \f{2\pi}{N_t}\omega - \f{1}{N_t}\gamma k\right) \; .
\end{equation*}
where $\sigma_{\theta} = \vec{\sigma}.\vec{u}_{\theta}$.

\bigskip

With these notations, it is possible to relate our determinant to a generalization of the Chebyshev polynomial to multi-variate polynomial.

Indeed, the generating function of the Chebyshev polynomial of the second kind $U_{n}$ is
\begin{equation}
\cF(u,v) = 	\sum_{n=0}^{\infty} U_n(u) v^{n} = \f{1}{1-2vu+v^2} \; .
\end{equation}

Consider now the case where either $\lambda$ or $\tau$ is equal to $0$. Say $\lambda = 0$. The determinant becomes
\begin{equation}
\det(Q_{\omega,k}(\varphi))|_{\lambda = 0} = 1 - 2\tau \tr(h_{k}^\tau) + \tau^2 \; ,
\end{equation}
which is clearly related to the generating function $\cF$ at $u = \tr(h_{k}^\tau) $ and $v = \tau$
\begin{equation}
\cF(\tr(h_{k}^\tau),\tau) = \f{1}{\det(Q_{\omega,k}(\varphi))|_{\lambda = 0}}
\end{equation}
and the same holds for $\tau = 0$ and any $\lambda$.

It seems therefore appealing to define 2-variate Chebytchev polynomials by defining their generating function to be equal to the inverse of formula \eqref{chap6:eq:det_SU(2)_element_expression}. In the same way, n-variate Chebytchev polynomial can be defined. We leave the study of such polynomial to future work.

\subsection{Free Ponzano-Regge amplitude: pole and exact formula}

\begin{prop}
	The sub-space span by the constraint
	\begin{equation}
	\lambda^2 + \tau^2 = 1
	\label{chap6:eq:constraint_trivial_disp_rel}
	\end{equation}
	is the sub-space where the spatial and temporal directions are decoupled and the dispersion relation, given by
	\begin{equation}
	\varphi_{k}^{\pm} = \gamma k \pm i N_t X_{\lambda^{-1}} \; , \quad \gamma = \f{2 \pi N_\gamma}{ N_x} 
	\end{equation}
	is linear in $k$.
	
	The mode expansion of the Ponzano-Regge model reads
	\begin{align*}
	Z_{\lambda^2+\tau^2=1} = \la Z_{PR}^{\cK}| \Psi  \ra_{\lambda^2+\tau^2=1} = &\f{2^{N_t(N_x-2)-K}}{(\lambda \tau)^{N_t N_x}} \f{1}{\sh^{2N_t}\left(\f{1}{2}N_x X_{\f{1}{\tau}}\right)} \\ & \left(\f{n_x}{\sh(N_t X_{\f{1}{\lambda}})} \prod_{k=1}^{n_x-1} \f{1}{2(\ch(N_t X_{\f{1}{\lambda}}) - \cos(\gamma k+i N_t X_{\f{1}{\lambda}}))}\right)^K \\ & L_{K-1}\left( \f{\ch^K(n_x N_t X_{\f{1}{\lambda}})}{\sh^K(n_x N_t X_{\f{1}{\lambda}})} \right) \; ,
	\end{align*}
	where $K = GCD(N_x,N_\gamma)$ and $n_x = \f{N_x}{K}$. $L_n$ is the Legendre polynomial of degree $n$.
	
\end{prop}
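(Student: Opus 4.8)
The plan is to start from the already-derived trigonometric integral representation \eqref{chap6:eq:starint_point_amp}, specialize the determinant to the constraint surface $\lambda^2+\tau^2=1$, and evaluate the resulting integral over $\varphi$ by residues. First I would substitute $\lambda^2+\tau^2=1$ into the semi-factorized form \eqref{chap6:eq:starting_point_det} of $\det(Q_{\omega,k}(\varphi))$; the constant term $(\tau^2+\lambda^2-1)$ vanishes, leaving the product
\begin{equation*}
\det(Q_{\omega,k}(\varphi)) = \big(2\tau\cos(\tfrac{2\pi}{N_x}k)-2\big)\big(\lambda\cos(\tfrac{\varphi}{N_t}+\tfrac{2\pi}{N_t}\omega-\tfrac{1}{N_t}\gamma k)-1\big).
\end{equation*}
This manifestly decouples the $\omega$- and the $\tau$-dependence, which is why the time direction becomes trivial. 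The product over $\omega$ (for fixed $k$) of the second factor is a product of the form $\prod_{\omega}(\lambda\cos(\cdot)-1)$, which I would evaluate with Lemma \ref{prop:cosine_product} applied with $N=N_t$, $M=1$, $a=-1/\lambda$, $x=\tfrac{\varphi}{N_t}-\tfrac{\gamma k}{N_t}$, producing a factor $\propto(T_{N_t}(-1/\lambda)-(-1)^{N_t}\cos(\varphi-\gamma k))$, i.e.\ $\ch(N_t X_{1/\lambda})-\cos(\varphi-\gamma k)$ up to signs and powers of $\lambda$ and $2$. The zeros of this in $\varphi$ are exactly the claimed dispersion relation $\varphi_k^\pm = \gamma k \pm i N_t X_{1/\lambda}$, linear in $k$.

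Next I would deal with the $\tau$-factor, which is $\varphi$-independent: its product over all $\omega,k$ gives $\prod_{k=0}^{N_x-1}\big(2\tau\cos(\tfrac{2\pi}{N_x}k)-2\big)^{N_t}$, again handled by Lemma \ref{prop:cosine_product} (now with $N=N_x$, $M=1$, $a=-1/\tau$), yielding the $\sh^{2N_t}(\tfrac12 N_x X_{1/\tau})$ denominator after splitting off the $k=0$ term and using $T_{N_x}(-1/\tau)=(-1)^{N_x}\ch(N_x X_{1/\tau})$ together with $\ch(N_x X)-1 = 2\sh^2(\tfrac12 N_x X)$. The overall prefactor $2^{N_t(N_x-2)-K}/(\lambda\tau)^{N_t N_x}$ is accumulated from the explicit $-2$, $\lambda$, $\tau$ constants pulled out of the determinant at each of the $N_t N_x$ sites, with the $-K$ correction coming from the $K$-th power structure that Lemma \ref{prop:cosine_product} produces (since $\mathrm{GCD}(N_x,N_\gamma)=K$ forces the $k$-product of the $\varphi$-dependent part to organize into $K$ identical blocks indexed by $k=0,\dots,n_x-1$). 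Collecting these, the integrand of $\frac1\pi\int_0^{2\pi}\dd\varphi\,\sin^2\varphi\,(\cdots)$ becomes, up to the $\varphi$-independent prefactors already identified, a power $K$ of a rational function of $\cos(\varphi-\gamma k)$ with $n_x$ simple-pole pairs.

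The main computational obstacle — and the step I expect to be genuinely delicate — is the residue evaluation of
\begin{equation*}
\frac{1}{\pi}\int_0^{2\pi}\dd\varphi\,\sin^2\varphi\;\prod_{k=0}^{n_x-1}\frac{1}{\big(\ch(N_t X_{1/\lambda})-\cos(\varphi-\gamma k)\big)^K},
\end{equation*}
because each factor is raised to the power $K$, so the poles at $\varphi_k^\pm$ are order-$K$ poles and their residues involve $(K-1)$-th derivatives. The claim is that summing these higher-order residues against the $\sin^2\varphi$ measure produces exactly the Legendre polynomial $L_{K-1}$ evaluated at $\coth(n_x N_t X_{1/\lambda})$. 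I would establish this by the generating-function identity for Legendre polynomials: one recognizes that $\partial_z^{K-1}$ acting on a simple geometric-type kernel built from $\prod_k (z-\cos(\varphi-\gamma k))^{-1}$ reproduces Rodrigues-type combinations, and that after the contour integral the dependence reorganizes via $\sum_k$ telescoping into functions of $n_x N_t X_{1/\lambda}$ only. Concretely I would (i) substitute $w=e^{i\varphi}$ and write $\sin^2\varphi = -\tfrac14(w-w^{-1})^2$, (ii) identify the $K$ poles inside $|w|=1$ — these are the $\varphi_k^-$ branch points mapped to $w$-values of modulus $<1$ — (iii) compute each order-$K$ residue, and (iv) resum over $k=0,\dots,n_x-1$. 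At each stage I would cross-check against the $K=1$ case, where the Legendre polynomial is $L_0=1$ and the formula must reduce to the product-over-$k$ expression already appearing (in the coherent-state computation of Chapter \ref{chap5}, equation \eqref{chap5:eq:amp_final}) — this consistency check both fixes the normalization constants and validates the combinatorial bookkeeping of the $K$-block structure. Finally, I would verify the linear dispersion relation claim independently by directly solving $\ch(N_t X_{1/\lambda})=\cos(\varphi-\gamma k)$, which gives $\varphi = \gamma k \pm i N_t X_{1/\lambda}$ since $\cos(i y)=\ch(y)$, confirming linearity in $k$ and completing the proof.
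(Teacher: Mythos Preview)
Your setup---factorizing the determinant on the constraint surface, evaluating the $\tau$-product and the $\omega$-product via Lemma~\ref{prop:cosine_product}, and reading off the linear poles $\varphi_k^\pm$---matches the paper exactly. The substantive divergence is in how you treat the remaining $\varphi$-integral for general $K$. You propose to attack the order-$K$ poles head-on by residues, hoping that the $(K-1)$-th derivatives reorganize into a Rodrigues-type expression for $L_{K-1}$. The paper explicitly avoids this: it remarks that higher-order residues are hard to control in closed form, and instead applies Lemma~\ref{prop:cosine_product} \emph{once more} to the $k$-product, collapsing $\prod_{k=0}^{N_x-1}[\ch(N_tX_{1/\lambda})-\cos(\varphi-\gamma k)]$ into the single factor $\big(\ch(n_xN_tX_{1/\lambda})-\cos(n_x\varphi)\big)^K$. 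The integrand is then a power of a \emph{single} cosine, which the paper expands via the negative-binomial series $\sum_n(-1)^n\binom{K+n-1}{n}\ch^{-K-n}\cos^n(n_x\varphi)$; the trigonometric integrals $\int_0^{2\pi}\sin^2\varphi\,\cos^{2n}(n_x\varphi)\,\dd\varphi$ are elementary, and the resulting series is recognized as ${}_2F_1\big(\tfrac{K}{2},\tfrac{K+1}{2};1;\ch^{-2}\big)$, which is the Legendre polynomial $L_{K-1}$ in closed form. The paper \emph{does} use residues, but only for $K=1$ where all poles are simple; comparing that result with the general-$K$ series answer is how the mode-product identity (the bracketed factor raised to the $K$-th power in the proposition) is extracted. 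Your route may well work, but the Rodrigues step is the entire content and you leave it as a hope; the paper's series-then-hypergeometric route is elementary and mechanical, at the cost of obscuring the pole picture until the very end.
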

Note that the critical condition $\tau = 0$, $\lambda = 1$ is pathologic. On top of the divergence of the amplitude, the argument of the Legendre polynomial becomes infinite and as such the previous formula does not capture all the poles of the amplitude. This case will be studied in the next section and related to the previous results and the BMS character.

In spirit, this subspace of trivial dispersion relation should correspond to the $J=0$-intertwiner in the s-channel introduced in chapter \ref{chap4}. Recall that with such a choice, the intertwiner completely decoupled the links in the vertical and in the horizontal direction, hence corresponding to this particular sub-space.
\medskip

Under the constraint \eqref{chap6:eq:constraint_trivial_disp_rel}, the Ponzano-Regge amplitude becomes 
\begin{align*}
	Z_{\lambda^2+\tau^2=1} = \f{1}{\pi} \int_{0}^{2\pi} &\text{d}\varphi \sin^{2}(\varphi) \\ & \prod_{\omega,k=0}^{N_t-1,N_x-1} \f{1}{\Big(-2 + 2 \tau \cos\left(\f{2\pi}{N_x}k\right) \Big)  \Big(-1 + \lambda \cos\left(\f{\varphi}{N_t} + \f{2\pi}{N_t}\omega - \f{1}{N_t}\gamma_k\right) \Big)}  \; .
\end{align*}
The product over the Fourier modes can easily be computed using the proposition \ref{prop:cosine_product}. The product over $-2 + 2 \tau \cos\left(\f{2\pi}{N_x}k\right)$ gives
\begin{equation*}
	\prod_{\omega,k=0}^{N_t-1,N_x-1} \Big(-2 + 2 \tau \cos\left(\f{2\pi}{N_x}k\right) \Big) = 2^{N_t}\tau^{N_x N_t} \left( \ch(N_x X_\f{1}{\tau}) - 1 \right)^{N_t} = 2^{2N_t} \tau^{N_x N_t} \sh^{2N_t}\left( \f{1}{2}N_x X_\f{1}{\tau}\right) \; .
\end{equation*}
For the product over $-1 + \lambda \cos\left(\f{\varphi}{N_t} + \f{2\pi}{N_t}\omega - \f{1}{N_t}\gamma k\right)$, the product over the temporal modes returns
\begin{equation*}
	\prod_{\omega=0}^{N_t-1} \Big(-1 + \lambda \cos\left(\f{\varphi}{N_t} + \f{2\pi}{N_t}\omega - \f{1}{N_t}\gamma k\right)\Big) = \f{\lambda^{N_t}}{2^{N_t}} \left(2\left(\ch(N_t X_\f{1}{\lambda}) -\cos\left( \varphi - \gamma k \right) \right) \right) \; .
\end{equation*}

Combining these two results, the Ponzano-Regge amplitude is then
\begin{equation}
	Z_{\lambda^2+\tau^2=1} = \f{2^{N_t(N_x-2)}}{(\lambda \tau)^{N_t N_x}} \f{1}{\sh^{2N_t}\left(\f{1}{2}N_x X_\f{1}{\tau}\right)} \f{1}{\pi} \int_{0}^{2\pi} \text{d} \varphi \sin^{2}(\varphi) \prod_{k=0}^{N_x-1} \f{1}{2(\ch(N_t X_\f{1}{\lambda})-\cos(\varphi-\gamma k))} \;.
\label{chap6:eq:starting_point_residue_case_trivial_disp_rel}
\end{equation}
As expected, we end up with an expression that does not explicitly depend on the temporal modes. The remaining product can also be dealt with in the same way using proposition \ref{prop:cosine_product}. Recall however that we are interested in getting a mode expansion of the amplitude relating the spatial model and the continuous parameter $\varphi$. Note that the same argument is not true for the temporal mode. Indeed, it is immediate to see from the initial amplitude that the temporal modes $\omega$ only contribute with a translation of $2\pi$ to the poles. Hence, they do not matter as previously claimed.

Instead, we will compute the integral over $\varphi$ using the residue theorem. For the residue theorem to be easily applied, we will restrict the computation to the case where $K=1$, since all the poles are then simple. For general $K$, the poles will be of order $K$, and finding a general expression with the residue theorem, if possible, is much harder. However, once the computation for $K=1$ is found, we will compare it with the computation of the amplitude using the usual proposition \ref{prop:cosine_product}. This will allow us to recover the exact pole structure of the any $K$ case up to a particular limit for the coupling parameters. This case will be discussed in more details in the next section.
\\

We start with the case $K=1$ and the residue computation. Note that the condition $K=1$ is, as always, the one corresponding to the irrational angles in the continuum, like in chapter \ref{chap2} under the double scaling limit
\begin{equation}
	N_x \rightarrow \infty \; , \; \; \quad N_\gamma \rightarrow \infty \qquad \text{with} \quad \f{2\pi N_\gamma}{N_x} \rightarrow \gamma \in [0,2\pi] \; .
\end{equation}
It is the exact same condition for $K$ that already appear in the computation of the previous chapter, and for the quantum Regge calculus approach. Here, we will see that we can go beyond this case without loosing the convergence of the amplitude.

The residue theorem is easily applied since the poles are apparent from equation \eqref{chap6:eq:starting_point_residue_case_trivial_disp_rel}. They number to $2 N_x$, at the positions
\begin{equation}
\varphi_{k}^{\pm} = \gamma k \pm i N_t X_\f{1}{\lambda} \;.
\end{equation}
Assuming $K>1$, it is immediate to see that
\begin{equation}
\varphi_{k}^{\pm} = \varphi_{k+n_x}^{\pm} \quad \forall \; k \; ,
\end{equation}
and the poles will not be simple anymore, but of order $K$. It is interesting to note that this feature is due to the sub-space of the coupling constant we are considering and hence it will not be necessary to restrict the residue theorem to the sole case $K=1$ in the general case. Indeed, the non triviality of the dispersion relation will imply that the poles are not linear in $k$ anymore, and therefore that they are always distinct. Note that the case $\lambda = 1$ is again peculiar. Indeed, in that case, we have $X_{1} = 0$ and the poles are thus real and are not simple anymore but of order two. Moreover, in that case, the spatial mode $k=0$ is also not present due to the sine square in the integral that will cancel it. It corresponds to the vanishing winding number of the previous chapter.

To compute the integral by residue, we introduce the complex variable $z$ defined by
\begin{equation*}
	z = e^{i \varphi} \; , \quad \text{d}z = i z \text{d}\varphi \;.
\end{equation*}
Under this change of variable, the integration contour becomes the whole $U(1)$. In the complex variable the poles are\footnote{Sadly, with these notations, $z_{k}^{+}$ (resp. $z_k^-$) corresponds to $\varphi_k^-$ (resp. $\varphi_k^+$).}
\begin{equation*}
	z_{k}^{\pm} = e^{i \gamma_k} e^{\pm N_t X_\f{1}{\lambda}} \; ,
\end{equation*}
and we are only interested in the poles whose norms are smaller than $1$ (since the integration contour is $U(1)$).

Under the condition that we consider the principal value for the inverse cosine hyperbolic function, the following inequalities hold for all $k$
\begin{equation}
	|z_k^{+}| \geq 1 \quad \text{and} \quad |z_k^{-}| \leq 1 \; ,
\end{equation}
and the inequalities can only saturate if and only if $|\lambda| \leq 1$. The values for $\lambda$ that saturate the inequalities are particular points where the analytic continuation of the generating function we consider is not defined anymore. Thus, we will exclude all those configuration in the rest of this chapter. 

To compute the integral by residue, we rewrite the amplitude in terms of the complex variables. It takes the factorized form
\begin{equation}
	Z_{\lambda^2+\tau^2=1}^{K=1} = \f{2^{N_t(N_x-2)}}{(\lambda \tau)^{N_t N_x}} \f{1}{\sh^{2N_t}\left(\f{1}{2}N_x X_{\tau}\right)}\f{1}{\pi} \int_{U(1)} \f{\text{d}z}{i} \f{-1}{4}\f{(z^2-1)^2}{z^2} z^{N_x-1} \prod_{k=0}^{N_x-1} \f{-e^{i \gamma k}}{(z-z_k^+)(z-z_k^-)} \; .
	\label{chap6:eq:case_trivial_disp_rel_amp_complex}
\end{equation}
Due to the integration contour, only the poles $z_k^{-}$ matter for the integration since the norms of the $z_{l}^+$ are all bigger than one. If $N_x > 2$, $z = 0$ is not a pole. We will assume $N_x > 2$ in the following. Applying the residue theorem to the integral leads to
\begin{align*}
	Z_{\lambda^2+\tau^2=1}^{K=1} =  &\f{2^{N_t(N_x-2)}}{(\lambda \tau)^{N_t N_x}} \f{1}{\sh^{2N_t}\left(\f{1}{2}N_x X_{\tau}\right)} 2 \\ &\sum_{n=0}^{N_x-1} \f{-1}{4}\f{((z_n^-)^2 -1)^2}{(z_n^-)^2} (z_n^-)^{N_x-1} \f{-e^{i \gamma_n}}{z_n^- -z_n^+} \prod_{\substack{k=0 \\ k \neq n}}^{N_x-1} \f{-e^{i \gamma_k}}{(z_n^- -z_k^+)(z_n^- -z_k^-)} \; ,
\end{align*}
where we used the well-known expression of the residue for simple poles $\text{Res}(f,c) = \underset{z \rightarrow c}{\text{limit}} (z-c)f(z)$ with $c$ a simple pole of the function $f$. 

We are now ready to go back to our geometrical variable $\varphi$. The amplitude reads in the geometrical variable
\begin{equation}
	\begin{split}
	Z_{\lambda^2+\tau^2=1}^{K=1} = &\f{2^{N_t(N_x-2)}}{(\lambda \tau)^{N_t N_x}} \f{1}{\sh^{2N_t}\left(\f{1}{2}N_x X_{\f{1}{\tau}}\right)} \f{1}{\sh(N_t X_\f{1}{\lambda})} \\ &\sum_{n=0}^{N_x-1} \sin^2\left(\gamma n + i N_t X_\lambda \right)  \prod_{k=1}^{N_x-1} \f{1}{2(\ch(N_t X_\f{1}{\lambda}) - \cos(\gamma k+i N_t X_\f{1}{\lambda}))} \; .
	\end{split}
	\label{chap6:eq:case_trivial_disp_rel_amp_final_sum}
\end{equation}
Note that we used the periodicity condition of the cosine function to absorb the dependency of $n$ from the argument of the product. Again, this is only possible because the dispersion relation is trivial, i.e. linear. Hence, this will not apply to the generalized case. This formula should be familiar. It is indeed really similar to the one obtained in the previous chapter \eqref{chap5:eq:amp_final}. However, we see that there is a natural shift in the complex plane in the cosine. Hence, if $K$ is not one, the previous formula does not give rise to a divergence. In that sense, it regularized the equation obtain \eqref{chap5:eq:amp_final}.

We can go one step further and explicitly compute the sum over $n$ which return a factor $N_x/2$.\footnote{This result holds only if $N_\gamma \neq N_x/2$ In that case, the sine does not depend on $n$ explicitly, and the sum returns $-N_x \sh^2(N_t X_\lambda)$.} After doing this sum, the amplitude is simply
\begin{align}
	Z_{\lambda^2+\tau^2=1}^{K=1} =\f{2^{N_t(N_x-2)-1}}{(\lambda \tau)^{N_t N_x}} \f{1}{\sh^{2N_t}\left(\f{1}{2}N_x X_\f{1}{\tau}\right)} \f{N_x}{\sh(N_t X_\f{1}{\lambda})} \prod_{k=1}^{N_x-1} \f{1}{2(\ch(N_t X_\f{1}{\lambda}) - \cos(\gamma k+i N_t X_\f{1}{\lambda}))} \; . 
	\label{chap6:eq:case_trivial_disp_rel_amp_final}
\end{align}

Next section is dedicated to the study of \eqref{chap6:eq:case_trivial_disp_rel_amp_final_sum} and \eqref{chap6:eq:case_trivial_disp_rel_amp_final} in the spirit of its comparison with the computation done in the previous chapter. 

The expression of the amplitude is in accordance with the geometrical interpretation of having decoupled spatial and temporal direction. Indeed, the amplitude can be seen coming from $N_t$ circle carrying the coupling $\tau$ and one closed loop carrying the coupling $\lambda$. Note that even though the directions are decoupled, each part of the amplitude still knows about the other direction of discretization. Indeed, looking at the amplitude, both couplings always comes with the weight $N_t$ or $N_x$. As expected, $\tau$ comes with $N_x$ since each circle is in fact constitute of $N_x$ links, each one carrying the coupling $\tau$. The same happens for $\lambda$. If we define the amplitude of a circle construct from $N_x$ links each one carrying $\tau$ by
\begin{equation}
	Z_{circle}^{\tau,N_x} = \f{2^{N_x-2}}{\tau^{N_x}} \f{1}{\sh^{2}\left(\f{1}{2}N_x X_\f{1}{\tau}\right)}
\end{equation}
and the amplitude coming from one closed loop construct from $N_x$ times $N_t$ links carrying $\lambda$ by
\begin{equation}
	Z_{closed-loop}^{\lambda,N_t,N_x} = \f{N_x}{2 \lambda^{N_t N_x} \sh(N_t X_\f{1}{\lambda})} \prod_{k=1}^{N_x-1} \f{1}{2(\ch(N_t X_\f{1}{\lambda}) - \cos(\gamma_{k}+i N_t X_\f{1}{\lambda}))} \;
\end{equation}

With these notations, $Z_{\lambda^2+\tau^2=1}^{K=1}$ becomes simply
\begin{equation}
	Z_{\lambda^2+\tau^2=1}^{K=1} = \left(Z_{circle}^{\tau,N_x}\right)^{N_t} \times Z_{closed-loop}^{\lambda,N_t,N_x} \;.
\end{equation} 
These notations will come in handy to understand the geometrical picture of the any $K$ case. This viewpoint on the amplitude is really similar to the graphical decomposition of the amplitude provided in figure \ref{chap4:fig:graph_integration} for the $J=0$ s-channel intertwiner in the case $K=1$. We will see next that this viewpoint can even be extended to the any $K$ case.
\\

To obtain the mode decomposition for the any $K$ case, it is needed to compute the amplitude using proposition \ref{prop:cosine_product} until the end. The product over the spatial and temporal modes previously computed return, making the dependency on $K$ explicit
\begin{equation*}
	\prod_{\omega,k} \Big(-2 + 2 \tau \cos\left(\f{2\pi}{N_x}k\right) \Big) = 2^{N_t}\tau^{N_x N_t} \left( \ch(N_x X_\f{1}{\tau}) - 1 \right)^{N_t} = 2^{2N_t} \tau^{N_x N_t} \sh^{2N_t}\left( \f{1}{2}N_x X_\f{1}{\tau}\right) \; .
\end{equation*}
and
\begin{equation*}
	\prod_{\omega,k} \Big(-1 + \lambda \cos\left(\f{\varphi}{N_t} + \f{2\pi}{N_t}\omega - \f{1}{N_t}\gamma k\right)\Big) = \f{\lambda^{N_t N_x}}{2^{N_t N_x}} \left(2\left(\ch(N_t n_x X_\f{1}{\lambda}) -\cos\left( n_x \varphi \right) \right) \right)^K \; .
\end{equation*}

The amplitude is then
\begin{equation*}
	Z_{\lambda^2+\tau^2=1} = \f{2^{N_t(N_x-2)-K}}{(\lambda \tau)^{N_t N_x}} \f{1}{\sh^{2N_t}\left(\f{1}{2}N_x X_\f{1}{\tau}\right)} \f{1}{\pi} \int_{0}^{2\pi} \text{d} \varphi \sin^{2}(\varphi) \left(\f{1}{\ch(N_t X_\f{1}{\lambda})-\cos(\varphi-\gamma k)}\right)^K \;.
\end{equation*}

The integral can then be exactly computed expanding the fraction into an infinite sum. We get
\begin{align*}
	\int_{0}^{2 \pi} \text{d}\varphi \sin^{2}(\varphi) &\f{1}{(\ch(n_x N_t X_{\lambda})-\cos(n_x \varphi))^{K}} 
	= \\
	&\sum_{n=0}^{\infty} (-1)^n \binom{K+n-1}{n} \f{1}{\ch^{K+n}(n_x N_t X_{\lambda})} \int_{0}^{2 \pi} \text{d} \varphi \sin^{2}(\varphi) \cos^{n}(n_x \varphi) \; .
	\label{chap6:eq:computation_amp_case_trivial_disp_rel_1}
\end{align*}
This is a well-defined operation for $n_x$ and $N_t$ big enough for the hyperbolic cosine to be bigger than one if $\lambda \neq 1$. The latest being a case that we ignore for now. 

The integral over the cosine and sine can then easily be done. To do so, we write the trigonometric functions in their exponential counterparts. We have
\begin{equation*}
	\int_{0}^{2 \pi} \text{d} \varphi \sin^{2}(\varphi) \cos^{n}(n_x \varphi)= \int_{0}^{2 \pi} \text{d} \varphi \f{-1}{4}(e^{2i\varphi}+e^{-2i\varphi}-2) \sum_{k=0}^{n} \binom{n}{k} e^{i(n-2k)n_x \varphi}
\end{equation*}
Assuming $n_x > 2$, the computation of the integral is immediate and only the terms $2k = n$ contribute. We find, substituting $n$ by $2n$
\begin{align*}
	\int_{0}^{2\pi}\text{d} \varphi \sin^{2}(\varphi) \cos^{2n}(n_x \varphi) &= \f{\pi}{2^{2n}} \binom{2n}{n} 
	\\
	\int_{0}^{2\pi}\text{d} \varphi \sin^{2}(\varphi) \cos^{2n+1}(n_x \varphi) &= 0
\end{align*}

Putting everything together, the Ponzano-Regge amplitude reads
\begin{equation*}
	Z_{\lambda^2+\tau^2=1} = \f{2^{N_t(N_x-2)-K}}{(\lambda \tau)^{N_t N_x}} \f{1}{\sh^{2N_t}\left(\f{1}{2}N_x X_{\tau}\right) \ch^K(n_x N_t X_{\lambda})} \sum_{n=0}^{\infty} \frac{(K+2n-1)!}{(n!)^{2}(K-1)!} \f{1}{2^{2n} T_{N_t n_x}(\lambda^{-1})^{2n}}
\end{equation*}

The sum happens to be exactly computable and returns the hypergeometric function ${}_{2}F_1\left(\frac{K}{2},\frac{K+1}{2};1;\frac{1}{T_{N_t n_x}(\lambda^{-1})^{2}} \right)$
\begin{equation*}
	Z_{\lambda^2+\tau^2=1} = \f{2^{N_t(N_x-2)-K}}{(\lambda \tau)^{N_t N_x}} \f{1}{\sh^{2N_t}\left(\f{1}{2}N_x X_{\tau}\right) \ch^K(n_x N_t X_{\lambda})} ~_2F_1(\f{K}{2},\f{K-1}{2};1;\f{1}{\ch^2(n_x N_t X_{\lambda})})
\end{equation*}
which in turn can be expressed with the help of the Legendre polynomials
\begin{equation}
	Z_{\lambda^2+\tau^2=1} = \f{2^{N_t(N_x-2)-K}}{(\lambda \tau)^{N_t N_x}} \f{1}{\sh^{2N_t}\left(\f{1}{2}N_x X_\f{1}{\tau}\right)} \f{1}{\sh^K(n_x N_t X_{\f{1}{\lambda}})} L_{K-1}\left( \f{\ch^K(n_x N_t X_\f{1}{\lambda})}{\sh^K(n_x N_t X_\f{1}{\lambda})} \right) \;
	\label{chap6:eq:case_decoupled_full_simplified}
\end{equation}
where $L_{K-1}$ is the Legendre polynomial of order $K-1$. We can now compare this expression for $K=1$ with \eqref{chap6:eq:case_trivial_disp_rel_amp_final} to get the relation
\begin{equation}
	\f{\sh(N_t X_\f{1}{\lambda})}{\sh(N_x N_t X_\f{1}{\lambda})} = N_x\prod_{k=1}^{N_x-1} \f{1}{2(\ch(N_t X_\f{1}{\lambda}) - \cos(\gamma k+i N_t X_\f{1}{\lambda}))} \; .
\end{equation}
This formula can be directly proven using the relation between the Chebytchev polynomial of the first kind and the analytic continuation of the $SU(2)$ character function for complex class angle.

The key point is that this relation is true for any $N_x \in \N$. In particular, it is then true for $n_x = \f{N_x}{K}$. We can therefore express the factor $\f{1}{\sh^K(n_x N_t X_{\lambda})}$ from formula \eqref{chap6:eq:case_decoupled_full_simplified} in terms of a mode decomposition. This implies that the amplitude in the any $K$ case can be written in the form
\begin{align*}
	Z_{\lambda^2+\tau^2=1}& = \f{2^{N_t(N_x-2)-K}}{(\lambda \tau)^{N_t N_x}} \f{1}{\sh^{2N_t}\left(\f{1}{2}N_x X_{\f{1}{\tau}}\right)} \\ & \left(\f{n_x}{\sh(N_t X_{\f{1}{\lambda}})} \prod_{k=1}^{n_x-1} \f{1}{2(\ch(N_t X_{\f{1}{\lambda}}) - \cos(\gamma k+i N_t X_{\f{1}{\lambda}}))}\right)^K  L_{K-1}\left( \f{\ch^K(n_x N_t X_{\f{1}{\lambda}})}{\sh^K(n_x N_t X_{\f{1}{\lambda}})} \right) \; .
\end{align*}
This formula is a straightforward, simplistic even, generalization of the $K=1$ case, up to the Legendre polynomial renormalization. The key point here is that, in general, the Legendre polynomial is just a non-vanishing complex number. It does not cause any divergence to the amplitude, and thus does not provide us with any poles. Excluding the particular case $\lambda = 1$, where the argument of the Legendre polynomial diverges, the previous formula {\it is} the mode decomposition for the amplitude for the any $K$ case. When $\lambda = 1$, the mode decomposition is not complete and we will discuss this case in more details later on this chapter.
\medskip

The role of $K=\mathrm{GCD}(N_{\gamma},N_{x})$ is crucial in considering the flow under refinement of boundary lattice and the continuum limit. As underlined previously, the case $K=1$ corresponds to an irrational twist angle while the case $K\rightarrow\infty$ corresponds to a rational twist angle. More precisely, the twist angle is defined in the discrete setting from the shift as $\gamma=2\pi N_{\gamma}/N_{x}$. Then we consider the two cases, at fixed number of time slices $N_{t}$:
\begin{itemize}
	\item the limit towards $\gamma\notin 2\pi\Q$: \\
	\noindent
	We construct a sequence of pairs of coprime integers $(N_{\gamma}^{(p)},N_{x}^{(p)})_{p\in\N}$ such that the limit of the ratios $2\pi N_{\gamma}^{(p)}/N_{x}^{(p)}$ converge towards $\gamma$ when $p$ goes to infinity. Then we define the Ponzano-Regge amplitude for a twisted torus with irrational twist as the limit of the partition function $Z_{\lambda^2+\tau^2=1}$ for the square lattice with $N_{x}^{(p)}$ spatial nodes and a gluing shift $N_{\gamma}^{(p)}$. Since their two numbers are coprime by definition, their GCD is always 1, so the partition function is always given by $Z^{K=1}_{\lambda^2+\tau^2=1}$ and does not depend on the twist angle $\gamma$ at the end of the day:
	\begin{equation}
	Z^{K=1}_{\lambda^2+\tau^2=1}
	=
	\f{2^{N_t(N_x-2)-1}}{(\lambda \tau)^{N_t N_x}}
	\f{1}{\sh^{2N_t}\left(\f{1}{2}N_x X_{\tau^{-1}}\right)}
	\f{1}{\sh(N_{x}N_t X_{\lambda^{-1}})}
	=
	\cA\,{ z}^{PR}
	\,,
	\nn
	\end{equation}
	where we have distinguished the pre-factor $\cA$ from a reduced Ponzano-Regge amplitude $z^{PR}$, with the following asymptotics as $N_{x}$ is sent to infinity:
	\begin{equation}
	\cA=
	\f{2^{N_t(N_x-2)}}{(\lambda \tau)^{N_t N_x}}
	\f{1}{\sh^{2N_t}\left(\f{1}{2}N_x X_{\tau^{-1}}\right)}
	\underset{N_{x}\rightarrow\infty}{\sim}
	\f1{2^{2N_{t}}}\,\left[\f{2}{(1+\lambda)\lambda} \right]^{N_{x}N_{t}}
	\,,
	\end{equation}
	\begin{equation}
	z^{PR}
	=
	\f{1}{2\,\sh(N_{x}N_t X_{\lambda^{-1}})}
	\underset{N_{x}\rightarrow\infty}{\sim}
	\f1{2}\,\left[\f{1+\tau}{\lambda} \right]^{N_{x}N_{t}}
	\,.
	\end{equation}
	
	\item the limit towards $\gamma\in 2\pi\Q$: \\
	\noindent
	We choose the fundamental representation of the twist angle as a fraction $\gamma=2\pi n_{\gamma}/n_{x}$ with $n_{\gamma}$ and $n_{x}$ coprime. Then the infinite refinement limit is taken by considering the sequence of lattices with $(N_{\gamma},N_{x})=(Kn_{\gamma},Kn_{x})$ as the integer $K$ grows to infinity. The partition function is given by $Z^{K}_{\lambda^2+\tau^2=1}$ where we factorize out the same pre-factor $\cA$ as in the previous case:
	\begin{align}
	Z^{K}_{\lambda^2+\tau^2=1}
	&=
	\f{2^{N_t(N_x-2)-K}}{(\lambda \tau)^{N_t N_x}}
	\f{1}{\sh^{2N_t}\left(\f{1}{2}N_x X_{\tau^{-1}}\right) \sh^K(n_x N_t X_{\lambda^{-1}})}
	\,
	L_{K-1}\left( \f{\ch(n_x N_t X_{\lambda^{-1}})}{\sh(n_x N_t X_{\lambda^{-1}})} \right) \\
	&=
	\cA z_{K}^{PR}
	\,.
	\nn
	\end{align}
	The reduced Ponzano-Regge amplitude now has a different behaviour as $K$ is sent to infinity while keeping $n_{x}$ fixed and finite:
	\begin{align}
	z^{PR}_{K}
	&=
	\left[\f{1}{2\,\sh(n_x N_t X_{\lambda^{-1}})}\right]^{K}
	\,
	L_{K-1}\left( \f{\ch(n_x N_t X_{\lambda^{-1}})}{\sh(n_x N_t X_{\lambda^{-1}})} \right)
	\\
	&\underset{K\rightarrow\infty}{\sim}
	\f1{\sqrt{\pi K}}\,\ch\left(\f{n_x N_t X_{\lambda^{-1}}}2\right)\,
	\left[\f{1}{2\,\sh\f{n_x N_t X_{\lambda^{-1}}}2}\right]^{2K}
	\,.
	\nn
	\end{align}
	We see that this reduced Ponzano-Regge amplitude has a different scaling than the irrational case with $K=1$. The power $K$ is natural from a geometric perspective: the vertical lines on the twisted torus form $K$ large loops (of length $n_{x}N_{t}$) as illustrated in figure \ref{chap6:fig:drawingK}.
	
\end{itemize}
\begin{figure}[h!]
	\begin{center}
		\begin{tikzpicture}[scale=.8]
		\foreach \i in {0,2,4}{
			\foreach \j in {0,...,3}{
				\draw (\i,\j) node[color=red] {$\bullet$};
				\draw[<-,color=red] (\i,\j-.5) --(\i,\j+.5);
			}
		}
		
		\foreach \i in {1,3,5}{
			\foreach \j in {0,...,3}{
				\draw (\i,\j) node[color=blue] {$\bullet$};
				\draw[<-,color=blue] (\i,\j-.5) --(\i,\j+.5);
			}
		}
		
		\foreach \i in {-2,0,2}{
			\draw[rounded corners=3 pt,->,color=red] (\i,4.5+0.3) --(\i,4+0.1+0.3)-- (\i+2,4-0.1+0.3)--(\i+2	,3.5+0.3)   ;
		}
		
		\foreach \i in {-1,1,3}{
			\draw[rounded corners=3 pt,->,color=blue] (\i,4.5+0.3) --(\i,4+0.1+0.3)-- (\i+2,4-0.1+0.3)--(\i+2	,3.5+0.3)   ;
		}
		
		\draw (-2,4.5+0.3) node[above]{$4$};
		\draw (-1,4.5+0.3) node[above]{$5$};
		\draw (0,-0.5) node[below]{$0$};
		\draw (1,-0.5) node[below]{$1$};
		\draw (2,-0.5) node[below]{$2$};
		\draw (3,-0.5) node[below]{$3$};
		\draw (4,-0.5) node[below]{$4$};
		\draw (5,-0.5) node[below]{$5$};
		
		\draw(-.9,0) node{$N_t$-1};
		\draw(-0.7,3) node{$0$};
		
		\end{tikzpicture}
	\end{center}
	\caption{Vertical lines of the lattice for $N_{\gamma}=2$, $N_{x}=6$ and any $N_t$. There is $K=2$ large loops on the twisted torus, represented in red and blue}
	\label{chap6:fig:drawingK}
\end{figure}
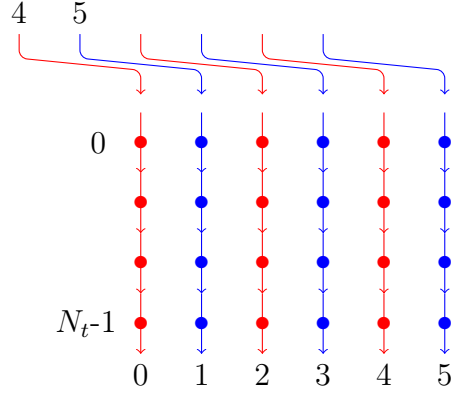

\subsection{General Residue Formula for the Ponzano-Regge Amplitude}

In this section, we focus on computing the Ponzano-Regge amplitude in the most general case, without assuming the condition $\lambda^{2}+\tau^{2}=1$ decoupling the temporal and spatial modes.
The computation is again done using the residue theorem, that allows us to obtain the mode expansion of the amplitude.
The starting point is equation \eqref{chap6:eq:starint_point_amp},
\begin{equation}
\la Z_{PR}^{\cK}| \Psi  \ra_{\lambda,\tau}
=
\f{1}{\pi} \int_{0}^{2\pi} \dd \varphi \sin^{2}(\varphi) \prod_{\omega,k} \f{1}{\det\big(Q_{\omega,k}(\varphi)\big)} 
\nn
\end{equation}
where we recall that the mode determinants are
\begin{align}
\det\big(Q_{\omega,k}(\varphi)\big)
\,=\,
(\tau^2 + \lambda^2 - 1) +\Big[2 - 2 \tau \cos\left(\f{2\pi}{N_x}k\right) \Big]  \Big[1 - \lambda \cos\left(\f{\varphi}{N_t} + \f{2\pi}{N_t}\omega - \f{1}{N_t}\gamma k\right) \Big] \; .
\end{align}
We assume that $\lambda^{2}+\tau^{2}\ne 1$, so that the determinants are not straightforwardly factorizable. We can nevertheless write them as:
\begin{equation}
\det\big[Q_{\omega,k}(\varphi)\big]
\,=\,
\f{\tau\lambda}2\left[2\tau^{-1} - 2 \cos\left(\f{2\pi}{N_x}k\right) \right]
\left[2 C_{k} - 2\cos\left(\f{\varphi}{N_t} + \f{2\pi}{N_t}\omega - \f{1}{N_t}\gamma k\right)\right]
\; .
\end{equation}
where the coefficients $C_k$ are (to keep the notation a bit lighter, we keep the dependency on $\lambda$ and $\tau$ implicit)
\begin{equation}
C_k = \lambda^{-1}\left[1 + \f{\tau^2 + \lambda^2 - 1}{2 - 2 \tau \cos \left(\f{2\pi}{N_x}k\right)}\right]
\; .
\end{equation}

In the decoupled case when assuming  $\tau^2 + \lambda^2 = 1$, the second term of those coefficients drop out and the $C_k$ become all equal to $\lambda^{-1}$.
Now, in the general case with  $\tau^2 + \lambda^2 \ne 1$, the coefficients $C_{k}$ remain all different. We can nevertheless still perform the products and get the pole decomposition in the class angle $\varphi$.
The product over spatial modes remains unchanged,
\begin{equation}
\prod_{\omega,k}
\left[
2 \tau^{-1} - 2 \cos\left(\f{2\pi}{N_x}k\right)
\right]
=
2^{N_t} \Big[ \ch(n_x X_{\tau^{-1}}) - 1 \Big]^{N_t}
=
2^{2N_t} \sh^{2N_t}\left( \f{1}{2}n_x X_{\tau^{-1}}\right)
\; .
\end{equation}
The product over the temporal modes is slightly modified,
\begin{equation}
\prod_{\omega,k} \left[2 C_{k} - 2\cos\left(\f{\varphi+2\pi\omega - \gamma k}{N_{t}}\right) \right]
=
\prod_{k}2\Big[ \ch(N_t X_{C_k}) - \cos\left(\varphi - \gamma k\right) \Big]
\; .
\end{equation}
The full Ponzano-Regge amplitude thus reads as a trigonometric integral,
\begin{equation}
\la Z_{PR}^{\cK}| \Psi  \ra_{\lambda,\tau}
=
\f{2^{N_t(N_x-2)}}{(\lambda \tau)^{N_t N_x}}
\f{1}{\sh^{2N_t}\left(\f{1}{2}N_x X_{\tau}\right)}
\f{1}{\pi} \int_{0}^{2\pi} \text{d} \varphi \sin^{2}(\varphi)
\prod_{k=0}^{N_x-1} \f{1}{2\left[ \ch(N_t X_{C_k}) - \cos\left(\varphi - \gamma k\right) \right]}
\;.
\end{equation}
The poles can be directly read from this expression:
\begin{equation}
\varphi_{k}^{\pm} = \gamma k \mp i N_t X_{C_k}
\,.
\end{equation}
As illustrated in figure \ref{chap6:fig:generalpoles}, these are not linear anymore as in the decoupled case with  $\tau^2 + \lambda^2 = 1$. Moreover the dependency on $k$ through the coefficients $C_k$ lifts the degeneracy of the poles due to the GCD $K$. Now, whatever the GCD between the shift $N_{\gamma}$ and the number of spatial nodes $N_{x}$, the poles in $\varphi$ remain simple and there is no degeneracy. This allows for a direct evaluation of the integral by residue, summarized in the following proposition.
\begin{figure}[h!]
	\includegraphics[scale=0.45]{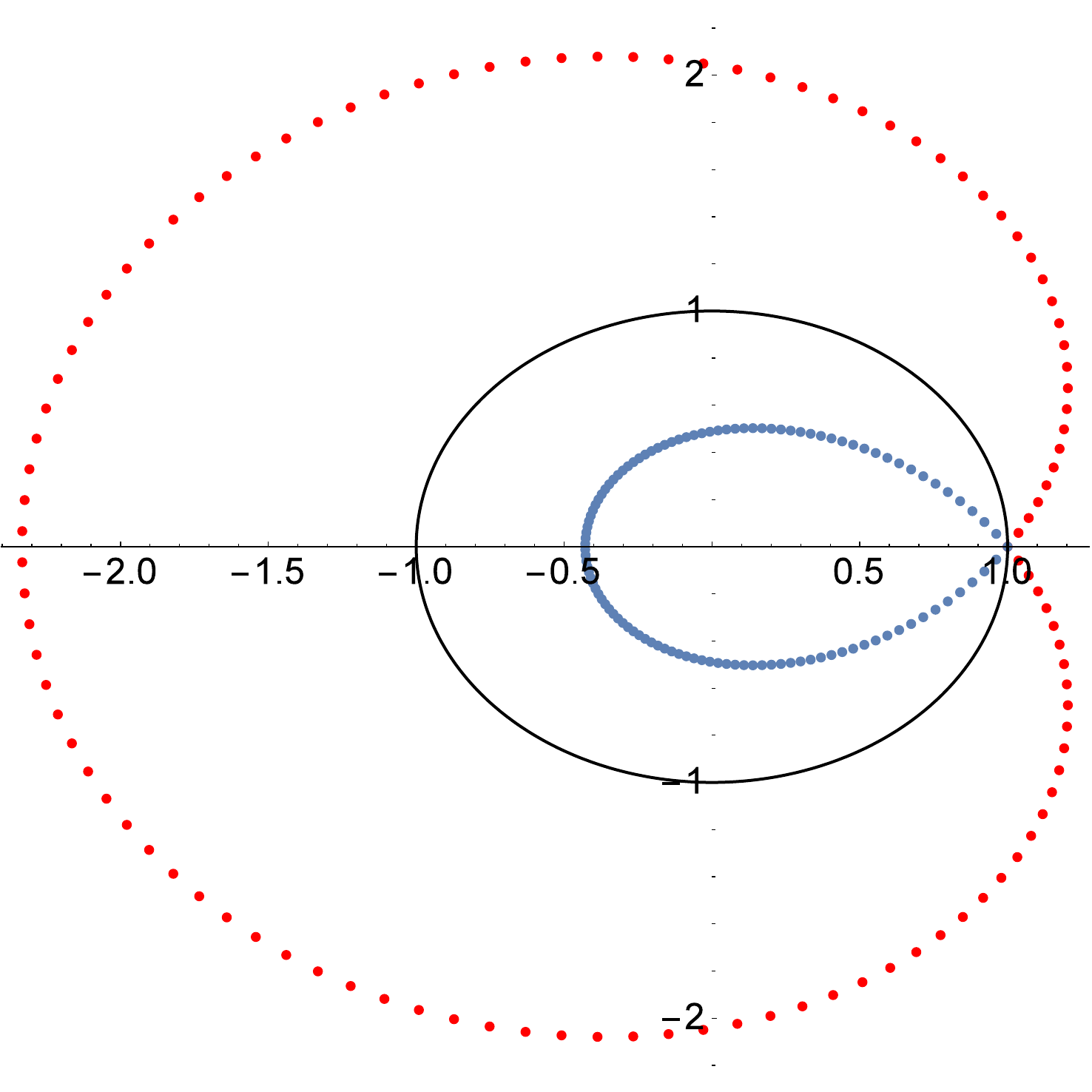}
	\hspace{8mm}
	\includegraphics[scale=0.45]{pole_lambda_06_tau_04_Nx_111_Ngamma_1_Nt_1-eps-converted-to.pdf}
	\includegraphics[scale=0.45]{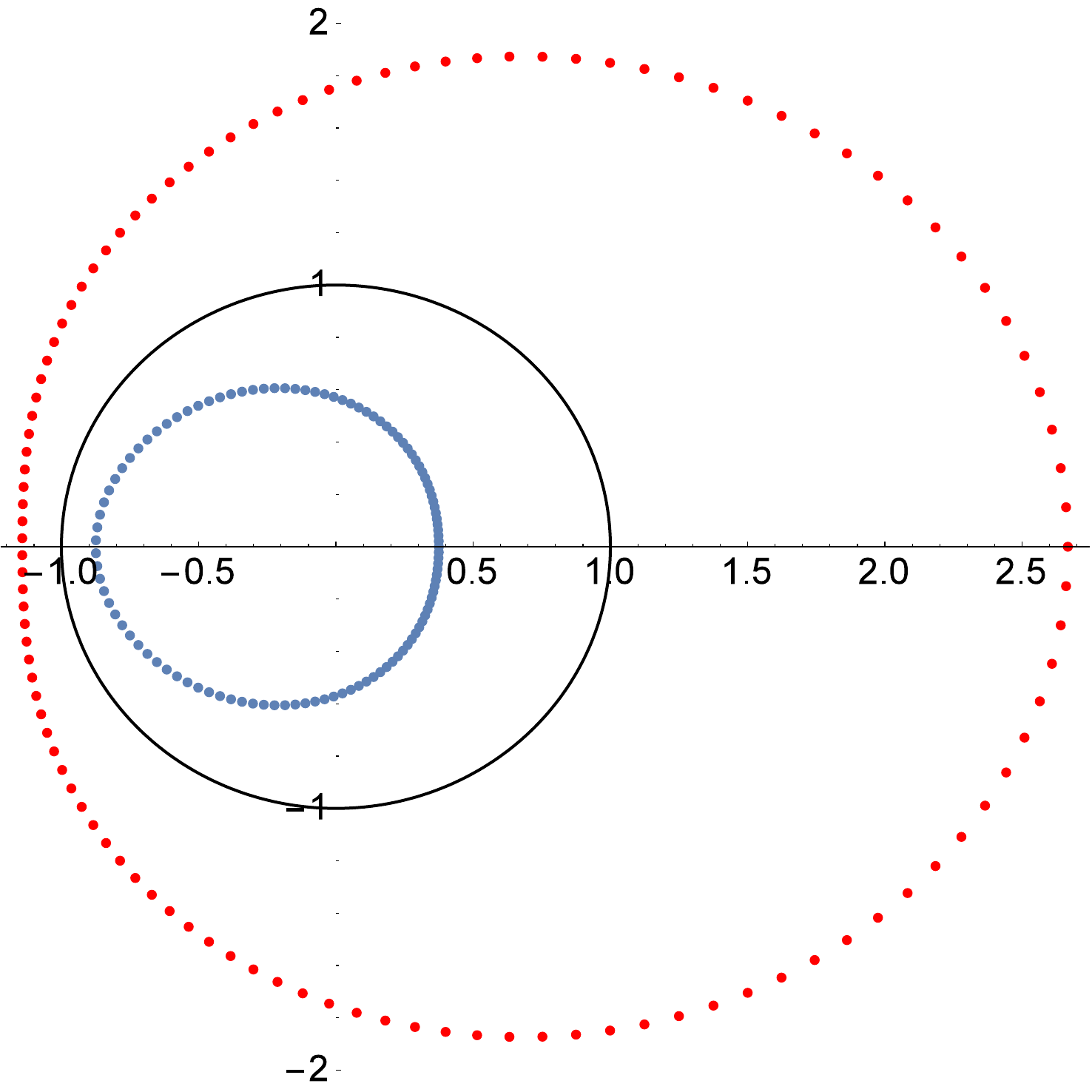}
	\hspace{16mm}
	\includegraphics[scale=0.45]{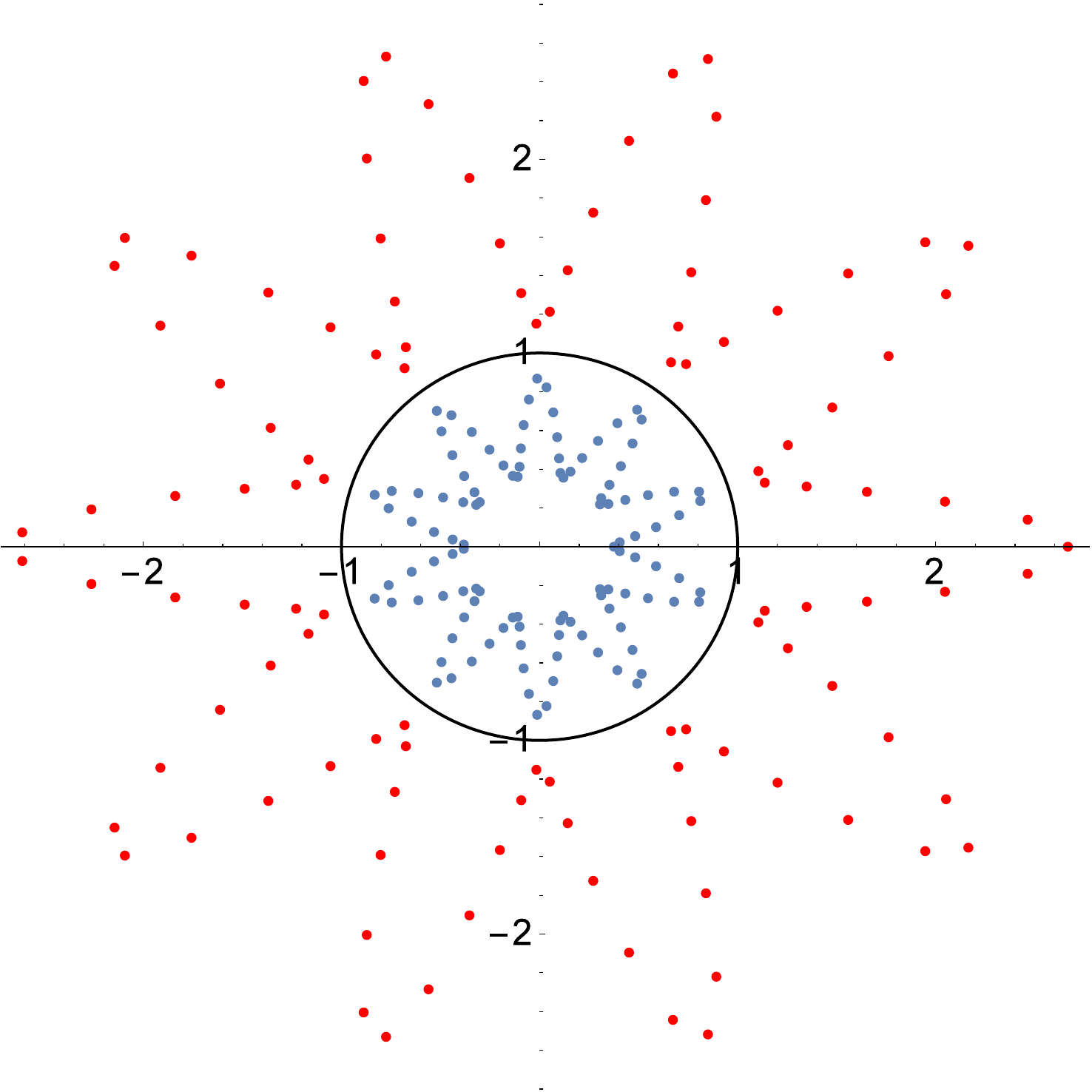}
	\caption{Plot of the poles in the complex planes, $z_{k}^{\pm} = e^{i \gamma k} e^{\pm N_t X_{C_k}}$, for a square lattice on the twisted torus determined by the numbers of nodes $N_{t}=1$ and $N_{x}=111$, for, in order $N_{\gamma}=1$ and critical couplings $(\lambda,\tau)=(0.6,0.4)$, $N_{\gamma}=11$ and critical couplings $(\lambda,\tau)=(0.6,0.4)$,  $N_{\gamma}=1$ and non-critical couplings $(\lambda,\tau)=(1.6,0.4)$, $N_{\gamma}=11$ and non-critical couplings $(\lambda,\tau)=(1.6,0.4)$. The red dots represent $z^{+}_{k}$ while the blue ones $z^{-}_{k}$. They are related by inversion with respect to the unit circle  (in black), since they have inverse modulus for equal phases. For critical couplings, on the top plots, $z=1$ is a pole, which leads to a divergent integral, while there is no pole on the unit circle for non-critical couplings as shown on the bottom plots.
	}
	\label{chap6:fig:generalpoles}
\end{figure}
\begin{prop}
	The Ponzano-Regge partition function on the twisted solid torus for a coherent spin network boundary state on the square lattice, with couplings satisfying $\lambda^2 + \tau^2 \ne 1$, can be written as a finite sum with a clear pole structure in the twist angle $\gamma$:
	\begin{align}
	\label{chap6:eq:Zgeneral}
	\la Z_{PR}^{\cK}| \Psi  \ra_{\lambda,\tau}
	=
	\f{2^{N_t(N_x-2)}}{(\lambda \tau)^{N_t N_x}} &\f{1}{\sh^{2N_t}\left(\f{1}{2}N_x X_{\tau^{-1}}\right)} \sum_{n=0}^{N_x-1} \f{\sin^2\left(\gamma n + i N_t X_{C_k} \right)}{\sh(N_t X_{C_n})} \nn \\
	&\prod_{\substack{k=0 \\ k \neq n}}^{N_x-1} \f{1}{2\big[\ch(N_t X_{C_k}) - \cos(\gamma(n-k)+i N_t X_{C_n}\big]} \; .
	\end{align}
\end{prop}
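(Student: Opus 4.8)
The plan is to evaluate the trigonometric integral
\[
I(\gamma) \,=\, \f{1}{\pi} \int_{0}^{2\pi} \dd \varphi \,\sin^{2}(\varphi)\,
\prod_{k=0}^{N_x-1} \f{1}{2\big[\ch(N_t X_{C_k}) - \cos(\varphi - \gamma k)\big]}
\]
by the residue theorem, exactly as was done in the decoupled case $\lambda^2+\tau^2=1$ leading to \eqref{chap6:eq:case_trivial_disp_rel_amp_final_sum}, but now taking advantage of the fact that the poles are \emph{simple}. First I would recall from the paragraph preceding the statement that, starting from \eqref{chap6:eq:starint_point_amp}, the product over the temporal modes $\omega$ has already been performed via Lemma \ref{prop:cosine_product} (applied with $N=N_t$, $M=1$, so $K=1$ there) to produce the factor $\prod_{k}2\big[\ch(N_tX_{C_k})-\cos(\varphi-\gamma k)\big]$, and the product over spatial modes gives $2^{2N_t}\sh^{2N_t}(\tfrac12 N_x X_{\tau^{-1}})$; together with the normalization $(\lambda\tau)^{-N_tN_x}$ these assemble the prefactor in \eqref{chap6:eq:Zgeneral}. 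So the only thing left to show is that $I(\gamma)$ equals the displayed finite sum.

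Second, I would substitute $z=e^{i\varphi}$, $\dd z = iz\,\dd\varphi$, turning the integral into a contour integral over the unit circle, exactly as in \eqref{chap6:eq:case_trivial_disp_rel_amp_complex}. Writing $2\big[\ch(N_tX_{C_k})-\cos(\varphi-\gamma k)\big] = -z^{-1}e^{i\gamma k}(z-z_k^+)(z-z_k^-)$ with $z_k^{\pm}=e^{i\gamma k}e^{\pm N_t X_{C_k}}$, and $\sin^2\varphi = -\tfrac14 z^{-2}(z^2-1)^2$, the integrand becomes a rational function of $z$. The key structural input is that the poles $z_k^{\pm}$ are now all distinct: because of the lifted degeneracy (the coefficients $C_k$ depend on $k$ through $2-2\tau\cos(2\pi k/N_x)$, so $C_n\neq C_k$ for $n\neq k\bmod N_x$), there is no order-$K$ coincidence, and moreover, under the convention of taking the principal branch of $\mathrm{arccosh}$, one has $|z_k^-|<1<|z_k^+|$ strictly whenever $\lambda^2+\tau^2\neq1$ (this is precisely the statement illustrated in figure \ref{chap6:fig:generalpoles} that there is no pole on the unit circle in the non-critical case). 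Hence exactly the $N_x$ poles $z_n^-$ lie inside the contour, $z=0$ is not a pole for $N_x>2$, and the residue theorem applies directly with simple-pole residues $\mathrm{Res}(f,z_n^-)=\lim_{z\to z_n^-}(z-z_n^-)f(z)$.

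Third, I would compute each residue: stripping off $z-z_n^-$ leaves $-\tfrac14\,\dfrac{((z_n^-)^2-1)^2}{(z_n^-)^2}(z_n^-)^{N_x-1}\cdot\dfrac{-e^{i\gamma n}}{z_n^--z_n^+}\cdot\prod_{k\neq n}\dfrac{-e^{i\gamma k}}{(z_n^--z_k^+)(z_n^--z_k^-)}$, multiply by the $2\pi i$ from the contour and the $1/(i\pi)$ out front (giving an overall factor $2$), and finally translate back to the geometric variable. The identities $(z_n^-)^2-1)^2/(z_n^-)^2\cdot(z_n^-)^{N_x}\cdot\text{(phases)}$ collapse, exactly as in the passage from \eqref{chap6:eq:case_trivial_disp_rel_amp_complex} to \eqref{chap6:eq:case_trivial_disp_rel_amp_final_sum}, using $(z_n^--z_n^+)=-2e^{i\gamma n}\sh(N_tX_{C_n})$ and $(z_n^--z_k^+)(z_n^--z_k^-)=e^{i\gamma(n+k)}\cdot(-2)\big[\ch(N_tX_{C_k})-\cos(\gamma(n-k)+iN_tX_{C_n})\big]$, together with $((z_n^-)^2-1)^2/(z_n^-)^2 = -4\sin^2(\gamma n + iN_tX_{C_n})$; keeping careful track of signs and the powers of $-2$ across all $N_x-1$ factors of the product reproduces the $\sin^2/\sh$ summand and the product in \eqref{chap6:eq:Zgeneral}. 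The main obstacle I anticipate is purely bookkeeping: getting the overall sign and the counting of factors of $2$ (there are $N_x$ linear denominators per pole before we remove one, and $\sin^2\varphi$ contributes its own $2$'s) to come out consistently, and being careful that the degenerate/critical locus $\lambda^2+\tau^2=1$ — where $z=1$ becomes a genuine pole and the sum would diverge — is genuinely excluded, so that the strict inequality $|z_k^-|<1$ holds and no pole sits on the integration contour. Once that is checked, no genuinely new idea beyond the $\lambda^2+\tau^2=1$ computation is needed; the present statement is its natural, non-degenerate generalization.
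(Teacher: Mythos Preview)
Your proposal is correct and follows essentially the same route as the paper: substitute $z=e^{i\varphi}$, identify the simple poles $z_k^{\pm}=e^{i\gamma k}e^{\pm N_tX_{C_k}}$ (simplicity coming precisely from the $k$-dependence of $C_k$ when $\lambda^2+\tau^2\ne1$), apply the residue theorem picking up only the $z_n^-$ inside the unit circle, and translate back. One minor caution: you conflate the decoupling condition $\lambda^2+\tau^2=1$ with the critical locus where a pole lands on $|z|=1$; these are not the same (criticality is $|\lambda|+|\tau|=1$), so the strict inequality $|z_k^-|<1$ should be justified by excluding critical couplings rather than by $\lambda^2+\tau^2\ne1$ alone---but the paper is equally informal on this point.
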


\begin{proof}
	
	To compute the integral by residue, we  perform the change of variable $z = e^{i \varphi}$ as before. The poles in the complex plane are then $z_{k}^{\pm} = e^{i \gamma k} e^{\pm N_t X_{C_k}}$, with
	\begin{equation}
	e^{X_{C_k}} = C_k + \sqrt{C_k^2-1} \; .
	\end{equation}
	The factorized form of the amplitude is exactly the same as before and the integration over the unit circle $\text{U}(1)$ gets contributions only from the poles $z_{k}^{-}$ whose norm is lesser than 1:
	\begin{align}
	\la Z_{PR}^{\cK}| \Psi  \ra_{\lambda,\tau}
	&=
	\f{2^{N_t(N_x-2)}}{(\lambda \tau)^{N_t N_x}} \f{1}{\sh^{2N_t}\left(\f{1}{2}N_x X_{\tau}\right)} \f{1}{\pi} \int_{U(1)} \f{\text{d}z}{i} \f{-1}{4}\f{(z^2-1)^2}{z^2} z^{N_x-1} \prod_{k=0}^{N_x-1} \f{-e^{i \gamma k}}{(z-z_k^+)(z-z_k^-)} 
	\nn\\
	&=
	\f{2^{N_t(N_x-2)}}{(\lambda \tau)^{N_t N_x}} \f{1}{\sh^{2N_t}\left(\f{1}{2}N_x X_{\tau}\right)} 2 \sum_{n=0}^{N_x-1} \f{-1}{4}\f{((z^{-}_n)^2-1)^2}{(z^{-}_n)^2} \f{- e^{i \gamma n}}{z^{-}_n - z^{+}_n} \prod_{\substack{k=0 \\ k\neq n}}^{N_x-1} \f{- z_{n}^{-} e^{i \gamma k}}{(z_n^- -z_k^+)(z_n^- -z_k^-)}
	\,,\nn
	\end{align}
	which gives the announced result.
	
\end{proof}

We do not know how to re-sum over the pole label $n$ due to non-linearity of the poles in the complex plane, but the expression \eqref{chap6:eq:Zgeneral} has a clear physical interpretation as an expansion over the poles in the twist angle $\gamma$ of the Ponzano-Regge partition function. This allows a clear comparison with other formulas for the partition function of 3D quantum gravity, such as the BMS character formula which we discuss below in the next section.

\subsection{The Continuum Limit: Recovering the BMS character from the Ponzano-Regge model}

We would like to compare the continuum limit of our Ponzano-Regge partition function formulas with the computation of the partition function of asymptotic flat 3D gravity for a solid twisted  torus as a BMS character as derived in  \cite{Barnich:2015mui}:
\begin{equation}
Z_{\textrm{3D flat gravity}}^{\textrm{asympt}}[\gamma] = \chi_{\text{BMS}}[\gamma]
\,,\qquad\textrm{with}\quad
\chi^{BMS}[\gamma]= e^{S_0}\prod_{k=2}^{\infty} \f{1}{2-2\cos(\gamma k)}\,,
\end{equation}
where  the  character $\chi^{BMS}$ of the Bondi-Metzner-Sachs (BMS) group in the ``vaccuum'' representation is evaluated on a super-rotation of (Poincar\'e) angle $\gamma$.
Here, $S_{0}$ is the on-shell action. For a real twist angle $\gamma\in\R$, the infinite product above is just a formal definition. It actually has poles at every rational twist angle, $\gamma\in2\pi\Q$. The BMS character is in fact well-defined on the upper complex plane, and can be identified as the Dedekind $\eta$-function up to a factor. 

This formula was also re-derived as an asymptotic limit of  Regge calculus for discretized 3D gravity in \cite{Bonzom:2015ans}. It was further recovered as leading order of a WKB approximation of the Ponzano-Regge model with LS spin network boundary states in \cite{Dittrich:2017rvb}. These LS spin networks consist in coherent intertwiners with fixed spins and allowed the study of the Ponzano-Regge amplitude by a saddle point approximation in the large spin regime (i.e. large edge lengths), which gives:
\begin{equation}
Z^{\textrm{1-loop PR}}
\,=\,
e^{i S_{0}} \left[\sum_{n=1}^{N_x-1} A_{n}\,4\sin^2\f{\gamma n}{2}\right]\, \prod_{k=1}^{\f{N_x-1}{2}} \f{1}{2-2 \cos(\gamma k)} \; .
\end{equation}
We refer to this leading order behaviour of the Ponzano-Regge amplitude as 1-loop by similarity with quantum field theory calculations.
The twist angle $\gamma$ is defined from the discrete shift as $2\pi N_{\gamma}/N_{x}$.
The integer $n$ labels the saddle point and describes geometrically the winding number of the embedding of the boundary 2-torus (in the spatial direction).
The coefficient $A_{n}$ is a rather complicated pre-factor which depends on the winding number but is independent from the twist parameter.
%
%
There are two main differences with the original BMS formula derived in  \cite{Barnich:2015mui}:
\begin{itemize}
	
	\item There is a sum over winding numbers $n$ for the spatial geometry of the torus. This is a non-perturbative effect.
	
	\item The product over modes start at $k=1$ instead of $k=2$. As explained in \cite{Oblak:2015sea}, this corresponds to a massive mode, i.e. working with a massive BMS representation\footnotemark. This is also a non-perturbative effect.
	
\end{itemize}
\footnotetext{
	The massive BMS representation with mass $M>0$ and spin $J$ is constructed as the induced representation based on the co-adjoint orbit  of constant supermomentum $P=M-c/24$, where  $c$ is the BMS central charge. It represents a BMS particle with rest mass $M>0$.
	Its character  evaluated on a group element $(f,v)\in\textrm{BMS}_{3}=\textrm{Diff}^{+}(S^{1})\ltimes\textrm{Vect}(S^{1})$, with the superrotation $f$ and supertranslation $v$ are both given by functions over the circle $S^{1}$, can be computed  \cite{Oblak:2015sea} and depends only on the Poincar\'e rotation angle $\gamma$ of $f$ and the 0-mode of $v$:
	\begin{equation}
	\chi^{BMS}_{M,J}[(f,v)]=e^{iJ\gamma}e^{iv_{0}(M-c/24} \,\f1{\prod_{k\ge 1}2-2 \cos k(\gamma+i\eps)}\,,
	\nn
	\end{equation}
	where  $\eps\rightarrow 0^{+}$ is a regulator and the angle $\gamma$ is assumed to be non-vanishing.
	The vaccuum representation is the induced representation with vanishing spin based on the co-adjoint orbit with supermomentum $P=-c/24$, i.e. with vanishing mass. Its character is similarly computed in \cite{Oblak:2015sea}:
	\begin{equation}
	\chi^{BMS}_{vaccuum}[(f,v)]=e^{-iv_{0}c/24} \,\f1{\prod_{k\ge 2}2-2 \cos k(\gamma+i\eps)}\,.
	\nn
	\end{equation}
}
It is nevertheless possible to recover the BMS character formula as a truncation of $Z^{\textrm{1-loop PR}}$ in the asymptotic limit $N_{x}\rightarrow\infty$. In fact, the factor $\sin^2{\gamma n}/{2}$ for each winding number $n$ kills a factor in the product determinant, leading to
\begin{equation}
Z^{\textrm{1-loop PR}}
\,=\,
\sum_{n=1}^{N_x-1}A_{n}  \prod_{\substack{k=1 \\ k\neq n}}^{\f{N_x-1}{2}} \f{1}{2-2 \cos(\gamma k)} \; .
\end{equation}
If we truncate the sum to a trivial winding number $n=1$, which is the only allowed embedding of the ``spacetime'' cylinder in Euclidean 3D space, then this leads back as wanted to a finite version of the character for the vaccuum representation of the BMS group, which can be considered as the massless limit case of the massive representation (see \cite{Barnich:2015uva,Oblak:2015sea,Oblak:2016eij} for the classification of the BMS co-adjoint orbits and their corresponding induced representation).
We can then take the asymptotic limit $N_{x}\rightarrow\infty$  corresponding to a torus whose radius grows to infinity.
The difficulty in making sense of this limit is that, in the Ponzano-Regge context, we necessarily deal with a real twist angle defined by the combinatorics of the boundary lattice and that the infinite product is ill-defined in that case.

\medskip

The present work considerably improves on this previous 1-loop approximation of the Ponzano-Regge partition function. First, we compute the exact partition function for a quantum boundary state, with no saddle point approximation or large spin limit. Second, we naturally get an imaginary shift, which regularizes the amplitude.
Let us look into this exact Ponzano-Regge amplitude in more details.

Starting with the decoupling ansatz, with couplings satisfying $\lambda^{2}+\tau^{2}=1$, it is convenient to write the Ponzano-Regge amplitude in a factorized form distinguishing the amplitude pre-factor and the sum over poles as previously,
\begin{equation}
\la Z_{PR}^{\cK}| \Psi  \ra_{\lambda,\tau}= \cA\,z^{PR}
\qquad\textrm{with}\quad
\cA=
\f{2^{N_t(N_x-2)}}{(\lambda \tau)^{N_t N_x}}
\f{1}{\sh^{2N_t}\left(\f{1}{2}N_x X_{\tau^{-1}}\right)}\,,
\end{equation}
and the reduced Ponzano-Regge amplitude:
\begin{equation}
z^{PR}
=
\sum_{n=0}^{N_x-1} \f{\sin^2\left(\gamma n + i N_t X_{\lambda^{-1}} \right)}{\sh(N_t X_{\lambda^{-1}})}
\prod_{k=1}^{N_x-1} \f{1}{2(\ch(N_t X_{\lambda^{-1}}) - \cos(k\gamma+i N_t X_{\lambda^{-1}}))}
\end{equation}
The pre-factor $\cA$ plays the role of the exponential of the on-shell action, while the reduced Ponzano-Regge $z^{PR}$ encodes the whole pole structure of the amplitude. 
It is clear that we would like to take the critical limit $\lambda\rightarrow 1$, and thus $\tau\rightarrow0$, to recover the BMS character formula.
Geometrically, this corresponds to keeping the time span of the cylinder fixed while sending its radius to $\infty$, thus looking at the asymptotic limit in space while remaining at finite time.
In this limit, $X_{\lambda^{-1}}\rightarrow0$, the amplitude pre-factor diverges due to the factor $\tau^{-N_{x}N_{t}}$, while the  only divergent term in the reduced Ponzano-Regge amplitude is the factor $\sh(N_t X_{\lambda^{-1}})$:
\begin{equation}
\sh(N_t X_{\lambda^{-1}})\,z^{PR}
\underset{\substack{\lambda\rightarrow1 \\\tau\rightarrow0}}\sim
\left(\sum_{n=1}^{N_x-1} \sin^2\gamma n\right) \,
\prod_{k=1}^{N_x-1} \f{1}{2 - 2\cos (\gamma k )}
=
\sum_{n=1}^{N_x-1}\cos^{2}\f{\gamma n}2 \prod_{\substack{k=1 \\ k\neq n}}^{\f{N_x-1}{2}} \f{1}{2-2 \cos(\gamma k)}
\,,
\end{equation}
where we recognize once again the term with trivial winding number $n=1$ as the BMS character for the vaccuum representation\footnotemark.
\footnotetext{
	Let us nevertheless point out that the pre-factor $\cos^{2}\f{\gamma n}2$ depends on the rotation angle $\gamma$. Although it seems possible  re-absorb it as a superposition of massive, vaccuum and massless BMS characters, it does not have a direct natural interpretation in terms of BMS representation theory.
}
From this perspective, it is intriguing that the non-perturbative sum over winding numbers creates the equivalent of a mode $k=1$,
which would otherwise be absent from the semi-classical calculation, leading to a BMS character for a massive representation.

\medskip

To be more precise, we should take the infinite refinement limit $N_{t}\rightarrow \infty$ and describe the relative scaling of the coupling $\lambda$ to ensure this limit. If $\lambda=1-\eps$ with $\eps\rightarrow 0$, then $X_{\lambda^{-1}}\sim\sqrt{2\eps}$. So if $\lambda$ goes to 1 faster than $N_{t}^{-2}$, i.e. if $1-\lambda\propto 1/N_{t}^{2+\sigma}$ with $\sigma>0$, then we are clearly in the case described above.

However, there is a critical regime of the scaling limit, considering $\lambda=1-\alpha^{2} /2N_{t}^{2}$ for $\sigma=0$, in which case the couplings still converge to their critical value, but then we keep a finite imaginary shift in the poles:
\begin{equation}
z^{PR}
\quad\underset{N_{t}\rightarrow\infty}{\overset{\lambda=1-\f{\alpha^{2}}{2N_{t}^{2}}}{\sim}}\quad
\sum_{n=1}^{N_x-1} \f{\sin^2(\gamma n+i\alpha)}{\sh \alpha} \,
\prod_{k=1}^{N_x-1} \f{1}{2(\ch\alpha - \cos (\gamma k +i\alpha))}
\,.
\end{equation}
Ignoring the sum over winding modes, and taking the limit $N_{x}\rightarrow\infty$, we obtain a deformation of the BMS character regularized by this complex shift:
\begin{equation}
\chi_{\textrm{reg}}[\gamma,\alpha]=
\prod_{k=1}^{N_x-1} \f{1}{2(\ch\alpha - \cos (\gamma k +i\alpha))}
\,.
\end{equation}
Interestingly, comparing to the derivation of the BMS character presented in \cite{Oblak:2015sea}, this amounts to the usual BMS character evaluated on the super-rotation defined by the twist angle $\gamma$ composed with a Bogoliubov transformation\footnotemark~ mixing the negative mode $-k$ with the positive mode $+k$.
\footnotetext{%
	The interpretation in terms of Bogoliubov transformations begets the question of unitarity. In fact, the 2D boundary state that we are considering does not correspond to an initial or final canonical state, but describes the geometry of the ``time-like'' boundary and thereby determines the flow of time and the evolution of the bulk geometry. The natural question in that context is which boundary states ensures a unitary evolution for the geometry of the disk.
}
Indeed, the superrotation $R_{\gamma}$ acts on supermomenta $v_{k}$ as $v_{k}\mapsto e^{i\gamma k}$, while we define a Bogoliubov transformation pairing the modes propagating in opposite directions:
\begin{equation}
B_{\alpha}=\mat{cc}{e^{-\alpha} & 2i\sh \f\alpha 2 \\ 2i\sh \f\alpha 2 & e^{+\alpha}} \qquad \textrm{acting on 2-vectors}\quad \mat{c}{v_{k}\\ v_{-k}}
\,.
\end{equation}
The resulting character $\chi_{BMS}(R_{\gamma}B_{\alpha})$ is the inverse determinant of $\id-R_{\gamma}B_{\alpha}$ acting on the vector space of supermomenta:
\begin{equation}
\chi_{BMS}(R_{\gamma}B_{\alpha})
=
\prod_{k\ge1}
\f1{\det_{k}(\id_{2}-R_{\gamma}B_{\alpha})}
=
\prod_{k\ge1} \f{1}{2(\ch\alpha - \cos (\gamma k +i\alpha))}
\,.
\end{equation}

\medskip

Now, although the details of the previous limits and calculations do not go through in the general case beyond the decoupled ansatz, when $\lambda^2+\tau^2\ne 1$, the logic of going to critical couplings to recover the BMS character formula still applies. Indeed, in the general case, the amplitude pre-factor does not change at all, but the reduced Ponzano-Regge amplitude acquires a less regular pole structure:
\begin{equation}
z^{PR}[\lambda,\tau]
=
\sum_{n=0}^{N_x-1} \f{\sin^2\left(\gamma n + i N_t X_{C_k} \right)}{\sh(N_t X_{C_n})}  \prod_{\substack{k=0 \\ k \neq n}}^{N_x-1} \f{1}{2\big[\ch(N_t X_{C_k}) - \cos(\gamma(n-k)+i N_t X_{C_n}\big]}
\,,
\end{equation}
with the coefficients 
\begin{equation}
C_k = \lambda^{-1}\left[1 + \f{\tau^2 + \lambda^2 - 1}{2 - 2 \tau \cos \left(\f{2\pi}{N_x}k\right)}\right]
\,.
\nn
\end{equation}
Choosing critical real positive couplings, with $\lambda+\tau=1$, it appears that the coefficients $C_{k}$ go to 1 for very low momenta $k\ll N_{x}$.  Thus the coefficients $X_{C_{k}}$ go to 0, and the general formula above once more reduces to a BMS character.
By periodicity, for very large momenta, for $k$ very close to $N_x$, the coefficient $C_k$ also approaches 1. The situation is however different for large momenta in the intermediate range,  with $k/N_x$ kept fixed in $]0,1[$, for example for $k\sim N_{x}/2$. In this case,  this approximation fails and we have to deal with the more complicated structure of the roots. Since the product over modes involves all the modes from $k=1$ to $k=N_{x}-1$, one can not ignore the effects of large momenta on the partition function, but could consider them as deep quantum gravity effects.

\medskip

Overall, we have discussed how to take the continuum limit of the boundary lattice. For critical couplings,  the exact Ponzano-Regge amplitude reads as a finitely truncated BMS character, which leads back to the BMS character---possibly with a complex shift regularizing it---in the infinite refinement limit $N_{x},N_{t}\rightarrow\infty$. Since the exact formulas and poles can be written explicitly, it is natural to wonder if  the  Ponzano-Regge amplitude can be identified as the character of a modified BMS group for finite boundary. Such a deformed BMS group would be identified as  the symmetry group of the Ponzano-Regge model with the 2D discrete boundary geometry, and would probably be the reason behind the simplification of the Ponzano-Regge partition function. This will be investigated in future work.

\bigskip

We conclude this chapter by mentioning that the result we obtained also motivates us to look at the symmetry group at the boundary of a quasi-local region. This fits with the abundance of recent work on edge modes for gravity. The first step towards this study is to consider a given spatial slice, and to study the symmetry of its boundary, i.e. the symmetry of a circle with handles. This is a work in progress!


\newpage
~
\thispagestyle{empty}

\pagestyle{Conclusion}
\chapter*{Conclusion}
\addcontentsline{toc}{part}{Conclusion}

The study of three-dimensional Euclidean flat gravity was at the centre of my work. A tremendous amount of work and progress have been made in the last decades on the study of quantum gravity.. We are still far, however,  from understanding everything. Nevertheless, one case is much more understood at least in absence of matter: the three-dimensional one. In that case, it is at least possible to write an explicit quantum theory for gravity. In this thesis, we considered a discrete approach of three-dimensional quantum gravity formulated via the Ponzano-Regge model \cite{PR1968}. Even though the model is fairly old, there is still a lot to explore. Especially so on the study of the Ponzano-Regge amplitude for a quasi-local region. While it is not something completely new, most of the previous studies focused on the trivial case: the three-dimensional ball with boundary given by the two-dimensional sphere. In that case, it has been shown that quantum gravity can be understood as dual to two copies of the Ising model \cite{Dittrich:2013jxa,Bonzom:2015ova}. This duality is a perfect example of quasi-local holography. What still needs to be studied is more complicated topology. A first step towards a more complicated one, the torus topology, was made by Bonzom and Dittrich in \cite{Bonzom:2015ans}. Using the (quantum) Regge calculus approach for General Relativity, they computed the partition function of three-dimensional flat gravity in a finite region of space-time with the torus topology. They obtained a beautiful result: they recovered the structure of the BMS character, i.e. the character of the asymptotic symmetry group of flat gravity \cite{Barnich:2014kra,Barnich:2015uva,Oblak:2015sea,Oblak:2016eij}. Recall that the partition function in the continuum was computed both in the AdS case \cite{Maloney:2007ud} and in the flat case \cite{Giombi:2008vd,Barnich:2015mui}. Hence, the computation of \cite{Bonzom:2015ans} showed that the BMS group might be extended, in an undefined way, to also be fully part of the symmetry group of flat gravity for a quasi-local region.The caveat of this computation being that it is still a perturbative one.

The main goal of my work was to try to go one step further, and focus on the computation of the quasi-local and fully {\it exact} amplitude for three-dimensional quantum gravity on the torus topology. To do so, we considered the Ponzano-Regge model. This model can be seen as a quantization of the Regge calculus based on a first order formulation of General Relativity. As a completely discrete model for quantum gravity formulated as a state sum, the Ponzano-Regge model is a spin foam model. It is, in fact, the first example of what will be called spin foam model a few decades later.

The Ponzano-Regge model is perfect to study quasi-local regions for gravity. It provides us with an exact definition of the bulk theory, and its divergence has long been understood, see chapter \ref{chap3}. In presence of boundaries, its partition function is naturally a functional of the boundary data. We must emphasize that, compared to the Regge calculus approach, the theory is always exactly solved in the bulk. There are no approximations whatsoever for the bulk theory. In the case of the 3-ball with a 2-sphere as boundary, this results in the fact that the Ponzano-Regge amplitude is just the spin network evaluation of the boundary state. In the case of the torus however, since the topology is non-trivial, one integration over the non-contractible cycle remains. The Ponzano-Regge model thus allows to solve exactly the bulk theory while projecting all of the non-trivial bulk information on the boundary. We truly obtain a theory that is entirely defined by its choice of boundary condition without any background parameters on top of it. As we have explained in the main text, the boundary conditions of the Ponzano-Regge model are encoded in the spin network states and, in this work, we have considered two classes of boundary state.
\\

The first one, presented in chapter \ref{chap5}, was chosen because of its suitable geometrical interpretation in the asymptotic limit: it is peaked on a given geometry. That is, the boundary state is chosen in the class of states that diagonalizes the geometry of the intrinsic metric on the boundary and intrinsically depends on the twist of the torus. This boundary state is supposed to be the closest analogue of the Gibbons-Hawking-York boundary action term in the continuum. This is indeed confirmed by the saddle points analysis, returning an on-shell action of the form length times extrinsic curvatures \eqref{chap5:eq:classical_action}. Hence, this choice of boundary state is perfect to see the link between our computation, the Regge calculus computation and the computation in the continuum. Due to the complicated structure of the boundary conditions, it is not possible however to exactly compute the amplitude. What we can do is a saddle points approximation. We again emphasize that the approximation is only at the level of the boundary. The bulk theory is {\it exactly} solved. At the end of the day, we recover at one-loop the beautiful formula \eqref{chap5:eq:amp_final}
\begin{equation*}
	\la Z_{PR}^{\cK}| \Psi_{coh}  \ra^\text{1-loop}_{o}
	=  \sum_{n=1}^{N_x-1} (-1)^{2T N_t n}  \cA(n) \big(2-2\cos(\gamma n)\big)  \times  \prod_{k=1}^{\frac{N_x-1}{2}} \frac{1}{ 2-2\cos(\gamma k) }\;.
\end{equation*}
Recall that the parameter $n$ is the winding number around the cylinder. Each part of this formula is of primordial importance. First, it is clear that, compared to the Regge approach, we obtain much more than just the BMS character. What we would like however, is to recover it when the boundary is pushed at infinity. And this is indeed the case. The factor $\cA(n)$ is such that the main contribution when the boundary is pushed at infinity is given by $n=1$. Thanks to the measure term, this case corresponds to the BMS character. Recall that the on-shell action is just the sign factor in the above formula. Hence, while we do recover the BMS character in the asymptotic limit for the boundary, the structure for a finite region is much more complicated and richer. On top of the usual BMS character, we see that non-perturbative quantum corrections arise with the winding number. It is however unclear what kind of corrections the winding number holds for. We saw that only the case $n=1$ is embeddable in $\R^{3}$. Hence, it might be that the winding number carries unwanted contributions. Or that it carries truly important information, such as a sum over all possible geometrical configurations compatible with the choice of boundary data, or the sum over (a subset of) admissible topologies. Indeed, only the case $n=1$ is embeddable in $\R^3$, hence one might see the others contributions as more complicated topology in $\R^3$. This question still remains open. This computation shows that the BMS structure is conserved for a quasi-local region, but as part of a larger set. The symmetry group for a finite region for flat gravity should thus be some extension of the BMS group, which we recovered in the asymptotic limit. It is interesting to note that the main difference between each contribution is basically that of which pole is suppressed by the measure factor. 

It is complicated however to truly understand these other contributions in the saddle approximation due to the complicated factor $\cA(n)$. It is first interesting to see if these quantum corrections are also present for an exact computation and that they are not only an artefact coming from the perturbative approach at the boundary. 
\\

After establishing the link between the Ponzano-Regge model and other approaches to the partition function of three-dimensional flat gravity, we performed an exact computation. To do so, it is natural to consider the generating function of the spin network state. Even though it is rather a mathematical reason, it was shown that, by carefully choosing the weight of the generating function, the Ponzano-Regge amplitude in the sphere can be written as a Gaussian. As a side note, it is through the use of this particular state that the duality with the Ising model was proven in the case of the sphere. It gives us another reason to try and see if some dualities can also be recovered for the torus (unfortunately, at this stage of my work, I have not yet succeeded in finding them). In practice, we did not consider the actual generating function in chapter \ref{chap6}. Indeed, due to the method of computation, we had to restrict the analysis to the anisotropic case. A study of the full generating function still needs to be performed. The result of this last chapter is, in a sense, the culmination of (most of) my work from the last three years. It can be synthesized in the following formula
\begin{equation*}
	\begin{split}
	\la Z_{PR}^{\cK}| \Psi_{\lambda,\tau}  \ra = &\f{2^{N_t(N_x-2)}}{(\lambda \tau)^{N_t N_x}} \f{1}{\sh^{2N_t}\left(\f{1}{2}N_x X_{\tau}\right)} \times \\ & \sum_{n=0}^{N_x-1} \f{\sin^2\left(\gamma n + i N_t X(C_n) \right)}{\sh(N_t X(C_n))}  \prod_{\substack{k=0 \\ k \neq n}}^{N_x-1} \f{1}{2(\ch(N_t X(C_k)) - \cos(\gamma(n-k)+i N_t X(C_n))} \; .
	\end{split}
\end{equation*}
with
\begin{equation*}
	C_k = \f{1}{\lambda} \left( 1 + \f{\lambda^2 + \tau^2 - 1}{2 - 2\tau \cos\left(\f{2\pi}{N_x}k\right)} \right) \;.
\end{equation*}
The beauty of this result is to be found in its sheer generality. We computed the amplitude of three-dimensional gravity for a finite region of a torus-like space-time valid for arbitrary $N_t$, $N_x$ and twist $N_\gamma$. Recall that one of the key problems of all the previous approaches is that it was truly only defined for an irrational twist angle. In the continuum case, it was necessary to keep track of the complex part of the modular parameter of the torus to regularize the amplitude. In the discrete approach however, while the discreteness of the lattice provides a natural regularization of the amplitude, it was also restricted to the case $K=1$, corresponding to an irrational angle in the suitable double scaling limit. Here, we went beyond this restriction and the amplitude is truly defined for any $K$. In exchange, we introduced coupling parameters. Depending on the value of the coupling parameters, the amplitude may or not converge. The key difference is that the divergence of the amplitude is restricted to isolated values of the coupling parameters. Hence, this computation provides us with a perfect starting point to look at the emergence of modular invariance in three-dimensional flat gravity and thus of conformal field theory. There is, however, a (big?) caveat. It is not the twist angle that acquires a complex part to recover the modular parameter; but rather, it is a global shift (by $N_t X(C_n)$) into the complex plane that made the amplitude defined for any twist. 

As we explained in the main text, it is possible to recover the vacuum BMS character from the limit $\lambda \rightarrow 0$ and $\tau \rightarrow 1$ as the $n=1$ contribution as in chapter \ref{chap5}. It is, however, much more natural and appealing to identify a massive BMS character. Indeed, we saw that this limit is part of a bigger family of solutions where the sums over $n$ can be performed explicitly. In the limit $\lambda \rightarrow 0$ and $\tau \rightarrow 1$, this returns an amplitude whose mode structure is closer to the massive BMS character than to the massless one. The biggest downside to this identification comes from the fact that we have not been able, at least for know, to make apparent the classical contribution of the action in the amplitude. Hence, we still miss this part from the character. However, recall that this contribution might just be a sign due to discreetness of the lattice. It is still unclear why we naturally see the massive character appear in this approach.

The global structure of the result, however, answers the previous question about the non-perturbative quantum corrections: it is not an artefact due to the perturbative approach at the boundary. There are true consequences of working on a quasi-local region. However, the interpretation of the corrections gets spoiled. The interpretation of the parameter $n$ as a winding number is not straightforward at all from the viewpoint of the residue computation and it is not clear how to get a geometrical interpretation of this parameter and of the corrections in this general case.

We conclude by pointing out that this result is also interesting from the viewpoint of quasi-local holographic dualities. The fact that we were able to do an exact computation suggests the existence of powerful hidden symmetry and possibly quasi-integrable structure. In fact, we already knew since chapter \ref{chap4} that integrable structures naturally arise even on the torus. Indeed, we related the Ponzano-Regge model on the torus with the 6-vertex model. It is appealing to look at the full generating function and see if some dualities with statistical models can be recovered, like the duality between the Ponzano-Regge model on the sphere and the Ising model.
\\

Although we have reached the end of my thesis, there is still plenty to be done regarding the work I have presented here. I have come to realise that what has been achieved in the last three years opens a lot of new leads that I hope to follow on my post-doctorate: the duality with different statistical models, the impact of the boundary state and the choice of polarization from the topological invariance perspective, the structure of the quasi-local symmetry group for gravity on the torus, related to the edge modes for gravity, going beyond the partition function and looking for general observables by adding matter and defects to the theory...
\newpage
To be continued ...

\newpage
~
\thispagestyle{empty}

\pagestyle{Regular}
\appendix

\chapter{Spin Recoupling and 4-valent Intertwiner Basis}
\label{App:4valent_intertwiner}
\numberwithin{equation}{chapter}
 
The irreducible unitary representations of the $\SU(2)$ Lie group are labeled by half-integers, called spins, $j\in\f12\N$. The corresponding representation space $V^j$ is $(2j+1)$-dimensional and spanned by basis vectors $|j,m\ra$ diagonalising the Casimir operator and the generator $J_{z}$:
\begin{equation}
	\vec{J}^2\,|j,m\ra
	=\Big{[}J_{z}^2+\f12\big{(}J_{-}J_{+}+J_{+}J_{-}\big{)}\Big{]}\,|j,m\ra
	=j(j+1)\,|j,m\ra
	\,,\qquad
	J_{z}\,|j,m\ra=m\,|j,m\ra
	\,.
\end{equation}
These states can be interpreted geometrically as quantized 3-vectors of length $j$. One can actually define coherent states, \`a la Perelomov, peaked on classical vectors wth minimal spread (e.g. \cite{Livine:2007vk}).

The spin $j$ representation is unitarily equivalent to its conjugate representation and the map is:
\begin{equation}
	\begin{array}{lclcl}
		\varsigma: &&V^j& \rightarrow &\overline{V^j} \\
		&& |j,m\ra &\mapsto &  D^j(\varsigma)|j,m\ra=(-1)^{j+m}\,|j,-m\ra
	\end{array}
\end{equation}
such that $\varsigma^2=(-1)^{2j}\,\id$. This maps applies to the representation of the $\SU(2)$ group elements:
\begin{equation}
	\overline{\la j,n|D^j(g)|j,m\ra}
	=
	\la j,n|D^j(\varsigma^{-1}) g \varsigma)| j,m\ra\,
\end{equation}
which is the expression for arbitrary spins of the matrix identity in the fundamental representation, for 2$\times$2 matrices:
\begin{equation}
	\eps^{-1}g\eps=\overline{g}
	\,,\qquad
	\eps=\varsigma_{(j=\f12)}=\mat{cc}{0& 1 \\ -1 &0}
	\,,\qquad
	\forall g =\mat{cc}{a & b \\ -\bar{b}&\bar{a}}\,\in\SU(2)
	\,.
\end{equation}
This map allows to define the bivalent intertwiners. This is equivalent to identifying the $\SU(2)$-invariant states in tensor products $V^{j_{1}}\otimes V^{j_{2}}$ of two spins. For such a state to exist, the two spins must be equal, $j_{1}=j_{2}=j$:
\begin{equation}
	|\omega_{j}\ra
	=
	\f1{\sqrt{2j+1}}\sum_{m}\varsigma|j,m\ra \otimes  |j,m\ra
	=
	\f1{\sqrt{2j+1}}\sum_{m}(-1)^{j-m}\,|j,m\ra \otimes  |j,-m\ra
	\,\quad\in V^j\otimes V^j
	\,,
\end{equation}
\begin{equation}
	\begin{array}{lclcl}
		\iota_{2}: &&  V^j \otimes V^j  & \rightarrow &\C \\
		&& |j,m\ra\otimes |j,n\ra &\mapsto &
		\la \omega_{j}\,|\,(j,m)(j,n)\ra
		\,=\, 
		\la j,m|D^j(\varsigma)|j,n\ra
	\end{array}
\end{equation}
The action of the $\su(2)$ generators vanishes on those states, $\vec{J}\,|\omega_{j}\ra\,=0$.

Trivalent intertwiners correspond to $\SU(2)$-invariant states in the tensor products of three spins, $V^{j_{1}}\otimes V^{j_{2}}\otimes V^{j_{3}}$. These only exist is the three spins satisfy triangular inequalities, $|j_{2}-j_{3}|\le j_{1}\le j_{2}+j_{3}$ or equivalently any of the other two circular sets of inequalities, and are then unique once the spins are given. The coefficients of those 3-valent intertwiners are the Wigner 3j-symbols, which are defined in terms of the Clebsh-Gordan coefficients or expicitly given as a sum of factorial factors by the Racah formula. These 3-valent intertwiners can be geometrically interpreted as quantized triangles, with the spins giving the three edge lengths.

Next, 4-valent intertwiners, or equivalently $\SU(2)$-invariant states in the tensor products of four spins, $V^{j_{1}}\otimes V^{j_{2}}\otimes V^{j_{3}}\otimes V^{j_{4}}$, can  be constructed from  3-valent intertwiners. One first chooses a pairing between the spins, say $(12)-(34)$ or we could have chosen $(13)-(24)$ or $(14)-(23)$. Then one recouples the two spins $j_{1}$ and $j_{2}$ to an intermediate spin $J$, also recouples the two other spins $j_{3}$ and $j_{4}$ to the same spin $J$ and finally glues the two 3-valent intertwiners together using the $\varsigma$ map, as illustrated on fig.\ref{4valent}. 
\begin{figure}[h!]
	\begin{center}
		\begin{tikzpicture}[scale=1]
			\draw (-.7,.7)-- (0,0) node[pos=0,left]{$j_{1}$};
			\draw (-.7,-.7)-- (0,0) node[pos=0,left]{$j_{2}$};
			\draw (.7,.7)-- (0,0) node[pos=0,right]{$j_{3}$};
			\draw (.7,-.7)-- (0,0) node[pos=0,right]{$j_{4}$};
			\draw (0,0) node[scale=1]{$\bullet$};
			
			\draw[very thick,->,>=stealth] (1.5,0)--(2.5,0);
			
			\draw (-.7+4,.7)-- (0+4,0) node[pos=0,left]{$j_{1}$};
			\draw (-.7+4,-.7)-- (0+4,0) node[pos=0,left]{$j_{2}$};
			\draw[decoration={markings,mark=at position 0.6 with {\arrow[scale=1.5,>=stealth]{>}}},postaction={decorate}] (4,0) -- (5,0) node[midway,above]{$J$};
			\draw (.7+5,.7)-- (0+5,0) node[pos=0,right]{$j_{3}$};
			\draw (.7+5,-.7)-- (0+5,0) node[pos=0,right]{$j_{4}$};
		\end{tikzpicture}
	\end{center}
	\caption{We specify an intertwiner between the four spins $j_{1},..,j_{4}$ by pairing them two by two, say $j_{1}$ with $j_{2}$ and $j_{3}$ with $j_{4}$, and recoupling each pair of spins to an intermediate spin $J$. This defines a basis of 4-valent intertwiner states.}
	\label{4valent}
\end{figure}
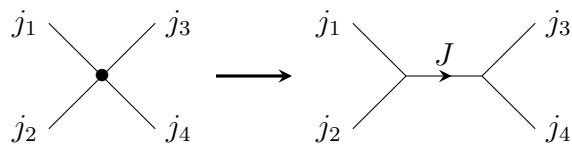
This provides us with the spin basis for 4-valent invariant states:
\begin{equation}
	|\iota^{J}_{(12)(34)}\ra
	=
	\sum_{m_{i},M} (-1)^{J+m}  |(j_{1}m_{1})(j_{2}m_{2})(j_{3}m_{3})(j_{4}m_{4})\ra 
	\la (j_{1}m_{1})(j_{2}m_{2})|J,M \ra
	\la (j_{3}m_{3})(j_{4}m_{4})|J,-M \ra
	\,,
\end{equation}
where the intermediate spin $J$ ranges from $\max(|j_{1}-j_{2}|,j_{3}-j_{4})$ to $\min(j_{1}+j_{2},j_{3}+j_{4})$, in order to satisfy the triangular inequalities.
One can extend this construction to $n$-valent intertwiners, defining a basis for $\SU(2)$-invariant states living in arbitrary tensor products of irreducible representations. Finally one can define coherent intertwiner states, with the semi-classical interpretation as quantized polygons or polyhedra \cite{Livine:2007vk,Bianchi:2010gc,Freidel:2013fia,Livine:2013tsa}.

\section{Intertwiners between spin 1/2 representations}

Let us write explicitly the basis of 4-valent intertwiners between four fundamental representations, that is between four spins $j_{1}=j_{2}=j_{3}=j_{4}=\f12$. Considering the triangular inequalities,  the Hilbert space $\cH^{(4)}_{\f12}$ of $\SU(2)$-invariant states in the tensor product $\big{(}V^{\f12}\big{)}^{\otimes 4}$ is two-dimensional.

We will use the notations of spin up and spin down for the two states in $V^{\f12}$, 
\begin{equation}
|\up\ra\,=\,\left|j=\tfrac12,m=\tfrac12\right\ra
\quad\text{and}\quad
|\down\ra\,=\,\left|j=\tfrac12,m=-\tfrac12\right\ra
\,.\nn
\end{equation}
First, two spins $\f12$ can recouple to either a spin 0 (scalar) or a spin 1 (vector). The spin 0 state is given by the $\varsigma$ map, while the spin 1 states are given by the corresponding Clebsch-Gordan coefficients:
\begin{equation}
|0\ra=\f1{\sqrt{2}}\,(|\up\down\ra-|\down\up\ra)
\,,\qquad
|1,+\ra=|\up\up\ra
\,,\quad
|1,0\ra=\f1{\sqrt{2}}\,(|\up\down\ra+|\down\up\ra)
\,,\quad
|1,-\ra=|\down\down\ra
\,.
\end{equation}
Now we choose one pairing, between $(12)-(34)$ or $(13)-(24)$ or $(14)-(23)$, which we can respectively identify as the channels $s$, $t$ or $u$. Choosing one pairing, say starting with  $(12)-(34)$, we define a basis for the 4-valent intertwiners:
\begin{equation}
|0\ra_s\equiv|0\ra_{(12)-(34)}
\equiv
|0\ra_{12}\otimes |0\ra_{34}
\equiv
\f12\Big{[}
|\up\down\up\down\ra-|\up\down\down\up\ra-|\down\up\up\down\ra+|\down\up\down\up\ra
\Big{]}\,,
\end{equation}
\begin{align}
|1\ra_{(12)-(34)}
&=&
\f1{\sqrt{3}}\Big{[}
|1,+\ra_{12}|1,-\ra_{34}
-|1,0\ra_{12}|1,0\ra_{34}
+|1,-\ra_{12}|1,+\ra_{34}
\Big{]}\\
&=&
\f1{\sqrt{3}}\Big{[}
|\up\up\down\down\ra+|\down\down\up\up\ra
-\f12\big{[}
|\up\down\up\down\ra+|\up\down\down\up\ra+|\down\up\up\down\ra+|\down\up\down\up\ra
\big{]}
\Big{]}
\nn
\end{align}
One can similarly define the intertwiner basis corresponding to the other two channels:
\begin{equation}
|0\ra_t\equiv|0\ra_{(13)-(24)}
=
\f12\Big{[}
|\up\up\down\down\ra-|\up\down\down\up\ra-|\down\up\up\down\ra+|\down\down\up\up\ra
\Big{]}\,,
\end{equation}
\begin{align}
|1\ra_{(13)-(24)}
&=&
\f1{\sqrt{3}}\Big{[}
|1,+\ra_{13}|1,-\ra_{24}
-|1,0\ra_{13}|1,0\ra_{24}
+|1,-\ra_{13}|1,+\ra_{24}
\Big{]}\\
&=&
\f1{\sqrt{3}}\Big{[}
|\up\down\up\down\ra+|\down\up\down\up\ra
-\f12\big{[}
|\up\up\down\down\ra+|\up\down\down\up\ra+|\down\up\up\down\ra+|\down\down\up\up\ra
\big{]}
\Big{]}
\nn
\end{align}
\begin{equation}
|0\ra_u\equiv|0\ra_{(14)-(23)}
=
\f12\Big{[}
|\up\up\down\down\ra-|\up\down\up\down\ra-|\down\up\down\up\ra+|\down\down\up\up\ra
\Big{]}\,,
\end{equation}
\begin{align}
|1\ra_{(14)-(23)}
&=&
\f1{\sqrt{3}}\Big{[}
|1,+\ra_{14}|1,-\ra_{23}
-|1,0\ra_{14}|1,0\ra_{23}
+|1,-\ra_{14}|1,+\ra_{23}
\Big{]}\\
&=&
\f1{\sqrt{3}}\Big{[}
|\up\down\down\up\ra+|\down\up\up\down\ra
-\f12\big{[}
|\up\up\down\down\ra+|\up\down\up\down\ra+|\down\up\down\up\ra+|\down\down\up\up\ra
\big{]}
\Big{]}
\nn
\end{align}
From these explicit formulae, one can easily  compute the unitary matrices mapping one channel onto another. This is especially useful when looking at the volume operator \cite{Feller:2015yta}.
In the present work, in order to analyze the twisted torus partition function for spins $\f12$ on the boundary, we are more interested in the decomposition of the identity on the intertwiner space $\cH^{(4)}_{\f12}$ by the $0$-states in the three channels:
\begin{equation}
\id=\f23
\big{[}
|0\ra_{s}{}_{s}\la 0|
+|0\ra_{t}{}_{t}\la 0|
+|0\ra_{u}{}_{u}\la 0|
\big{]}
\end{equation}
To prove this decomposition of the identity, it is enough to apply it to the orthonormal basis $|0\ra_{s},|1\ra_{s}$. Let us nevertheless not forget that the three states $|0\ra_{s},|0\ra_{t},|0\ra_{u}$ are not independent and form an over-complete basis:
\begin{equation}
|0\ra_{u}
=
|0\ra_{t}-|0\ra_{s}
\,.
\end{equation}
\chapter{Exact computation of the partition function for the 0-spin intertwiner in the s-channel}
\label{App:exact_computation_recouplingJ0}
\numberwithin{equation}{chapter}

In this appendix, we focus on the exact computation of the Ponzano-Regge partition function given in \eqref{chap4:eq:recouplingJ0_channel_general_amp} for a boundary spin network state on the square lattice, with a homogeneous spin $j$ and the 0-spin intertwiner in the s-channel at every vertex:
\begin{equation}
	\la \text{PR} | \Psi_{j,\iota^{s|0}}  \ra
	=
	\frac{d_{j}^{N_{t}}}{d_j^{N_t N_x}}
	\int_{\SU(2)} \dd g \; \chi_{j}(g^{p})^{K}
	= 
	\frac{1}{d_j^{N_t (N_x-1)}}
	\frac{2}{\pi} \int_{0}^{\pi} \dd\theta\; \sin^2(\theta) \; \chi_{j}(p \theta)^K
	\,,
	\nn
\end{equation}
Expanding the character in the $m$-basis, we express this integral over a random walk counting:
\begin{equation}
	\la \text{PR} | \Psi_{j,\iota^{s|0}}  \ra
	=
	\frac{1}{4\pi d_j^{N_t (N_x-1)}}
	\int_{0}^{2\pi} {d}\theta\,
	(2-\E^{2\I\theta}-\E^{-2\I\theta})
	\sum_{m_{1},..,m_{K}} \E^{2\I\sum_{k=1}^{K}m_{k}p\theta}
	\,,
\end{equation}
with the $m$'s running in integer steps from $-j$ to $+j$. We focus on the generic case for $p\ge 3$.  In that case, among the three terms coming from the measure factor $\sin^2\theta$, only the constant term contributes to a non-trivial random walk counting while the other two terms, in $\E^{2\I\theta}$ and $\E^{-2\I\theta}$, give vanishing integrals.
Thus, assuming $p\ge 3$, we see that the dependance on $p$ actually drops out and the renormalized Ponzano-Regge amplitude $d_j^{N_t (N_x-1)}\,\la \text{PR} | \Psi_{j,\iota^{s|0}}  \ra$  is always equal to the following integral:
\begin{equation}
\cI_{K} = \frac{1}{\pi} \int_{0}^{\pi} \text{d}\theta \chi^{j}\left(\theta\right)^{K} = \frac{1}{\pi} \int_{0}^{\pi} \text{d}\theta \left(\frac{\sin(d_j \theta)}{\sin\theta}\right)^{K}
\,.
\end{equation}

Using the binomial formula and the following series expansion
\begin{equation}
\frac{1}{(1-x)^{K}} = \sum_{n=0}^{\infty} \binom{n+K-1}{K-1}x^{n},
\end{equation}
we expand the both sine in the numerator and  in the denominator:
\begin{equation}
\cI_{K} = \frac{1}{\pi} \sum_{k=0}^{K} \sum_{n=0}^{\infty}
(-1)^{k}\binom{K}{k} \binom{n+K-1}{K-1}  \int_{0}^{\pi} \dd\theta\; \E^{\I\Big[(K-2k)d_j -K + 2n\Big]\theta}.
\label{eq:I_expression_1}
\end{equation}
where the two binomial coefficients are given by
\begin{equation}
	\begin{split}
	\binom{K}{k} \binom{n+K-1}{K-1} &= \frac{K}{k!(K-k)!}(n+1)(n+2)....(n+K-1) \\
	&= \frac{K}{2^{K-1}k!(K-k)!}(2n+2)(2n+4)....(2n+2K-2).
	\end{split}
	\label{eq:I_binomial_expression}
\end{equation}
The integration over the exponential is straightforward and gives a Kronecker delta:
\begin{equation}
	\int_{0}^{\pi} \dd\theta\; \E^{\I \Big[(K-2k)d_j -K + 2n\Big]\theta} = \pi \delta_{0,(K-2k)d_j -K -2n},
\end{equation}
which translates into the constraint for $n$:
\begin{equation}
	2n = (K-2k)d_j -K.
	\label{eq:I_n_constraint}
\end{equation}
Plugging this into  equation \eqref{eq:I_expression_1} leads to an expression of the integral $\cI_{K}$ as a sum: 
\begin{equation}
	\cI_{K}=\frac{K}{2^{K-1}} \sum_{k=0}^{[K/2]} \frac{(-1)^{k}}{k!(K-k)!} \prod_{m=1}^{K-1} \big[(K-2k)d_j + (2m-K)\big].
\end{equation}
It is natural to write this formula as a polynomial in $d_j$ of the form
\begin{equation}
	\cI_{K} = \frac{K}{2^{K-1}} \sum_{m=0}^{K-1} A_{m} d_j^{m},
\end{equation}
where  $A_{m}$ are polynomials in $K$. This coefficients can be computed for $m<K-1$ as:
\begin{equation}
	A_{m}
	= \sum_{k=0}^{K-1} \frac{(-1)^{k}}{k!(K-k)!} (K-2k)^{m} \sum_{n_1<n_2<...<n_{K-1-m}}\prod_{i=1}^{K-1-m}(2n_i-K)
	\,.
\end{equation}
The highest order coefficient $A_{K-1}$ is
\begin{equation}
	\frac{1}{2^{K-1}} A_{K-1} = \frac{1}{2^{K-1}} \sum_{k=0}^{K} (-1)^{k} \frac{(K-2k)^{K-1}}{k! (K-k)!}.
\end{equation}

These are in fact  known numbers, appearing in the Fourier series of powers of the cardinal sine $\frac{1}{n!}\Big(\frac{\sin(\theta)}{\theta}\Big)^{n}$. They can be identified as the value of the integral  $\frac{1}{\pi}\int_{0}^{\infty}\d x\Big(\frac{\sin(x)}{x}\Big)^k$ in \cite{SWANEPOEL2015} or as the coefficients of the Duflo map for $\SU(2)$ in \cite{Freidel:2005ec}. We will refer to them as the Freidel-Majid numbers $C_{n,s}$ after this latter work, defined by:
\begin{equation}
	C_{n,s} = \frac{1}{2^n} \sum_{k=0}^{n} (-1)^{k} \frac{(n-2k)^{s}}{k!(n-k)!}.
\end{equation}

All the coefficients $A_{K-2n}$, for $n\in\N$ vanish and we can focus on the terms  $A_{K-(2n+1)}$. They are expressed in terms of the Freidel-Majid numbers:
\begin{equation}
	\frac{1}{2^{K-1}} A_{K-(2n+1)} = 2C_{K,K-(2n+1)} \times \sum_{n_1<n_2<...<n_{K-1-m}}\prod_{i=1}^{K-1-m}(2n_i-K)
\end{equation}
We do not have an explicit closed formula for the remaining sums. An order by order investigation shows that they are proportional to a Pochhammer coefficient (defined as $(N)_{(n)} = N(N-1)...(N-n+1)$) times a polynomial of degree $(n-1)$, for instance:
\begin{equation}
	\begin{split}
		\frac{1}{2^{K-1}} A_{K-1} &= 2 C_{K,K-1} \,,\\
		\frac{1}{2^{K-1}} A_{K-3} &= -\frac{1}{6} (K)_{(3)} 2 C_{K,K-3} \,,\\
		\frac{1}{2^{K-1}} A_{K-5} &= \frac{1}{360} (K)_{(5)} (5K+2) 2 C_{K,K-5}\,.
	\end{split}
\end{equation}

\newpage
~
\thispagestyle{empty}

\chapter{Explicit computation of the Hessian}
\label{app:hessian_computation}
\numberwithin{equation}{chapter}

\paragraph*{\bf $\phi\phi$-term}
This is the simplest term. 
\begin{align}
H_{\phi;\phi} & = 
	\left.%
	-\frac{2L}{N_t} \sum_{t,x}  \left( \frac{\la +| G_{t+1,x}^{-1}  \E^{\I \frac{\varphi}{N_t}\sigma_3} \sigma_3^2 G_{t,x} | +\ra}{ \la +| G_{t+1,x}^{-1} \E^{\frac{\varphi}{N_t}\sigma_3} G_{t,x} | +\ra} 
	-%
	\frac{\la +| G_{t+1,x}^{-1}  \E^{\frac{\varphi}{N_t}\tau_z} \sigma_3 G_{t,x} | +\ra^2}{ \la +| G_{t+1,x}^{-1} \E^{\frac{\varphi}{N_t}\sigma_3} G_{t,x} | +\ra^2}\right)\right|_o \notag\\
	& =  \frac{2 L}{N_t} \sum_{t,x} \left( 1  - (\hat z. G^o_{t,x}\triangleright \hat x)^2 \right) \notag\\
	& = 2 L N_x\equiv F.
\end{align}
The topological $\varphi$-field has no kinetic term and a (positive) mass equal to one-half of the cylinder (spatial) circumference.\\

\paragraph*{\bf $\phi a$-term}
The mixed term is
\begin{align}
&  H^{k}_{t,x; \phi} = \notag\\
& = 
\left.%
-\frac{2L}{N_t}\left(%
\frac{\la +| G_{t+1,x}^{-1}  \E^{\frac{\varphi}{N_t}\sigma_3} \sigma_3 G_{t,x} \sigma^k | +\ra}{ \la +| G_{t+1,x}^{-1} \E^{\frac{\varphi}{N_t}\sigma_3} G_{t,x} | +\ra} 
-%
\frac{\la +| G_{t+1,x}^{-1}  \E^{\frac{\varphi}{N_t}\sigma_3} \sigma_3 G_{t,x} | +\ra\la +| G_{t+1,x}^{-1} \E^{\frac{\varphi}{N_t}\sigma_3} G_{t,x} \sigma^k| +\ra}{ \la +| G_{t+1,x}^{-1} \E^{\frac{\varphi}{N_t}\sigma_3} G_{t,x} | +\ra^2} \right)\right|_o + \notag\\
& \phantom{=} + \text{1 term from other link}\notag\\
& =  - \frac{2 L}{N_t}\Big( \la + | (G_{t,x}^o)^{-1} \sigma_z G^o_{t,x}\sigma^k | + \ra - %
(\hat z. G_{t,x}^o\triangleright \hat x) \delta^k_1  - \la + | \sigma^k (G_{t,x}^o)^{-1} \sigma_z G^o_{t,x} | + \ra +  (\hat z. G_{t,x}^o\triangleright \hat x) \delta^k_1  \Big) \notag\\
& = - \frac{2 L}{N_t}\Big( \la + | \sigma_z \sigma^k | + \ra - %
\la + | \sigma^k \sigma_z | + \ra \Big)  =  - \frac{4 \I L}{N_t} \epsilon_{3kj} \la + |  \sigma^j | + \ra \notag\\
& = \frac{4 iL}{N_t} \delta^k_2 \equiv D_k
\end{align}
Notice the independence from the position $(t,x)$.\\

\paragraph*{\bf $a a$-term}
Because of the near-neighbour interactions, this is the most complicated term.
We start from the ultralocal term (notice the symmetrization of the indices)
\begin{align}
	H^{j;k}_{t,x;t,x} 
	& = 2T \left.\left(- \frac{\la \up |\sigma^{(j} \sigma^{k)} G_{t,x}^{-1}G_{t,x-1} |\up\ra }{\la \up | G_{t,x}^{-1}G_{t,x-1} |\up\ra} + \frac{\la \up | \sigma^k G_{t,x}^{-1}G_{t,x-1} |\up\ra \la \up | \sigma^j G_{t,x}^{-1}G_{t,x-1} |\up\ra }{\la \up | G_{t,x}^{-1}G_{t,x-1} |\up\ra^2} \right)\right|_o+\notag\\ 
	& \phantom{=} +  \text{3 other terms from other links}\notag\\
	& = 2T \left( {\la \up |\sigma^{(j} \sigma^{k)} |\up\ra  -\la \up | \sigma^k  |\up\ra \la \up | \sigma^j |\up\ra } \right)+  \text{3 other terms from other links}\notag\\
	& = 2T\left( \delta^{jk} - \delta^k_3\delta^j_3 \right) +  \text{3 other terms from other links}\notag\\
	& =  4T \left( \delta^{jk} - \delta^k_3\delta^j_3 \right) + L \left( \delta^{jk} - \delta^k_1\delta^j_1 \right) \equiv A^{jk}.
\end{align}
This term does not depend on the position $(t,x)$, either.

Then, we move on to the term which couples spatially separated cells.
In this case the two derivative commutes and one needs only to compute one of the two terms (we compute the second one). The result is 
\begin{align}
	H^{j;k}_{t,x-1;t,x} 
	& = 2T \left.\left( \frac{\la \up |\sigma^{k} G_{t,x}^{-1}G_{t,x-1} \sigma^j|\up\ra }{\la \up | G_{t,x}^{-1}G_{t,x-1} |\up\ra} - \frac{ \la \up | \sigma^k G_{t,x}^{-1}G_{t,x-1} |\up\ra\la \up |  G_{t,x}^{-1}G_{t,x-1} \sigma^j |\up\ra  }{\la \up | G_{t,x}^{-1}G_{t,x-1} |\up\ra^2} \right)\right|_o \notag\\ 
	& = -2T\left(  \la \up |\E^{ \psi^T_o \tau_z}\sigma^{k} \E^{- \psi^T_o \tau_z} \sigma^j |\up\ra -   \la \up |\sigma^{k} |\up\ra \la \up |\sigma^{j} |\up\ra    \right)\notag\\
	& =  -2T\left(  R_z(-\psi^T_o)^k{}_i \la \up |\sigma^{i}\sigma^j |\up\ra -   \la \up |\sigma^{k} |\up\ra \la \up |\sigma^{j} |\up\ra    \right)\notag\\
	& =  -2T\left(  R_z(-\psi^T_o)^k{}_j  + i R_z(-\psi^T_o)^k{}_i \epsilon^{ij3} -   \delta^k_3\delta^j_3    \right) \equiv B^{jk},
\end{align}
where a rapid computation shows
\begin{equation}
	B = - 2T \E^{-i\psi^T_o} \mat{ccc}{ 1 & -i & 0 \\ i & 1 & 0 \\ 0 & 0 & 0  }.
\end{equation}
Similarly, one finds that
\begin{equation}
	H^{j;k}_{t,x+1;t,x} = B^{kj},
\end{equation}
which is just the transpose of the above matrix. Notice that this could have been deduced from $H^{j;k}_{t,x-1;t,x} = H^{k;j}_{t,x; t,x-1}$ and the independence of these matrices from the position $(t,x)$.

Finally, for timely separated neighbouring cells, one obtains
\begin{align}
	H^{j;k}_{t-1,x;t,x} 
	& = 2L \left.\left(  \frac{\la + | \tau^k G_{t,x}^{-1}\E^{\frac{\varphi}{N_t} \sigma_3} G_{t-1,x} \sigma^j |+\ra }{\la + | G_{t,x}^{-1}\E^{\frac{\varphi}{N_t} \sigma_3} G_{t-1,x}  |+ \ra} - \frac{\la + | \tau^k G_{t,x}^{-1}\E^{\frac{\varphi}{N_t} \sigma_3} G_{t-1,x}  |+\ra \la + |G_{t,x}^{-1}\E^{\frac{\varphi}{N_t} \sigma_3} G_{t-1,x} \sigma^j  |+\ra }{\la + | G_{t,x}^{-1}\E^{\frac{\varphi}{N_t} \sigma_3} G_{t-1,x}  |+ \ra^2}   \right)\right|_o \notag\\ 
	& = -2L\Big(   \la + | \sigma^k \sigma^j |+\ra  - \la + | \sigma^k   |+\ra \la + |  \sigma^j  |+\ra   \Big) \notag\\ 
	& = -2L\Big(  \delta^{kj} + i\epsilon^{kj1} - \delta^k_1\delta^j_1 \Big) \equiv C^{jk},
\end{align}
where 
\begin{equation}
	C = - 2L \mat{ccc}{ 0 & 0 & 0 \\ 0  & 1 & -i \\ 0 & i & 1  }.
\end{equation}
And similarly
\begin{equation}
	H^{j;k}_{t+1,x;t,x} = C^{kj} .
\end{equation}

\newpage
~
\thispagestyle{empty}

\chapter{Proof of the lemma \ref{lemma:cosine_product}}
\label{app:proof_formula_product_cos}

We consider two integers $N$ and $M$ and two complexes $a$ and $x$. We define by $K$ the greatest common divisor of $N$ and $M$. From $K$ we define two integers $n$ and $m$ such that
\begin{align*}
N &= K n \; , \\
M &= K m \; .
\end{align*}

The following equality holds for all $N,M$ and all $a,x$
\begin{equation}
\prod_{k=0}^{N-1} \left(2a + 2\cos\left( \f{2 \pi M}{N} k + x \right) \right) =\left( 2 \left( T_{n}(a) - (-1)^{n} \cos(n x) \right) \right)^{K} \; .
\end{equation}
In the case where the $K \neq 1$, we see that only $n$ terms are different in the product, and so the product can be written as
\begin{equation*}
\left[\prod_{k=0}^{n-1} \left(2a + 2\cos\left( \f{2 \pi m}{n} k + x \right) \right)\right]^{K} \; ,
\end{equation*}
and the value of the product does not depend on $m$ anymore. Indeed, for $m$ and $n$ coprime, we can use the periodicity of the cosine to absorb the factor $m$.

Therefore, showing the equality
\begin{equation}
\prod_{k=0}^{n-1} \left(2a + 2\cos\left( \f{2 \pi}{n} k + x \right) \right) =2 \left( T_{n}(a) - (-1)^{n} \cos(n x) \right) \; 
\label{eq:proof_eq_ini}
\end{equation}
is enough to show the formula in the general case. The proof is done using the usual method of showing that the left hand side and the right hand side of \ref{eq:proof_eq_ini} are polynomials in $a$ with the same roots. This will show that the two sides are proportional.
\medskip

We introduce two functions of a complex variable $a$, with $x$ as a parameter
\begin{equation*}
f(a) = \prod_{k=0}^{n-1} \left(2a + 2\cos\left( \f{2 \pi}{n} k + x \right) \right)  \quad \text{and} \quad g(a) = 2 \left( T_{n}(a) - \cos(n x + n\pi) \right) \; .
\end{equation*}

Finding the roots of $f$ is straightforward. There are $N$ roots, parametrized by an integer $k \in [0,N-1]$
\begin{equation*}
a^{f}_{k} = - \cos\left( \f{2 \pi}{n} k + x \right)
\end{equation*}

For the roots of $g$, we use the expression of the Chebyshev polynomial in terms of the analytic continuation of cosine
\begin{equation*}
T_{n}(a) = \cos(n \arccos(a)) \;.
\end{equation*}

The equation $g(a) = 0$ is then equivalent to an equality between cosine, with solutions
\begin{equation*}
n \arccos(a_k^{g}) = \pm (n x + n\pi + 2\pi k) \quad \text{for} \; k \, \in \N \;,
\end{equation*}
that is
\begin{equation}
a_k^{g} = \cos\left(\f{2 \pi k}{n} + x + \pi\right) = - \cos\left(\f{2 \pi k}{n} + x \right) \; ,
\end{equation}
where we see that $k$ can be taken between $0$ and $n-1$.
\bigskip

This is enough to show that $f$ and $g$ are proportionals. Evaluating $f$ and $g$ at $a=1$ allows to show that the proportionality coefficient is 1.

\pagestyle{Bibliography}
\bibliography{./biblio/mypaper,./biblio/other_field,./biblio/gravity_holography_general,./biblio/PR_model}
\bibliographystyle{bib-style}

\end{document}